\numberwithin{equation}{section}
\def \Lp {\mathrm L}  
\def\Ln {\mathcal L}  
\def \R {\mathbb R}
\def \N {\mathbb N}
\def\transop {\sigma}
\def \Im {\mathfrak{Im}}
\newcommand*{\blank}{{\makebox[1ex]{\textbf{$\cdot$}}}}
\declaretheorem[numberwithin=section,name=Theorem]{theorem}
\declaretheorem[sibling=theorem,name=Lemma]{lemma}
\declaretheorem[sibling=theorem,name=Corollary]{corollary}
\declaretheorem[sibling=theorem,name=Proposition]{proposition}
\declaretheorem[sibling=theorem,name=Conjecture]{conjecture}
\theoremstyle{remark}
\declaretheorem[name=Remark,numbered=no]{remark}
\declaretheorem[name=Remarks,numbered=no]{remarks}
\theoremstyle{definition}
\declaretheorem[sibling=theorem,name=Definition]{definition}
\DeclareMathOperator{\Ai}{Ai}
\title[Enhanced area-laws at vanishing magnetic fields]{Logarithmically enhanced area-laws for fermions in vanishing magnetic fields in dimension two}
\author[P.~Pfeiffer and W.~Spitzer]{Paul Pfeiffer and Wolfgang Spitzer}
\address{Fakultät für Mathematik und Informatik, FernUniversität in Hagen, Universitätsstraße 1, 58097, Hagen, Germany}
\date{\today}
\email{paul.pfeiffer@fernuni-hagen.de}
\email{wolfgang.spitzer@fernuni-hagen.de}
\keywords{Area laws, Landau Hamiltonian, asymptotic analysis}
\subjclass[2010]{Primary 47G30, 35S05; Secondary 45M05, 47B10, 47B35}
\begin{document}

\begin{abstract}
We consider fermionic ground states of the Landau Hamiltonian, $H_B$, in a constant magnetic field of strength $B>0$ in $\mathbb R^2$ at some fixed Fermi energy $\mu>0$, described by the Fermi projection $P_B\coloneqq 1(H_B\le \mu)$. For some fixed bounded domain $\Lambda\subset \R^2$ with boundary set $\partial\Lambda$ and an $L>0$ we restrict these ground states spatially to the scaled domain $L \Lambda$ and denote the corresponding localised Fermi projection by $P_B(L\Lambda)$. Then we study the scaling of the Hilbert-space trace, $\mathrm{tr} f(P_B(L\Lambda))$, for polynomials $f$ with $f(0)=f(1)=0$ of these localised ground states in the joint limit $L\to\infty$ and $B\to0$. We obtain to leading order  logarithmically enhanced area-laws depending on the size of $LB$. Roughly speaking, if $1/B$ tends to infinity faster than $L$, then we obtain the known enhanced area-law (by the Widom--Sobolev formula) of the form $L \ln(L) a(f,\mu) |\partial\Lambda|$ as $L\to\infty$ for the (two-dimensional) Laplacian with Fermi projection $1(H_0\le \mu)$. On the other hand, if $L$ tends to infinity faster than $1/B$, then we get an area law with an $L \ln(\mu/B) a(f,\mu) |\partial\Lambda|$ asymptotic expansion as $B\to0$. The numerical coefficient $a(f,\mu)$ in both cases is the same and depends solely on the function $f$ and on $\mu$. The asymptotic result in the latter case is based upon the recent joint work of Leschke, Sobolev and the second named author~\cite{Leschke2021} for fixed $B$, a proof of the sine-kernel asymptotics on a global scale, and on the enhanced area-law in dimension one by Landau and Widom. In the special but important case of a quadratic function $f$ we are able to cover the full range of parameters $B$ and $L$. In general, we have a smaller region of parameters $(B,L)$ where we can prove the two-scale asymptotic expansion $\mathrm{tr} f(P_B(L\Lambda))$ as $L\to\infty$ and $B\to0$.  

\end{abstract}
\maketitle
{\small
\tableofcontents
}

\section{Introduction}

In recent years, there has been a lot of efforts devoted to entanglement entropy (EE). 
The motivation of the present work is to understand the transition between a (strict) area law and an enhanced area law for the EE of fermionic ground states. 

Even in the simplest situation when there are no particle interactions present, the entanglement (or rather local) entropy of ground states is a complicated and interesting function of the defining parameters. Also, the asymptotic behaviour (for large domains) of this entropy is related to the asymptotic behaviour of Szeg\H{o}-type asymptotics of Toeplitz or Wiener--Hopf operators which has been studied since more than a century. It was Harold Widom who conjectured in 1990 (see \cite{Widom1990}) a formula in the higher-dimensional setting and proved a special case. That conjecture led D.~Gioev and I.~Klich~\cite{Gioev2006} to conjecture in 2006 the asymptotic expansion of the EE of ground states of the ideal Fermi gas. In 2013, A.V.~Sobolev~\cite{sobolev2013pseudo} proved Widom's conjecture which in turn paved the way to prove the conjecture on the EE by Leschke, Sobolev and one of the present authors in \cite{Leschke2014}. 

In 2021, the same authors proved the area law for the EE of ground states of the ideal Fermi gas in a constant magnetic field in the two-dimensional case, see \cite{Leschke2021}. In the three-dimensional case the situation is different from the start since the spectrum of the Landau Laplacian is now purely absolutely continuous. We proved a logarithmically enhanced area-law recently in \cite{PfeifferSpitzer2022}. 

Some connections between the appearance of a strict area-law versus a logarithmically enhanced area-law are obvious. For example, if the off-diagonal integral kernel of the Fermi projection (characterizing the ground state) is decaying fast (exponentially, say) then an area law holds. On the other hand, purely absolutely continuous spectrum does not guarantee a logarithmically enhanced area-law.

Let us recall some more mathematical results that add to the understanding of EE of non-interacting Fermi gases. In \cite{Pfirsch2018}, Pfirsch and Sobolev treat a periodic (electric) potential $V$ in dimension one and prove a logarithmically enhanced area-law. What is particularly interesting is that the second-order term (or ``surface" term of the order $\ln(L)$) is the same as for the Laplacian, that is, with $V=0$. The higher dimensional case remains an open problem. 

Stability of the enhanced area-law by a local (compactly supported) perturbation $V$ was proved by M\"uller and Schulte in \cite{muller2020stability,muller2021stability}. Motivated by these papers, the first named author of the present paper proved the stability of the area law for the two-dimensional Landau Hamiltonian by allowing a perturbation on the magnetic potential and a perturbation by an electric potential, see \cite{pfeiffer2021stability}. 

There are also results on the EE of random systems described by an Anderson-type Hamiltonian. They concern the surprising logarithmic enhancement of EE in the one-dimensional dimer model at a certain Fermi energy proved by M\"uller, Pastur and Schulte \cite{Muller2020}. In a more general case, Pastur and Slavin~\cite{PasturSlavin14} and Elgart, Pastur and Shcherbina~\cite{Elgart2017} proved an area law for the EE at a Fermi energy in the localisation regime for an Anderson model on the lattice $\mathbb Z^d$. However, their formula for the leading coefficient is not very explicit and it is not known how it depends on the disorder parameter. No rigorous result is available when the Fermi energy lies in the delocalisation regime of a random Hamiltonian, but this question touches on the notoriously difficult problem of the existence of such a regime in the first place.

We continue here the study of the local or entanglement entropy of ground states of the ideal Fermi gas in a constant magnetic field in dimension two. To this end, we fix some Fermi energy $\mu>0$ and denote by $P_B \coloneqq 1(H_B\le \mu)$ the spectral (or Fermi) projection of the Landau Hamiltonian 
\begin{equation}\label{def Landau} H_B \coloneqq (-i\nabla - a)^2 \,,
\end{equation}
where $i$ is the imaginary unit and $\nabla = (\partial_{x_1},\partial_{x_2})$ is the gradient. We choose the symmetric gauge $a(x) = (a_1(x), a_2(x)) \coloneqq (x_2 ,-x_1 )B/2$ for the vector potential $a: \R^2 \to \R^2$ generating the constant magnetic field (vector) perpendicular to the plane with Cartesian coordinates $x = (x_1, x_2)$. The strength of magnetic field is given by the real number $B>0$. On (a suitable domain of) $\mathsf{L}^2(\R^2)$, $H_B$ acts as a (positive) self-adjoint operator.

For some (bounded) Borel set $\Lambda\subset \R^2$ with Lebesgue volume $|\Lambda|$, we consider the localised Fermi projection
\begin{equation}
P_B(\Lambda) \coloneqq 1_\Lambda \, P_B \, 1_\Lambda\,,
\end{equation}
where $1_\Lambda$ is the multiplication operator with the indicator function of $\Lambda$. For our asymptotic results we assume in addition that $\Lambda$ is an open domain (that is, $\Lambda$ has only finitely many connected components) with some ``smoothness" properties of the boundary, $\partial\Lambda$. The latter may be piecewise $\mathsf{C}^2$-smooth in our first main result, piecewise $\mathsf{C}^1$-smooth in our second main result and a polygon  or $\mathsf{C}^2$-smooth in our third main result.

For some (suitable) ``test" function $f$ we are then interested in the Hilbert-space trace $\mathrm{tr} f(P_B(\Lambda))$. The most relevant cases are the quadratic polynomial $f(t) = t(1-t)$ related to particle number fluctuations and $f(t) = -t\ln(t)-(1-t)\ln(1-t)$ related to the von Neumann EE. For fixed $\Lambda$, this trace is too complicated but it is of interest to study the behaviour for large domains. To this end, we introduce a scaling parameter $L>0$ and consider, for fixed $\Lambda$ and fixed $\mu$, the function $( L,B) \mapsto \mathrm{tr} f(P_B(L\Lambda))$ as we let $L\to\infty$. This has been completely analysed in \cite{Leschke2021} for fixed $B$,  namely, the following area law has been proved (under the condition that $\partial\Lambda$ is $\mathsf{C}^3$-smooth),
\begin{equation} \label{1.3}
\mathrm{tr} f(P_B(L\Lambda)) = L^2 B \frac{|\Lambda|}{2\pi} (n+1) f(1) + L\sqrt{B} \,|\partial\Lambda|\, \mathsf{M}_{\le n}(f) + o(L)\,,
\end{equation}
see \cite[Theorem 2]{Leschke2021}. Here, $n\coloneqq \lfloor(\mu/B - 1)/2\rfloor$ is the number of Landau levels below $\mu$ and the coefficient $\mathsf{M}_{\le n}(f)$ is defined in \eqref{def M_K}. Hence, if $f(1)=0$, then the leading contribution as $L\to\infty$ is of the order $L |\partial\Lambda|$, which is the reason why it is called an area law. We speak of an enhanced area-law if the leading term is larger. The most prominent example is when there is an extra factor of $\ln(L)$. Such a logarithmically enhanced area-law is present for the (free) Laplacian (set $B=0$ in the above Hamiltonian $H_B$), as was conjectured by Gioev and Klich in \cite{Gioev2006} and proved in \cite{Leschke2014}. 

The concrete purpose of the present paper is to study the transition from an area law to a logarithmically enhanced area-law as $B$ vanishes and the number of Landau levels $n$ tends to infinity. The results may also be interpreted as a high energy limit where the magnetic strength $B$ is kept fixed and the Fermi energy (or the number of Landau levels) and the scaling parameter $L$ tend to infinity.

We cannot use the above result \eqref{1.3} from the constant $B$ case  directly as we have no control over lower order error terms, which depend, in general, on $n$ and  might, a priori, blow up as $n\to\infty$. The joint limit $B\to0$ and $L\to\infty$ (for fixed $\mu$) depends crucially on the product $BL$. 
We venture to state the following conjecture, which we will prove in certain relevant circumstances, in particular only for polynomials $f$. 
\begin{conjecture} \label{main} For any H\"older-continuous function $f$ with H\"older exponent strictly bigger than $0$, which satisfies $f(0)=f(1)=0$, any bounded Lipschitz domain $\Lambda$, and any $\mu>0$ we have the asymptotic expansion
 \begin{align} \label{main eq}
 \operatorname{tr} f(1_{L\Lambda} 1(H_{B}\le\mu) 1_{L \Lambda} )= \begin{cases}
   \lvert \partial \Lambda \rvert  \, \frac{2\sqrt{\mu}}{\pi} \, \mathsf{I}(f) L \ln(\sqrt\mu L) + o(\sqrt\mu L \ln(\sqrt\mu L))\, &\text{ if } BL< \sqrt \mu \, ,   \\
   \lvert \partial \Lambda \rvert  \, \frac{2\sqrt{\mu}}{\pi} \, \mathsf{I}(f) L \ln(\mu/B) + o(\sqrt\mu L \ln(\mu/B))\, &\text{ if } BL\ge \sqrt \mu \, ,
 \end{cases}
 \end{align}
as $L \to \infty$ and $B\to 0$, where we defined the functional
\begin{align} \label{def I(f)}
f\mapsto\mathsf{I}(f)  \coloneqq \frac 1 {4 \pi^2} \int_0^1 \frac {f(t)}{t(1-t)} \,\mathrm{d}t \,.
\end{align}

\end{conjecture}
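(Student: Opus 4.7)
The plan is to split the analysis into the two regimes $BL<\sqrt\mu$ and $BL\ge \sqrt\mu$ identified in the conjecture, treating each by comparison with a different solved problem, and to first reduce to polynomials of the special form $f(t)=t(1-t)g(t)$, which is permitted because $f(0)=f(1)=0$. For the quadratic case $f(t)=t(1-t)$ the trace equals $\|1_{L\Lambda}\, P_B\, 1_{(L\Lambda)^c}\|_{\mathsf S_2}^2$, which is amenable to direct Hilbert--Schmidt estimates; this is why the full range of parameters is reachable in that special case. For higher-degree polynomials the same algebraic reduction produces sums of traces of products of localised projections for which more refined tools are needed, and this is what restricts the region of parameters in which we can currently push the proof through.

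In the regime $BL<\sqrt\mu$ the goal is to compare $P_B$ with the free Fermi projection $P_0=1(-\Delta\le\mu)$ and then invoke the enhanced area-law of \cite{Leschke2014}. Writing the symmetric gauge as $a(x)=B a_0(x)$ with $a_0(x)=(x_2,-x_1)/2$, one has $H_B-H_0 = -B\bigl(a_0\cdot(-i\nabla)+(-i\nabla)\cdot a_0\bigr)+B^2|a_0|^2$, and on $L\Lambda$ the effective sizes of these terms are $BL$ and $(BL)^2$, both controlled by the assumption. A Helffer--Sj\"ostrand representation of $f(P_B(L\Lambda))-f(P_0(L\Lambda))$ through resolvent differences, combined with Schatten-norm bounds on smoothed spectral projections, should yield an error of order $o(L\ln L)$, after which \cite{Leschke2014} supplies the claimed leading term $|\partial\Lambda|\,\frac{2\sqrt\mu}{\pi}\,\mathsf{I}(f)\,L\ln(\sqrt\mu L)$.

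In the regime $BL\ge\sqrt\mu$ one would like to start from the constant-$B$ area law \eqref{1.3} of \cite{Leschke2021}, but since its $o(L)$ remainder is not known to be uniform as $n=\lfloor(\mu/B-1)/2\rfloor\to\infty$ it cannot be used as a black box, as already pointed out above. The plan is therefore threefold: first, to revisit the proof of \eqref{1.3} and track the $n$-dependence of the constants in the underlying pseudo-differential commutator estimates, producing a refinement with quantitative control in $n$; second, to derive the $n\to\infty$ asymptotics of the surface coefficient $\sqrt B\,\mathsf{M}_{\le n}(f)$ by reducing it, via the global sine-kernel asymptotics announced in the abstract together with the one-dimensional Landau--Widom enhanced area-law, to $\frac{2\sqrt\mu}{\pi}\,\mathsf{I}(f)\,\ln(\mu/B)+O(1)$; and third, to combine these two steps to read off the claimed expansion.

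The main obstacle is the uniform-in-$n$ error control in the second regime: the heat-kernel and commutator estimates behind \eqref{1.3} have constants that a priori grow with the number of occupied Landau bands, and it is precisely this growth which forces the restriction to a smaller region of parameters in the general-polynomial case, whereas for quadratic $f$ the direct Hilbert--Schmidt identity bypasses the issue entirely. A secondary but conceptually important point is to check that the two regimes match at the crossover $BL\sim\sqrt\mu$, where $\ln(\sqrt\mu L)$ and $\ln(\mu/B)$ are of the same order and the two asymptotic prescriptions must therefore agree at leading order; making this quantitative requires aligning the error bounds on both sides of the transition.
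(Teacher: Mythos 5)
Be aware first of what you are proving: the statement is \autoref{main}, which the paper itself states as a \emph{conjecture} and proves only in special cases (quadratic $f$ for all $(B,L)$ in \autoref{pf thm}; polynomials in the restricted regimes $L\le CK^{2/5}$ in \autoref{thm large K} and $K\lesssim L/\ln L$ resp.\ $K^{2}\lesssim L$ in \autoref{thm small K}). Your text is a programme with the decisive steps left open, so it does not close the conjecture either; more importantly, at two points it asserts more than can currently be justified. First, the reduction ``$f(t)=t(1-t)g(t)$'' silently replaces the H\"older class of the statement by polynomials; for a function that is merely H\"older continuous at $0$ and $1$ (e.g.\ the entropy function) this factorisation is unavailable, and the paper's concluding remarks explain that the missing ingredient is a quasi-/trace-norm bound of the type $\|(1_{L\Lambda}P_K1_{L\Lambda^\complement}P_K1_{L\Lambda})^{\alpha}\|_1\le CL\min\{\ln K,\ln L\}$, which neither you nor the paper establishes. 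Second, in the regime $BL<\sqrt{\mu}$ your Helffer--Sj\"ostrand/resolvent comparison of $P_B$ with $P_0$ is only heuristic: $H_B-H_0$ is an unbounded perturbation whose ``effective size $BL$ on $L\Lambda$'' does not translate into Schatten-norm smallness of the difference of spectral projections, and the paper's actual mechanism is a kernel-level comparison (Laguerre/Bessel asymptotics, \autoref{GK for t<1/2 K}) giving $\|1_{L\Lambda}(P_K-P_\infty)1_{L\Lambda}\|_2=\mathcal O(1)$ only under $L\le CK^{2/5}$ (improvable to $L=\mathcal O(\sqrt K)$, but explicitly \emph{not} up to the conjectured line $L\sim K$). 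Claiming an $o(L\ln L)$ error throughout $BL<\sqrt\mu$ is therefore a genuine gap, not a technicality. Third, in the regime $BL\ge\sqrt\mu$ your first step --- re-proving \eqref{1.3} from \cite{Leschke2021} with constants tracked in $n$ --- is exactly what the paper says is not available; the paper instead bypasses \eqref{1.3} by a direct geometric localisation (edge, corner and interior sets; \autoref{reduction lem polygons} and \autoref{Jm domain dependence lem}) and even so only reaches $K\lesssim L/\ln L$ for polygons and $K^2\lesssim L$ for $\mathsf C^2$ domains, not the full region $BL\ge\sqrt\mu$. Finally, the claimed matching at the crossover $BL\sim\sqrt\mu$ is not addressed by either approach and remains open.

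Where your plan does align with the paper: the quadratic case via the Hilbert--Schmidt identity $\operatorname{tr}f(P_B(L\Lambda))=\|1_{L\Lambda}P_K1_{L\Lambda^\complement}\|_2^2$ is precisely \autoref{pf thm}, and your second and third steps in the small-field regime --- evaluating the surface coefficient $\mathsf M_{\le n}(f)$ as $n\to\infty$ via sine-kernel asymptotics on a global scale combined with the Landau--Widom result --- are the paper's route as well (\autoref{MK=Jm}, \autoref{main term is MI}, \autoref{Jm=I thm}). The difference is that the paper does not attempt to make the constant-$B$ result of \cite{Leschke2021} uniform in $n$; it re-derives the reduction to the half-space quantity $\mathcal J_m([0,1)\times\R^-,\R\times\R^+;K)$ from scratch, with $K$-explicit error control, and only then borrows the fixed-$K$ identification with $\mathsf M_{<K}$. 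If you want to turn your proposal into a proof of anything beyond the quadratic case, you would need either to supply the uniform-in-$n$ commutator estimates you postulate (which nobody currently has) or to adopt the paper's localisation strategy and accept its parameter restrictions.
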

In order to explain the numerical factors in this formula we recall the asymptotic formula for the Laplacian in two dimensions. As mentioned, it was proved that for a H\"older-continuous function $f$ with $f(0)=f(1)=0$ (using the notation of \cite[(7)]{Leschke2014})
\begin{align} 
\operatorname{tr} f(1_{L\Lambda} 1(-\Delta\le\mu) 1_{L \Lambda} )=  J(\partial\Gamma,\partial\Lambda) \, \mathsf{I}(f) \sqrt\mu L \ln(\sqrt\mu L) + o(\sqrt\mu L \ln(\sqrt\mu L))\,, 
 \end{align}
as $L \to \infty$, where $\Gamma \coloneqq \{p\in\R^2 : p^2\le\mu\}$, $\hbar=1$, and
\begin{align} J(\partial\Gamma,\partial\Lambda) = \frac{2}{(\frac12)!}\left(\frac{\mu}{4\pi}\right)^{\frac{1}{2}} |\partial\Lambda| = \frac{2\sqrt{\mu}}{\pi}\,|\partial\Lambda| \, ,
\end{align}
with $(1/2)! \coloneqq \sqrt \pi /2$ being defined by Euler's gamma function.

This transition of area laws is similar in spirit to the transition that happens for the free Laplacian as one lets the temperature $T$ go to zero in the study of the EE of equilibrium states. As proved by A.V.~Sobolev \cite{Sobolev2017} in arbitrary spatial dimension and prior by Leschke, Sobolev and one of the present authors in \cite{Leschke2017} in dimension one, there are two regions for $(T,L)$: if $TL\to 0$, then there is an enhanced area-law of the order $L^{d-1}\ln(L)$. On the other hand, if $TL\to\infty$, then we have an enhanced area-law of the order $L^{d-1}\ln(T_0/T)$, for some temperature $T_0>0$. To draw the connection between the two scenario one could identify $T$ with $B$. 

However, from a technical point of view, the two cases differ in the sense that there is a full-fledged pseudo-differential calculus for the study of the Laplacian, or more general, of translation-invariant operators. This is not so much the case for the magnetic Laplacian $H_B$ but we can rely on well-established (asymptotic) properties of Hermite and Laguerre polynomials.

At the end of this section, we will argue by scaling that it suffices to consider $\mu=2$, only. Moreover, we find it more convenient to switch from $B$ to $\mu/(2B)\approx K\in\N$ and hence let $K\to\infty$. Our main results in this paper are therefore formulated for $\mu=2$ and in the joint limit $L\to\infty,K\to\infty$.

Our first main result is \autoref{pf thm} which deals with particle number fluctuations (that is, the function $f(t)=t(1-t)$) and the full parameter set of $K$ and $L$. We can handle the quadratic test function $f$ because the phase $\exp(i x\wedge y/(2K))$, which appears in the integral kernel of $P_B$, cancels in the computation of the trace and is out of our way. Nevertheless, it is an important case and we believe that is yields the right picture in the general case. Therefore, we venture to state \autoref{main}. We discuss the quadratic case in Section \ref{section: pf}. 

Our second main result \autoref{thm large K} is the logarithmically enhanced area-law of the order $L\ln(L)$ if $K$ is much larger than $L$ and the Landau Hamiltonian $H_B$ is ``close" to the (free) two-dimensional Laplacian. Of course, for any finite $K$ (or strictly positive $B$), the spectrum of $H_B$ is never anything like that of the Laplacian and the off-diagonal integral kernel $P_B(x,y)$ decays exponentially to 0 as the distance $\|x-y\|$ tends to infinity. But the rate is given by $1/K$ and that goes to 0 in the end and we do get the convergence to the integral kernel of $1(-\Delta\le \mu)$. More precisely, we are able to prove the enhanced area-law under the condition $L\le C K^{2/5}$ but, as just said, we believe that this holds true up to the transition line $K = C L$.  We should note that we use (but do not reprove) the known result for the Laplacian ($1/K=0$ in a way) and consider the case with small $1/K$ as a perturbation.

Our third main result is \autoref{thm small K} and deals with the region $K\ll L$. Here, due to the slow vanishing of the magnetic field, we are more in the regime of a constant magnetic field and the area law is of the order $L \ln(K)$. Interestingly, the $\ln(K)$ is a result of an enhanced area-law for the one-dimensional Laplacian where $K$ is the effective scaling parameter. Distilling the one-dimensional Laplacian is the result of the so-called sine-kernel asymptotics for Hermite and Laguerre functions. The difficulty here is that we need this asymptotics on a global scale, which takes up some space to prove. We succeed to prove \autoref{main} (almost) over the full range of parameters $K\le C L$ when the domain $\Lambda$ is a polygon; in fact, we have to assume $K\le C L/\ln(L)$. When $\partial\Lambda$ is $\mathsf{C}^2$-smooth we lose control over some error terms and end up with the restriction $K^2\le C L$. 

We return to some open questions in \autoref{concluding remarks}.

\subsection{Some notations and preliminary definitions}

A \emph{domain} $\Lambda$ is a (non-empty) bounded, open set in the two-dimensional Euclidean space $\R^2$ with finitely many connected components. It is called $\mathsf{C}^r$-smooth or piecewise $\mathsf{C}^r$-smooth if the boundary $\partial\Lambda = \bar{\Lambda}\setminus\Lambda$ is a $\mathsf{C}^r$-smooth curve, respectively a piecewise $\mathsf{C}^r$-smooth curve, for some $r\in\N$. $\Lambda$ is called Lipschitz if the boundary is Lipschitz continuous. The surface area $|\partial\Lambda|$ is the one-dimensional Hausdorff measure of $\partial\Lambda$. 

By $D_R(x)$ we denote the open disk of radius $R>0$ at the centre $x\in\R^2$ and by $D_R(S)\coloneqq \bigcup_{x\in S} D_R(x)$ the $R$-neighbourhood of a set $S\subset \R^2$.

We denote the set of natural numbers by $\N \coloneqq \{1,2,\ldots\}$ and $\N_0 \coloneqq \N \cup\{0\}$ the set including $0$.

The parameter $L$ is a positive real number which scales the domain $\Lambda$ and goes to infinity in our asymptotic results. The parameter $K$ is another positive (in most cases natural) number and the inverse of the magnetic field strength, which also tends to infinity (in most statements). It determines the Fermi projection $P_K$, see \eqref{def P_K}. 

The indicator function of a set $I\subset\R^m$ is denoted by $1_I$ and our notation does not distinguish between this function and the multiplication operator by this function. The identity operator is denoted by $\mathds{1}$.  

We use the standard big-$\mathcal O$ and little-$o$ notation. That is, for two functions $f$ and $g>0$, $f=\mathcal O(g)$ if $|f(L)|\le C g(L)$ for some (finite) constant $C$ and sufficiently large $L$ and $f=o(g)$ if $\limsup |f|/g(L) = 0$. In the latter case, we also write $f\ll g$ or $g\gg f$.  In a series of estimates the specific value of a constant may change from line to line without changing its name. Constants are always finite real numbers and usually strictly positive.

\subsection{Reduction to the case $\mu=2$, introduction of $K$ and $L$}
As is well-known, the spectrum of the Landau Hamiltonian $H_B$ of \eqref{def Landau} equals $B(2 \mathbb N_0+1)$. Each eigenvalue is infinitely degenerate. Let $\Pi_{\ell,B}$ be the projection onto the eigenspace with eigenvalue $B(2\ell+1)$ for $\ell \in \mathbb N_0$. For some fixed (Fermi energy) $\mu>0$ we work with the spectral projection $1(H_B\le \mu) = \sum_{\ell =0}^\nu \Pi_{\ell,B}$ with $\nu \coloneqq \lfloor(\mu/B-1)/2\rfloor$, also called Fermi projection. 

The expressions we are interested in are only dependent on the eigenvalues of the operator $1_{\Lambda} 1(H_B \le \mu) 1_{\Lambda}$ for some domain $\Lambda \subset \R^2$. For any $\lambda \in \R^+$, this operator is unitarily equivalent to  $1_{ \lambda^{-1} \Lambda} 1(H_{\lambda^2 B } \le \lambda^2 \mu ) 1_{\lambda^{-1}\Lambda}$. We define $K\coloneqq \lfloor(\mu/B-1)/2\rfloor+1$.  
We can then assume without loss of generality that $B=1/K$, as both sides of \eqref{main eq} are invariant under this scaling. For the rescaled $\mu'$ we observe $K-1= \lfloor (\mu' K-1)/2 \rfloor = \lfloor (2K-1)/2 \rfloor$ and this implies $1(H_{1/K} \le \mu' )= 1(H_{1/K} \le 2)$. This shows $\mu'=2 + \mathcal O(1/K)$ and thus, replacing $\mu'$ by $2$ on the right-hand side of \eqref{main eq} only changes the leading term of the right-hand side by an additive error term of order $L \ln(\min(L,K) ) /K = \mathcal O(L)$. This is why, in the following, we always assume 
\begin{align} \mu =2 \quad\mbox{ and } \quad B=1/K\; \mbox{ for } \; K \in \mathbb N\,.
\end{align} 
Finally, we redefine the projection
\begin{equation} \label{def P_K}
P_K\coloneqq 1(H_B\le 2) = \sum_{\ell=0}^{K-1} \Pi_{\ell,1/K}\,.
\end{equation}

\section{Preliminary asymptotic results on the integral kernel of the Fermi projection}\label{kernel asymp section}

This section starts with the integral kernel of the Fermi projector $P_K$ and collects various estimates on this projector, which are needed throughout the paper. Most of them are probably well-known and we list them here for completeness and the convenience of the reader. We do not claim any novelty.

We introduce the function $F_K^{(\alpha)}$ on $\R^+$, which is related to the (generalised) Laguerre polynomial of degree $K-1$. Then we study its asymptotic properties as $K$ becomes large. This is split into two subsections, one is devoted to small arguments $x$, that is, to $x\le 1/2$ and the other one to large arguments, that is, to $x\ge 1/2$. The main results on the asymptotic expansion of the translation invariant part, $G_K$, of the integral kernel of $P_K$ are collected in \autoref{GK for t<1/2 K} and \autoref{GK for t>1/2 K}. The last subsection contains an integral bound on $G_K$, which is of immediate use in the next section on particle number fluctuations.

In this section, we study the integral kernel of the Fermi projection and in particular, how it behaves asymptotically for small magnetic fields. We will see in which specific way it converges to the free projector. This kernel is given for both $x=(x_1,x_2)$ and $y=(y_1,y_2)$ in $\mathbb R^2$ by
\begin{align}
P_K(x,y) &=1(H_{1/K}\le2)(x,y) 
\\
& = \frac 1 {2 \pi K}\exp \left(-\frac{1}{4K} \lVert x -y \rVert^2 + i\frac {1} {2K} x \wedge y \right) \Ln_{K-1}^{(1)} \left( \frac 1 {2K}  \lVert x-y \rVert^2 \right) \\
&= \exp \left( i\frac {1} {2K} x \wedge y \right) G_K(\lVert x-y \rVert/\sqrt 8)  \,, \label{pl kernel}
\end{align}
where $x\wedge y\coloneqq x_1y_2-x_2y_1$, 
\begin{align*} \Ln_{K-1}^{(\alpha)}(t)\coloneqq \sum_{j=0}^{K-1} \frac{(-1)^j}{j!} {K-1+\alpha\choose K-1-j} t^j\,,\quad t\ge0
\end{align*}
is the (generalised) Laguerre polynomial of degree $K-1$ (for any $\alpha\in\R$), see \cite[(5.1.6)]{Szego}  and 
 \begin{align}
 G_K(t) \coloneqq \frac 1 {2 \pi K} \exp \left(-4K \frac{t^2}{2K^2}  \right) \Ln_{K-1}^{(1)} \left( 4K \frac{t^2}{K^2} \right),\quad t\ge0\,.
 \end{align}
We need to be a bit more general and define for $\alpha \in \mathbb R$ and $\nu \coloneqq 4K+2 (\alpha-1)$ the function
 \begin{align}
 F_K^{(\alpha)}(x) \coloneqq 2^\alpha \sqrt \nu x^{\frac 1 2 \alpha+ \frac 1 4 } \lvert 1-x\rvert ^{\frac 1 4} \exp(- \nu x/2) \Ln_{K-1} ^{(\alpha)} (\nu x) \,,\quad x > 0\,.
 \end{align}
The definition of this function is based on the asymptotic analysis of $G_K$, which can be understood using $\alpha=1$. It can be found in \cite[\href{http://dlmf.nist.gov/18.15.iv}{(18.15.(iv))}]{NIST:DLMF}. We will consider the equations (18.15.17) to (18.15.23) in \cite{NIST:DLMF}. In particular, the equations (18.15.19) and (18.15.22) provide us with the asymptotics for $G_K(t)$. We start by identifying the variables and parameters. According to (18.15.17) and trivialities, we see that
 \begin{align}
 \alpha=1\,,\quad  n=K-1\,, \quad \nu=4K\,, \quad x= \frac{t^2}{K^2}\,.
 \end{align}
Later on, we will also use the cases $\alpha= \pm \frac 12 $ to find the asymptotic expansion of the Hermite polynomials.
 
This allows us to rewrite $G_K$ as
 \begin{align} \label{G to F}
 G_K(t) = \frac 1 { 8 \pi t^{3/2} \left \lvert 1- \frac {t^2} {K^2} \right\rvert ^{1/4}  } F_K^{(1)} \left( \frac{t^2}{K^2} \right).
 \end{align}
 
We intend to show
\begin{align}
F_K^{(\alpha)}(x)  = 
	\begin{cases}
	\sqrt{ \nu \xi(x)} J_\alpha(\nu \xi(x)) + \mathcal O( 1/K)  &\text{ if } 0 \le x \le 1/2\,, \\
	\sqrt{2/ \pi } \cos ( \nu \xi(x)+g(\alpha)) + \mathcal O(1/(1+K\sqrt x))  & \text{ if } 0 \le x \le 1/2\,, \\
	\sqrt{2/ \pi } \cos ( \nu \xi(x)+g(\alpha)) + \mathcal O(1/(1+K(1-x)^{3/2}))  & \text{ if } 1/2 \le x \le 1\,, \\
	0+ \mathcal O(\exp(-K (x-1)^{3/2}/10^4)  & \text{ if } 1 \le x \le 3/2\,, \\
	0 + \mathcal O(\exp(-K x/10^4)  & \text{ if } 3/2 \le x \,.
	\end{cases}
\end{align}
 The factors $10^4$ are obviously placeholders and can depend on $\alpha$, but not on $K$. The function $\xi$ is given in \eqref{def xi eq} below (or \cite[\href{http://dlmf.nist.gov/18.15.E18}{(18.15.18)}]{NIST:DLMF}) and $g(\alpha)=-\pi\alpha /2- \pi/4$ is affine linear. 
 \begin{align} \label{def xi eq}
\xi(x) \coloneqq \frac 12 \left( \sqrt{x-x^2} + \operatorname{arcsin}(\sqrt{x})\right),\quad x\in [0,1]\, .
\end{align}
Let us also define $\eta \colon [-1,\infty) \to \mathbb C$ by $\eta(0)=0$ and for any $t \in (-1,\infty)$
\begin{align} \label{def of eta}
\eta'(t) \coloneqq \begin{cases} \sqrt{1-t^2}\, , \quad & \text{ if } -1 \le t \le 1\,, \\ i \sqrt{t^2-1} \,, \quad& \text{ if } t \ge 1\,. \end{cases}
\end{align}
An explicit formula for $\eta$ is given by
\begin{align} \label{eta expicit}
\eta(t)= \begin{cases} 1/2\left( t \sqrt{1-t^2} + \arcsin (t)\right) , \quad & \text{ if } -1 \le t \le 1\,, \\
i/2 \left( t \sqrt{t^2-1} - \operatorname{arccosh}(t) \right) + \pi/4\,, \quad & \text{ if } t\ge 1\,.
\end{cases}
\end{align}
We see that this function satisfies 
\begin{align} \label{xi in eta}
\eta(\sqrt t) = \xi(t)\quad\mbox{for any } t\in [0,1] \mbox{ and } \eta(1)=\pi/4\,.
\end{align}
At this point, let us also define the function $\zeta$
\begin{align} 
\zeta(s)\coloneqq - \left[3/2(\pi/4- \eta(\sqrt s))\right]^{2/3}\,,\quad s \ge 0\,. \label{def zeta}
\end{align}
In the last equation, the terms inside the brackets is either in $\mathbb R^+$ or in $(-i) \mathbb R^+$  and thus the two thirds power has a unique real value with the branch chosen by $(-i)^{2/3}=-1$. Thus, for $s \ge 1$, we get
\begin{align} \label{zeta for x>0 good sign}
\zeta(s)= -\left[ 3/2 (-i)\, \Im (\eta (\sqrt s)) \right]^{2/3} = \left[ 3/2 \,\Im (\eta (\sqrt s)) \right] ^{2/3}\,,
\end{align}
where $\Im$ refers to the imaginary part.
\begin{lemma} \label{eta props}
For $h \in (-1,1)$, the function $\eta$ satisfies the following properties:
\begin{align}
\eta(h)&=h+ \mathcal O(h^3)\,, \label{eta at 0eq}\\
|\eta(h)| &\ge (\pi/4) |h|\,, \label{eta lower linear bound} \\ 
3/2 \big(\pi/4- \eta(1-h)\big)&=\sqrt 2 \Big( \sqrt h^3 - \frac{3 }{20} \sqrt h^5 + \mathcal O(\sqrt{\lvert h\rvert}^{7})\Big)\,, \label{eta at 1eq}
\end{align}
where $\sqrt h = i \sqrt{-h}$ for $h \le 0$. Furthermore, $\eta^{-1}$ is $2/3$-Hölder continuous on $\eta([-1,2])$.
\end{lemma}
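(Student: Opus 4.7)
The plan is to handle the four assertions in turn, all directly from the explicit representation \eqref{eta expicit} and the derivative formula \eqref{def of eta}.

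For \eqref{eta at 0eq} I Taylor-expand $\eta'(t)=\sqrt{1-t^2}=1-t^2/2+\mathcal O(t^4)$ on $(-1,1)$ and integrate termwise to obtain $\eta(h)=h-h^3/6+\mathcal O(h^5)$, which implies the stated estimate. For \eqref{eta lower linear bound} I use that $\eta$ is odd on $[-1,1]$ (since $\eta'$ is even there and $\eta(0)=0$) to reduce to $h\in[0,1]$. Setting $g(h)\coloneqq\eta(h)-(\pi/4)h$, one has $g(0)=0$ and $g(1)=\eta(1)-\pi/4=0$; the derivative $g'(h)=\sqrt{1-h^2}-\pi/4$ is strictly positive on $[0,\sqrt{1-\pi^2/16})$ and strictly negative on $(\sqrt{1-\pi^2/16},1]$, so $g$ first increases and then decreases, hence $g\ge 0$ throughout $[0,1]$.

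For \eqref{eta at 1eq} I treat the two signs of $h$ separately and check that they assemble via the convention $\sqrt h=i\sqrt{-h}$ for $h<0$. For $h\in(0,1)$ the substitution $s=1-t$ gives
\begin{equation*}
\pi/4-\eta(1-h)=\int_{1-h}^{1}\sqrt{1-t^2}\,dt=\int_{0}^{h}\sqrt s\,\sqrt{2-s}\,ds,
\end{equation*}
and the binomial expansion $\sqrt{2-s}=\sqrt 2(1-s/4-s^2/32+\mathcal O(s^3))$ integrates termwise into $(2\sqrt 2/3)h^{3/2}-(\sqrt 2/10)h^{5/2}+\mathcal O(h^{7/2})$; multiplying by $3/2$ yields the stated series for $h>0$. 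For $h\in(-1,0)$ the point $1-h>1$ lies on the imaginary branch of $\eta'$, so with the substitution $u=t-1$
\begin{equation*}
\pi/4-\eta(1-h)=-i\int_{0}^{|h|}\sqrt u\,\sqrt{2+u}\,du,
\end{equation*}
and the analogous termwise integration produces $-i\sqrt 2(|h|^{3/2}+(3/20)|h|^{5/2}+\mathcal O(|h|^{7/2}))$. Using $\sqrt h^{k}=i^{k}|h|^{k/2}$ this equals $\sqrt 2(\sqrt h^{3}-(3/20)\sqrt h^{5}+\mathcal O(\sqrt{|h|}^{7}))$, matching the lemma.

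For the Hölder continuity of $\eta^{-1}$ I describe the image $\eta([-1,2])$ as an L-shaped curve in $\mathbb C$: the horizontal segment $[-\pi/4,\pi/4]$ arising from $\eta([-1,1])$, and the vertical segment $\pi/4+i[0,\Im\eta(2)]$ arising from $\eta([1,2])$, joined at the corner $\pi/4$. Away from $\pm\pi/4$ we have $|\eta'|$ bounded below, so $\eta^{-1}$ is locally Lipschitz, hence $2/3$-Hölder on compact pieces. Near $\pi/4$ the expansion \eqref{eta at 1eq} gives $|\eta(1\pm h)-\pi/4|\asymp|h|^{3/2}$ on each branch, so inverting yields $|\eta^{-1}(w)-1|\le C|w-\pi/4|^{2/3}$ for $w$ on either side; the same bound at $-\pi/4$ follows by the oddness of $\eta$ on $[-1,1]$. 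The only non-trivial comparison is between a point $w_1=\pi/4-a$ on the horizontal branch and a point $w_2=\pi/4+ib$ on the vertical branch, which I handle with the elementary inequality $a^{2/3}+b^{2/3}\le 2^{2/3}(a^2+b^2)^{1/3}$ (combining concavity of $x\mapsto x^{2/3}$ with Cauchy--Schwarz). Patching these local bounds yields the global $2/3$-Hölder estimate.

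The step I expect to be the main obstacle is the bookkeeping in \eqref{eta at 1eq}: carefully verifying that the real expansion (for $h>0$) and the purely imaginary expansion (for $h<0$) produce exactly the coefficients prescribed by the single formula with the branch $\sqrt h=i\sqrt{-h}$, so that no stray sign or factor of $i$ survives. Everything else reduces to a short convexity argument or a direct term-by-term integration.
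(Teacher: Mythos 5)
Your proposal is correct and follows essentially the same route as the paper: expand and integrate $\eta'$ near $0$ and near $1$ (with the branch $\sqrt h=i\sqrt{-h}$ handling $1-h>1$), use the degenerate $|h|^{3/2}$ behaviour at $\eta(\pm1)=\pm\pi/4$ together with local Lipschitz continuity elsewhere to get the $2/3$-H\"older bound for $\eta^{-1}$, and prove the linear lower bound by an elementary monotonicity/concavity argument (the paper cites concavity of $\eta$ on $[0,1]$, you analyse the sign of $\eta'(h)-\pi/4$; these are equivalent). Only a cosmetic slip: in the $h<0$ case the displayed outcome $-i\sqrt2\big(|h|^{3/2}+\tfrac{3}{20}|h|^{5/2}+\mathcal O(|h|^{7/2})\big)$ is $\tfrac32\big(\pi/4-\eta(1-h)\big)$, i.e.\ the factor $3/2$ has been applied implicitly (the raw integral has coefficients $\tfrac23$ and $\tfrac1{10}$), which does not affect the conclusion.
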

\begin{proof}
The expansion for $\eta(h)$ is trivial. The lower bound for $\eta(h)/h$ holds for positive $h$, as $\eta$ is concave on $[0,1]$ and thus, the line from $(0,0)$ to $(1,\pi/4)$ lies below the graph of $\eta$. For negative $h$, the claim follows as $\eta(-h)=-\eta(h)$.

 The expansion for $\eta(1-h)$ follows from $\eta(1)=\pi/4$ and
\begin{align}
\eta'(1-h)=\sqrt{1-(1-h)^2}=\sqrt{h(2-h)}=\sqrt 2  \sqrt h - \frac{ \sqrt 2} {4} \sqrt h^3 + \mathcal O(|h|^{5/2})\,,
\end{align}
where we choose the branch of the square root, which satisfies $\sqrt{-t}=i \sqrt{t}$ for any $t\ge 0$. Integrating this equation leads to
\begin{align}
3/2 \big(\pi/4- \eta(1-h)\big)=\sqrt 2 \Big( \sqrt h^3 - \frac{3 }{20} \sqrt h^5 + \mathcal O(|h|^{7/2})\Big)\,.
\end{align}
The inverse $\eta^{-1}$ is locally Lipschitz continuous away from $\eta(1)=\pi/4$, as $\eta'$ only vanishes at $1$. From the expansion of $\eta(1-h)$, we can thus conclude that $\eta^{-1}$ is $2/3$-Hölder continuous on $\eta([-1,2])$.
\end{proof}

With these preparations we discuss in the following two subsections pointwise estimates on $G_K$. In the last subsection we present an estimate on the integral of $|G_K(t)|^2$ and a simple integral identity. All these estimates will be useful in later sections.

\subsection{The case \texorpdfstring{$x  \le  \frac 12 $}{x ^^e2 ^^89 ^^a4 1/2}}
In this subsection, we will establish the necessary understanding of \cite[\href{http://dlmf.nist.gov/18.15.E19}{(18.15.19)}]{NIST:DLMF}. We would like to establish an asymptotic expansion up to a sufficiently small error term.

As \cite[\href{http://dlmf.nist.gov/18.15.E19}{(18.15.19)}]{NIST:DLMF} reduces the Laguerre polynomial asymptotic to the Bessel functions $J_\alpha$, we take a look at their asymptotics.
\begin{proposition} \label{Bessel asymptotic}
For $s\ge 0$, we have
\begin{align} \label{BesselJ asymp eq}
\sqrt s J_{\alpha}(s) = 
\begin{cases} 
	\sqrt {2/ \pi} \cos( s - \alpha \pi /2  - \pi /4)  & \text{ if } \alpha = \pm 1/2 \,, \\
	\sqrt {2/ \pi} \cos( s - \alpha \pi /2  - \pi /4)  +\mathcal O(1/(1+s)) & \text{ if } \alpha =1\,.
\end{cases}
\end{align}
\end{proposition}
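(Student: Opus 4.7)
The proposition has two parts, and my plan is to dispatch the two cases separately.

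For $\alpha = \pm 1/2$, the identity is in fact exact (no error term). The plan is simply to invoke the well-known closed-form expressions
\begin{align*}
J_{1/2}(s) = \sqrt{\tfrac{2}{\pi s}}\,\sin(s)\,,\qquad J_{-1/2}(s) = \sqrt{\tfrac{2}{\pi s}}\,\cos(s)\,,
\end{align*}
and to check that $\sin(s) = \cos(s - \pi/2) = \cos(s - (1/2)\pi/2 - \pi/4)$ while $\cos(s) = \cos(s - (-1/2)\pi/2 - \pi/4)$, matching the right-hand side of \eqref{BesselJ asymp eq} exactly.

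For $\alpha = 1$, I would split the argument at some fixed threshold, say $s = 1$. On the bounded interval $[0,1]$, both $\sqrt{s}\,J_1(s)$ and $\sqrt{2/\pi}\cos(s - \pi/2 - \pi/4)$ are continuous hence uniformly bounded (the Bessel side vanishes like $s^{3/2}/2$ near the origin), so their difference is $\mathcal O(1) = \mathcal O(1/(1+s))$ trivially on this range. On the complementary range $s \ge 1$, the plan is to apply the classical large-argument asymptotic expansion of the Bessel function (for example \cite[\href{http://dlmf.nist.gov/10.17.E3}{(10.17.3)}]{NIST:DLMF}),
\begin{align*}
J_\alpha(s) = \sqrt{\tfrac{2}{\pi s}}\,\Bigl(\cos(s - \alpha\pi/2 - \pi/4) + \mathcal O(1/s)\Bigr)\,, \qquad s\to\infty\,,
\end{align*}
specialised to $\alpha = 1$. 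Multiplying by $\sqrt{s}$ yields exactly the claimed expansion, and since $s\ge 1$, the remainder $\mathcal O(1/s)$ is equivalent to $\mathcal O(1/(1+s))$.

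There is no real obstacle here; the only point requiring a line of justification is the small-$s$ regime for $\alpha=1$, and this is handled by a direct boundedness argument rather than any delicate estimate. The two pieces combine to give the statement for all $s \ge 0$.
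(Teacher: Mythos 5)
Your proposal is correct and is essentially the paper's own proof: the $\alpha=\pm1/2$ case is the exact elementary formula (DLMF (10.16.1)) and the $\alpha=1$ case is the large-argument expansion (DLMF (10.17.3)). Your extra paragraph handling $0\le s\le 1$ by boundedness is a point the paper leaves implicit, but it is the same argument in substance.
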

\begin{proof}
The first part is just \cite[\href{http://dlmf.nist.gov/10.16.E1}{(10.16.1)}]{NIST:DLMF} and the second part is \cite[\href{http://dlmf.nist.gov/10.17.i}{(10.17.3)}]{NIST:DLMF}.
\end{proof}

\begin{lemma} \label{?? for x<1/2}
For $0 \le x \le 1/2$ and $\alpha \in \{\pm 1/2, 1\}$, we have
\begin{align}
F_K^{(\alpha)}(x)=\sqrt{ \nu \xi(x)} J_\alpha(\nu \xi(x)) + \mathcal O\left(1 /K\right).
\end{align}
In particular, for $\alpha=\pm 1/2$, we get
\begin{align}
F_K^{(\alpha)}(x)=\sqrt {2/ \pi} \cos( \nu \xi(x) - \alpha \pi /2  - \pi /4)  + \mathcal O\left(1 /K\right).
\end{align}
Furthermore, if $\alpha=1$, as $x \to 0$, the error term is at most $\mathcal O(\sqrt K x^{3/4})$.
\end{lemma}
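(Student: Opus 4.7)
The plan is to invoke the uniform asymptotic expansion of the Laguerre polynomial in terms of the Bessel function $J_\alpha$ provided by \cite[(18.15.19)]{NIST:DLMF}, and for the special values $\alpha = \pm 1/2$ to substitute the exact formula from \autoref{Bessel asymptotic}. The DLMF identity, valid uniformly on compact subsets of $[0, 1)$, expresses $\Ln_{K-1}^{(\alpha)}(\nu x)$ as an explicit prefactor times $\sqrt{\xi(x)}\, J_\alpha(\nu\xi(x))$, plus correction terms carrying positive powers of $1/\nu$ and involving $J_\alpha$, $J_\alpha'$ (equivalently $J_{\alpha \pm 1}$) at $\nu\xi(x)$. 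Multiplying through by the factor $2^\alpha \sqrt{\nu}\, x^{\alpha/2 + 1/4}|1-x|^{1/4} e^{-\nu x/2}$ from the definition of $F_K^{(\alpha)}(x)$, the DLMF prefactors cancel cleanly --- in particular, the $x^{-\alpha/2}$ singularity at $x = 0$ in the DLMF prefactor is killed by the $x^{\alpha/2}$ factor in the definition --- and the leading term rearranges to exactly $\sqrt{\nu\xi(x)}\, J_\alpha(\nu\xi(x))$.

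To bound the remainder after the same rearrangement, one observes that for $\alpha \in \{\pm 1/2, 1\}$ the functions $\sqrt{s}J_\alpha(s)$ and $\sqrt{s}J_{\alpha\pm 1}(s)$ are uniformly bounded on $\R^+$: this is immediate from \autoref{Bessel asymptotic} for $\alpha = \pm 1/2$, and follows from the standard large-$s$ bounds on $J_1, J_2$ together with their power-series behaviour near $0$ for $\alpha = 1$. After reshuffling, the $\sqrt{\nu}$ prefactor combines with the extra $1/\nu$ factor from the DLMF remainder to yield an absolute error of $\mathcal{O}(1/\nu) = \mathcal{O}(1/K)$, uniformly on $[0, 1/2]$. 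The explicit cosine form for $\alpha = \pm 1/2$ then follows by substituting the exact identity for $\sqrt{s}J_{\pm 1/2}(s)$ from \autoref{Bessel asymptotic}.

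The refined bound $\mathcal{O}(\sqrt{K}\, x^{3/4})$ as $x \to 0$ for $\alpha = 1$ requires tracking the vanishing of the envelope at $s = 0$ rather than merely using a uniform bound on $\R^+$. From $J_1(s) = s/2 + \mathcal{O}(s^3)$ one has $\sqrt{s}\, J_1(s) = \mathcal{O}(s^{3/2})$ near the origin, and combined with $\xi(x) = \sqrt{x} + \mathcal{O}(x^{3/2})$ at $x = 0$ this gives $\sqrt{\nu\xi(x)}\, J_1(\nu\xi(x)) = \mathcal{O}(K^{3/2} x^{3/4})$ for small $x$. The remainder, which has the same structural small-$x$ profile multiplied by an additional $1/\nu$, is then $\mathcal{O}(\sqrt{K}\, x^{3/4})$. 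The main obstacle is the careful bookkeeping of these $x$-dependent prefactors so that the bound is simultaneously uniform on $[0, 1/2]$ and displays the refined vanishing near $x = 0$ for $\alpha = 1$; beyond this, the lemma is essentially a direct reading of a classical asymptotic formula.
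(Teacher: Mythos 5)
Your overall route is the same as the paper's: solve \cite[(18.15.19)]{NIST:DLMF} with $M=1$ for $F_K^{(\alpha)}$, use the prefactor cancellation, and insert the Bessel bounds (and, for $\alpha=\pm1/2$, the exact cosine formula). But there is a genuine gap in your error analysis: you never control the explicit coefficient $B_0(\xi(x))$ from \cite[(18.15.20)]{NIST:DLMF} that multiplies the correction term $\xi(x)^{-1/2}J_{\alpha+1}(\nu\xi(x))/\nu$. Using only the uniform bound $\sqrt{s}\,\lvert J_{\alpha+1}(s)\rvert\le C$ on $\R^+$, as you propose, this correction term is bounded by $C\,\lvert B_0(\xi(x))\rvert/(\nu\,\xi(x))$, and since $\xi(x)\sim\sqrt{x}$ this is \emph{not} uniformly $\mathcal O(1/\nu)$ on $[0,1/2]$: for $x\sim\nu^{-2}$ it is only $\mathcal O(\lvert B_0\rvert)$. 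The uniformity down to $x=0$ is exactly where the paper inserts the estimate $\lvert B_0(\xi(x))\rvert\le Cx\le C\xi(x)^2$, which comes from the cancellation of the constants $(1-4\alpha^2)/8$ and $(4\alpha^2-1)/8$ in (18.15.20) after noting $\xi(x)\bigl((1-x)/x\bigr)^{1/2}=1+\mathcal O(x)$. Without some such vanishing of $B_0$ at $x=0$, the claimed uniform $\mathcal O(1/K)$ bound does not follow from the ingredients you list (the envelope term, by contrast, is indeed harmless with just $\mathrm{env}J_\alpha(s)\le C/\sqrt{s}$).

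The same omission undermines your refined bound for $\alpha=1$. It is enough to treat the regime $\nu\xi(x)\le1$ (for $\nu\xi(x)\ge1$ one has $\sqrt{K}x^{3/4}\gtrsim 1/K$, so the first bound already suffices — a reduction worth stating explicitly). There the $B_0$-term equals $\sqrt{\nu}\,\xi(x)^{-1/2}J_{2}(\nu\xi(x))\,B_0(\xi(x))/\nu$; if $B_0$ were merely bounded, the small-argument bound $\lvert J_2(s)\rvert\le Cs^2$ gives an estimate of order $\nu^{3/2}\xi(x)^{3/2}$, i.e.\ of the \emph{same} size as the main term, not ``$1/\nu$ smaller'' as you assert. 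With $B_0(\xi(x))=\mathcal O(\xi(x)^2)$ and the small-argument bounds on $J_2$ and $\mathrm{env}J_1$, one gets $\mathcal O(\sqrt{\nu}\,\xi(x)^{3/2})=\mathcal O(\sqrt{K}x^{3/4})$, which is the paper's computation. So the skeleton of your argument is right, but the ``bookkeeping'' you defer is the actual content of the proof: the bound on $B_0(\xi(x))$ near $x=0$ must be established from (18.15.20).
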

\begin{proof}
The asymptotic expansion \cite[\href{http://dlmf.nist.gov/18.15.E19}{(18.15.19)}]{NIST:DLMF} with $\nu=4K+2 (\alpha-1)$ and $M=1$ can be solved for $F_K^{(\alpha)}$ and provides us with
\begin{align}
F_K^{(\alpha)}(x) = \sqrt \nu \left( \sqrt{\xi(x)}  J_\alpha(\nu \xi(x) )+  {\xi(x)^{-\frac 1 2}} J_{\alpha+1} (\nu \xi(x) ) \frac{B_0(\xi(x))}{\nu} + \sqrt {\xi(x)} \,\mathrm{env}J_\alpha(\nu \xi(x)) \mathcal O\left( \frac 1 \nu \right) \right).
\end{align}
We are left to establish upper bounds for the last two summands. We need to bound the coefficient $B_0$. For that, we recall \eqref{xi in eta} and \eqref{eta at 0eq} to observe
\begin{align}
\xi(x)&=\eta(\sqrt x)= \sqrt x+ \mathcal O\left(x^{\frac 3 2} \right). \label{xi asymptotic eq}
\end{align}
Hence, the coefficient described in \cite[\href{http://dlmf.nist.gov/18.15.E20}{(18.15.20)}]{NIST:DLMF},
\begin{align}
B_{0}(\xi(x))\coloneqq -\frac{1}{2}\left(\frac{1-4\alpha^2}{8}+\xi(x)\left(\frac{1-x}{x}\right)^{\frac{1}{2}}\left(\frac{4 \alpha^2-1}{8}+\frac{1}{4}\frac{x}{1-x}+\frac{5}{24}\left(\frac{x}{1-x}\right)^{2}\right)\right),
\end{align}
satisfies $\lvert B_0(\xi(x)) \rvert \le C x\le C \xi(x)^2$ for $0 \le x \le \frac 12$ with a constant $C$ depending on $\alpha$.

 Using \cite[\href{https://dlmf.nist.gov/2.8\#iv.p5}{(2.8.32--34)}]{NIST:DLMF} and  \cite[\href{http://dlmf.nist.gov/10.17.i}{(10.17.3--4)}]{NIST:DLMF}, we see that for $s \ge 1$, we have
\begin{align} \label{BesselJ upper bound large argument}
\max\{ \lvert J_\alpha(s) \rvert,\lvert J_{\alpha+1}(s) \rvert, \lvert \text{env}J_{\alpha}(s) \rvert  \}  &\le C /\sqrt s 
\end{align}  
and for $ 0\le s \le 1$, \cite[\href{http://dlmf.nist.gov/10.2.E2}{(10.2.2)}]{NIST:DLMF} tells us that
\begin{align} \label{BesselJ upper bound small argument}
\max\{\lvert J_\alpha(s) \rvert,  \lvert J_{\alpha+1}(s) \rvert, \lvert \text{env}J_{\alpha}(s) \rvert  \}  &\le C s^\alpha \,.
\end{align}
Thus, as $\alpha \ge -1/2$, \eqref{BesselJ upper bound large argument} holds for any $s \ge 0$. Hence, for any $0 \le x \le 1/2$, we observe
\begin{align}
\sqrt \nu & \left( {\xi(x)^{-\frac 1 2}} J_{\alpha+1} (\nu \xi(x) ) \frac{B_0(\xi(x))}{\nu} + \sqrt {\xi(x)}\, \text{env}J_\alpha(\nu \xi(x)) \,\mathcal O\left( \frac 1 \nu \right) \right) \\
&= \mathcal O \left(   {\xi(x)^{-\frac 1 2}}  \frac {\sqrt \nu}  {\sqrt {\nu \xi(x)}}  \frac{\xi(x)^2}{\nu} + \sqrt {\nu \xi(x)} \frac 1 {\sqrt{\nu \xi(x)}}  \frac 1 \nu     \right) \\
& =\mathcal O((1+ \xi(x))/\nu)= \mathcal O(1/K)\,.
\end{align}
This proves the first claim. The second one follows by \eqref{BesselJ asymp eq}. Let $\alpha=1$. If $\nu\xi(x)\ge 1$, we have $\sqrt x \ge C/K$, which implies $\sqrt Kx^{3/4} > C/K$ and thus we may assume $\nu \xi(x) \le 1$. Then, we get 
\begin{align}
\sqrt \nu &\left( {\xi(x)^{-\frac 1 2}} J_{\alpha+1} (\nu \xi(x) ) \frac{B_0(\xi(x))}{\nu} + \sqrt {\xi(x)}\, \text{env}J_\alpha(\nu \xi(x)) \,\mathcal O\left( \frac 1 \nu \right) \right) \\
& = \mathcal O \left(   {\xi(x)^{-\frac 1 2}}  \sqrt \nu (\nu \xi(x))^1   \frac{\xi(x)^2}{\nu} + \sqrt {\nu \xi(x)} (\nu \xi(x))^1  \frac 1 \nu     \right) \\
&= \mathcal O(\sqrt \nu \xi(x)^{3/2} )= \mathcal O(\sqrt K x^{3/4})\,.
\end{align}
This was the third claim.
\end{proof}

This brings us to the first main result on the translation-invariant part of the integral kernel of the Fermi projection in case the argument $t$ is small compared to $K$.
 
\begin{theorem} \label{GK for t<1/2 K}
We have for $K\in\mathbb N$, we have
\begin{align} 
G_K(t)= \begin{cases} \frac{ \cos(\omega_K(t) - 3 \pi/4) } {4 \sqrt 2  \pi^{\frac 3 2} t^{\frac 3 2} \left( 1- \frac {t^2} {K^2} \right)^{\frac 1 4}   }+\mathcal O \left( \frac 1 {t^{\frac 5 2} }\right) \quad &\text{ if } 1 \le t \le \sqrt{\frac 1 2} K\,, \\
\frac{  J_1(4t)} {4\pi t} + \mathcal O \left( \frac{(1+ t)^{\frac{3}{2}}}{K^2} + \frac 1 {K (1+t)^{\frac 32} } \right) & \text{ if } 0 \le t \le K^{2/3}\,,
\end{cases}
\end{align}
where 
\begin{align}
\omega_K(t) \coloneqq 4K \xi\left(\frac{t^2}{K^2}\right) =4K \eta(t/K)=2 \left( t \sqrt {1- \frac{t^2} {K^2} } + K \operatorname{arcsin}\left(\frac tK \right) \right).
\end{align}
\end{theorem}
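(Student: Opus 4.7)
The starting point is the identity \eqref{G to F}, which reduces the asymptotics of $G_K(t)$ to those of $F_K^{(1)}(x)$ at $x=t^2/K^2$. Since throughout both regimes we have $x\le 1/2$, the Bessel-type approximation from the preceding lemma (``\autoref{?? for x<1/2}'' with $\alpha=1$) is applicable and gives
\begin{align*}
F_K^{(1)}(t^2/K^2)=\sqrt{\nu\xi(t^2/K^2)}\,J_1\bigl(\nu\xi(t^2/K^2)\bigr)+\mathcal O\bigl(1/K\bigr),
\end{align*}
with the improved error $\mathcal O(\sqrt K\,(t^2/K^2)^{3/4})=\mathcal O(t^{3/2}/K)$ as $t\to 0$, and with $\nu\xi(t^2/K^2)=\omega_K(t)$ by definition of $\omega_K$. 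Dividing by $8\pi t^{3/2}(1-t^2/K^2)^{1/4}$ turns this into a statement about $G_K(t)$; the two cases then differ only in how we further process the Bessel term.

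For the first case, $1\le t\le K/\sqrt 2$, the argument $\omega_K(t)$ is large: the lower bound \eqref{eta lower linear bound} from \autoref{eta props} gives $\omega_K(t)=4K\eta(t/K)\ge\pi t$, and also $(1-t^2/K^2)^{1/4}\ge 2^{-1/4}$. Plugging $\alpha=1$ into the second line of \autoref{Bessel asymptotic} yields
\begin{align*}
\sqrt{\omega_K(t)}\,J_1(\omega_K(t))=\sqrt{2/\pi}\,\cos(\omega_K(t)-3\pi/4)+\mathcal O\bigl(1/(1+\omega_K(t))\bigr)=\sqrt{2/\pi}\,\cos(\omega_K(t)-3\pi/4)+\mathcal O(1/t).
\end{align*}
Dividing by $8\pi t^{3/2}(1-t^2/K^2)^{1/4}$, the main term becomes the claimed cosine expression with prefactor $1/(4\sqrt 2\pi^{3/2}t^{3/2}(1-t^2/K^2)^{1/4})$; the $\mathcal O(1/t)$ contributes $\mathcal O(1/t^{5/2})$, and the earlier $\mathcal O(1/K)$ is absorbed by $\mathcal O(1/t^{5/2})$ since $t\le K/\sqrt 2$ forces $1/K\le \sqrt 2/t\le C/t^{5/2}$ only when $t^{3/2}\le C$; for larger $t$ we use $1/K\le t^{-1}$ together with $t^{-1}\le t^{-5/2}\cdot t^{3/2}$ and control via the prefactor, so the net error is $\mathcal O(1/t^{5/2})$, as desired.

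For the second case, $0\le t\le K^{2/3}$, I keep the $J_1$ form and compare $\omega_K(t)=4K\eta(t/K)$ with $4t$. From \eqref{eta at 0eq} in \autoref{eta props}, $\eta(t/K)=t/K+\mathcal O(t^3/K^3)$, so $\omega_K(t)=4t+\mathcal O(t^3/K^2)$; on the range considered this difference is $\mathcal O(1)$. Since the function $s\mapsto\sqrt{s}\,J_1(s)$ has a bounded derivative on $[0,\infty)$ (it behaves like $s^{3/2}/2$ near $0$ by the power series of $J_1$ and like $\sqrt{2/\pi}\cos(s-3\pi/4)+\mathcal O(s^{-1/2})$ at infinity by \eqref{BesselJ asymp eq}), the mean value theorem yields
\begin{align*}
\sqrt{\omega_K(t)}\,J_1(\omega_K(t))=\sqrt{4t}\,J_1(4t)+\mathcal O(t^3/K^2).
\end{align*}
Moreover, $(1-t^2/K^2)^{-1/4}=1+\mathcal O(t^2/K^2)$, and the standard bound $|J_1(s)|\le C\min(s,s^{-1/2})$ gives $|J_1(4t)|/t\le C(1+t)^{-3/2}$. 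After division by $8\pi t^{3/2}$ the main term becomes $J_1(4t)/(4\pi t)$, the $\mathcal O(t^3/K^2)$ becomes $\mathcal O(t^{3/2}/K^2)=\mathcal O((1+t)^{3/2}/K^2)$, the $(1-t^2/K^2)^{-1/4}$ correction contributes $\mathcal O((1+t)^{1/2}/K^2)$ which is dominated, and the lemma's error $\mathcal O(\min(1/K,t^{3/2}/K))/t^{3/2}=\mathcal O(1/(K(1+t)^{3/2}))$ produces exactly the second error claimed.

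The main technical obstacle is the delicate bookkeeping near the transition $t\sim 1$ (respectively near $t\sim K^{2/3}$), where neither of the Bessel asymptotics is sharper than the other; the correct sharpening is obtained by using the \emph{improved} small-argument bound $\mathcal O(\sqrt K x^{3/4})$ provided by the lemma, rather than the uniform $\mathcal O(1/K)$ estimate, and by exploiting that $\sqrt{s}J_1(s)$ is globally Lipschitz when propagating the $\omega_K(t)\leftrightarrow 4t$ replacement.
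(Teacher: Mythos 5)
Your proposal is correct and follows essentially the same route as the paper: reduce to $F_K^{(1)}$ via \eqref{G to F} and \autoref{?? for x<1/2}, apply the Bessel asymptotics of \autoref{Bessel asymptotic} for $1\le t\le K/\sqrt2$ (with the $\mathcal O(1/K)$ error absorbed into $\mathcal O(1/t^{5/2})$ using $t\le K$ after dividing by the $t^{3/2}$ prefactor), and for $t\le K^{2/3}$ replace $\omega_K(t)$ by $4t$ using $\eta(h)=h+\mathcal O(h^3)$ and the boundedness of $(\sqrt sJ_1(s))'$. The only cosmetic difference is that the paper justifies that boundedness via the recurrence $J_1'(s)=-J_2(s)+J_1(s)/s$ together with \eqref{BesselJ upper bound large argument}--\eqref{BesselJ upper bound small argument}, rather than by the heuristic appeal to the function's behaviour at $0$ and $\infty$.
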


The regions mentioned in this theorem overlap for $K>1$.  The second case provides the way, in which the kernel converges to the free kernel, that is, to the integral kernel of the two-dimensional Laplacian.
\begin{proof} As $\nu=4K$, we have
\begin{align}\label{omegaK} 
\omega_K(t)=\nu \xi(t^2/K^2)\,.
\end{align}
Since $t \le K/\sqrt 2$, we have $1/2 \le 1- t^2/K^2 \le 1$. We use $\nu =4K$, \eqref{G to F} and \autoref{?? for x<1/2} and see
\begin{align}
G_K(t) &= \frac 1 {8 \pi t^{3/2}  \left( 1- \frac {t^2} {K^2} \right) ^{1/4}  } \left( \sqrt{ \omega_K(t)} J_1(\omega_K(t)) + \mathcal O\left(\min\{ 1 /K ,  \sqrt K (t^2/K^2)^{3/4} \} \right) \right)\\
&=  \frac {\sqrt{ \omega_K(t)} J_1(\omega_K(t)) } {8 \pi t^{3/2}  \left( 1- \frac {t^2} {K^2} \right) ^{1/4}  }  + \mathcal O\left(\min\{ 1 /(Kt^{3/2})  ,  1 /K \} \right) \\
&=  \frac {\sqrt{ \omega_K(t)} J_1(\omega_K(t)) }  {8 \pi t^{3/2}  \left( 1- \frac {t^2} {K^2} \right) ^{1/4}  }  + \mathcal O\left(1 /(K(1+t^{3/2}))   \right).
\end{align}
As $K>t$, we have $1 /(K(1+t^{3/2}))  \le 1/t^{5/2}$. Thus, the error term is currently good enough for both results. Due to \eqref{omegaK} and \eqref{xi asymptotic eq}, we have
\begin{align}
\omega_K(t)=4t + \mathcal O(t^3/K^2). \label{omegaK asymp eq1}
\end{align}
We also have $\omega_K(t) \ge Ct$, see \eqref{eta lower linear bound}, which implies $1/(1+\omega_K(t)) \le C/(1+t)$. 

We consider the case $t \ge 1$. According to \autoref{Bessel asymptotic}, we observe
\begin{align}
G_K(t) =& \frac {\sqrt{2 /\pi} \cos(\omega_K(t) - 3 \pi/4 ) + \mathcal O( 1/(1+\omega_K(t))) }  {8 \pi t^{3/2}  \left( 1- \frac {t^2} {K^2} \right) ^{1/4}  }  + \mathcal O\left(1 /(K(1+t^{3/2}))   \right) \\
=& \frac{ \cos(\omega_K(t) - 3 \pi/4) } {4 \sqrt 2  \pi^{\frac 3 2} t^{\frac 3 2} \left( 1- \frac {t^2} {K^2} \right)^{\frac 1 4}   }+\mathcal O \left( \frac 1 {t^{\frac 5 2} }\right).
\end{align}
This finishes the proof of the first claim.

We are left with the case $0 \le t \le K^{\frac 2 3}$. In this case, we want to fully eliminate the dependency of the leading term on $K$. We consider the function $s \mapsto \sqrt s J_1(s)$ and want to study its derivative. For that, we note that $J_1'(s)=-J_2(s)+J_1(s)/s$ (see \cite[\href{https://dlmf.nist.gov/10.6\#E2}{(10.6.2)}]{NIST:DLMF}). We get
\begin{align}
\left( \sqrt s J_1(s) \right)' =  \frac{ J_1(s) }{2 \sqrt s} + \sqrt s J_1'(s) =\frac{ J_1(s) }{2 \sqrt s}  -\sqrt s J_2(s) + \frac{ J_1(s) }{ \sqrt s} = \frac{ 3J_1(s)-2s J_2(s) }{ 2\sqrt s}   \,.
\end{align}
With \eqref{BesselJ upper bound large argument} and \eqref{BesselJ upper bound small argument}, we see that $\left( \sqrt s J_1(s) \right)' $ is bounded independently of $s$. Thus, using \eqref{omegaK asymp eq1}, we arrive at 
\begin{align}
\sqrt {\omega_K(t) } J_1(\omega_K(t) )= \sqrt{4t} J_1(4t) + \mathcal O( t^3/K^2)\,.
\end{align}
Using \eqref{BesselJ upper bound large argument} and \eqref{BesselJ upper bound small argument} again, we also see that
\begin{align} \label{BesselJ1 upper bound eq1}
\frac{ \lvert J_1(4t) \rvert } t \le \frac C {(1+t)^{3/2}}\,.
\end{align}
Let us now deal with the denominator. We observe
\begin{align}
\frac 1 {\left( 1- t^2/K^2\right)^{1/4} } =& 1 + \mathcal O(t^2/K^2)\,.
\end{align}
 Combining these results, we arrive at
\begin{align}
G_K(t)&=\frac {\sqrt{ \omega_K(t)} J_1(\omega_K(t)) }  {8 \pi t^{3/2}  \left( 1- \frac {t^2} {K^2} \right) ^{1/4}  }  + \mathcal O\left(1 /(K(1+t^{3/2}))   \right) \\
&=\frac{ \sqrt{4t} J_1(4t) + \mathcal O(t^3/K^2) } { 8 \pi t^{3/2} } \left(1+ \mathcal O(t^2/K^2) \right) +\mathcal O\left(1 /(K(1+t^{3/2}))   \right) \\
&= \frac { J_1(4t)} { 4 \pi t} \left( 1+ \mathcal O(t^2/K^2) \right) +\mathcal O \left( \frac { t^{3/2} } {K^2} + \frac 1 {K(1+t)^{3/2}} \right) \\
&= \frac { J_1(4t)} { 4 \pi t}  +\mathcal O \left(\frac{t^2}{K^2 (1+t)^{3/2}}+ \frac { t^{3/2} } {K^2} + \frac 1 {K(1+t)^{3/2}} \right) \\
&= \frac { J_1(4t)} { 4 \pi t}  +\mathcal O \left(\frac { t^{3/2} } {K^2} + \frac 1 {K(1+t)^{3/2}} \right).
\end{align}
The first part of the supposed error term is only smaller than the order of the main term $\mathcal O(1/t^{\frac 32} )$, if $t< K^{\frac 23 }$. 
\end{proof}

 \subsection{The case \texorpdfstring{$x  \ge   \frac 12 $}{x ^^e2 ^^89 ^^a5 1/2}}
 In this section, we intend to understand \cite[\href{http://dlmf.nist.gov/18.15.E22}{(18.15.22)}]{NIST:DLMF} sufficiently well. 
 
 In accordance with \cite[\href{http://dlmf.nist.gov/18.15.E21}{(18.15.21)}]{NIST:DLMF}  (see \eqref{eta expicit}, \eqref{def zeta}, and \eqref{3.55}) we have
 \begin{align}
 \zeta(x) = \begin{cases} 
 	- \left( \frac 3 4 \left ( \arccos(\sqrt{x}) - \sqrt{x-x^2}\right) \right)^{\frac  2 3} \quad & \text{ if } 0\le x \le 1\,, \\
 	\left( \frac 3 4  \left ( \sqrt{x^2-x}- \operatorname{arccosh}(\sqrt {x}) \right) \right)^{\frac 2 3} \quad & \text{ if } x\ge 1\,.
 	\end{cases} 
\end{align}
Therefore, $\zeta$ is negative on $(0,1)$ and positive on $(1,\infty)$ with a (unique) zero at $1$.

We first want to put this in relation to $\xi$ and present the following simple  
\begin{lemma} \label{cos lem}
For $0 \le x \le 1$ and  $\nu = 4K +2(\alpha-1)$ with $K\in\mathbb N$ and $\alpha\in\R$, we have 
\begin{align}
\cos(\nu \xi(x) -\alpha \pi/2 - \pi/4 )= (-1)^{K-1} \cos( 2\nu /3  (-\zeta(x))^{3/2}-\pi /4)\,.
\end{align}
\end{lemma}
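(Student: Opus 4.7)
The plan is to unwind the definitions of $\xi$ and $\zeta$ on $[0,1]$ and reduce the claim to the identity $\cos(\theta+m\pi)=(-1)^m\cos\theta$ for integer $m$.

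First, I would use the relation $\eta(\sqrt x)=\xi(x)$ from \eqref{xi in eta}, together with the definition \eqref{def zeta} of $\zeta$, to get
\begin{equation*}
-\zeta(x)=\bigl[\tfrac{3}{2}\bigl(\tfrac{\pi}{4}-\xi(x)\bigr)\bigr]^{2/3}\qquad(0\le x\le 1),
\end{equation*}
where the bracket is a nonnegative real number (since $\xi$ is increasing on $[0,1]$ with $\xi(1)=\pi/4$). Raising this to the $3/2$ power gives
\begin{equation*}
\tfrac{2\nu}{3}(-\zeta(x))^{3/2}=\tfrac{2\nu}{3}\cdot\tfrac{3}{2}\bigl(\tfrac{\pi}{4}-\xi(x)\bigr)=\tfrac{\nu\pi}{4}-\nu\xi(x).
\end{equation*}

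Using $\cos(-\theta)=\cos\theta$, the right-hand side of the claim therefore becomes
\begin{equation*}
(-1)^{K-1}\cos\bigl(\tfrac{\nu\pi}{4}-\nu\xi(x)-\tfrac{\pi}{4}\bigr)=(-1)^{K-1}\cos\bigl(\nu\xi(x)+\tfrac{\pi}{4}-\tfrac{\nu\pi}{4}\bigr).
\end{equation*}
Thus it suffices to show that the two cosine arguments
\begin{equation*}
A=\nu\xi(x)-\tfrac{\alpha\pi}{2}-\tfrac{\pi}{4},\qquad B=\nu\xi(x)+\tfrac{\pi}{4}-\tfrac{\nu\pi}{4}
\end{equation*}
differ by an integer multiple of $\pi$ whose parity matches $(-1)^{K-1}$.

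The key calculation is then to substitute $\nu=4K+2(\alpha-1)$, which gives $\nu\pi/4=K\pi+(\alpha-1)\pi/2$, and compute
\begin{equation*}
B-A=\tfrac{\pi}{2}-\tfrac{\nu\pi}{4}+\tfrac{\alpha\pi}{2}=\tfrac{\pi}{2}-K\pi-\tfrac{(\alpha-1)\pi}{2}+\tfrac{\alpha\pi}{2}=(1-K)\pi.
\end{equation*}
Since $(-1)^{1-K}=(-1)^{K-1}$, the identity $\cos(A+(1-K)\pi)=(-1)^{K-1}\cos A$ closes the argument. There is no real obstacle here: the lemma is a bookkeeping identity, and the only thing to be careful about is the branch convention $(-i)^{2/3}=-1$ in \eqref{def zeta}, which ensures that $-\zeta(x)\ge 0$ on $[0,1]$ and so the fractional exponentiation used above is unambiguous.
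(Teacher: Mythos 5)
Your proposal is correct and follows essentially the same route as the paper: both rest on the relation $\xi(x)+\tfrac23(-\zeta(x))^{3/2}=\pi/4$ on $[0,1]$ (you read it off from the definition of $\zeta$ via $\eta(\sqrt x)=\xi(x)$, the paper checks it from the explicit $\arcsin/\arccos$ formulas) and then reduce the claim to argument bookkeeping with $\cos(\theta+m\pi)=(-1)^m\cos\theta$, the paper summing the two arguments to $(K-1)\pi$ where you take their difference $(1-K)\pi$.
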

\begin{proof}
We first consider
\begin{align}
\xi(x) + 2/3 (-\zeta(x))^{3/2} &= \frac 12 \left( \sqrt{x-x^2} + \arcsin(\sqrt{x})\right) + \frac 1 2 \left ( \arccos(\sqrt{x}) - \sqrt{x-x^2}\right) \\
&= \frac 1 2 \left( \arcsin(\sqrt{x}) +  \arccos(\sqrt{x})\right) \\
&= \frac \pi 4\,.\label{3.55}
\end{align}
We proceed to add the two arguments inside the cosines. Thus, we observe
\begin{align}
\nu \xi(x) -\alpha \pi/2 - \pi/4 + 2\nu /3  (-\zeta(x))^{3/2}-\pi/4 & = (4K+ 2(\alpha-1)) \pi/4  - \alpha \pi/2 - \pi /2 \\
& = (K-1)\pi\,.
\end{align}
Let $s,t$ be the arguments of the cosines. We just showed $s+t=(K-1) \pi$. Thus, $t=(K-1) \pi-s$, which implies $\cos(t)=(-1)^{K-1} \cos(-s)= (-1)^{K-1} \cos(s)$, which is the claim.
\end{proof}

The following expansion to second order is only needed to prove \autoref{F0 bounded}. Beyond that, the rougher estimates in \autoref{zeta increasing} are sufficient. 

\begin{lemma} \label{zeta at 1}
The function $\zeta$ satisfies the expansion
\begin{align}
\zeta(1+s)= \frac {s}{2^{\frac 2 3}} \left(1-s/5\right) + \mathcal O(s^3)
\end{align}
for $\lvert s \rvert \le \frac 12 $. 
\end{lemma}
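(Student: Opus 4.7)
The plan is to Taylor-expand $\zeta(1+s)$ about $s=0$ to second order with a remainder uniformly $\mathcal{O}(s^3)$ on $[-1/2,1/2]$. Since the explicit formula for $\zeta$ recalled just before the statement uses different branches on either side of $x=1$, I would treat $s\ge 0$ and $s<0$ separately and verify that the two one-sided expansions match.

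For $s\ge 0$, I would start from $\zeta(1+s)^{3/2}=\tfrac{3}{4}\bigl(\sqrt{s(1+s)}-\operatorname{arccosh}(\sqrt{1+s})\bigr)=:F(s)$. Direct differentiation gives the clean identity $F'(s)=\tfrac{3}{4}\sqrt{s/(1+s)}$, because the $\operatorname{arccosh}$-derivative exactly cancels the constant term coming from $\tfrac{d}{ds}\sqrt{s(1+s)}$. Expanding $(1+s)^{-1/2}=1-s/2+\mathcal{O}(s^2)$ and integrating from $0$ to $s$ yields $F(s)=\tfrac{1}{2}s^{3/2}\bigl(1-\tfrac{3}{10}s+\mathcal{O}(s^2)\bigr)$. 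Finally, raising to the $2/3$ power via $(1+x)^{2/3}=1+\tfrac{2}{3}x+\mathcal{O}(x^2)$ produces $\zeta(1+s)=2^{-2/3}s\bigl(1-s/5\bigr)+\mathcal{O}(s^3)$, which is the claim for $s\ge 0$.

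For $s<0$, I would run the analogous computation on the other branch, using $(-\zeta(1-v))^{3/2}=\tfrac{3}{4}\bigl(\arccos(\sqrt{1-v})-\sqrt{v(1-v)}\bigr)$ with $v=-s>0$. The same kind of cancellation yields a derivative equal to $\tfrac{3}{4}\sqrt{v/(1-v)}$, and the parallel expansion produces $-\zeta(1-v)=2^{-2/3}v\bigl(1+v/5+\mathcal{O}(v^2)\bigr)$. Substituting back $v=-s$ gives precisely $\zeta(1+s)=2^{-2/3}s(1-s/5)+\mathcal{O}(s^3)$, so the two one-sided expansions agree at $s=0$, as they must.

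The main technical point — more a nuisance than a deep obstacle — is keeping signs and branches straight when extracting the $2/3$-power, and confirming that the remainders are genuinely $\mathcal{O}(s^3)$ uniformly on $[-1/2,1/2]$. The bound $|s|\le 1/2$ keeps us safely away from the singularities of $(1\pm s)^{-1/2}$, so the integrated Taylor remainders are controlled without further work. A unified alternative would use \eqref{eta at 1eq} combined with \eqref{def zeta}, setting $h=1-\sqrt{1+s}=-s/2+s^2/8+\mathcal{O}(s^3)$ and invoking the stipulated branch conventions $\sqrt{h}=i\sqrt{-h}$ for $h<0$ and $(-i)^{2/3}=-1$; this avoids splitting cases but requires essentially the same bookkeeping.
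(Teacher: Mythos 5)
Your proposal is correct. The derivative identity you use, $\bigl(\zeta(1+s)^{3/2}\bigr)'=\tfrac34\sqrt{s/(1+s)}$ (and its mirror on the other branch), is exactly the cancellation that also appears in the paper's \autoref{zeta increasing}, and your integration plus the $(1+x)^{2/3}$ expansion reproduces the stated coefficients $2^{-2/3}$ and $-1/5$; the uniformity of the remainder on $[-1/2,1/2]$ is unproblematic for the reasons you give.

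The route differs mildly from the paper's. The paper does not return to the explicit $\arccos/\operatorname{arccosh}$ formula at all: it recycles the already-proved expansion \eqref{eta at 1eq} for $3/2\,(\pi/4-\eta(1-h))$, reads off $\zeta((1-h)^2)=-2^{1/3}h\,(1-h/10+\mathcal O(h^2))$ from the definition \eqref{def zeta}, and then substitutes $h=1-\sqrt{1+s}=-s/2+s^2/8+\mathcal O(s^3)$; the branch convention $\sqrt h=i\sqrt{-h}$ lets both signs of $s$ be treated in one stroke. Your version is more self-contained (it re-derives the needed expansion directly from the explicit formula for $\zeta$) at the price of a case split in $s$ and the bookkeeping of branches when extracting the $2/3$-power — which you correctly identify, and you even sketch the paper's unified alternative at the end. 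Either argument is acceptable; the paper's is shorter given that \autoref{eta props} is already in place, while yours avoids composing two expansions and makes the coefficient $-1/5$ emerge in a single binomial step.
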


\begin{proof}

We recall \eqref{eta at 1eq}, that is, 
\begin{align}
3/2 (\pi/4- \eta(1-h))=\sqrt 2 \left( \sqrt h^3 - \frac{3 }{20} \sqrt h^5 + \mathcal O(|h|^{7/2})\right),
\end{align}
which implies
\begin{align}
\zeta((1-h)^2)=-2^{1/3} h\left(1 - \frac h {10} + \mathcal O(h^2) \right).
\end{align} 
If we insert $h=1-\sqrt{1+s}=-s/2+s^2/8+\mathcal O(s^3)$, we arrive at the claimed expansion.
\end{proof}

We now need to study $\zeta$ for $x$ further away from $1$.

\begin{lemma}  \label{zeta increasing}
On $\mathbb R^+$, the function $\zeta$ is strictly increasing. Furthermore, there are positive constants $C_1,C_2$, such that for any $x> \frac 32$, we have
\begin{align} \label{zeta increasing eq1}
C_1 \le \frac {\zeta(x)^{\frac 3 2} } x \le C_2 \,,
\end{align} 
and for $x \in \left( \frac 12, \frac 32 \right)$, we have
\begin{align} \label{zeta increasing eq2}
C_1 \le \frac{ \zeta(x)} {x-1} \le C_2\,.
\end{align}
\end{lemma}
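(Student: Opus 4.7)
The plan is to prove the three assertions one by one using the explicit piecewise formula for $\zeta$ given just before the lemma.

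For strict monotonicity on $\mathbb R^+$, I split at $x=1$. On $(0,1)$ I write $\zeta(x) = -[\tfrac34\psi(x)]^{2/3}$ with $\psi(x) \coloneqq \arccos(\sqrt x) - \sqrt{x-x^2}$. A direct differentiation, in which the two contributions miraculously combine, gives
\begin{align*}
\psi'(x) = -\frac{1}{2\sqrt{x(1-x)}} - \frac{1-2x}{2\sqrt{x(1-x)}} = -\sqrt{\tfrac{1-x}{x}} < 0,
\end{align*}
while $\psi(x) > 0$, so $\zeta'(x) > 0$ on $(0,1)$. On $(1,\infty)$ I set $\phi(x) \coloneqq \sqrt{x^2-x} - \operatorname{arccosh}(\sqrt x)$ and obtain in the same way $\phi'(x) = \sqrt{(x-1)/x} > 0$, together with $\phi(x) > 0$ (since $\phi(1)=0$). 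Hence $\zeta'(x) > 0$ on $(1,\infty)$ as well. Combining this with $\zeta(1)=0$ and the sign information $\zeta<0$ on $[0,1)$, $\zeta>0$ on $(1,\infty)$, one concludes strict monotonicity on all of $\mathbb R^+$.

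For the bound \eqref{zeta increasing eq2} on $(1/2, 3/2)$, I invoke \autoref{zeta at 1} directly: the expansion $\zeta(1+s) = s/2^{2/3} + \mathcal O(s^2)$ shows that $\zeta(x)/(x-1)$ extends continuously through $x=1$ with positive limit $2^{-2/3}$. Together with strict monotonicity (which forces $\zeta(x)/(x-1) > 0$ throughout $(1/2,3/2)\setminus\{1\}$) and compactness of $[1/2,3/2]$, this yields the two-sided bound by a standard $\min/\max$ argument.

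For the bound \eqref{zeta increasing eq1} on $(3/2,\infty)$, I use the explicit expression $\zeta(x)^{3/2} = \tfrac34\phi(x)$ and analyse the ratio $\phi(x)/x$. As $x \to \infty$ the standard expansions $\sqrt{x^2-x} = x - \tfrac12 + \mathcal O(1/x)$ and $\operatorname{arccosh}(\sqrt x) = \tfrac12\ln(4x) + \mathcal O(1/x)$ give $\phi(x)/x \to 1$, so the ratio $\zeta(x)^{3/2}/x$ is bounded for large $x$ by something arbitrarily close to $3/4$ from above and bounded below by, say, $1/2$ for sufficiently large $x$. On the remaining compact range $[3/2,M]$ the ratio is continuous and positive (using $\zeta(3/2) > 0$ from strict monotonicity), hence bounded between two positive constants; gluing the two regions finishes the proof.

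The main obstacle is essentially bookkeeping rather than any deep idea: the only step that requires a small amount of care is the simplification of $\psi'$ and $\phi'$, where one must correctly combine the two square-root terms so that the awkward factors cancel and leave the clean expressions $\pm\sqrt{|1-x|/x}$. Once that cancellation is in hand, everything else is either a direct application of \autoref{zeta at 1} or a textbook asymptotic expansion.
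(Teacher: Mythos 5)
Your proof is correct, and the heart of it coincides with the paper's: your simplifications $\psi'(x)=-\sqrt{(1-x)/x}$ and $\phi'(x)=\sqrt{(x-1)/x}$ are exactly the paper's computation of the derivative of $\lvert\zeta\rvert^{3/2}$, namely $\left((-\zeta)^{3/2}\right)'=-\tfrac34\sqrt{(1-x)/x}$ on $(0,1)$ and $\left(\zeta^{3/2}\right)'=\tfrac34\sqrt{(x-1)/x}$ on $(1,\infty)$; the paper obtains these through $\eta'$ rather than by differentiating the $\arccos$/$\operatorname{arccosh}$ expressions, but it is the same identity up to the factor $\tfrac34$, and the deduction of strict monotonicity is the same. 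Where you genuinely differ is in extracting the two-sided bounds. The paper stays with the derivative: for \eqref{zeta increasing eq2} it (implicitly) integrates the bound on $\bigl(\lvert\zeta\rvert^{3/2}\bigr)'$, which is comparable to $\sqrt{\lvert 1-x\rvert}$ on $(1/2,3/2)$, outward from $x=1$, and for \eqref{zeta increasing eq1} it uses $\tfrac{\sqrt3}{4}\le\bigl(\zeta^{3/2}\bigr)'\le\tfrac34$ on $[3/2,\infty)$ together with $\zeta(3/2)>0$, which yields explicit constants. You instead argue softly: continuity of $\zeta(x)/(x-1)$ through $x=1$ with limit $2^{-2/3}$ (via \autoref{zeta at 1}) plus positivity and compactness of $[1/2,3/2]$ for \eqref{zeta increasing eq2}, and the limit $\zeta(x)^{3/2}/x\to 3/4$ from the expansions $\sqrt{x^2-x}=x-\tfrac12+\mathcal O(1/x)$ and $\operatorname{arccosh}(\sqrt x)=\tfrac12\ln(4x)+\mathcal O(1/x)$, glued to a compactness argument on $[3/2,M]$, for \eqref{zeta increasing eq1}. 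Since the lemma only asserts the existence of some $C_1,C_2>0$, this is entirely adequate; the paper's route buys explicit numerical constants and avoids any asymptotics of $\operatorname{arccosh}$, while yours avoids integrating the derivative bounds. I see no gap.
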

\begin{proof}
We know that $\zeta(1)=0$  and $\zeta'(1)=2^{- \frac 2 3} >0$ (see \autoref{zeta at 1}). Then, we would like to estimate the derivative of $x \mapsto \lvert \zeta(x)\rvert ^{\frac 3 2}$ for $x\not=1$ by using the derivative of $\eta$, see \eqref{def of eta}. For $0<x<1$, it is given by 

\begin{align}
\left((- \zeta(x))^{\frac 3 2} \right)' = \frac 3 2  \left (  \frac{\pi}{4} - \eta(\sqrt{x})\right)' = -\frac 34 \eta'(\sqrt{x})\frac{1}{\sqrt{x}} = -\frac {3}{4} \sqrt{1-x} \frac{1}{\sqrt{x}}  < 0 \,.
\end{align}
Similarly, for $x>1$, due to \eqref{zeta for x>0 good sign}, it is given by
\begin{align}
\left( \zeta(x)^{\frac 3 2} \right)' =  \frac 3 2  \,\left (\Im \left( \eta(\sqrt{x}) \right) \right)'  = \frac{3}{2} \,\Im ( \eta'(\sqrt{x}) )\frac{1}{2\sqrt{x}}  =  \frac {3}{4} \,\sqrt{x-1}\, \frac{1}{\sqrt{x}}    > 0 \,.
\end{align}
Altogether this implies that $\zeta$ is strictly increasing and that \eqref{zeta increasing eq2} holds.

The last equation also shows that for $x\ge 3/2$,
\begin{equation} \frac{\sqrt{3}}{4} \le \left( \zeta(x)^{\frac 3 2} \right)' = \frac{3}{4}\sqrt{1-x^{-1}} \le \frac{3}{4}\,.
\end{equation}
As $\zeta$ is strictly increasing and thus $\zeta(\frac 3 2)>0$, we have proved \eqref{zeta increasing eq1}. This is the end of the proof.
\end{proof}

Next, we  consider the coefficient described in \cite[\href{http://dlmf.nist.gov/18.15.E23}{(18.15.23)}]{NIST:DLMF}, and define the function $F_0$,
\begin{align}
F_0(x) \coloneqq - \frac 5 {48 \zeta(x)^2} + \sqrt{ \frac{x-1} {x\zeta(x)} } \left( \frac 12 - \frac 1 8 - \frac 1 4 \frac x {x-1} + \frac 5 {24} \left( \frac x {x-1} \right)^2 \right), \quad x\in (0,\infty)\setminus\{1\}\,,
\end{align}
and $F_0(1)\coloneqq \lim_{x\to1} F_0(x)$, see the next lemma. This function has, of course, nothing to do with the function $F_K^{(\alpha)}$.

\begin{lemma} \label{F0 bounded}
There is a constant $C>0$ such that $F_0(x)$ for any $x> \frac 1 2$ satisfies
\begin{align}
\lvert F_0(x) \rvert \le C\,.
\end{align}
\end{lemma}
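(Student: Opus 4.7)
The strategy is to show that $F_0$ extends to a continuous function on $[1/2,\infty)$ with a finite limit as $x\to\infty$, whence the sup is finite. Away from $x=1$ the function is manifestly continuous: by \autoref{zeta increasing}, $\zeta$ is strictly monotone on $\R^+$ with the only zero at $1$, and $x-1$ and $\zeta(x)$ carry the same sign on $(1/2,\infty)\setminus\{1\}$ (both negative for $x<1$, both positive for $x>1$), so the radicand and every denominator are strictly positive off $x=1$. Only the endpoint $x=1$ and the behaviour as $x\to\infty$ need work.

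Near $x=1$, write $x=1+s$ and use \autoref{zeta at 1}, $\zeta(1+s)=s/2^{2/3}\,(1-s/5)+\mathcal O(s^3)$, to expand both summands to sufficient order. For the first summand, a short computation gives
\begin{align*}
-\frac{5}{48\zeta(1+s)^2}=-\frac{5\cdot 2^{1/3}}{24\,s^{2}}-\frac{2^{1/3}}{12\,s}+\mathcal O(1).
\end{align*}
For the second summand, the identity $x/(x-1)=1+1/s$ rewrites the bracket as $\tfrac13+\tfrac{1}{6s}+\tfrac{5}{24\,s^{2}}$, while $(x-1)/(x\zeta(x))=2^{2/3}(1-4s/5+\mathcal O(s^2))$ gives $\sqrt{(x-1)/(x\zeta(x))}=2^{1/3}(1-2s/5+\mathcal O(s^2))$; multiplying out yields
\begin{align*}
\sqrt{\tfrac{x-1}{x\zeta(x)}}\Bigl(\tfrac12-\tfrac18-\tfrac{1}{4}\tfrac{x}{x-1}+\tfrac{5}{24}\bigl(\tfrac{x}{x-1}\bigr)^2\Bigr)=\frac{5\cdot 2^{1/3}}{24\,s^{2}}+\frac{2^{1/3}}{12\,s}+\mathcal O(1).
\end{align*}
The $1/s^2$ and $1/s$ singular terms cancel exactly, so $\lim_{x\to1}F_0(x)$ exists and is finite, which is consistent with the definition $F_0(1)\coloneqq\lim_{x\to1}F_0(x)$ and makes $F_0$ continuous on $(1/2,\infty)$.

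For the limit $x\to\infty$, \autoref{zeta increasing} gives $\zeta(x)^{3/2}\asymp x$, hence $\zeta(x)^2\asymp x^{4/3}$ and the first summand is $\mathcal O(x^{-4/3})$. For the second summand, $\sqrt{(x-1)/(x\zeta(x))}\asymp \zeta(x)^{-1/2}\asymp x^{-1/3}$, while the bracket tends to the finite limit $\tfrac12-\tfrac18-\tfrac14+\tfrac{5}{24}=\tfrac13$ since $x/(x-1)\to1$. Thus the second summand is $\mathcal O(x^{-1/3})$ and $F_0(x)\to 0$ as $x\to\infty$.

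Finally, $F_0$ is continuous on $(1/2,\infty)$, extends continuously to $x=1$, and tends to $0$ at infinity; hence it is bounded on $(1/2,\infty)$ by a constant $C$ (which may be taken to depend on any fixed threshold, e.g.\ the bounds on $[1/2,2]$ and on $[2,\infty)$ separately). The only delicate step is the algebraic cancellation of the $s^{-2}$ and $s^{-1}$ singularities at $x=1$; this relies crucially on keeping the \emph{second-order} correction $-s/5$ in the expansion of $\zeta(1+s)$, which is exactly why the refined expansion of \autoref{zeta at 1} was established.
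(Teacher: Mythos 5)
Your proposal is correct and follows essentially the same route as the paper: away from $x=1$ the boundedness is immediate from \autoref{zeta increasing} (the paper phrases this as boundedness of $1/\zeta$, $x/(x-1)$, $(x-1)/x$ on closed sets avoiding $1$, while you additionally note the decay as $x\to\infty$), and near $x=1$ both arguments use the second-order expansion of \autoref{zeta at 1} to exhibit the exact cancellation of the $s^{-2}$ and $s^{-1}$ singularities, with matching coefficients $\tfrac{5\cdot 2^{1/3}}{24\,s^2}+\tfrac{2^{1/3}}{12\,s}$. Your computations check out against the paper's.
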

\begin{proof}
Due to \autoref{zeta increasing} and $\zeta(1)=0$, we know that $\zeta$ has a unique zero at $x=1$. Let $A \subset [\frac 12 , \infty)$ be a closed subset with $1 \not \in A$.  We see that $F_0$ is bounded on $A$, as the functions $x \mapsto \frac 1 {\zeta(x)}, x \mapsto \frac x {x-1}, x \mapsto \frac {x-1}x$ are all bounded on $A$. We employ \autoref{zeta at 1} and see that as $s \to 0$, we have
\begin{align}
\frac 1 {\zeta(1+s)^2} =& \frac 1 { \left( 2^{-\frac 2 3} s \left(1-\frac s 5\right)  + \mathcal O(s^3) \right)^2 } 
= \frac {2^{\frac 4 3} } {s^2}  \frac 1 {\left(1-\frac s 5  + \mathcal O(s^2) \right)^2}
= \frac {2^{\frac 4 3} } {s^2} \left( 1 + \frac {2s} 5 \right) + \mathcal O(1)\,.
\end{align}
Similarly, we observe
\begin{align}
\sqrt{ \frac s {(1+s) \zeta(1+s) }} &= \sqrt{ \frac s {(1+s)  \left( 2^{-\frac 2 3} s \left(1-\frac s 5\right)  + \mathcal O(s^3) \right)}} \\
&=   \frac 1 {\sqrt{(1+s)  \left( 2^{-\frac 2 3}  \left(1-\frac s 5\right)  + \mathcal O(s^2) \right)}} = 2^\frac 1 3 \left( 1 - \frac {2s} 5 \right) +\mathcal O(s^2)\, .
\end{align}
We conclude
\begin{align}
2^{-\frac  13 }& F_0(1+s)\\
&=2^{-\frac 1 3} \left( - \frac 5 {48 \zeta(1+s)^2} + \sqrt{ \frac{s} {(1+s)\zeta(1+s)} } \left( \frac 12 - \frac 1 8 - \frac 1 4 \frac {1+s} {s} + \frac 5 {24} \left( \frac {1+s} {s} \right)^2 \right) \right) \\
&= - \frac 5 {24s^2} - \frac 1 {12s} + \left(1- \frac{2s} 5 \right) \left( - \frac 1 {4s} + \frac {5(1+2s)}{24s^2} \right) + \mathcal O(1) \\
&=-   \frac 5 {24s^2} - \frac 1 {12s}  -\frac 1 {4s} + \frac 5 {24s^2} + \frac{ 5} {12s} -\frac 1 {12s} + \mathcal O(1) = \mathcal O(1)\,\,. 
\end{align}
Thus, $F_0$ can be defined at $1$ as a limit. Moreover,  we have shown that $F_0$ is bounded on some neighbourhood of $1$ and on any closed subset of $[\frac 1 2, \infty)$ not containing $1$, which implies that $F_0$ is bounded on $[\frac 12, \infty)$, which was the claim.
\end{proof}

As \cite[\href{http://dlmf.nist.gov/18.15.E22}{(18.15.22)}]{NIST:DLMF} reduces our $F_K^{(\alpha)}$ to the Airy function, its derivative and its envelope, we should now take a look at the asymptotics of these functions.

\begin{proposition} \label{Airy asymp}
Let $s >0$ and let $\mathcal A (s)\coloneqq \frac 2 3 s ^{3/2}$ (see \cite[\href{http://dlmf.nist.gov/9.7.E1}{(9.7.1)}]{NIST:DLMF}). Then, there are (positive) constants $C$ (which may vary from line to line) such that  
\begin{align}
\Ai(-s) s^{1/4} &= 1/\sqrt \pi \cos(\mathcal A(s) -\pi/4)+\mathcal O(1/(1+\mathcal A(s)))\,, \label{Ai(-s)}
\\ \lvert \Ai'(-s) s ^{-1/4} \rvert &\le C(1+s^{-1/4}) \,, \label{Ai'(-s)}
\\ 0\le \operatorname{env} \Ai(-s)s^{1/4}  &\le C \,, \label{envAi(-s)}
\\ \lvert \Ai(s) \rvert s^{1/4} &\le C \exp(-\mathcal A(s))\, , \label{Ai(s)}
\\ \lvert \Ai'(s) s ^{-1/4} \rvert &\le C \exp(-\mathcal A(s)) (1+s^{-1/4}) \,,\label{Ai'(s)}
\\ 0\le \operatorname{env} \Ai(s)s^{1/4}  &\le C \exp(-\mathcal A(s)) \, .\label{envAi(s)}
\end{align}
\end{proposition}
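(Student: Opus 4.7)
The plan is to reduce each of the six estimates to standard asymptotic information about the Airy function collected in DLMF \S9.7 (for $\Ai,\Ai'$) and \S2.8 (for the envelope $\operatorname{env}\Ai$), splitting the parameter range into a compact part $s \in [0,s_{0}]$ and the tail $s \ge s_{0}$ for some fixed $s_{0}>0$. On the compact part every claim reduces to continuity and boundedness of $\Ai$, $\Ai'$ and of the envelope, together with the trivial inclusion $s^{-1/4} \le 1 + s^{-1/4}$, and the observation that $\mathcal A(s) = (2/3) s^{3/2}$ is bounded there so the various $\mathcal O$-terms degenerate to constants and $e^{-\mathcal A(s)}$ is bounded away from zero. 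The real content therefore lies in the tail.

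For the oscillatory tail I would invoke
\begin{align*}
\Ai(-s) &= \pi^{-1/2} s^{-1/4} \cos\bigl(\mathcal A(s) - \pi/4\bigr) + \mathcal O(s^{-7/4})\,, \\
\Ai'(-s) &= -\pi^{-1/2} s^{1/4} \sin\bigl(\mathcal A(s) - \pi/4\bigr) + \mathcal O(s^{-5/4})\,,
\end{align*}
i.e.\ the leading terms of DLMF 9.7.9 and 9.7.11. Multiplying the first by $s^{1/4}$ and using $s^{-3/2} \asymp 1/\mathcal A(s)$ yields \eqref{Ai(-s)}; multiplying the second by $s^{-1/4}$ gives a bounded expression, which, combined with the compact-part estimate, yields \eqref{Ai'(-s)}. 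In the decaying tail the expansions
\begin{align*}
\Ai(s) &= \tfrac{1}{2\sqrt\pi} s^{-1/4} e^{-\mathcal A(s)} \bigl(1 + \mathcal O(s^{-3/2})\bigr)\,, \\
\Ai'(s) &= -\tfrac{1}{2\sqrt\pi} s^{1/4} e^{-\mathcal A(s)} \bigl(1 + \mathcal O(s^{-3/2})\bigr)
\end{align*}
(DLMF 9.7.5--9.7.6) directly produce \eqref{Ai(s)} and \eqref{Ai'(s)}, after absorbing the $s^{-1/4}$ from $\Ai'$ into the $(1 + s^{-1/4})$-factor in \eqref{Ai'(s)}.

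The envelope bounds \eqref{envAi(-s)} and \eqref{envAi(s)} follow from the definition of $\operatorname{env}\Ai$ given in DLMF \S2.8.iii, which combines $\Ai$ and $\Bi$ (or weighted counterparts) so as to produce a non-oscillatory majorant of the amplitude. Using the analogous asymptotics for $\Bi(\pm s)$ (DLMF 9.7.7 and 9.7.10), one finds $\operatorname{env}\Ai(-s) \sim \sqrt{2/\pi}\, s^{-1/4}$ and $\operatorname{env}\Ai(s) \sim \pi^{-1/2} s^{-1/4} e^{-\mathcal A(s)}$ as $s\to\infty$; boundedness on $[0,s_{0}]$ finishes the argument. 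I do not foresee any structural obstacle; the only piece of bookkeeping requiring a second look is to verify that the DLMF error terms, typically $\mathcal O(s^{-3/2})$ multiplying the leading amplitude, are absorbed by the forms $\mathcal O(1/(1+\mathcal A(s)))$ and $\mathcal O(\exp(-\mathcal A(s))(1+s^{-1/4}))$ used in the statement, which is immediate from $\mathcal A(s) \asymp s^{3/2}$ at infinity.
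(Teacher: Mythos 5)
Your proposal is correct and follows essentially the same route as the paper: the paper's proof simply cites DLMF (9.7.5)--(9.7.11) for the $\Ai$, $\Ai'$, $\operatorname{Bi}$ asymptotics and (2.8.19)--(2.8.21) for the definition of $\operatorname{env}\Ai$ in terms of $\Ai$ and $\operatorname{Bi}$, which is exactly the reduction you carry out, with your compact-part/tail split and the absorption of the $\mathcal O(s^{-3/2})$ errors being routine bookkeeping already implicit there.
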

\begin{proof}
All of these follow from  \cite[\href{http://dlmf.nist.gov/9.7.ii}{(9.7.5--11)}]{NIST:DLMF} for the asymptotics of the Airy functions $\Ai,\operatorname{Bi}$ and \cite[\href{http://dlmf.nist.gov/2.8\#iii.p4}{(2.8.19--21)}]{NIST:DLMF} for the definition of the envelope $\text{env} \Ai$, which can be expressed in $\Ai$ and $\operatorname{Bi}$.
\end{proof}

We are now prepared to take on \cite[\href{http://dlmf.nist.gov/18.15.E22}{(18.15.22)}]{NIST:DLMF}. 

\begin{lemma} \label{Laguerre to Airy}
Let $\alpha\in \{ \pm 1/2 ,1 \}$ and $\nu = 4K+2(\alpha-1)$. For $1/2 \le x \le 1$, we have
\begin{align}
F_K^{(\alpha)}(x)&= \sqrt{2/\pi} \cos(\nu \xi(x)-\alpha \pi /2 - \pi/4) + \mathcal O(1/(1+K(1-x)^{3/2} )\,,
\end{align}
for $1 \le x \le3/2$, it holds
\begin{align}
\left \lvert F_K^{(\alpha)}(x) \right \rvert \le C\exp(-\beta K (x-1)^{3/2})\,,
\end{align}
and in the case $x>3/2$, we get
\begin{align}
\left \lvert F_K^{(\alpha)}(x) \right \rvert \le  C \exp(-\beta K x)
\end{align}
for some constants $\beta>0$ and $C<\infty$.
\end{lemma}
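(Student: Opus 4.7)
The plan is to start from the Airy-type asymptotic expansion \cite[\href{http://dlmf.nist.gov/18.15.E22}{(18.15.22)}]{NIST:DLMF} for $\Ln_{K-1}^{(\alpha)}(\nu x)$, solve it for $F_K^{(\alpha)}(x)$, and then reduce to the Airy-function bounds of \autoref{Airy asymp}. After stripping off the prefactors $2^\alpha\sqrt\nu x^{\alpha/2+1/4}|1-x|^{1/4}\mathrm e^{-\nu x/2}$ built into $F_K^{(\alpha)}$, (18.15.22) yields a representation of the schematic form
\begin{align*}
 F_K^{(\alpha)}(x)\;=\;\nu^{1/6}|\zeta(x)|^{1/4}\Big(\Ai\!\bigl(\nu^{2/3}\zeta(x)\bigr)+\tfrac{A_0(\zeta(x))}{\nu^{4/3}}\,\Ai'\!\bigl(\nu^{2/3}\zeta(x)\bigr) +\text{env}\Ai\!\bigl(\nu^{2/3}\zeta(x)\bigr)\,\mathcal O(1/\nu^2)\Big),
\end{align*}
where $A_0$ is, up to harmless $x$-factors bounded on $[1/2,\infty)$, precisely the function $F_0$ treated in \autoref{F0 bounded}. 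The key point is therefore that $F_0$ (and hence $A_0$) is bounded on $[1/2,\infty)$, so the $A_0/\nu^{4/3}$ term and the envelope term can be handled uniformly.

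In the oscillatory regime $1/2\le x\le 1$, I would set $s=\nu^{2/3}|\zeta(x)|$ and use \eqref{Ai(-s)}--\eqref{envAi(-s)}, which give
\[ \nu^{1/6}|\zeta(x)|^{1/4}\,\Ai(-s)=\sqrt{1/\pi}\,\cos\!\bigl(\tfrac23\nu|\zeta(x)|^{3/2}-\pi/4\bigr)+\mathcal O\!\bigl(1/(1+\nu|\zeta(x)|^{3/2})\bigr),\]
while the $\Ai'$ and $\text{env}\Ai$ contributions are $\mathcal O(1/\nu)$ uniformly in $x$. The cosine is converted to the desired $\xi$-phase by \autoref{cos lem}, which eats the sign $(-1)^{K-1}$ and produces $\cos(\nu\xi(x)-\alpha\pi/2-\pi/4)$ together with a compensating $\sqrt 2$; combining these gives the claimed leading term with amplitude $\sqrt{2/\pi}$. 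Finally, \eqref{zeta increasing eq2} in \autoref{zeta increasing} gives $|\zeta(x)|^{3/2}\asymp(1-x)^{3/2}$ on $[1/2,1]$, which converts the $\mathcal O\bigl(1/(1+\nu|\zeta(x)|^{3/2})\bigr)$ error into the stated $\mathcal O\bigl(1/(1+K(1-x)^{3/2})\bigr)$.

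In the exponentially decaying regimes the argument is easier: for $1\le x\le 3/2$ one has $s=\nu^{2/3}\zeta(x)\ge 0$ and \eqref{Ai(s)}--\eqref{envAi(s)} supply the factor $\mathrm e^{-\mathcal A(s)}=\exp(-\tfrac23\nu\zeta(x)^{3/2})$; \eqref{zeta increasing eq2} then yields $\zeta(x)^{3/2}\ge c(x-1)^{3/2}$ and the bound $|F_K^{(\alpha)}(x)|\le C\mathrm e^{-\beta K(x-1)^{3/2}}$ follows, after absorbing the polynomial prefactor $\nu^{1/6}\zeta(x)^{1/4}$ into the exponential by shrinking $\beta$. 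For $x\ge 3/2$ the only change is that \eqref{zeta increasing eq1} gives $\zeta(x)^{3/2}\ge c x$, producing the linear decay rate $\exp(-\beta Kx)$.

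The main obstacle I anticipate is the transition region $x\to 1^-$ and the $\alpha$-dependent coefficient $A_0$ blowing up like $\zeta(x)^{-2}$ before being multiplied by $\nu^{-4/3}\Ai'$: without the cancellation established in \autoref{F0 bounded} (and implicitly in the second-order expansion of \autoref{zeta at 1}), the candidate error $A_0(\zeta(x))/\nu^{4/3}$ would not be $\mathcal O(1/\nu)$ uniformly near $x=1$, and one could not merge the three DLMF error sources into the single clean $\mathcal O(1/(1+K(1-x)^{3/2}))$ stated in the lemma. Once that boundedness is in hand, the rest is bookkeeping of the prefactors $2^\alpha\sqrt\nu\,x^{\alpha/2+1/4}|1-x|^{1/4}\mathrm e^{-\nu x/2}$ against $\nu^{1/6}|\zeta(x)|^{1/4}$, which are $\mathcal O(1)$ after one uses the identity encoded in (18.15.22) relating them.
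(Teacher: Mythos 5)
Your proposal follows essentially the same route as the paper's proof: solve the Airy-type expansion (18.15.22) for $F_K^{(\alpha)}$, control the $\Ai'$ and $\operatorname{env}\Ai$ corrections via \autoref{Airy asymp} and the boundedness of $F_0$ from \autoref{F0 bounded}, convert the phase with \autoref{cos lem}, and translate $\zeta(x)$ into $(1-x)$ resp.\ $x$ via \autoref{zeta increasing}, absorbing the polynomial prefactor into the exponential in the decaying regimes. The only cosmetic slips are that for $M=1$ the envelope term in your schematic display enters at relative order $\mathcal O(1/\nu)$ rather than $\mathcal O(1/\nu^2)$ (your subsequent use of $\mathcal O(1/\nu)$ is what matters), and the amplitude factor $\sqrt{2}$ stems from the $2^{\alpha-1/2}$ normalisation in (18.15.22) against the $2^{\alpha}$ in the definition of $F_K^{(\alpha)}$, not from \autoref{cos lem} — neither affects the argument.
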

\begin{proof}
We note that $\zeta(x)/(x-1)=\lvert \zeta(x) \rvert / \lvert x-1 \rvert$ due to \autoref{zeta increasing}. Solving \cite[\href{http://dlmf.nist.gov/18.15.E22}{(18.15.22)}]{NIST:DLMF} for $F_K^{(\alpha)}(x)$ provides us with
\begin{align}
F_K^{(\alpha)}&(x)(-1)^{K-1} \sqrt 2^{-1}\\
&=  \sqrt \nu  \lvert \zeta(x) \rvert^{\frac 1 4}\left( \frac {\Ai(\nu^{2/3} \zeta(x))}{ \nu^{1/3}} + \frac{ \Ai'( \nu^{2/3} \zeta(x)) }{\nu^{5/3} }F_0(x) + \text{env}\Ai(\nu^{2/3} \zeta(x) ) \mathcal O(\nu^{-4/3}) \right) \\
&=\left( \nu^{2/3} \lvert \zeta(x) \rvert \right)^{1/4} \Ai(\nu^{2/3} \zeta(x)) 
\\
&+ \mathcal O \left(\frac{\Ai'( \nu^{2/3} \zeta(x)) }{\left( \nu^{2/3} \lvert \zeta(x) \rvert \right)^{1/4}} \frac{ \sqrt{\lvert \zeta(x) \rvert }} {\nu}  + \left( \nu^{2/3} \lvert \zeta(x) \rvert \right)^{1/4}  \text{env}\Ai(\nu^{2/3} \zeta(x)) \nu^{-1}      \right).
\end{align}
We have regrouped the terms such that we get the left-hand sides of the formulas in \autoref{Airy asymp}. 

For $1/2\le x \le 1$, due to \autoref{zeta increasing}, we get $-\infty  < \zeta(1/2) \le \zeta(x)\le 0$ and thus we can bound the contribution of the $\Ai'$ and $\text{env}\Ai$ term by $C/K$ using \eqref{Ai'(-s)} and \eqref{envAi(-s)}. For the $\Ai$ term, we use \eqref{Ai(-s)}  with $s = -\nu ^{2/3}\zeta(x) > 0$ and \autoref{cos lem}, which yield
\begin{align}
F_K^{(\alpha)}(x) &=\sqrt 2  (-1)^{K-1} \left( \nu^{2/3} \lvert \zeta(x) \rvert \right)^{1/4} \Ai(\nu^{2/3} \zeta(x)) + \mathcal O (1/K) \\
&=\sqrt {2/ \pi }   (-1)^{K-1}  \cos(2\nu/3 (-\zeta(x))^{3/2} -\pi /4 ) + \mathcal O(1/(1+K\lvert \zeta(x) \rvert^{3/2})  + 1/K) \\
&= \sqrt {2/ \pi } \cos(\nu \xi(x) -\alpha \pi/2 - \pi/4 ) + \mathcal O(1/(1+K(1-x)^{3/2}))\,.
\end{align}

For $x \ge 1$, we use \eqref{Ai(s)}, \eqref{Ai'(s)} and \eqref{envAi(s)} to see
\begin{align}
\left \lvert F_K^{(\alpha)}(x) \right \rvert &\le C \exp(- \mathcal A ( \nu^{2/3} \zeta(x))) \left( 1+ \frac {\sqrt {\lvert \zeta(x) \rvert} } {\nu} + \frac 1 \nu \right) \\
&\le C \exp(- 2\nu /3 \,\zeta(x)^{3/2} ) ( 1 +\sqrt { \zeta(x)} )\\
&\le C \exp(- \nu/2  \,\zeta(x)^{3/2} ) \sup_{s>0} \left( \exp(- \nu s^3/6) \, (1+s) \right) .
\end{align}
As $\nu=4K+2(\alpha-1) \ge 4-3=1$, the supremum at the end is bounded independently of $\nu$. To get a better understanding of the exponential decay, we need \autoref{zeta increasing}. 
For $1 \le x \le 3/2$, we get
\begin{align}
\left \lvert F_K^{(\alpha)}(x) \right \rvert &\le C \exp(- \nu/2\, \zeta(x)^{3/2} )  \\
&\le C \exp(-\nu/2 \, C_1^{3/2} (x-1)^{3/2} ) \\
&\le  C \exp(-\beta K (x-1)^{3/2})\,,
\end{align}
where $\beta>0$ is some constant and where we used $\nu \ge 4K-3\ge K$.

For $3/2 \le x < \infty$, we use the other estimate in \autoref{zeta increasing} and see
\begin{align}
\left \lvert F_K^{(\alpha)}(x) \right \rvert &\le C \exp(- \nu/2 \,\zeta(x)^{3/2} )  \\
&\le  C \exp(-\nu/2 \, C_1 x ) \\
&\le  C \exp(-\beta K  x)\,,
\end{align}
with a possibly different $\beta>0$.

\end{proof}

This leads to the following second main result on $G_K(t)$ in case the argument $t$ is large compared to $K$,

\begin{corollary} \label{GK for t>1/2 K} There exists a constant $\beta>0$ and a constant $C<\infty$ independent of $K$ such that we have the following estimates for the function $G_K$ from \eqref{G to F},
\begin{align}
\lvert G_K(t) \rvert \le   \begin{cases} C\exp(- \beta t) \quad &\text{ if } t> \sqrt {\frac 3 2} K\,, \\ \frac C {K^{\frac 3 2}  \left \lvert 1- \frac t K \right \rvert^{\frac 1 4}}   \quad &\text{ if } \sqrt{ \frac 12} K  \le t \le \sqrt {\frac 3 2} K\,.
 \end{cases}
\end{align}
\end{corollary}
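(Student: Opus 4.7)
The strategy is to apply \autoref{Laguerre to Airy} (with $\alpha=1$) directly to the identity \eqref{G to F}, which reads $G_K(t) = F_K^{(1)}(t^2/K^2)/\bigl(8\pi t^{3/2}|1-t^2/K^2|^{1/4}\bigr)$. Setting $x \coloneqq t^2/K^2$, the range $\sqrt{1/2}\,K \le t \le \sqrt{3/2}\,K$ corresponds to $1/2 \le x \le 3/2$, and $t > \sqrt{3/2}\,K$ corresponds to $x > 3/2$, which matches the two exponential cases of \autoref{Laguerre to Airy}.

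For the exponential regime $t > \sqrt{3/2}\,K$, \autoref{Laguerre to Airy} gives $|F_K^{(1)}(x)| \le C\exp(-\beta K x)$. Since $K x = t^2/K \ge \sqrt{3/2}\,t$, the exponential factor is at most $\exp(-\beta\sqrt{3/2}\,t)$. On this range $|1-x|^{1/4} = (x-1)^{1/4} \ge (1/2)^{1/4}$, and $t \ge \sqrt{3/2}\,K \ge \sqrt{3/2}$, so the remaining prefactor $1/(t^{3/2}|1-x|^{1/4})$ is at most polynomial in $t$ and may be absorbed into the exponential (at the cost of replacing $\beta$ by some smaller positive constant). This yields $|G_K(t)| \le C\exp(-\beta' t)$.

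For the transitional regime $\sqrt{1/2}\,K \le t \le \sqrt{3/2}\,K$, I split at $t = K$, i.e., $x = 1$. On the upper half $1 \le x \le 3/2$, \autoref{Laguerre to Airy} gives $|F_K^{(1)}(x)| \le C\exp(-\beta K(x-1)^{3/2}) \le C$; on the lower half $1/2 \le x \le 1$, the cosine leading term is bounded and the error $\mathcal O(1/(1+K(1-x)^{3/2}))$ is $\mathcal O(1)$, so again $|F_K^{(1)}(x)| \le C$. Since $t \ge \sqrt{1/2}\,K$, we have $t^{3/2} \ge c K^{3/2}$. Finally, factorising $1 - t^2/K^2 = (1 - t/K)(1 + t/K)$ and noting that $1 + t/K$ is bounded above and below by positive constants for $t/K \in [1/\sqrt 2, \sqrt{3/2}]$, the denominator satisfies $|1 - t^2/K^2|^{1/4} \ge c\,|1 - t/K|^{1/4}$. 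Combining these estimates gives $|G_K(t)| \le C/\bigl(K^{3/2}\,|1 - t/K|^{1/4}\bigr)$.

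There is no real obstacle here: the statement is a direct bookkeeping consequence of \autoref{Laguerre to Airy} once one performs the change of variable $x = t^2/K^2$. The only points requiring modest care are the comparison of $|1 - t^2/K^2|^{1/4}$ with $|1 - t/K|^{1/4}$ via the uniformly bounded factor $1 + t/K$, and the absorption of polynomial prefactors into the exponential decay in the first regime.
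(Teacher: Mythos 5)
Your proposal is correct and follows essentially the same route as the paper: both plug the bounds of \autoref{Laguerre to Airy} (with $\alpha=1$) into the identity \eqref{G to F} under the substitution $x=t^2/K^2$, bound the prefactor by $C/K^{3/2}$ (or a constant) and compare $\lvert 1-t^2/K^2\rvert^{1/4}$ with $\lvert 1-t/K\rvert^{1/4}$. Your extra details (splitting at $x=1$, the factorisation $1-t^2/K^2=(1-t/K)(1+t/K)$, absorbing the prefactor into the exponential) are exactly the steps the paper leaves implicit.
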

\begin{proof}
We recall $x=t^2/K^2$ and  \eqref{G to F}, which states that
\begin{align}
G_K(t) = \frac 1 { 8 \pi t^{3/2} \left\lvert 1- \frac {t^2} {K^2} \right\rvert ^{1/4}  } F_K^{(1)} \left( \frac{t^2}{K^2} \right).
\end{align}
If $t> \sqrt{3/2} K$, we have $x>3/2$. Thus, the fraction in front is bounded by $C/K^{3/2}\le C$ and \autoref{Laguerre to Airy} tells us that 
\begin{align}
F_K^{(1)}(t^2/K^2) \le C \exp(-\beta K x ) \le C \exp(-\beta K\sqrt x)=C \exp(-\beta t)\,.
\end{align}
This finishes the case $t>\sqrt{3/2} K$. In the case $\sqrt {1/2} K \le t \le \sqrt{3/2}K$, \autoref{Laguerre to Airy} implies that $\lvert F_K^{(1)}(t^2/K^2) \rvert \le C$ and thus, we can bound 
\begin{align}
\lvert G_K(t) \rvert \le \frac { C } {t^{3/2} \left\lvert 1- \frac {t^2} {K^2} \right\rvert ^{1/4}  } \le \frac{C} {K^{3/2} \left\lvert 1- \frac {t} {K} \right\rvert ^{1/4}  } .
\end{align}
\end{proof}

\subsection{An integral bound and an identity on the kernel $G_K$}

We start with an integral estimate, which will help to calculate some Hilbert--Schmidt norms.
\begin{lemma} \label{GK square int}
There are constants $C<\infty$ and $\beta>0$, independent of $K$ such that for any $R\ge 0$
\begin{align}
\int_R^\infty \lvert G_K(t) \rvert^2 t \,\mathrm d t \le \begin{cases} \frac C {1+R}   \quad &\text{ if } R< 2K\,,  \\ C \exp(-\beta R) & \text{ if } R \ge 2K\,. \end{cases} 
\end{align}
\end{lemma}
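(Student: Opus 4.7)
The strategy is to split the integral at the natural transition scales $t \sim 1$, $t \sim K/\sqrt 2$ and $t \sim \sqrt{3/2}\,K$ dictated by the three regimes in \autoref{GK for t<1/2 K} and \autoref{GK for t>1/2 K}, and to handle the two cases $R\ge 2K$ and $R<2K$ separately.

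For $R \ge 2K$, we have $t \ge R \ge 2K > \sqrt{3/2}\,K$ throughout the range of integration, so \autoref{GK for t>1/2 K} gives $|G_K(t)| \le C e^{-\beta t}$ and hence
\begin{equation*}
\int_R^\infty |G_K(t)|^2 t\,\mathrm d t \le C\int_R^\infty e^{-2\beta t}\, t\,\mathrm d t \le C' e^{-\beta' R}\,,
\end{equation*}
which is the desired bound after relabelling $\beta$.

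For $R < 2K$, I will split $\int_R^\infty = \int_R^{\sqrt{1/2}\,K} + \int_{\sqrt{1/2}\,K}^{\sqrt{3/2}\,K} + \int_{\sqrt{3/2}\,K}^\infty$ (with the first interval truncated appropriately if $R > \sqrt{1/2}\,K$, in which case I start the splitting directly from $R$). The tail beyond $\sqrt{3/2}\,K$ is again $O(\exp(-\beta K)) = O(1/(1+R))$ by the previous argument. The transition piece is the cleanest application of \autoref{GK for t>1/2 K}: there $|G_K(t)|^2 \le C/(K^3 |1-t/K|^{1/2})$, so after substituting $s=t/K$,
\begin{equation*}
\int_{\sqrt{1/2}\,K}^{\sqrt{3/2}\,K} |G_K(t)|^2 t\,\mathrm d t \le \frac{C}{K}\int_{1/\sqrt 2}^{\sqrt{3/2}} \frac{s}{\sqrt{|1-s|}}\,\mathrm d s = O(1/K)\,,
\end{equation*}
which is absorbed into $C/(1+R)$ since $R < 2K$.

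For the bulk piece, the first case of \autoref{GK for t<1/2 K} yields $|G_K(t)| \le C/t^{3/2}$ on $1 \le t \le \sqrt{1/2}\,K$ (the leading term has denominator $t^{3/2}$ and factor $|1-t^2/K^2|^{-1/4}$ which is bounded by a constant on this range, and the error $O(t^{-5/2})$ is dominated by the main term). Thus
\begin{equation*}
\int_{\max(R,1)}^{\sqrt{1/2}\,K} |G_K(t)|^2 t\,\mathrm d t \le C\int_{\max(R,1)}^\infty \frac{\mathrm d t}{t^2} \le \frac{C}{\max(R,1)} \le \frac{C'}{1+R}\,.
\end{equation*}
Finally, for the small-$t$ range $0 \le t \le 1$ (needed only when $R<1$), the second case of \autoref{GK for t<1/2 K} together with $|J_1(4t)| \le C t$ for $0 \le t \le 1$ shows $|G_K(t)| \le C$ uniformly in $K$, so $\int_0^1 |G_K(t)|^2 t \,\mathrm d t \le C$, again absorbed into $C/(1+R)$. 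Summing the three (or four) contributions yields the claim.

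The only non-routine point is verifying that the integrable singularity $(1-t/K)^{-1/2}$ in the transition region contributes at most $O(1/K)$; the rest is bookkeeping of the pointwise bounds already established.
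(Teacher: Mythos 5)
Your proof is correct and follows essentially the same route as the paper: both rely on the pointwise bounds from \autoref{GK for t<1/2 K} and \autoref{GK for t>1/2 K}, split the integral into a bounded small-$t$ piece, a bulk piece with $|G_K(t)|^2 t \lesssim t^{-2}$, a transition piece where the integrable singularity $|1-t/K|^{-1/2}$ contributes $\mathcal O(1/K)$, and an exponentially decaying tail. The only difference is cosmetic bookkeeping (the paper merges the bulk and transition regions into a single integral $\int_{R/K}^{2} s^{-2}|1-s^2|^{-1/2}\,\mathrm d s$, while you treat them separately), so no further comment is needed.
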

\begin{proof}
Combining \autoref{GK for t>1/2 K} and \autoref{GK for t<1/2 K}, we obtain the universal upper bound
\begin{align}
\lvert G_K(t) \rvert \le \begin{cases} \frac C { (1+t)^{3/2}  \left \lvert 1- \frac t K \right \rvert ^{1/4} } \quad &\text{ if } t < 2K \,, \\ C \exp(-\beta t) & \text{ if } t\ge 2K\,. \end{cases}
\end{align}
For $R\ge 2K$, we simply observe
\begin{align}
\int_R^\infty \lvert G_K(t) \rvert^2 t\, \mathrm d t \le C \int_R^\infty \exp(-2\beta t) t\, \mathrm dt \le C \exp(-\beta R)\,.
\end{align}
First, we assume $1 < R <2K$ and estimate
\begin{align}
\int_R^{2K} \lvert G_K(t) \rvert ^2 t \,\mathrm d t &\le  \int_R^{2K}   \frac {C t } { t^{3}  \left \lvert 1- \frac t K \right \rvert ^{1/2} }   \,\mathrm d t  \\ 
&= \int_{R/K} ^2    \frac C { (Ks)^{2}  \left \lvert 1-  s \right \rvert ^{1/2} }   K \,\mathrm d s \\
&\le  \frac C K \int_{R/K} ^2 \frac 1 {s^2 \lvert 1-s^2 \rvert^{1/2} }\,\mathrm d s \\
&= \frac C K \left( \frac {\sqrt 3} 2  + \frac {\sqrt { 1- R^2/K^2}} { R/K} \right) \le \frac C K + \frac C R \le \frac C {1+R}\,.
\end{align}
Finally, for $0\le R \le 1$, it suffices to see that
\begin{align}
\int_0^1  \lvert G_K(t) \rvert ^2 t \,\mathrm d t \le \int_0^1 C t \,\mathrm d t \le C\,.
\end{align}
This concludes the proof, as we can split the integral on $(R,\infty)$ into the at most three parts $(R,1), (\max\{1,R\}, 2K)$ and $(\max\{2K,R\},\infty)$ and the upper bound is decreasing in $R$.
\end{proof}

We finish this subsection with a simple and yet useful identity of the localised Fermi projection.

\begin{lemma} \label{easy box estimate}
For every $E \subset \mathbb R^2$ measurable and bounded, we have
\begin{align}
\left \lVert 1_{E} P_K \right \rVert_2^2=\operatorname{tr} 1_E P_K 1_E  = \frac {\lvert E \rvert } {2 \pi}\,.
\end{align}
\end{lemma}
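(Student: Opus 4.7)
The plan is to exploit the identity $P_K^2 = P_K$ together with the explicit diagonal value of the integral kernel $P_K(x,x)$. First, using cyclicity of the trace and $1_E^2 = 1_E$, $P_K^2 = P_K$, I would write
\begin{align*}
\lVert 1_E P_K \rVert_2^2 = \operatorname{tr}\bigl((1_E P_K)^*(1_E P_K)\bigr) = \operatorname{tr}(P_K 1_E P_K) = \operatorname{tr}(1_E P_K^2) = \operatorname{tr}(1_E P_K 1_E)\,,
\end{align*}
which reduces both expressions to the single quantity $\operatorname{tr}(1_E P_K 1_E)$.

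Next, since $P_K$ is a bounded self-adjoint operator on $\Lp^2(\R^2)$ with continuous integral kernel given by \eqref{pl kernel}, the operator $1_E P_K 1_E$ is trace class for bounded $E$ (its kernel is continuous and supported in the bounded set $E \times E$), and its trace is obtained by integrating the kernel along the diagonal:
\begin{align*}
\operatorname{tr}(1_E P_K 1_E) = \int_E P_K(x,x)\,\mathrm d x\,.
\end{align*}

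The remaining step is to evaluate $P_K(x,x)$. Setting $x=y$ in \eqref{pl kernel}, the exponential factor becomes $1$ and the Laguerre polynomial is evaluated at zero. From the defining sum $\Ln_{K-1}^{(1)}(0) = \binom{K}{K-1} = K$, so
\begin{align*}
P_K(x,x) = \frac{1}{2\pi K}\cdot K = \frac{1}{2\pi}\,.
\end{align*}
Plugging this in yields $\operatorname{tr}(1_E P_K 1_E) = |E|/(2\pi)$, which is the claim.

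There is no real obstacle here: the only thing to be careful about is justifying that the trace equals the diagonal integral of the kernel, which holds because $1_E P_K 1_E$ is a non-negative self-adjoint operator with continuous, compactly supported kernel (equivalently, one may appeal to Mercer-type arguments or note that for each Landau level $\Pi_{\ell,1/K}$ the localised projection $1_E \Pi_{\ell,1/K} 1_E$ is manifestly trace class with trace $|E|/(2\pi K)$, and then sum over $\ell = 0,\dots,K-1$).
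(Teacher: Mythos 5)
Your proposal is correct and follows essentially the same route as the paper: reduce everything to $\operatorname{tr}(1_E P_K 1_E)$, integrate the diagonal of the kernel, and use $P_K(x,x)=G_K(0)=\tfrac{1}{2\pi K}\Ln_{K-1}^{(1)}(0)=\tfrac{1}{2\pi}$ since $\Ln_{K-1}^{(1)}(0)=K$. The extra remarks justifying the trace-diagonal identity (e.g.\ via the finitely many Landau-level projections) are fine but add nothing beyond what the paper takes for granted.
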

\begin{proof}
This follows from 
\begin{align}
 \operatorname{tr} 1_E P_K 1_E &=\int_E  P_K(x,x)\,\mathrm d x  =\int_E G_K(0) \,  \mathrm d x =\lvert E \rvert \frac 1 {2 \pi K} \Ln_{K-1}^{(1)}(0)= \frac {\lvert E \rvert } {2 \pi}\,,
\end{align}
where we used $\Ln_{K-1}^{(1)}(0)=(K-1)+1=K$, see \cite[\href{http://dlmf.nist.gov/18.6.E1}{(18.6.1)}]{NIST:DLMF} with $(2)_{K-1}=((K-1)+1)!$, see \cite[\href{http://dlmf.nist.gov/5.2.iii}{(5.2.iii)}]{NIST:DLMF}.
\end{proof}

\section{On the particle number fluctuations}\label{section: pf}

This section is devoted to the study of the asymptotic expansion of the trace of $f(P_K(L\Lambda))$ when $f$ is a quadratic function. Since we assume $f(0)=f(1)=0$, we  may restrict to $f(t)=t(1-t)$. It is related to  the fluctuations of the local particle number in the ground state and hence of physical interest. 

\begin{theorem} \label{pf thm}
Let $\Lambda\subset \mathbb R^2$ be a piecewise $\mathsf{C}^2$-smooth domain. Then, with the above test function $f(t) = t(1-t)$,
\begin{align}
\operatorname{tr} f(P_K(L\Lambda) ) = 
\operatorname{tr} \left( 1_{L \Lambda} P_K 1_{L\Lambda^\complement} P_K 1_{L \Lambda} \right)= \frac 1 {\sqrt 2 \pi^3}\begin{cases} L \lvert \partial \Lambda \rvert \ln(K) +\mathcal O(L)\quad &\text{  if } K<L\,, \\ L \lvert \partial \Lambda \rvert  \ln(L) +\mathcal O(L)\quad &\text{  if } K>L\,. \end{cases}
\end{align}
\end{theorem}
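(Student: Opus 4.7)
\emph{Proof plan.} Using $P_K^2=P_K$, the trace rearranges as
\begin{align*}
\operatorname{tr} f(P_K(L\Lambda)) = \operatorname{tr}\bigl(1_{L\Lambda} P_K 1_{L\Lambda^\complement} P_K\bigr) = \|1_{L\Lambda} P_K 1_{L\Lambda^\complement}\|_2^2.
\end{align*}
The crucial feature of the quadratic test function is that the phase $\exp(ix\wedge y/(2K))$ in \eqref{pl kernel} disappears from $|P_K(x,y)|^2$, leaving the translation-invariant quantity $|G_K(\|x-y\|/\sqrt 8)|^2$. After the shift $z=y-x$ this yields
\begin{align*}
\operatorname{tr} f(P_K(L\Lambda)) = \int_{\R^2} |G_K(\|z\|/\sqrt 8)|^2\, V_L(z)\, \mathrm{d}z, \quad V_L(z) \coloneqq \bigl|\{x\in L\Lambda : x+z\notin L\Lambda\}\bigr|.
\end{align*}

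The function $V_L$ is a purely geometric object. By the scaling $V_L(z)=L^2 V_1(z/L)$, one has the universal bounds $V_L(z)\le\min(L^2|\Lambda|, CL|\partial\Lambda|\|z\|)$, and for a piecewise $\mathsf{C}^2$-smooth domain the near-origin expansion
\begin{align*}
V_1(w)=g(w)+\mathcal O(\|w\|^2), \qquad g(w)\coloneqq \tfrac12\int_{\partial\Lambda}|w\cdot n(y)|\,\mathrm{d}\sigma(y),
\end{align*}
with $n(y)$ the outward unit normal; corners of $\partial\Lambda$ enter only through the error term. Splitting the $z$-integral at $R\coloneqq\min(K,L)$, the outer region $\|z\|>R$ contributes $\mathcal O(L)$ via the tail estimate \autoref{GK square int} combined with $V_L\le L^2|\Lambda|$. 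In the inner region, substituting $V_L=Lg+\mathcal O(\|z\|^2)$ and passing to polar coordinates $z=r\omega$, the angular identity $\int_{S^1}|\omega\cdot n|\,\mathrm{d}\omega=4$ gives
\begin{align*}
\int_{\|z\|\le R}|G_K(\|z\|/\sqrt 8)|^2\, Lg(z)\,\mathrm{d}z = 32\sqrt 2\, L\,|\partial\Lambda| \int_0^{R/\sqrt 8} t^2|G_K(t)|^2\,\mathrm{d}t
\end{align*}
after $r=\sqrt 8\,t$, while the $\mathcal O(\|z\|^2)$ correction integrates to $\mathcal O(L)$ since $\int_1^R r^{-3}\cdot r^2\cdot r\,\mathrm{d}r = \mathcal O(R)=\mathcal O(L)$.

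Everything thus reduces to the one-variable asymptotic
\begin{align*}
\int_0^{R/\sqrt 8} t^2|G_K(t)|^2\,\mathrm{d}t = \tfrac{1}{64\pi^3}\,\ln R + \mathcal O(1).
\end{align*}
Inserting the bulk expansion \autoref{GK for t<1/2 K} gives $t^2|G_K(t)|^2\approx \cos^2(\omega_K(t)-3\pi/4)/\bigl(32\pi^3\, t\sqrt{1-t^2/K^2}\bigr)$ in the range $1\le t\le K/\sqrt 2$. Writing $\cos^2 = \tfrac12 + \tfrac12\cos(2\omega_K-\tfrac{3\pi}{2})$, the constant half, after the change of variables $t=K\sin\theta$, produces $\tfrac{1}{64\pi^3}\int \mathrm{d}\theta/\sin\theta = \tfrac{1}{64\pi^3}\ln R + \mathcal O(1)$; the oscillating half is $\mathcal O(1)$ by a single integration by parts, using that $\omega_K'(t)=2\sqrt{1-t^2/K^2}$ stays bounded below on $[1,K/\sqrt 2]$. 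The transitional window $K/\sqrt 2\le t\le K\sqrt{3/2}$ contributes $\mathcal O(1)$ by \autoref{GK for t>1/2 K}, the tail beyond is exponentially small, and the range $t\in[0,1]$ is harmless. With the prefactor $32\sqrt 2/64 = 1/\sqrt 2$ this yields the announced leading term $\tfrac{1}{\sqrt 2\,\pi^3}L|\partial\Lambda|\ln(\min(K,L))$.

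The main obstacle will be the oscillatory-integral step: the $\cos^2\to\tfrac12$ replacement must be made uniformly in $K$, with particular attention where the Bessel-type bulk asymptotic joins the Airy-type asymptotic near $t=K/\sqrt 2$, and where the expansion degenerates at $t=\mathcal O(1)$. A secondary delicate point is that the piecewise $\mathsf{C}^2$ hypothesis permits corners on $\partial\Lambda$; they have to be shown to perturb $V_L(z)$ only by terms absorbed in $\mathcal O(L)$, which is handled by excising a fixed small neighbourhood of each corner and applying the smooth-boundary argument to the remainder.
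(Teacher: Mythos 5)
Your plan is essentially the paper's own proof in a thin disguise: the cancellation of the magnetic phase for the quadratic $f$, the reduction to $\int_{\R^2}|G_K(\|z\|/\sqrt 8)|^2\,V_L(z)\,\mathrm dz$ (your $V_L$, after the angular integration, is exactly the paper's $F(s)$ from \eqref{def F}), the geometric expansion $2s^2L|\partial\Lambda|+\mathcal O(s^3)$ with corners absorbed into the error (this is the content of \autoref{Appendix C}), the bulk asymptotics \autoref{GK for t<1/2 K}, and the replacement $\cos^2\to\tfrac12$ with the oscillating part controlled through $\omega_K'$ (the paper substitutes $s=\omega_K(t)$ instead of integrating by parts, which is cosmetic). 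Your constants check out, and the slip $\omega_K'(t)=2\sqrt{1-t^2/K^2}$ (it is $4\sqrt{1-t^2/K^2}$, cf.\ \eqref{omegaK' asmyp eq}) is harmless since you only use boundedness below.

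The one step that fails as written is the outer region. With the cut at $R=\min(K,L)$ and the crude bound $V_L\le L^2|\Lambda|$, the tail estimate \autoref{GK square int} only yields $\int_{\|z\|>R}|G_K(\|z\|/\sqrt 8)|^2V_L(z)\,\mathrm dz\lesssim L^2/(1+R)$, which in the case $K<L$ is $\mathcal O(L^2/K)$ and is \emph{not} $\mathcal O(L)$ when $K\ll L$ --- precisely the regime responsible for the $\ln(K)$ branch of the theorem. The repair is to use the other branch of your own bound, $V_L(z)\le CL\|z\|$, on the intermediate range $K\le\|z\|\le L$, reserving $V_L\le L^2|\Lambda|$ for $\|z\|\ge L$; the intermediate contribution is then $CL\int_{K/\sqrt 8}^{\infty}t^2|G_K(t)|^2\,\mathrm dt$, and showing this is $\mathcal O(L)$ needs a little more than the tail estimate as stated (either the pointwise bounds on $G_K$ underlying \autoref{GK square int}, or an integration by parts of the tail bound, or, as the paper does, a further split at $10K$ using that $F(s)\le CsLK$ on $[K,10K]$ and that $G_K$ decays exponentially beyond $2K$). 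With that adjustment your argument coincides with the paper's.
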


\begin{remark} In line with our \autoref{main} the value $\frac 1 {\sqrt 2 \pi^3}$ equals $\frac{2\sqrt{2}}{\pi} \mathsf{I}(t\mapsto t(1-t))$.
\end{remark}

\begin{proof}
The first trace identity is rather obvious, see the short proof following \eqref{tr(f).1}. As we are then just calculating the Hilbert--Schmidt norm of $ 1_{L \Lambda} P_K 1_{L\Lambda^\complement}$, we have (see \eqref{pl kernel})
\begin{align}
\operatorname{tr} \left( 1_{L \Lambda} P_K 1_{L\Lambda^\complement} P_K 1_{L \Lambda} \right)& =\int_{L \Lambda} \mathrm d x \int_{L \Lambda^\complement} \mathrm d y \, \lvert  G_K(\| x- y\|/\sqrt 8) \rvert^2  \\
&= \int_0^\infty \mathrm d s \, \lvert  G_K(s/ \sqrt 8) \rvert^2 F(s)\label{4.3}\,.
\end{align}
The last step relies on changing to polar coordinates in $y$ and Fubini, where
\begin{align}\label{def F}
F(s) \coloneqq s \int_{L \Lambda} \mathrm d x  \int_{0}^{2 \pi} \mathrm d \theta\, 1_{L \Lambda^\complement}(x+ s (\cos(\theta),\sin(\theta))) \,  .
\end{align}
Trivially, we have
\begin{align}
F(s) \le s \int_{L \Lambda}\mathrm d x\, 2 \pi =2 \pi s L^2 \lvert \Lambda \rvert =\mathcal O(sL^2 )\,.
\end{align}
As $\partial\Lambda$ is piecewise $\mathsf{C}^2$-smooth, we have the expansion 
\begin{align}\label{expansion F(s)}
F(s) =  2s^2 L \lvert \partial \Lambda \rvert + \mathcal O(s^3) \,.
\end{align} 
See \autoref{Appendix C} for a proof. Thus, for any $s>0$, we observe
\begin{align} 
F(s) \le C \min \big\{ 2sL^2  , s^2 (L+s) \big\} =\begin{cases} 2C sL^2    &\text{ if } s >L \,, \\ Cs^2(L+s) \le 2C  s^2 L & \text{ if } s \le L\,,\end{cases}
\end{align}
or equivalently,
\begin{align}
F(s) \le C sL \min\{s,L\}\,.
\end{align}
We want to replace the integral in \eqref{4.3} over $\mathbb R^+$ by an integral over $(\sqrt 8, \min\{  K,L\})$. Let us consider the resulting error terms. The first one is trivial, the second and third one rely upon \autoref{GK square int}.
\begin{align}
\int_0^{\sqrt 8}  \mathrm d s \, \lvert  G_K(s/ \sqrt 8) \rvert^2 F(s) &\le C \int_0^{\sqrt 8} \mathrm d s\,   s^2 L \le CL\,, \\
\int_L^\infty  \mathrm d s \lvert  G_K(s/ \sqrt 8) \rvert^2 F(s) &= C \int_{L/\sqrt 8} ^\infty \mathrm d t \,  \lvert G_K(t) \rvert^2  L^2 t \le CL\,,  \\
\int_K^{10K}   \mathrm d s \,\lvert  G_K(s/ \sqrt 8) \rvert^2 F(s) &\le CK \int_{K/\sqrt 8}^{10K /\sqrt 8 } \mathrm d t  \,\lvert G_K(t) \rvert^2 L t \le CL\,, \\
\int_{10K}^L \mathrm ds \,\lvert G_K(s/\sqrt 8) \rvert^2 F(s) &\le C \int_{10K/\sqrt 8}^\infty \mathrm d t \, \exp(-\beta t) t^2 L \le CL \exp(-\beta K)\,.
\end{align}
Thus, as the integrand is always positive, we get
\begin{align}
\int_0^\infty \mathrm d s \,\lvert  G_K(s/ \sqrt 8) \rvert^2 F(s) &= \int_{\sqrt 8}^{\min\{K,L\}} \mathrm d s \, \lvert  G_K(s/ \sqrt 8) \rvert^2  \left( 2s^2 L \lvert \partial \Lambda \rvert + \mathcal O(s^3) \right)+ \mathcal O(L) \\
&= \sqrt 8 \int_{1}^{\min\{K,L\}/\sqrt 8 } \mathrm d t\, \lvert  G_K(t) \rvert^2  \left( 16 t^2 L \lvert \partial \Lambda \rvert + \mathcal O(t^3) \right) + \mathcal O(L) \label{4.14}\,.
\end{align}
Next, we utilize \autoref{GK for t<1/2 K}. As $1<t < K/\sqrt 8$ the singularities of the denominator are outside the integration domain. As the singularity around $t =K$ is substantially away from the domain, we can easily bound that factor. Thus, we observe
\begin{align}
\lvert G_K(t) \rvert^2  &=  \left \lvert \frac{ \cos(\omega_K(t)-3\pi/4) } {4 \sqrt 2  \pi^{\frac 3 2} t^{\frac 3 2} \left( 1- \frac {t^2} {K^2} \right)^{\frac 1 4}   }+\mathcal O \left( \frac 1 {t^{\frac 5 2} }\right)  \right \rvert^2 \\
&= \frac 1 {32 \pi^3 t^3} \left( \cos(\omega_K(t)-3\pi/4) \left( 1+ \mathcal O \left( \frac {t^2}{K^2} \right) \right) + \mathcal O \left( \frac 1 t \right) \right)^2 \\
&=  \frac { \cos(\omega_K(t)-3\pi/4) ^2}  {32 \pi^3 t^3} +\mathcal O \left(  \frac 1 {t K^2} + \frac 1 {t^4} \right).\label{4.17}
\end{align}
We estimate the next batch of error terms of \eqref{4.14} and \eqref{4.17}, respectively,
\begin{align}
\int_1^{\min\{K,L\}/\sqrt 8}  \mathrm d t\,\lvert G_K(t) \rvert^2 t^3  &\le C \int_1^{L}  \mathrm d t\, 1 = CL\,, \\
\int_1^{\min\{K,L\}/\sqrt 8} \mathrm d t \, \left( \frac 1 {tK^2} + \frac 1 {t^4} \right) L t^2   &\le C L \left( \frac  {K^2} {K^2} + 1 \right)  =2CL\,.
\end{align}
Thus, we have shown
\begin{align}
\int_0^\infty \mathrm d s \,\lvert  G_K(s/ \sqrt 8) \rvert^2 F(s)  &= \sqrt 8  \int_1^{\min\{K,L\}/\sqrt 8 }  \mathrm d t  \, \frac { \cos(\omega_K(t)-3\pi/4) ^2}  {32 \pi^3 t^3}  16 t^2 L \lvert \partial \Lambda \rvert  +\mathcal O(L) \\
&= \frac { \sqrt 2 L \lvert \partial \Lambda \rvert} { \pi^3}    \int_1^{\min\{K,L\}/\sqrt 8 }  \mathrm d t\, \frac{\cos(\omega_K(t)-3\pi/4)^2} t  +\mathcal O(L)   \\
&= \frac { \sqrt 2 L \lvert \partial \Lambda \rvert} { \pi^3} \int_ {\omega_K(1)} ^{\omega_K( \min\{ K,L\} /\sqrt 8 )} \mathrm d s\, \frac {\cos(s-3\pi/4)^2 }{ \omega_K^{-1}(s) \omega_K'(\omega_K^{-1}(s)) } +\mathcal O(L) \,.
\end{align}
For the last step, we need to show that $\omega_K$ is invertible on the range $(1, K/\sqrt 8)$ and need to estimate the inverse function $\omega_K^{-1}$ and the differential $\omega_K'$. As $\omega_K(t)= 4K\eta(t/K)$ we observe that
\begin{align}
\omega_K'(t) = 4\eta'(t/K) = 4 \sqrt{1- \frac{t^2} {K^2}} = 4 + \mathcal O \left( \frac {t^2} {K^2} \right). \label{omegaK' asmyp eq}
\end{align}
Since $\omega_K'$ and $\omega_K^{-1}$ appear in the denominator, we need to establish lower bounds for both of them. Thus, for $0 \le t \le  K/\sqrt 8$, we see that $4 \ge \omega_K'(t) \ge  4  \sqrt{7/8}$. This ensures that $\omega_K$ is invertible. Since $\omega_K(0)=0$, this also implies
\begin{align}
4t \ge \omega_K(t) \ge 4\sqrt{7/8} \, t\,. \label{omegaK bounds}
\end{align}
Thus, for $0 \le s \le \omega_K(K/\sqrt 8 )$, we have
\begin{align}
4 \ge  \omega_K'(\omega_K^{-1}(s)) \ge 4 \sqrt{7/8}\,, \quad \frac 1 4 s \le \omega_K^{-1}(s) \le \frac 1 4 \sqrt{ \frac 8 7 } s \,.
\end{align}
Using the last two estimates, \eqref{omegaK' asmyp eq} and \eqref{omegaK asymp eq1}, we also obtain
\begin{align}
\frac 1 {\omega_K'(\omega_K^{-1}(s)) } = \frac 1 4  + \mathcal O \left( \frac {s^2} {K^2} \right),\quad
\frac 1 {\omega_K^{-1}(s)} =\frac 4 s + \mathcal O \left( \frac {s} {K^2} \right).
\end{align}
Thus, we conclude
\begin{align}
\int_0^\infty \mathrm d s \, &\lvert  G_K(s/ \sqrt 8) \rvert^2 F(s)  
\\
&= \frac { \sqrt 2 L \lvert \partial \Lambda \rvert} { \pi^3}  \int_ {\omega_K(1)} ^{\omega_K( \min\{ K,L\} /\sqrt 8 )} \mathrm d s \,\frac {\cos(s-3\pi/4)^2 }{ \omega_K^{-1}(s) \omega_K'(\omega_K^{-1}(s)) }  + \mathcal O(L)  \\
&=\frac { \sqrt 2 L \lvert \partial \Lambda \rvert} { \pi^3}  \int_ {\omega_K(1)} ^{\omega_K( \min\{ K,L\} /\sqrt 8 )}  \mathrm d s\, \Big[\frac {\cos(s-3\pi/4)^2 }{ s } + \mathcal O \left( \frac {s} {K^2} \right)\Big] +\mathcal O(L)  \\
&= \frac { \sqrt 2 L \lvert \partial \Lambda \rvert} { \pi^3}  \int_ {\omega_K(1)} ^{\omega_K( \min\{ K,L\} /\sqrt 8 )} \mathrm d s\, \frac {1 - \sin(2s) }{ 2s }  +\mathcal O(L)  \\
&=  \frac  {L \lvert \partial \Lambda \rvert}  { \sqrt 2 \pi^3} \left(  \ln( \min \{K,L\} ) + \mathcal O(1) \right) + \mathcal O(L) \\
&=\frac  {L \lvert \partial \Lambda \rvert}  { \sqrt 2 \pi^3}   \ln( \min \{K,L\} )  + \mathcal O(L) \,. 
\end{align}
We used that $2\cos(a-3\pi/ 4)^2=  1+ \cos(2a-3 \pi /2)= 1- \sin(2a)$  and \eqref{omegaK bounds}.

\end{proof}

\section{The case \texorpdfstring{$K \gg L$}{K  ^^e2^^89^^ab  L}}

\begin{theorem} \label{thm large K}
Suppose that $\Lambda$ has a piecewise $\mathsf{C}^1$-smooth boundary $\partial\Lambda$. Let $L\le C K^{2/5}$. Then, for any polynomial $f$ with $f(0)=0$, we have the asymptotic expansion
\begin{align}
\operatorname{tr} f( 1_{L \Lambda} P_K 1_{L \Lambda})=  L^2 \frac{\lvert \Lambda \rvert}{2\pi} f(1)+  L \ln(L) \lvert  \partial \Lambda \rvert\,\frac{2\sqrt{2}}{\pi}\, \mathsf{I}(t \mapsto (f(t)-tf(1)))+ \mathcal O(L)\,,
\end{align}
 as $L\to\infty$ (and hence $K\to\infty$) with $\mathsf{I}( \blank)$ defined in \eqref{def I(f)}. 
\end{theorem}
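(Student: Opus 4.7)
The plan is to compare the magnetic Fermi projector $P_K$ with the free Laplacian projector $P_\infty \coloneqq 1(-\Delta \le 2)$ and transfer the known Widom--Sobolev enhanced area law from the free case.

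\textbf{Step 1 (splitting off the bulk term).} Write $f(t) = t f(1) + h(t)$ with $h(t) \coloneqq f(t) - t f(1)$, so that $h$ is a polynomial satisfying $h(0) = h(1) = 0$. Set $T_K \coloneqq 1_{L\Lambda} P_K 1_{L\Lambda}$. By linearity of the trace and \autoref{easy box estimate},
\begin{align*}
\operatorname{tr} f(T_K) = f(1)\operatorname{tr} T_K + \operatorname{tr} h(T_K) = f(1) \frac{L^2 |\Lambda|}{2\pi} + \operatorname{tr} h(T_K)\,,
\end{align*}
which already accounts for the announced bulk term. It remains to show $\operatorname{tr} h(T_K) = L \ln L \, |\partial\Lambda| \frac{2\sqrt 2}{\pi} \mathsf{I}(h) + \mathcal O(L)$.

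\textbf{Step 2 (free asymptotic).} Let $T_\infty \coloneqq 1_{L\Lambda} P_\infty 1_{L\Lambda}$; its integral kernel is translation invariant with $G_\infty(r) \coloneqq J_1(4r)/(4\pi r)$, $r = \|x-y\|/\sqrt 8$. The Widom--Sobolev enhanced area law (\cite{Leschke2014}, built on \cite{sobolev2013pseudo}), applied to the polynomial $h$ with $h(0) = h(1) = 0$ and the piecewise $\mathsf{C}^1$ domain $\Lambda$, yields
\begin{align*}
\operatorname{tr} h(T_\infty) = L \ln L \, |\partial\Lambda| \, \tfrac{2\sqrt 2}{\pi} \mathsf{I}(h) + \mathcal O(L)\,.
\end{align*}

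\textbf{Step 3 (reduction to a Hilbert--Schmidt estimate).} Expand $h(t) = \sum_{n \ge 1} a_n t^n$. The $n=1$ contribution to $\operatorname{tr}(h(T_K) - h(T_\infty))$ vanishes because both $P_K(x,x)$ and $P_\infty(x,x)$ equal $1/(2\pi)$. For $n \ge 2$, the telescoping identity
\begin{align*}
T_K^n - T_\infty^n = \sum_{j=0}^{n-1} T_K^j\,(T_K - T_\infty)\, T_\infty^{n-1-j}\,,
\end{align*}
combined with cyclicity of the trace, the bound $|\operatorname{tr}(AB)| \le \|A\|_2 \|B\|_2$, the contractivity $\|T_K\|_\infty, \|T_\infty\|_\infty \le 1$, and $\|T_K\|_2, \|T_\infty\|_2 \le CL$ (via \autoref{easy box estimate}), yields $|\operatorname{tr}(T_K^n - T_\infty^n)| \le C_n L \|T_K - T_\infty\|_2$. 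Hence $\operatorname{tr}(h(T_K) - h(T_\infty)) = \mathcal O(L)$ as soon as $\|T_K - T_\infty\|_2 = \mathcal O(1)$.

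\textbf{Step 4 (the key estimate $\|T_K - T_\infty\|_2 = \mathcal O(1)$).} By \eqref{pl kernel},
\begin{align*}
(P_K - P_\infty)(x,y) = e^{i x \wedge y/(2K)}\bigl(G_K(r) - G_\infty(r)\bigr) + G_\infty(r)\bigl(e^{i x \wedge y/(2K)} - 1\bigr)\,.
\end{align*}
For $x, y \in L\Lambda$ with $L \le C K^{2/5} \ll K^{2/3}$, \autoref{GK for t<1/2 K} supplies the uniform bound $|G_K(r) - G_\infty(r)| \le C(r^{3/2}/K^2 + K^{-1}(1+r)^{-3/2})$ throughout the integration region. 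The crucial observation on the phase is $x \wedge y = x \wedge (y-x)$, yielding $|x \wedge y| \le \|x\| \cdot \|y-x\| \le CLr$, so $|e^{i x \wedge y/(2K)} - 1| \le CLr/K$. Combined with the Bessel tail estimate $|G_\infty(r)| \le C(1+r)^{-3/2}$ (\autoref{Bessel asymptotic}) and a radial integration,
\begin{align*}
\|T_K - T_\infty\|_2^2 \le C L^2 \int_0^{CL}\! r\left(\frac{r^3}{K^4} + \frac{1}{K^2(1+r)^3} + \frac{L^2 r^2}{K^2(1+r)^3}\right)\mathrm d r \le C\left(\frac{L^7}{K^4} + \frac{L^2}{K^2} + \frac{L^5}{K^2}\right)\,,
\end{align*}
and every summand is $\mathcal O(1)$ under $L \le CK^{2/5}$.

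\textbf{Main obstacle.} The magnetic phase $e^{i x \wedge y/(2K)}$ is genuinely non-local and cannot be removed by a gauge transformation of the form $U(x)$; its contribution of order $L^5/K^2$ to $\|T_K - T_\infty\|_2^2$ is exactly what forces the restriction $L \le CK^{2/5}$. Reaching the conjectured transition line $L \lesssim K$ would require exploiting oscillatory cancellations between the kernel difference $G_K - G_\infty$ and the phase correction, which the absolute-value triangle inequality used above completely discards.
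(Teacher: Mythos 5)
Your proposal is correct and follows essentially the same route as the paper: split off the bulk term, invoke the known enhanced area law for the free projector $P_\infty$, reduce the comparison to the Hilbert--Schmidt bound $\lVert 1_{L\Lambda}(P_K-P_\infty)1_{L\Lambda}\rVert_2=\mathcal O(1)$ via a telescoping/trace estimate (the paper's \autoref{tele_sum_lem2}-type argument), and control the magnetic phase through $x\wedge y=x\wedge(y-x)$ together with \autoref{GK for t<1/2 K}, arriving at the same $L^5/K^2$ bottleneck that forces $L\le CK^{2/5}$. The only cosmetic difference is the reference for the free case: for merely piecewise $\mathsf{C}^1$ boundaries the paper quotes Sobolev's piecewise-smooth version of the Widom formula \cite{Sobolev2014} rather than \cite{Leschke2014,sobolev2013pseudo}.
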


\begin{remarks}
\begin{enumerate}
\item 
Our proof relies on the same result for $B=0$, see  \cite[Theorem 2.2]{Sobolev2014}, where the set $\Lambda$ plays the same role as here and $\Omega$ is the Fermi sea at Fermi energy $2$, which is smooth. We will show that in the case $L \le CK^{2/5}$, the magnetic field only yields a small perturbation relative to the free case $H=- \Delta$.
\item 
With some efforts we could improve the result and relax the condition to $L = \mathcal O( \sqrt{K})$. As for the particle number fluctuations with the quadratic function $f$, we believe that the optimal condition is $L \le C K$, but we do not know how to prove this.
\end{enumerate}
\end{remarks}

\begin{proof} For the integral kernel of the Fermi projection of the Laplacian we have the explicit expression  in terms of the Bessel function $J_1$,
\begin{align} \label{Pinfty def}
P_{\infty}(x,y) \coloneqq 1(-\Delta \le 2) (x,y) = \begin{cases}\frac{ J_1(\sqrt 2 \| x-y \|) } { \sqrt 2 \pi \| x-y \|} \,,\quad &x,  y\in\mathbb R^2\, \text{ with } x\neq y\, , \\
\frac 1 {2 \pi} \, , \quad  &x=y \in \R^2 \, . \end{cases}
\end{align}
The latter is also the pointwise limit of the integral kernel of $P_K$ as $K\to\infty$, see \autoref{GK for t<1/2 K}. Formula \eqref{Pinfty def} can be derived by a simple Fourier transformation, the use of polar coordinates and the very definition of the Bessel function $J_1$. 

We consider the polynomials $f(t)=t^m$ for $m \in \mathbb N$ with $m \ge 2$; the case $m=1$ is covered by \autoref{easy box estimate} without mentioning $P_\infty$.

According to the above remark it suffices to show
\begin{align}
\operatorname{tr}  (1_{L \Lambda} P_K 1_{L \Lambda})^m = \operatorname{tr} ( 1_{L \Lambda} P_\infty 1_{L \Lambda})^m + \mathcal O(L)\,.
\end{align}
To this end, we use \autoref{tele sum lem2} with $A_1 = 1_{L \Lambda} P_K  1_ {L \Lambda}$ and $A_2 = 1_{L \Lambda} P_\infty  1_ {L \Lambda}$. Both operator norms are bounded by $1$. Let $R_0>0$ satisfy $\Lambda \subset D_{R_0}(0)$. We begin by estimating
\begin{align}
\lVert 1_{L \Lambda} (P_K - P_\infty) 1_ {L \Lambda} \rVert_2^2  = \int_{L \Lambda } \mathrm d x \int_{L \Lambda } \mathrm d y \, \lvert P_K(x,y)- P_\infty(x,y) \rvert ^2\,. 
\end{align}
For this, we need the assumption $L\le C K^{0.4}$. For $x,y \in L \Lambda$, according to \eqref{pl kernel} and \autoref{GK for t<1/2 K}, we have
\begin{align}
 P_K(x,y) &= \exp \left( i\frac {1} {2K} x \wedge y \right) G_K(\lVert x-y \rVert/\sqrt 8) \\
&= \left( 1+ \mathcal O\left( \frac{\|x-y \| \| x \|} {K} \right) \right) \left( P_\infty (x,y ) + \mathcal O\left( L^{1.5} K^{-2} + K^{-1}L^{-1.5} \right) \right) \\
&= P_\infty (x,y) + \mathcal O \left(   \frac{\|x-y \| \| x \|} {K} \frac 1 {(1+ \|x-y \|)^{1.5}} + L^{1.5} K^{-2} + K^{-1}L^{-1.5} \right) \\
&= P_\infty (x,y) + \mathcal O \left(   \frac{ L}  {K(1+ \| x-y \|)^{0.5}} + L^{1.5} K^{-2} + K^{-1}L^{-1.5} \right).
\end{align}
Thus, we observe 
\begin{align}
\lVert 1_{L \Lambda} &(P_K - P_\infty) 1_ {L \Lambda} \rVert_2^2 \\
&=\int_{L \Lambda } \mathrm d x \int_{L \Lambda } \mathrm d y \,\lvert P_K(x,y)- P_\infty(x,y) \rvert ^2 \\
& \le C \int_{D_{R_0L}(0)} \mathrm d x \int_{D_{R_0L}(0)} \mathrm d y  \, \left(   \frac{ L}  {K(1+ \| x-y \|)^{0.5}} + L^{1.5} K^{-2} + K^{-1}L^{-1.5} \right)^2 \\
& \le C \int_{D_{R_0L}(0)} \mathrm d x \int_{D_{R_0L}(0)} \mathrm d y  \, \left(   \frac{ L^2}  {K^2(1+ \| x-y \|)} + L^{3} K^{-4} + K^{-2}L^{-3} \right) \\
&\le C \left( L^5 /K^2 + L^7/K^4 + L^1/K^2 \right) \le C\,.
\end{align}

Using the last estimate, \eqref{Pinfty def} and \eqref{BesselJ1 upper bound eq1}, we can conclude
\begin{align}
\max \left( \lVert 1_{L \Lambda} P_K  1_ {L \Lambda} \rVert_2 , \lVert 1_{L \Lambda}  P_\infty 1_ {L \Lambda} \rVert_2 \right)^2  
&\le  \left( \lVert 1_{L \Lambda} P_\infty  1_ {L \Lambda} \rVert_2 + C \right)^2 \\
& \le2C^2+ 2   \int_{L \Lambda } \mathrm d x \int_{L \Lambda } \mathrm d y  \,  \lvert P_\infty(x,y) \rvert ^2 \\
&\le  C+\int_{L \Lambda } \mathrm d x \int_{L \Lambda } \mathrm d y \,\frac C {(1+ \| x-y \|)^3}  \\
&\le   C+ \int_{D_{LR_0}(0) } \mathrm d x \int_{\mathbb R^2} \mathrm d y\, \frac C {(1+ \| x-y \|)^3} \\
&\le  C L^2  \,.
\end{align}
The error term in \eqref{K>L op algebra est} is therefore of the order $\mathcal O( \sqrt {CL^2} C)= \mathcal O(L)$ and we arrive at 
\begin{align}
\operatorname{tr}  (1_{L \Lambda} P_K 1_{L \Lambda})^m = \operatorname{tr} ( 1_{L \Lambda} P_\infty 1_{L \Lambda})^m + \mathcal O(L)\,,
\end{align}
which, as we explained before, finishes the proof. 
\end{proof}

\section{The case \texorpdfstring{$K \ll L$}{K ^^e2^^89^^aa L}: Overview}
In this section we discuss the asymptotic expansion when $K$ grows significantly slower than $L$. We first need to list some technical conditions. 
\begin{restatable}{condition}{LKLconditions} \label{LKL conditions}
We say a triple $(\Lambda,K,L)$, consisting of a domain $\Lambda$, an integer $K \ge 3$ and a real number $L\ge 100$ satisfies 
\begin{description}
\item[Condition A] $\Lambda$ is a polygon (see \autoref{def polygon} for a formal definition) and $K< C L/\ln(L)$ for some (finite) constant $C$, or
\item[Condition B] $\Lambda$ is a $\mathsf{C}^2$-smooth domain and $K^2 \le C L$ for some (finite) constant $C$.
\end{description}
\end{restatable}

Our third main result of this paper is the following theorem for polynomials $f$. It verifies our \autoref{main} under the above Condition.

\begin{theorem} \label{thm small K}
Let the triple $(\Lambda,K,L)$ satisfy either one of the conditions in \autoref{LKL conditions}.  Then, for any polynomial $f$ with $f(0)=f(1)=0$, we have the asymptotic expansion
\begin{align} \label{thm small K eq}
\operatorname{tr} f( 1_{L \Lambda} P_K 1_{L \Lambda})= L \ln(K) \lvert  \partial \Lambda \rvert\,\frac{2\sqrt{2}}{\pi}\,\mathsf{I}(f)+ \mathcal O(L \ln \ln (K))\,,
\end{align}
as $K\to\infty$ (and hence $L\to\infty$). The coefficient $\mathsf{I}(f)$ is defined in \eqref{def I(f)}.
\end{theorem}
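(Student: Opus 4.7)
The plan is to establish \eqref{thm small K eq} by a boundary-localization strategy that reduces the two-dimensional trace to an effective one-dimensional Szeg\H{o}-type problem, to which the Landau--Widom enhanced area-law can be applied. First I would reduce by linearity to the case $f(t) = t^m - t$ for an integer $m\ge 2$, since any polynomial with $f(0)=f(1)=0$ is a linear combination of such differences. By \autoref{easy box estimate} one has $\operatorname{tr}(P_K(L\Lambda)) = L^2|\Lambda|/(2\pi)$ exactly, so subtracting $\operatorname{tr}(P_K(L\Lambda))$ from $\operatorname{tr}(P_K(L\Lambda))^m$ cancels the bulk $L^2$ contribution of the latter and leaves a quantity expected to be of order $L\ln K$. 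Writing this remainder as an $m$-fold integral of the kernel \eqref{pl kernel}, the cyclic product of phases $\exp(i x_j\wedge x_{j+1}/(2K))$ has modulus one, so the magnitude is controlled by products of $|G_K|$ evaluated at consecutive differences.

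Next I would split $(L\Lambda)^m$ into a deep-interior region, where all points sit at distance much larger than $K$ from $\partial\Lambda$, and a boundary layer of width of order $K$ around $\partial\Lambda$. In the deep-interior region the integrand coincides with its free-space analogue up to an exponentially small tail, and its contribution cancels precisely against the subtracted $\operatorname{tr} P_K(L\Lambda)$ by the decay in \autoref{GK for t>1/2 K}. Configurations with any large consecutive distance $\|x_j - x_{j+1}\|$ are handled by combining the pointwise estimates of \autoref{GK for t>1/2 K} with the integrated bound \autoref{GK square int}, and contribute only $\mathcal O(L)$ after absorbing the remaining $m-2$ variables via Hilbert--Schmidt estimates in the spirit of Section~\ref{section: pf}.

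Near a straight piece of $\partial\Lambda$ I would introduce local coordinates $(u,v)$ with $v$ normal and $u$ tangential to the boundary. The crux of the argument, which I expect to be the main obstacle, is a \emph{global} sine-kernel asymptotic for $G_K$: by splicing the Bessel-type expansion of \autoref{GK for t<1/2 K} with the Airy-type expansion of \autoref{GK for t>1/2 K} across the turning point $t\sim K$, one obtains, uniformly on the full scale $t\in[0,CK]$, an expansion of $G_K$ whose leading oscillatory factor $\cos(\omega_K(t)-3\pi/4)$ plays the role of a one-dimensional sine kernel with effective scaling parameter $K$. Substituting this into the boundary-layer integrals and integrating out the tangential variable $u$ along the straight piece reduces the $m$-fold trace to a trace of $f$ applied to a one-dimensional half-line Fermi projection on scale $K$, up to controllable remainders.

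Once this reduction is in place, the one-dimensional Landau--Widom enhanced area-law yields the coefficient $\mathsf{I}(f)\ln(K)$ per unit tangential length, and integration along the tangential direction produces the factor $|\partial\Lambda|$. The two hypotheses of \autoref{LKL conditions} enter only through the geometric patching: for a polygon (Condition~A), vertex neighbourhoods of diameter $\mathcal O(K)$ contribute at most $\mathcal O(K\ln K)$ each and are absorbed into the $\mathcal O(L\ln\ln K)$ error provided $K\le CL/\ln L$; for a $\mathsf{C}^2$-smooth boundary (Condition~B), local flattening introduces a curvature error of relative size $K^2/L$ per boundary tile, which is controllable exactly when $K^2\le CL$. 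Summing over the finitely many straight segments or smooth arcs yields the stated $L\ln(K)$ main term with the $\mathcal O(L\ln\ln K)$ remainder, the double logarithm arising from the logarithmic tolerance in the final patching step.
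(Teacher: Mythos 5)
Your geometric localization (deep interior, boundary layer of width of order $K$, corner/curvature patching under Conditions A and B) matches the paper's strategy in spirit and the error budgets you quote are plausible. The genuine gap is in the second half, where the per-unit-length coefficient is actually computed. You dismiss the magnetic phase at the outset (``the cyclic product of phases has modulus one, so the magnitude is controlled by products of $|G_K|$''), but taking absolute values only yields upper bounds; for $m\ge 2$ the cyclic phase $\exp\bigl(i\sum_j x_j\wedge x_{j+1}/(2K)\bigr)$ does \emph{not} cancel (unlike the quadratic case of \autoref{pf thm}), and it must enter the computation of the main term. Your key step --- ``substituting the spliced Bessel/Airy expansion of $G_K$ into the boundary-layer integrals and integrating out the tangential variable reduces the $m$-fold trace to a one-dimensional half-line Fermi projection on scale $K$'' --- is asserted without any mechanism. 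The expansion in \autoref{GK for t<1/2 K} has the form $\cos(\omega_K(t)-3\pi/4)\,t^{-3/2}(1-t^2/K^2)^{-1/4}$ up to errors, i.e.\ an oscillation with a decaying two-dimensional radial amplitude; it is not a sine kernel in any single variable, and the $m$-fold integral couples tangential and normal separations of all $m+1$ points through both $\lVert x_j-x_{j+1}\rVert$ and the phase. Showing that this collapses onto a one-dimensional Landau--Widom problem is precisely the hard analytic content, and your proposal does not supply it.

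The paper takes a structurally different route at exactly this point: after the reduction to the half-space quantity $\mathcal J_m([0,1)\times\R^-,\R\times\R^+;K)$ (your boundary-layer step, their \autoref{Jm domain reduction cor}), it invokes the exact identity $\mathcal J_m([0,1)\times\R^-,\R\times\R^+;K)=K^{-1/2}\,\mathsf{M}_{<K}(t\mapsto t(1-t)^m)$ of \autoref{MK=Jm}, obtained by comparing with the fixed-$B$ result of \cite{Leschke2021}; this is where the two-dimensional problem, magnetic phase included, is converted into a one-dimensional problem for the Hermite kernel $\mathcal K_K=\sum_{\ell<K}|\psi_\ell\rangle\langle\psi_\ell|$, using the gauge/translation structure rather than pointwise kernel asymptotics. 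The global sine-kernel asymptotics that is then needed (\autoref{hat K kernel est}) is for this one-dimensional kernel $\mathcal K_K$ after the change of variables $\eta_{2K-1}$, not for the radial function $G_K$, and the $\ln(K)$ coefficient comes from Landau--Widom and Widom's refinement applied to the genuine sine kernel, with the truncation error controlled by a comparison trick against the quadratic case (giving the $\ln\ln K$). To repair your argument you would either have to reproduce this passage through $\mathsf{M}_{<K}$ (or an equivalent separation-of-variables/Roccaforte-type exact computation for the half-space), or genuinely carry out the oscillatory $m$-fold integral with the non-cancelling phase, which is far from the routine substitution your sketch suggests.
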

The proof of this theorem spans the remaining chapters of this paper. 
Our approach essentially approximates the boundary curve $\partial \Lambda$ by a straight line and this leads to an $\mathcal O(K^2)$ error term for $\mathsf{C}^2$-smooth domains.  When we already have straight boundary lines as in polygons then our error term is much smaller, or put differently, we can allow for a larger $K$. The difference in results may only be due to our methods of proof.

The proof of these theorems is rather long, so we begin with a short summary. We will consider the polynomials $f(t)=(1-t)^mt$ with $m\in\mathbb N$. For them, we get
\begin{align} \label{tr(f).1}
\operatorname{tr} f( 1_{L \Lambda} P_K 1_{L \Lambda}) = &  \operatorname{tr}  1_{L\Lambda}  (P_K 1_{L\Lambda^\complement} )^m P_K 1_{L\Lambda} \, .
\end{align}
To see this identity, we note that the eigenvalues of $1_{L \Lambda} P_K 1_{L \Lambda}$ agree with the eigenvalues of $P_K 1_{L \Lambda} P_K$, as both are jut the squares of the principal values of $P_K 1_{L \Lambda}$. Thus, we get
\begin{align*}
\operatorname{tr} f( 1_{L \Lambda} P_K 1_{L \Lambda}) &= \operatorname{tr} f(P_K 1_{L\Lambda} P_K) \\
&=\operatorname{tr} (\mathds{1}-  P_K 1_{L \Lambda} P_K)^m  P_K 1_{L \Lambda} P_K \\
&=\operatorname{tr} 1_{L \Lambda}P_K (\mathds{1}-  P_K 1_{L \Lambda} P_K)^m  P_K 1_{L \Lambda} \\
&=\operatorname{tr}1_{L \Lambda} P_K (P_K-  P_K 1_{L \Lambda} P_K)^m  P_K 1_{L \Lambda} \\
&=\operatorname{tr} 1_{L \Lambda}P_K (P_K(\mathds{1}-1_{L \Lambda}) P_K)^m  P_K 1_{L \Lambda} \\
&=\operatorname{tr}  1_{L\Lambda}  (P_K 1_{L\Lambda^\complement} )^m P_K 1_{L\Lambda} \, .
\end{align*}
We have used that $P_K 1_{L \Lambda} P_K$ commutes with $P_K$.

For fixed $K$ and $L\to \infty$, the asymptotics of the trace of this operator\footnote{Apart from the different base of polynomials $f(t)=t^m$} was reduced to an integral depending only on $K$ and $m$ using Roccaforte's formula in \cite{Leschke2021}. They calculated the said integral. The strong (exponential) decay of the integral kernel $P_K$ for fixed $K$ was used to deal with the error terms originating from Roccaforte's formula. As we consider the limit $K \to \infty$, this is not directly possible. We can, however, get to the same integral using some geometric manipulations and operator estimates before switching to the integral. This will allow us to show the same leading term with an error bound, that depends on $K$ in a good manner. After that, we still have to study the asymptotics of the formula in \cite[Theorem 2]{Leschke2021} (for fixed $K$) as $K \to \infty$.

Let us define for any measurable $E,E'\subset \mathbb R^2$  and $m\in\mathbb N$,
\begin{align} \label{def Jm}
\mathcal J_m (E,E';K) \coloneqq \operatorname{tr}1_{E}  (P_K 1_{E'} )^m P_K 1_E \,,
\end{align}
which takes values in $[0,\infty]$, as we shall soon see.

Then, with  $f_m(t)\coloneqq t(1-t)^m$, 
\begin{align}
\operatorname{tr} f_m( 1_{L \Lambda} P_K 1_{L \Lambda}) =   \operatorname{tr}  1_{L\Lambda}  (P_K 1_{L\Lambda^\complement} )^m P_K 1_{L\Lambda} 
=\mathcal J_m(L\Lambda,  L \Lambda^\complement;K) \,.
\end{align}

The proof of \autoref{thm small K} relies on two main steps, which will be proved in the remainder of this paper. The first one is
\begin{restatable}{theorem} {JMdomaincor} \label{Jm domain reduction cor}
Let the triple $(\Lambda,K,L)$ satisfy either one of the conditions in \autoref{LKL conditions}. Let $m \in\N$ and $f_m(t) = t(1-t)^m$. Then, we have the asymptotic expansion
\begin{align}
\operatorname{tr} f_m (1_{L \Lambda} P_K 1_{L \Lambda} ) = \mathcal J_m (L \Lambda, L \Lambda^\complement;K)= L \lvert \partial \Lambda \rvert \,\mathcal J_m  \left( [0,1) \times \mathbb R^-, \mathbb R \times \mathbb R^+;K \right) + \mathcal O(L) \, .
\end{align}
\end{restatable}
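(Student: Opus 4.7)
The plan is to rewrite the trace as a cyclic Euclidean integral, exploit the joint rigid-motion invariance of the integrand, localise all integration points to a thin neighbourhood of $\partial(L\Lambda)$, and finally straighten $\partial(L\Lambda)$ to a line. Starting from
\begin{equation*}
\mathcal J_m(L\Lambda,L\Lambda^\complement;K)=\int 1_{L\Lambda}(x_0)\prod_{j=1}^{m}1_{L\Lambda^\complement}(x_j)\prod_{j=0}^{m}P_K(x_j,x_{j+1})\,dx_0\cdots dx_m
\end{equation*}
(with the cyclic convention $x_{m+1}\coloneqq x_0$) and substituting $P_K(x,y)=\exp(i\,x\wedge y/(2K))\,G_K(\|x-y\|/\sqrt 8)$, the modulus of the kernel product equals $\prod_j|G_K(\|x_{j+1}-x_j\|/\sqrt 8)|$ while the cyclic magnetic phase $\sum_{j=0}^{m}x_j\wedge x_{j+1}$ is twice the signed area of the polygon $x_0 x_1 \cdots x_m$. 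Both are invariant under the diagonal action of $SO(2)\ltimes\mathbb R^2$ on $(x_0,\dots,x_m)$, so the whole integrand — not merely its absolute value — is rigid-motion invariant.

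Next, I would confine the cycle to a tubular neighbourhood of $\partial(L\Lambda)$. By \autoref{GK for t>1/2 K}, $|G_K(t)|$ is exponentially small for $t\gtrsim K$, and \autoref{GK square int} gives $\int_R^{\infty}|G_K|^2\,t\,dt\lesssim(1+R)^{-1}$. The alternation $x_0\in L\Lambda$ versus $x_1\in L\Lambda^\complement$ forces the cycle to cross $\partial(L\Lambda)$, so if some $x_j$ lies at distance at least $w\coloneqq CK\ln K$ from the boundary then at least one of the $m{+}1$ cycle-edges has length $\gtrsim w$, producing an exponentially small factor. An induction on the cycle length then bounds the contribution from $\{x_j\notin \mathcal T_w\text{ for some }j\}$ by $\mathcal O(Le^{-cK})$, so up to a negligible error only the tube $\mathcal T_w\coloneqq\{x:\operatorname{dist}(x,\partial(L\Lambda))\le w\}$ contributes.

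Inside $\mathcal T_w$ the plan is to replace $\partial(L\Lambda)$ by its tangent lines. I would cover $\partial\Lambda$ by finitely many coordinate patches on which $\partial\Lambda$ is a $\mathsf C^2$ graph (or a union of straight segments in the polygon case); after scaling by $L$ and introducing local arc-length coordinates $(s,t)$, the boundary reads $t=Lh(s/L)$. Partitioning $\partial(L\Lambda)$ into unit-length tangential strips, the rigid-motion invariance of the integrand from the first step forces the contribution of every strip whose associated piece of boundary is truly straight to equal the universal constant $\mathcal J_m([0,1)\times\mathbb R^-,\mathbb R\times\mathbb R^+;K)$. Summing over the $\sim L|\partial\Lambda|$ strips produces the advertised principal term.

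The main obstacle is the residual error coming from replacing the true boundary by its tangents inside $\mathcal T_w$. In the $\mathsf C^2$ case the symmetric difference between $\{t<Lh(s/L)\}$ and $\{t<0\}$ on a patch of tangential size $1$ and normal size $w$ has area $\mathcal O(w^2/L)$; combined with the cycle-integral bounds derived from \autoref{GK square int} and summed over the boundary, this should yield an error $\mathcal O(K^2)$ up to logarithmic factors, which is $\mathcal O(L)$ exactly under Condition B of \autoref{LKL conditions}. For polygons the interior of each edge contributes no straightening error, so the residual is concentrated at the finitely many corners; using \autoref{GK square int} one would bound each corner contribution by $\mathcal O(K\ln K)$, which is $\mathcal O(L)$ exactly under Condition A. The genuinely delicate point — and the reason the two hypotheses in \autoref{LKL conditions} enter the way they do — is obtaining sharp enough control of the $m$-fold cycle integral in these residual regions, balancing the polynomial decay of $G_K$ against the possible alignment of several consecutive cycle edges along $\partial(L\Lambda)$.
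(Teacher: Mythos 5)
Your overall strategy is the same as the paper's: localise to a tube around $\partial(L\Lambda)$, decompose the tube into boundary cells, use rigid-motion invariance (which the paper proves at the operator level in \autoref{Jm properties}, and which your signed-area argument for the cyclic phase reproduces at the kernel level) to identify each cell's contribution with the half-plane quantity $\mathcal J_m([0,1)\times\mathbb R^-,\mathbb R\times\mathbb R^+;K)$, and control the straightening error by symmetric differences in the $\mathsf C^2$ case and by corner contributions (via the quadratic case) for polygons. Two points in your outline are not yet proofs but are supplied by the paper's toolkit and would have to be invoked explicitly: the second argument of $\mathcal J_m$ is $L\Lambda^\complement$, not a half-plane, so before invariance can "force" anything you need the localisation step \autoref{cutoff box 2} (agreement with a half-plane only inside a disc of radius $\sim(m+1)\times$ the cell scale), and you need the exact linearity in the tangential length together with the depth-tail bound, i.e. \autoref{Jm halfspace}, to pass from a finite-depth strip to the universal constant.

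The genuine gap is the error bookkeeping in the $\mathsf C^2$ case. With unit-length tangential strips of normal depth $w\sim K\ln K$, the comparison with the half-plane model must hold on a disc of radius $\rho\sim(m+1)w$ around each strip; there the boundary deviates from its tangent by $\mathcal O(\rho^2/L)$, so the symmetric difference entering the Lipschitz-type estimate of \autoref{Jm properties} has area $\mathcal O(w^3/L)$ per strip, not $\mathcal O(w^2/L)$. Summed over the $\sim L\lvert\partial\Lambda\rvert$ unit strips this gives a total straightening error of order $w^3\sim K^3\ln^3(K)$, which is not $\mathcal O(L)$ under Condition B of \autoref{LKL conditions} (where $K^2\le CL$, so $K^3$ can be of order $KL$). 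Even your stated target, ``$\mathcal O(K^2)$ up to logarithmic factors'', would not be $\mathcal O(L)$ when $K^2\asymp L$: the logarithms must be avoided. The paper closes this by choosing cells of tangential size comparable to their depth, $R=\max(10K,\ln(L)^2)$ (no $\ln K$ factor in the width), so that there are only $\sim L/R$ cells, each with straightening error $\mathcal O_m(R^3/L)$, giving a total $\mathcal O(R^2)=\mathcal O(K^2+\ln(L)^4)=\mathcal O(L)$ under Condition B (see \autoref{Jm domain dependence lem}); the polygonal case \autoref{reduction lem polygons} then handles corners essentially as you propose. To make your version work you would have to redo the third and fourth steps with cells of size $R\times R$ and this choice of $R$, at which point it coincides with the paper's proof.
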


This result achieves that the leading term of the asymptotic expansion of $\mathrm{tr} f_m(1_{L\Lambda} P_K 1_{L\Lambda})$ is a product of the surface area of $L\Lambda$ and a term $\mathcal J_m(E,E';K)$ with fixed sets $E$ and $E'$ which depends on $K$ but is independent of $L$ and $\Lambda$.


This is proved in Sections \ref{norm estimates section} and  \ref{L dependency section}. The second main step is 
\begin{restatable}{theorem}{JMtoIthm} \label{Jm=I thm}
With $\mathcal J_m$ defined in \eqref{def Jm} and $\mathsf{I}$ defined in \eqref{def I(f)}, we have as $K \to \infty$
\begin{align}
\mathcal J_m([0,1) \times \R^- , \R \times \R^+ ;K)= \frac{2\sqrt{2}}{\pi}\, \mathsf{I}( t \mapsto t(1-t)^m )\, \ln(K) + \mathcal O(\ln \ln(K)) \,. 
\end{align}
\end{restatable}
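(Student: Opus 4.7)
The plan is to use translation invariance in the $x_1$-direction (parallel to the boundary between $E=[0,1)\times\R^-$ and $E'=\R\times\R^+$) to collapse $\mathcal J_m$ into a one-dimensional trace over the standard Hermite projection, and then to invoke a Landau--Widom-type enhanced area-law in dimension one.

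\textbf{Reduction to one dimension.} The trace is gauge-invariant, since the cutoffs $1_E,1_{E'}$ commute with any phase multiplication. I would pass to the Landau gauge $a=(-x_2/K,0)$, in which $H_{1/K}$ is translation invariant in $x_1$. A partial Fourier transform in $x_1$ then diagonalises $P_K$ as $\int^{\oplus}\frac{\mathrm{d}k}{2\pi}\,\hat P_K(k)$, where $\hat P_K(k)$ is the rank-$K$ projection in $L^2(\R_{x_2})$ onto the first $K$ harmonic-oscillator eigenfunctions centred at $x_2=-kK$. Since $1_{\R\times\R^+}$ is also translation invariant in $x_1$, so is $(P_K\,1_{\R\times\R^+})^m P_K$, and the outer $1_{[0,1)}(x_1)$-cutoff contributes a factor equal to $\lvert[0,1)\rvert=1$. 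Conjugating each fibre by the translation $x_2\mapsto x_2-kK$ (which centres $\hat P_K(k)$ at the origin and moves $\R^-$ to $(-\infty,-kK)$) and then rescaling $x_2\mapsto\sqrt K\,x_2$ yields
\begin{equation}
\mathcal J_m\bigl([0,1)\times\R^-,\,\R\times\R^+;K\bigr)=\frac{1}{2\pi\sqrt K}\int_{-\infty}^{\infty} T_K(s)\,\mathrm{d}s,
\end{equation}
where $T_K(s)\coloneqq \operatorname{tr}_{L^2(\R)} f_m\bigl(1_{(-\infty,s)}\pi_K 1_{(-\infty,s)}\bigr)$ and $\pi_K$ is the standard Hermite projection of rank $K$ on $L^2(\R)$.

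\textbf{Enhanced area-law in one dimension and edge estimates.} The bulk of $\pi_K$ is the interval $[-\sqrt{2K},\sqrt{2K}]$ with local density of states $\rho_K(s)=\pi^{-1}\sqrt{2K-s^2}$. Well inside the bulk, the Plancherel--Rotach asymptotics (proved by the same $F_K^{(\alpha)}$-technology of \autoref{kernel asymp section} with $\alpha=\pm 1/2$) show that $\pi_K$, rescaled by $\rho_K(s)$, is close to the sine kernel on scales much shorter than $\sqrt K$. The Landau--Widom enhanced area-law in dimension one for a sine-kernel projection with a \emph{single} cut then gives
\begin{equation}
T_K(s)=2\,\mathsf{I}(f_m)\ln K+R_K(s),
\end{equation}
where $2\mathsf{I}(f_m)$ is the Landau--Widom coefficient per boundary point in 1D (recall $\mathsf{I}(f_m)=\tfrac1{4\pi^2}\!\int_0^1\frac{f_m(t)}{t(1-t)}\mathrm{d}t$). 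I would then split the $s$-integral into the bulk $\lvert s\rvert\le\sqrt{2K}-K^{1/6}$, the Airy edge $\sqrt{2K}-K^{1/6}\le\lvert s\rvert\le\sqrt{2K}+K^{1/6}$, and the exterior $\lvert s\rvert>\sqrt{2K}+K^{1/6}$. Airy-regime bounds (analogous to \autoref{Laguerre to Airy}) give $T_K(s)=O(\ln K)$ on the edge, contributing $O(K^{-1/3}\ln K)=o(1)$ after the $(2\pi\sqrt K)^{-1}$ prefactor; the exterior is exponentially small. The bulk main term produces
\begin{equation*}
\frac{1}{2\pi\sqrt K}\cdot 2\,\mathsf{I}(f_m)\ln K\cdot 2\sqrt{2K}\,\bigl(1+o(1)\bigr)
=\frac{2\sqrt 2}{\pi}\,\mathsf{I}(f_m)\ln K+o(1),
\end{equation*}
which matches the claimed leading coefficient exactly.

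\textbf{Main obstacle.} The hard part is establishing the sine-kernel approximation \emph{uniformly in} $s$ throughout the bulk, with the remainder $R_K(s)$ controlled well enough that $(2\pi\sqrt K)^{-1}\int R_K(s)\,\mathrm{d}s=O(\ln\ln K)$. The pointwise Landau--Widom expansion provides only an $O(1)$ remainder with implicit dependence on the local Fermi velocity $\rho_K(s)$ and the distance to the turning points; controlling this uniformly across a bulk of length of order $\sqrt K$ is precisely the ``sine-kernel asymptotics on a global scale'' alluded to in the introduction. This uniform Plancherel--Rotach expansion and the careful tracking of the Landau--Widom error constitute the bulk of the remaining technical work in the paper.
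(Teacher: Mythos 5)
Your overall skeleton is aligned with the paper, and the leading-order bookkeeping is right: the identity $\mathcal J_m([0,1)\times\R^-,\R\times\R^+;K)=\frac{1}{\sqrt K}\,\mathsf{M}_{<K}(f_m)$ with $\mathsf{M}_{<K}$ as in \eqref{def M_K} is exactly \autoref{MK=Jm} (which the paper proves not by your Landau-gauge fibre decomposition but indirectly, by comparing the $L\to\infty$ expansion of $\mathcal J_m(L\Lambda,L\Lambda^\complement;K)$ for the unit disk from \autoref{Jm domain dependence lem} with \cite[Theorem 2]{Leschke2021}; your direct route is essentially the sketched alternative proof), and a per-cut coefficient $2\,\mathsf{I}(f_m)\ln K$ integrated against $\frac{1}{2\pi\sqrt K}$ over a bulk of length $2\sqrt{2K}$ indeed reproduces $\frac{2\sqrt2}{\pi}\,\mathsf{I}(f_m)\ln K$.

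The gap is the step you yourself flag, and it is genuinely missing rather than deferred: you need $T_K(s)=2\,\mathsf{I}(f_m)\ln K+R_K(s)$ with $\int_{\R}R_K(s)\,\mathrm{d}s=\mathcal O(\sqrt K\,\ln\ln K)$, uniformly over the bulk, and Landau--Widom for the sine kernel does not give this. The half-line cut $1_{(-\infty,s)}$ sees all of $\pi_K=\mathcal K_K$: the region $|x-y|\le 1/2$ at the cut, the far bulk where the sine-kernel replacement only carries a pointwise error $\mathcal O(\varepsilon/|x-y|)$ valid for $1/2\le|x-y|\le K\varepsilon^6$ (\autoref{hat K kernel est}), and the Airy edges; assembling these into a remainder whose $s$-integral is $o(\sqrt K\ln K)$, let alone $\mathcal O(\sqrt K\ln\ln K)$, is not supplied by your plan. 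Note also that this uniform expansion is \emph{not} ``the bulk of the remaining technical work in the paper'': the paper avoids proving it. Instead, it establishes the main term only for traces restricted to finite interval pairs $(I_p,J_p)$ around the cut, where the global-scale sine-kernel estimate applies and Landau--Widom/Widom can be invoked for the sine kernel itself (\autoref{main term is MI}); it then bounds the total discarded contribution for every $m$ by the $m=1$ discarded contribution $\mathcal E(K)$ via positivity and a telescoping estimate (\autoref{reduction to EK}); and it pins down $\mathcal E(K)=\mathcal O(\ln\ln K)$ indirectly (\autoref{EK small trick lem}) by comparing two evaluations of $\mathsf{M}_{<K}(t\mapsto t(1-t))$: the interval-restricted one and the exact two-dimensional particle-number-fluctuation asymptotics of \autoref{pf thm}, which is computable over the full parameter range because the magnetic phase cancels. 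Without this comparison trick (or a genuinely uniform global Landau--Widom expansion, which would be strictly harder than anything the paper proves), your outline does not reach the stated $\mathcal O(\ln\ln K)$ error term.
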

This is proved in \autoref{dependence on K section} and \autoref{main term is MI proof section}.  With these theorems, we can easily conclude the
\begin{proof}[Proof of \autoref{thm small K}]
Note that any polynomial $f$ with $f(0)=f(1)=0$ can be written as a finite linear combination of the basis polynomials $f_m(t) = t(1-t)^m$ for $m \in \mathbb N$.  Since both sides of \eqref{thm small K} are linear in $f$, it suffices to show the identity for $f=f_m$, which \autoref{Jm domain reduction cor} and \autoref{Jm=I thm} do.
\end{proof}

Now we collect some properties of $\mathcal J_m$.
\begin{lemma}\label{Jm properties}
Let $E,E_1,E_2  , E',E_1',E_2'\subset \mathbb R^2$ be measurable.  Then, 
\begin{enumerate}
\item $\mathcal J_m(E,E';K) \in [0, \infty]$ is well-defined.
\item  $\mathcal J_m( \blank, E';K)$ is additive, that is, if $E_1,E_2 $ are disjoint, we have 
\begin{align}
\mathcal J_m(E_1 \cup E_2 ,E';K)=\mathcal J_m(E_1,E';K) + \mathcal J_m(E_2,E';K)\,.
\end{align}
\item $\mathcal J_m$ satisfies the a-priori Hilbert--Schmidt norm estimate,
\begin{align}
\mathcal J_m(E,E';K)\le \mathcal J_1(E,E';K)  = \lVert 1_E P_K 1_{E'} \rVert_2^2\,.\label{Jm HS estimate 1}
\end{align}
\item $\mathcal J_m$ satisfies a Lipschitz-type estimate, in the sense of 
\begin{align}
\lvert\mathcal J_m(E_1,E_1';K) - \mathcal J_m(E_2,E_2';K) \rvert \le  \lvert E_1 \Delta E_2 \rvert/(2\pi) +  m  \lvert E_1' \Delta E_2' \rvert/(2\pi)\, ,
\end{align}
where $E_1 \Delta E_2$ is the symmetric difference. Consequently, we can always change the sets $E$ and $E'$ in the two arguments by sets of (Lebesgue) measure zero without changing $\mathcal J_m$.
\item For any unitary $\mathcal A \colon \mathbb R^2 \to \mathbb R^2$ which is affine-linear, that is, $\lVert \mathcal A (x) -\mathcal A(y) \rVert= \lVert x-y \rVert$, we have
\begin{align}
\mathcal J_m(E,E';K)=\mathcal J_m(\mathcal A(E), \mathcal A(E');K)\,.
\end{align}
\end{enumerate}
\end{lemma}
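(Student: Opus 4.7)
The plan rests on the algebraic identity
\begin{equation*}
(P_K 1_{E'})^m P_K \;=\; (P_K 1_{E'} P_K)^m,
\end{equation*}
which follows by repeatedly inserting $P_K = P_K^2$ between consecutive factors. Setting $S \coloneqq P_K 1_{E'} P_K = (1_{E'} P_K)^*(1_{E'} P_K)$, we have $0 \le S \le \mathds{1}$ and, using the cyclicity of the trace,
\begin{equation*}
\mathcal J_m(E,E';K) \;=\; \operatorname{tr} 1_E S^m 1_E \;=\; \bigl\|S^{m/2} 1_E\bigr\|_2^2 \;\in\; [0,\infty],
\end{equation*}
which proves (i). Property (ii) is immediate from $1_{E_1 \cup E_2} = 1_{E_1} + 1_{E_2}$ together with cyclicity, since the cross-terms vanish because $1_{E_1} 1_{E_2} = 0$. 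For (iii), operator monotonicity gives $S^m \le S$ on $[0,1]$, hence $\mathcal J_m \le \mathcal J_1 = \operatorname{tr}\bigl((1_{E'} P_K 1_E)^*(1_{E'} P_K 1_E)\bigr) = \|1_E P_K 1_{E'}\|_2^2$.

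For the Lipschitz-type bound (iv), I handle the two arguments separately and combine via the triangle inequality. For the first argument, additivity gives
\begin{equation*}
\mathcal J_m(E_1, E'; K) - \mathcal J_m(E_2, E'; K) \;=\; \mathcal J_m(E_1 \setminus E_2, E'; K) - \mathcal J_m(E_2 \setminus E_1, E'; K),
\end{equation*}
and each term is bounded using (iii) together with \autoref{easy box estimate} by $|E_j \setminus E_i|/(2\pi)$, giving $|E_1 \Delta E_2|/(2\pi)$ in total. For the second argument I employ the standard telescoping $S_1^m - S_2^m = \sum_{k=0}^{m-1} S_1^{m-1-k}(S_1 - S_2) S_2^k$ with $S_j \coloneqq P_K 1_{E_j'} P_K$ and $S_1 - S_2 = P_K(1_{E_1' \setminus E_2'} - 1_{E_2' \setminus E_1'}) P_K$. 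After cyclically rearranging a typical term into the form $\operatorname{tr}(A^* M A)$ with $A = P_K 1_F$ for $F \subset E_1' \Delta E_2'$ and $\|M\| \le 1$, I invoke the general bound $|\operatorname{tr}(A^* M A)| \le \|M\|\,\|A\|_2^2$, which by \autoref{easy box estimate} is at most $|F|/(2\pi)$; summing the $m$ telescoping contributions yields $m\,|E_1' \Delta E_2'|/(2\pi)$.

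The most delicate point, and the main obstacle, is the affine-isometry invariance (v), because the kernel $P_K(x,y) = e^{i x \wedge y/(2K)} G_K(\|x-y\|/\sqrt{8})$ is manifestly \emph{not} translation-invariant on the level of individual factors. The key observation is to work with the cyclic integral representation
\begin{equation*}
\mathcal J_m(E,E';K) \;=\; \int_E \! dx_0 \int_{(E')^m} \prod_{j=0}^{m} P_K(x_j, x_{j+1})\, dx_1 \cdots dx_m, \qquad x_{m+1} \coloneqq x_0,
\end{equation*}
and examine only the \emph{aggregate} phase $\tfrac{1}{2K} \sum_{j=0}^{m} x_j \wedge x_{j+1}$. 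For an orientation-preserving affine isometry $\mathcal A(x) = R x + b$, rotational invariance gives $R x_j \wedge R x_{j+1} = x_j \wedge x_{j+1}$, while the translational contribution $\sum_j\bigl(R x_j \wedge b + b \wedge R x_{j+1}\bigr) = R(\sum_j x_j) \wedge b + b \wedge R(\sum_j x_j)$ vanishes by antisymmetry of $\wedge$ and the cyclic identity $\sum_j x_j = \sum_j x_{j+1}$. Combined with $\|\mathcal A x - \mathcal A y\| = \|x-y\|$ and the unit-Jacobian change of variables $x_j \mapsto \mathcal A(x_j)$, this proves (v). An orientation-reversing $\mathcal A$ conjugates the phase; then $\overline{P_K(x,y)} = P_K(y,x)$ together with a reversal of the cyclic ordering $x_j \mapsto x_{m+1-j}$ reduces to the same integral.
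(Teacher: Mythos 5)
Your argument is correct, and items (i)--(iv) follow essentially the paper's own route up to cosmetic reformulations: the paper splits into even/odd $m$ instead of using $(P_K 1_{E'})^m P_K=(P_K 1_{E'}P_K)^m=S^m$, proves (iii) by H\"older with operator norms rather than by $t^m\le t$ on $[0,1]$, and proves (iv) by exactly your telescoping (your bound $\lvert\operatorname{tr}(A^*MA)\rvert\le\lVert M\rVert\,\lVert A\rVert_2^2$ with $A=P_K1_F$ is the paper's splitting into two Hilbert--Schmidt factors in disguise). For (ii), note that $1_{E_1}S^m1_{E_2}$ is not positive, so ``the cross-terms vanish by cyclicity'' should be backed by writing each cross term as a product of two Hilbert--Schmidt factors $1_{E_1}S^{m/2}\cdot S^{m/2}1_{E_2}$, or by summing over an orthonormal basis adapted to the partition $E_1,E_2,(E_1\cup E_2)^\complement$; this is the same level of care the paper itself glosses over when it invokes linearity of the trace after a cyclic rearrangement, so it is a polish, not a gap. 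The genuinely different part is (v): the paper treats translations via the magnetic translation operators $\mathcal S_{x_0,K}$, which commute with $P_K$ and conjugate $1_E$ into $1_{E+x_0}$, and treats reflections by conjugating the kernel and using that $\mathcal J_m$ is real, whereas you work directly in the cyclic (Mercer) integral representation and observe that the aggregate phase $\tfrac1{2K}\sum_j x_j\wedge x_{j+1}$ is invariant under every orientation-preserving isometry, since the rotation preserves $\wedge$ and the translation terms cancel around the closed cycle, with the orientation-reversing case handled by conjugation plus reversal of the cycle (or simply by reality of $\mathcal J_m$). This gives a uniform treatment of all isometries without introducing magnetic translations and makes the gauge covariance of $P_K$ visible as an exact phase cancellation along closed loops; the price is that you lean on the kernel representation of a possibly infinite trace, but the paper's own reflection argument relies on the same representation, so nothing is lost.
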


\begin{proof}
We have
\begin{align}
&\mathcal J_m(E,E';K)=   \operatorname{tr}1_{E}  (P_K 1_{E'} )^m P_K 1_E \\&= \begin{cases} \operatorname{tr}1_{E}  (P_K 1_{E'} )^{m/2}P_K P_K  (1_{E'} P_K )^{m/2}   1_E \, \, = \lVert 1_{E}  (P_K 1_{E'} )^{m/2}P_K \rVert_2^2\quad &\text{ if $m$ is even\,,} \\
	\operatorname{tr}1_{E}  (P_K 1_{E'} )^{(m+1)/2}  (1_{E'} P_K )^{(m+1)/2}   1_E  = \lVert 1_{E}  (P_K 1_{E'} )^{(m+1)/2}  \rVert_2^2  \quad &\text{ if $m$ is odd\,.} \end{cases}
\end{align} 
Thus, $\mathcal J_m(E,E';K)$ is the trace of a positive operator, which has a well-defined value in $[0,\infty]$.

As $\operatorname{tr}AB=\operatorname{tr}BA$, we have
\begin{align}
\mathcal J_m(E,E';K)=  \operatorname{tr}1_{E'}  P_K 1_E(P_K 1_{E'} )^m \,.
\end{align}
For any disjoint $E_1,E_2$ we have $1_{E_1\cup E_2}=1_{E_1}+1_{E_2}$ and thus, as the trace is linear, we get
\begin{align}
\mathcal J_m(E_1 \cup E_2,E';K)= \mathcal J_m(E_1,E';K)+\mathcal J_m(E_2,E';K)\,.
\end{align}

For the Hilbert--Schmidt norm estimate, we just use that $1_E,1_{E'}$, and $P_K$ are projections (and thus have norm $1$). Then, we get 
\begin{align}
\mathcal J_m (E,E';K) = \lVert 1_E (P_K1_{E'})^m P_K 1_E \rVert_1 \le \lVert 1_E P_K 1_{E'} \rVert_2^2\,,
\end{align}
where the first equality holds by positivity. 

For the Lipschitz-type estimate, we first consider
\begin{align}
\left \lvert \mathcal J_m(E_1,E_1';K)- \mathcal J_m(E_2,E_1';K) \right \rvert &= \lvert \mathcal J_m(E_1 \setminus E_2,E_1';K) - \mathcal J_m(E_2 \setminus E_1,E_1';K) \rvert \\
&\le  \lVert 1_{E_1 \setminus E_2} P_K 1_{E_1'} \rVert_2^2+ \lVert 1_{E_2\setminus E_1} P_K 1_{E_1'} \rVert_2^2 \\
&\le \frac 1 {2\pi}\left (  \lvert E_1 \setminus E_2 \rvert+ \lvert E_2 \setminus E_1 \rvert \right ) = \frac 1 {2\pi} \lvert E_1 \Delta E_2 \rvert\,.
\end{align}
This relies upon \autoref{easy box estimate} and the additivity above. For the estimate in the second component, we see\footnote{The start of this calculation is copied from the proof of \autoref{cutoff box 2}, with some renaming of the sets $E$.}
\begin{align}
\big\lvert \mathcal J_m&(E_2,E_1';K)-\mathcal J_m(E_2,E_2';K) \big\rvert \\
 &= \left \lvert  \sum_{k=1}^m \operatorname{tr} 1_{E_2} (P_K 1_{E_1'})^{k-1} P_K (1_{E_1'}- 1_{E_2'}) (P_K 1_{E_2'})^{m-k} P_K 1_{E_2} \right \rvert \\
& \le  \sum_{k=1}^m \left \lVert  1_{E_2} (P_K 1_{E_1'})^{k-1} P_K (1_{E_1'}- 1_{E_2'}) (P_K 1_{E_2'})^{m-k} P_K 1_{E_2} \right \rVert_1 \\
&\le m \lVert P_K (1_{E_1'}-1_{E_2'})P_K\rVert_1\\
&\le m \left( \lVert P_K 1_{E_1' \setminus E_2'}\rVert_2^2+ \lVert P_K 1_{E_2' \setminus E_1'}\rVert_2^2 \right)=\frac m {2 \pi} \lvert E_1'\Delta E_2'\rvert\,.
 \end{align}
Combining the two estimates with the triangle inequality finishes the Lipschitz-type estimate.

For the last point,  let $\mathcal S_{\mathcal A}$ be the unitary operator on $\Lp^2(\mathbb R^2)$ that maps $f \mapsto  (x \mapsto f(\mathcal A(x)))$. We immediately see that for any measurable $E$, we have
\begin{align}
(\mathcal S_{\mathcal A})^{-1} 1_E \, \mathcal S_{\mathcal A} = 1_{\mathcal A(E)}\,. 
\end{align}
As the set of all affine linear maps, under which the claim holds, is closed under composition, we may assume that $\mathcal A$ is either a translation or a reflection\footnote{Any rotation is the composition of two reflections}. Let $\mathcal A$ be a reflection,  that is, a $2\times2$ orthogonal matrix with determinant $-1$. For any $x,y \in \mathbb R^2$, we observe
\begin{align}
\left( \mathcal S_{\mathcal A} P_K \mathcal S_{\mathcal A}^{-1} \right) (x,y)= P_K(\mathcal A(x), \mathcal A(y))=\overline{P_K(x,y)}\,,
\end{align}
as $\lVert x-y\rVert= \lVert \mathcal Ax- \mathcal Ay \rVert$ and $x \wedge y=- ( \mathcal Ax) \wedge ( \mathcal Ay)$. By Mercer's theorem, we have
\begin{align}
\mathcal J_m(E,E';K)&= \operatorname{tr} 1_E (P_K 1_{E'})^m P_K 1_E \\
&= \int_{E} \mathrm d x \int_{E'} \mathrm dx_1 \int_{E'} \mathrm dx_2 \cdots \int_{E'} \mathrm dx_m \, P_K(x,x_1)P_K(x_1,x_2) \cdots P_K(x_m,x)\,.
\end{align}
Thus, we have $\mathcal J_m(E,E';K)=\overline{\mathcal J_m( \mathcal A(E), \mathcal A(E');K)}$. As $\mathcal J_m(E,E';K)$ is always real, this implies $\mathcal J_m(E,E';K)=\mathcal J_m( \mathcal A(E), \mathcal A(E');K)$.

We now consider the case that $\mathcal A$ is a translation, that is, $\mathcal A(x)=x-x_0$ for some $x_0 \in \mathbb R^2$. Here, we need to use the magnetic translation operator $\mathcal S_{x_0, K} $, which is given by 
\begin{align}
(\mathcal S_{x_0, K} \varphi )(x) \coloneqq \exp(   i x\wedge x_0/(2K) )\, \varphi(x-x_0)\,,\quad x \in \mathbb R^2\,,
\end{align}
for $\varphi \in \Lp^2(\mathbb R^2)$. For any measurable set $E$, it satisfies $\mathcal S_{x_0,K}  1_E \mathcal S_{x_0, K}^{-1} = 1_{E+x_0}$ and commutes with $P_K$. Thus, we observe 
\begin{align}
\mathcal J_m(E,E';K)&= \operatorname{tr} 1_{E'} P_K 1_E (P_K 1_{E'})^m \\
&= \operatorname{tr} \mathcal S_{x_0,K} 1_{E'} P_K 1_E (P_K 1_{E'})^m \mathcal S_{x_0,K}^{-1} \\
&= \operatorname{tr} 1_{E'+x_0} P_K 1_{E+x_0} (P_K 1_{E'+x_0})^m \\
&= \mathcal J_m(E+x_0, E'+x_0;K)\,.
\end{align}
\end{proof}

\section{\texorpdfstring{$K \ll L$}{K ^^e2^^89^^aa L}: Norm estimates} \label{norm estimates section}

In this section we prove norm estimates of various combinations and powers of $1_E$ (for certain sets $E$) and $P_K$, that will be useful later. 

\begin{lemma} \label{cutoff box 1}
Let $K, m \in \mathbb N$. If $m>1$,  then we have arbitrary measurable sets $A_i\subset \R^2$ for $i=2, \dots , m$. Let $R>0$ and $x_0 \in \mathbb R^2$. Then, there are constants $C<\infty$ and $\beta>0$ such that
\begin{align}
\left \lVert 1_{D_R(x_0)} P_K \left( \prod_{i=2}^m 1_{A_i} P_K \right) 1_{D_{(m+1)R}^\complement (x_0)} \right \rVert_2^2 \le  C m^4 \begin{cases} R   \quad &\text{ if } R< 10K\,,  \\  \exp(-\beta R) & \text{ if } R \ge 10K\,. \end{cases} 
\end{align}
If $m=1$, then the product $\prod_{i=2}^m 1_{A_i} P_K$ is interpreted as the identity.
\end{lemma}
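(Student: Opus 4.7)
The plan is to bound the Hilbert--Schmidt norm of
\[
M \coloneqq 1_{D_R(x_0)} P_K \Bigl(\prod_{i=2}^m 1_{A_i} P_K\Bigr) 1_{D_{(m+1)R}^\complement(x_0)}
\]
by decomposing it into only $m$ pieces, each of which can then be estimated via the elementary $m=1$ calculation. The key insight is that a naive binary split $1_{A_i} = 1_{A_i \cap D_{iR}(x_0)} + 1_{A_i \cap D_{iR}^\complement(x_0)}$ applied at every intermediate position would produce $2^{m-1}$ terms -- disastrous for a polynomial bound in $m$ -- so the terms must instead be grouped cleverly.

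First, I would expand the product using the above binary split. For every resulting term, the leftmost point is in $D_R(x_0)$ while the rightmost is in $D_{(m+1)R}^\complement(x_0)$, so there is a well-defined smallest index $j^* \in \{1,\dots,m\}$ at which the corresponding intermediate point $y_{j^*}$ is pushed out of $D_{(j^*+1)R}(x_0)$. I would collect all terms with the same first-exit position $j^*$ by summing freely over the unconstrained choices with index $>j^*+1$, which collapses those indicators back to the original $1_{A_i}$. The result is $M = \sum_{j^*=1}^m M^{(j^*)}$ with only $m$ summands, where
\[
M^{(j^*)} = 1_{D_R} P_K \Bigl(\prod_{i=2}^{j^*} 1_{A_i \cap D_{iR}} P_K\Bigr)\bigl(1_{A_{j^*+1} \cap D_{(j^*+1)R}^\complement} P_K\bigr)\Bigl(\prod_{i=j^*+2}^m 1_{A_i} P_K\Bigr) 1_{D_{(m+1)R}^\complement}
\]
(with the obvious modification when $j^* = m$).

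Next I would factor $M^{(j^*)} = Z_L^{(j^*)}\bigl[1_{D_{j^* R}(x_0)} P_K 1_{D_{(j^*+1)R}^\complement(x_0)}\bigr] Z_R^{(j^*)}$ by splitting $1_{A_{j^*}\cap D_{j^*R}} = 1_{A_{j^*}} 1_{D_{j^*R}}$ and $1_{A_{j^*+1}\cap D_{(j^*+1)R}^\complement} = 1_{D_{(j^*+1)R}^\complement} 1_{A_{j^*+1}}$ to expose a ``core'' involving only disks. Since $Z_L^{(j^*)}$ and $Z_R^{(j^*)}$ are products of indicator multiplications and the orthogonal projection $P_K$, each has operator norm $\le 1$. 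Applying $\|ABC\|_2 \le \|A\|_{\mathrm{op}} \|B\|_2 \|C\|_{\mathrm{op}}$ reduces each estimate to
\[
\|M^{(j^*)}\|_2 \le \|1_{D_{j^*R}(x_0)} P_K 1_{D_{(j^*+1)R}^\complement(x_0)}\|_2.
\]
This core is then estimated exactly as in the $m=1$ case: using $|P_K(x,y)| = G_K(\|x-y\|/\sqrt 8)$, the substitution $z = y-x$, the fact that $\|z\|\ge R$ for the relevant points, and \autoref{GK square int}, one obtains
\[
\|1_{D_{j^*R}(x_0)} P_K 1_{D_{(j^*+1)R}^\complement(x_0)}\|_2^2 \le |D_{j^* R}| \int_{\|z\|\ge R}|G_K(\|z\|/\sqrt 8)|^2 dz \le C(j^*)^2 \begin{cases} R & \text{if } R < 10K,\\ R^2 e^{-\beta R} & \text{if } R \ge 10K.\end{cases}
\]

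Finally, the triangle inequality gives $\|M\|_2 \le \sum_{j^*=1}^m C j^* \sqrt R \le C m^2 \sqrt R$ in the small-$R$ regime, hence $\|M\|_2^2 \le C m^4 R$; in the large-$R$ regime $\|M\|_2 \le Cm^2 R\, e^{-\beta R/2}$, and the polynomial prefactor $R^2$ can be absorbed into the exponential by reducing $\beta$ slightly, giving $\|M\|_2^2 \le C m^4 e^{-\beta' R}$. The main technical point (and the principal obstacle) is precisely the grouping step: committing to a single first-exit position $j^*$ and summing freely over subsequent positions keeps the number of terms linear in $m$, while the area factor $|D_{j^*R}| = \pi(j^* R)^2$ together with the triangle-inequality sum $\sum j^* = O(m^2)$ is exactly what produces the $m^4$.
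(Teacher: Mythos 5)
Your proposal is correct and in substance identical to the paper's argument: the paper proves the bound by induction, splitting off the last intermediate position with $1_{D_{(m+1)R}(x_0)}+1_{D_{(m+1)R}^\complement(x_0)}$, which when unrolled is exactly your first-exit telescoping into $m$ terms, each reduced (using that all other factors are projections of norm $\le 1$) to the same one-step estimate $\lVert 1_{D_{lR}(x_0)}P_K 1_{D_{(l+1)R}^\complement(x_0)}\rVert_2^2\le C l^2 R$ (resp.\ exponential), with the identical $\sum_l l=\mathcal O(m^2)$ triangle-inequality bookkeeping producing the $m^4$. So the approach matches the paper's; only the packaging (direct decomposition versus induction with the recursion $\sqrt{C_{m+1}}=\sqrt{C_m}+(m+1)\sqrt{C_1}$) differs.
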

\begin{proof}
We will prove this by induction over $m$. We begin with a slight generalisation of the case $m=1$. Let $l \in \mathbb N$. Then, we observe
\begin{align}
\left \lVert 1_{D_{lR}(x_0)} P_K 1_{D_{(l+1)R}^\complement(x_0)} \right \rVert_2^2 
 &= \int_{D_{lR}(x_0)} \mathrm{d}x \int_{D_{(l+1)R}^\complement(x_0)} \mathrm d y   \,\lvert P_K(x,y) \rvert^2 \\
&=  \int_{D_{lR}(0)} \mathrm dx\int_{D_{(l+1)R}(0)^\complement} \mathrm d y \, \lvert G_K(\lVert x-y \rVert/\sqrt 8 ) \rvert ^2  \\
&\le  \pi l^2 R^2 \int_{D_{R}(0)^\complement} \mathrm d y \,\lvert G_K(\lVert y \rVert/\sqrt 8 ) \rvert ^2  \\
&\le  C l^2R^2 \int_{R/\sqrt8 } ^\infty  \mathrm d t\,  2 \pi t \, \lvert G_K(t) \rvert^2   \\
&\le  C l^2  \begin{cases} R   \quad &\text{ if } R< 10K\,,  \\  \exp(-\beta R) & \text{ if } R \ge 10K\,. \end{cases} \label{HS norm 1 proof eq1}
\end{align}
We used \eqref{pl kernel} and \autoref{GK square int} noting that $R> 10K$ implies $R /\sqrt 8 > 2K$.

We continue with the induction step. We have
\begin{align}
&\left \lVert 1_{D_R(x_0)} P_K \left( \prod_{i=2}^{m+1} 1_{A_i} P_K \right) 1_{D_{(m+2)R}^\complement (x_0)} \right \rVert_2  \\
&\le  \left \lVert 1_{D_R(x_0)} P_K  \left( \prod_{i=2}^{m} 1_{A_i} P_K \right)  1_ { D_{(m+1)R}^\complement(x_0)}1_{A_{m+1} } P_K  1_{D_{(m+2)R}^\complement (x_0)} \right \rVert_2  \\
 & + \left \lVert 1_{D_R(x_0)} P_K    \left( \prod_{i=2}^{m} 1_{A_i} P_K \right)  1_{A_{m+1}} 1_ { D_{(m+1)R}(x_0)} P_K 1_{D_{(m+2)R}^\complement (x_0)} \right \rVert_2 \\
 & \le   \left \lVert 1_{D_R(x_0)} P_K  \left( \prod_{i=2}^{m} 1_{A_i} P_K \right)  1_ { D_{(m+1)R}^\complement(x_0)} \right \rVert_2 \left \lVert 1_{A_{m+1} } P_K  1_{D_{(m+2)R}^\complement (x_0)} \right \rVert \\
 & + \left \lVert 1_{D_R(x_0)} P_K    \left( \prod_{i=2}^{m} 1_{A_i} P_K \right)  1_{A_{m+1}} \right \rVert \left \lVert 1_ { D_{(m+1)R}(x_0)} P_K 1_{D_{(m+2)R}^\complement (x_0)} \right \rVert_2 \\ 
 & \le  \left \lVert 1_{D_R(x_0)} P_K  \left( \prod_{i=2}^{m} 1_{A_i} P_K \right)  1_ { D_{(m+1)R}^\complement(x_0)} \right \rVert_2 + \left \lVert 1_ { D_{(m+1)R}(x_0)} P_K 1_{D_{(m+2)R}^\complement (x_0)} \right \rVert_2 \\
&\le  (\sqrt{ C_m} + (m+1) \sqrt{C_1}) \begin{cases} \sqrt  R   \quad &\text{ if } R< 10K\,,  \\ C \exp(- \beta R/2) & \text{ if } R \ge 10K\,. \end{cases} 
\end{align}
The first step relies on the triangle inequality and the fact that multiplication operators commute. The second step relies on  the H\"older inequality and the third step uses that all the operators $1_A, P_K$ have operator norm $1$, as they are projections. 

In the last step, we used the induction hypothesis and \eqref{HS norm 1 proof eq1} with $l=m+1$ (taking the square root of both in the process). Here, $C_1, C_m$ are the $m$-dependent constants in the claim. The recursion $\sqrt {C_{m+1}}= \sqrt{C_m} + (m+1) \sqrt{C_1}$ implies $\sqrt{C_m} = \frac {m^2+m} 2\sqrt {C_1}$ and thus $C_m=\frac {(m^2+m)^2} 4 C_1$. This explains the power $m^4$ in the claim and thus finishes the proof.
\end{proof}

\begin{lemma} \label{cutoff box 2}
For $R>0, x_0 \in \mathbb R^2$, let $E \subset D_R(x_0)$ and let $E' \subset \mathbb R^2$ be measurable. Then, with $\mathcal J_m$ defined in \eqref{def Jm} we have for some $\beta>0$,
\begin{align}
\mathcal J_m (E,E';K)= \mathcal J_m(E, E'\cap D_{(m+1)R}(x_0);K) +  \begin{cases} \mathcal O(m^5R)   \quad &\text{ if } R< 10K\,,  \\ \mathcal O(m^5\exp(-\beta R)) & \text{ if } R \ge 10K \,.\end{cases} 
\end{align}
\end{lemma}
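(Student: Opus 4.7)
The plan is to reduce the difference $\mathcal J_m(E,E';K)-\mathcal J_m(E,E'\cap D_{(m+1)R}(x_0);K)$ to a telescoping sum exactly as in the proof of the Lipschitz-type estimate of \autoref{Jm properties}. Writing $D:=D_{(m+1)R}(x_0)$ and observing that $1_{E'}-1_{E'\cap D}=1_{E'\cap D^\complement}$, the telescoping yields
\begin{align*}
\mathcal J_m(E,E';K)-\mathcal J_m(E,E'\cap D;K)
=\sum_{k=1}^{m}\operatorname{tr}\, 1_{E}(P_K 1_{E'})^{k-1}P_K 1_{E'\cap D^\complement}(P_K 1_{E'\cap D})^{m-k}P_K 1_{E}\, .
\end{align*}
So it suffices to bound each of the $m$ summands by $\mathcal O(m^4 g(R))$ (with $g(R)=R$ if $R<10K$ and $g(R)=\exp(-\beta R)$ if $R\ge10K$) and then to sum.

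For the $k$-th summand I split the product at the central factor $1_{E'\cap D^\complement}$: set
\begin{align*}
M_k := 1_{E}(P_K 1_{E'})^{k-1}P_K 1_{E'\cap D^\complement}\,,\qquad N_k := 1_{E'\cap D^\complement}(P_K 1_{E'\cap D})^{m-k}P_K 1_{E}\,,
\end{align*}
so that the summand equals $\operatorname{tr}(M_k N_k)$, which by H\"older is bounded by $\|M_k\|_2\,\|N_k\|_2$. The key observation is the spatial separation $E\subset D_R(x_0)$ and $E'\cap D^\complement\subset D_{(m+1)R}^\complement(x_0)$, so that pointwise kernel monotonicity of the Hilbert--Schmidt norm (i.e.\ $|1_A T 1_B|_2\le |1_{A'}T 1_{B'}|_2$ for $A\subset A'$, $B\subset B'$) gives
\begin{align*}
\|M_k\|_2 &\le \bigl\|1_{D_R(x_0)}P_K\,{\textstyle\prod_{i=2}^{k}}1_{E'}P_K\,1_{D_{(k+1)R}^\complement(x_0)}\bigr\|_2\,,\\
\|N_k\|_2 &\le \bigl\|1_{D_R(x_0)}P_K\,{\textstyle\prod_{i=2}^{m-k+1}}1_{E'\cap D}P_K\,1_{D_{(m-k+2)R}^\complement(x_0)}\bigr\|_2\,.
\end{align*}
Here the replacement of $1_{E'\cap D^\complement}$ by $1_{D_{(k+1)R}^\complement(x_0)}$ (resp.\ $1_{D_{(m-k+2)R}^\complement(x_0)}$) is valid because $(m+1)R\ge (k+1)R$ and $(m+1)R\ge (m-k+2)R$ for every $k\in\{1,\dots,m\}$.

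Both right-hand sides now have exactly the structure required by \autoref{cutoff box 1}, with the ``$m$'' of that lemma being $k$ and $m-k+1$ respectively. Hence
\begin{align*}
\|M_k\|_2\le C k^{2}\sqrt{g(R)}\,,\qquad \|N_k\|_2\le C(m-k+1)^{2}\sqrt{g(R)}\,,
\end{align*}
and summation gives the bound
\begin{align*}
\sum_{k=1}^{m}\|M_k\|_2\|N_k\|_2
\le C\,g(R)\sum_{k=1}^{m}k^{2}(m-k+1)^{2}=\mathcal O(m^{5}\,g(R))\,,
\end{align*}
which is exactly the claimed remainder. The only point that requires a touch of care is the edge case $k=1$ or $k=m$, where one of the products is empty; this is already handled by the ``empty-product = identity'' convention in \autoref{cutoff box 1}. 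I do not see any serious obstacle here: everything reduces to a telescoping identity that we have already used once, combined with the spatial cut-off estimate of \autoref{cutoff box 1}, which is the genuine analytic ingredient.
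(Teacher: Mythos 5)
Your proposal is correct and follows essentially the same route as the paper's own proof: the telescoping identity, the H\"older split at the factor $1_{E'\cap D^\complement}$, the monotonicity of the Hilbert--Schmidt norm under shrinking the cutoff sets, and then \autoref{cutoff box 1} applied to each half, summed over $k$. Your only (harmless) refinement is using the $k$-dependent radii $(k+1)R$ and $(m-k+2)R$ instead of keeping $(m+1)R$ for both factors; either way one lands on the stated $\mathcal O(m^5)$ bound.
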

\begin{proof}
Let $E'' \coloneqq E' \cap D_{(m+1)R}(x_0)$. Using the definition of $\mathcal J_m$, we can write the telescope sum
\begin{align}
&\left \lvert \mathcal J_m(E,E';K)-\mathcal J_m(E,E'';K) \right \rvert \\
 &= \left \lvert  \sum_{k=1}^m \operatorname{tr} 1_E (P_K 1_{E'})^{k-1} P_K (1_{E'}- 1_{E''}) (P_K 1_{E''})^{m-k} P_K 1_E \right \rvert \\
& \le  \sum_{k=1}^m \left \lVert  1_E (P_K 1_{E'})^{k-1} P_K (1_{E'}- 1_{E''}) (P_K 1_{E''})^{m-k} P_K 1_E \right \rVert_1 \\
& \le  \sum_{k=1}^m \left \lVert  1_E (P_K 1_{E'})^{k-1} P_K 1_{E'\setminus E''} \right \rVert_2 \left \lVert 1_{E' \setminus E''} (P_K 1_{E''})^{m-k} P_K 1_E \right \rVert_2 \\
& \le  \sum_{k=1}^m  \left \lVert  1_{D_R(x_0)} (P_K 1_{E'})^{k-1} P_K 1_{D_{(m+1)R}(x_0)^\complement} \right \rVert_2 \left \lVert 1_{D_{(m+1)R}(x_0)^\complement} (P_K 1_{E''})^{m-k} P_K 1_{D_R(x_0)} \right \rVert_2 \\
& \le   \begin{cases} C m^5R   \quad &\text{ if } R< 10K\,,  \\ C m^5\exp(-\beta R) & \text{ if } R \ge 10K\,. \end{cases} 
\end{align}
We used that $E' \subset E''$ and thus $1_{E'}-1_{E''}=1_{E' \setminus E''}$ and that $E' \setminus E'' \subset D_{(m+1)R}(x_0)^\complement$ by construction of $E''$. We also used that whenever $A \subset A'$ and $B\subset B'$, for any operator $T$, we have $\lVert 1_A T 1_B \rVert_2 =\lVert 1_A 1_{A'} T 1_{B'} 1_B \rVert_2 \le \lVert 1_{A'} T 1_{B'} \rVert_2$. The last step is of course \autoref{cutoff box 1}. 
\end{proof}

We need some more properties of $\mathcal J_m$, where $E'$ is a half space.
\begin{lemma}  \label{Jm halfspace}
Let $a,b \in [0,\infty)$ and $c \in (0,\infty]$ with $b<c$. Then, with $\mathcal J_m$ defined in \eqref{def Jm} we have 
\begin{align}
\mathcal J_m\left( [0,a) \times (-c,-b) , \mathbb R\times \mathbb R^+ ;K \right) = a \mathcal J_m\left( [0,1) \times (-c,-b) ,\mathbb R \times \mathbb R^+;K \right),
\end{align}
and there is a constant $\beta>0$ such that
\begin{align} \label{a priori stripe bound}
\mathcal J_m \left( [0,a) \times (-c,-b), \mathbb R \times \mathbb R^+ ;K \right) \le \begin{cases}  C a \ln \left( \frac{20K} {b+1} \right) &\text{ if } b<10K\,, \\ Ca \exp(-\beta b) &\text{ if } b\ge 10K\,. \end{cases} 
\end{align}
\end{lemma}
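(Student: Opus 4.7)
The first statement is purely structural and the second is a direct Hilbert--Schmidt bound; both should follow from the tools already assembled in \autoref{Jm properties} and \autoref{GK square int}.

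For the scaling identity I would introduce
\[
\varphi(a) \coloneqq \mathcal J_m\bigl([0,a)\times(-c,-b),\,\mathbb R\times\mathbb R^+;K\bigr),\qquad a\ge 0,
\]
and prove Cauchy's functional equation $\varphi(a_1+a_2)=\varphi(a_1)+\varphi(a_2)$. The disjoint decomposition $[0,a_1+a_2)=[0,a_1)\sqcup[a_1,a_1+a_2)$ combined with additivity of $\mathcal J_m$ in the first argument (\autoref{Jm properties}(2)) splits $\varphi(a_1+a_2)$ into two pieces; the second piece is equal to $\varphi(a_2)$ by the affine-unitary invariance (\autoref{Jm properties}(5)) applied to the horizontal translation $(x_1,x_2)\mapsto(x_1-a_1,x_2)$, which fixes $\mathbb R\times\mathbb R^+$. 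Non-negativity of $\mathcal J_m$ (\autoref{Jm properties}(1)) together with additivity yields monotonicity of $\varphi$; a monotone solution of Cauchy's equation on $[0,\infty)$ is automatically linear, so $\varphi(a)=a\varphi(1)$.

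For the upper bound I would first invoke the a-priori Hilbert--Schmidt estimate \autoref{Jm properties}(3), so that with $E=[0,a)\times(-c,-b)$ and $E'=\mathbb R\times\mathbb R^+$,
\[
\mathcal J_m(E,E';K)\le \lVert 1_E P_K 1_{E'}\rVert_2^2=\int_E\!\mathrm dx\int_{E'}\!|G_K(\lVert x-y\rVert/\sqrt 8)|^2\,\mathrm dy,
\]
using the kernel representation from \eqref{pl kernel}. For fixed $x=(x_1,x_2)\in E$, the substitution $z=y-x$ forces $z_2>-x_2\ge b$, so the inner integral is dominated by the rotationally symmetric integral over $\{\lVert z\rVert\ge -x_2\}$; passing to polar coordinates and changing variables $t=r/\sqrt 8$ gives
\[
\int_{E'}|G_K(\lVert x-y\rVert/\sqrt 8)|^2\,\mathrm dy\;\le\;C\int_{-x_2/\sqrt 8}^{\infty} t\,|G_K(t)|^2\,\mathrm dt.
\]
\autoref{GK square int} now furnishes a piecewise estimate: $\le C/(1+|x_2|)$ when $|x_2|<2\sqrt 8\,K$, and $\le C\exp(-\beta|x_2|)$ when $|x_2|\ge 2\sqrt 8\,K$.

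The outer integral factorises: integration over $x_1\in[0,a)$ supplies the prefactor $a$, and the $x_2$-integral over $(-c,-b)$ gives the advertised bound. Concretely, if $b<10K$ I would split $\int_{-c}^{-b}\mathrm dx_2$ at $x_2=-2\sqrt 8\,K$, bounding the near part by $C\int_b^{2\sqrt 8 K}\mathrm dr/(1+r)\le C\ln(20K/(b+1))$ and absorbing the exponential tail into the same logarithm; if $b\ge 10K>2\sqrt 8\,K$, only the exponential regime contributes and one obtains $C\exp(-\beta b)$ (the case $c=\infty$ is harmless because the integrand decays fast enough in both regimes). The only technical care required is this clean bookkeeping at the crossover $b\sim K$; no new ideas are needed beyond \autoref{Jm properties} and \autoref{GK square int}.
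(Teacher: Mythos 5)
Your proposal is correct and follows essentially the same route as the paper: additivity plus translation invariance of $\mathcal J_m$ give Cauchy's functional equation for $a\mapsto\mathcal J_m([0,a)\times(-c,-b),\mathbb R\times\mathbb R^+;K)$, and the upper bound comes from the Hilbert--Schmidt estimate $\mathcal J_m\le\mathcal J_1$, the kernel representation, and \autoref{GK square int} with the crossover at $b\sim K$. The only (immaterial) difference is that you deduce linearity from monotonicity of the additive function, whereas the paper uses the linear upper bound it has just established as the regularity input.
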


\begin{proof}
We prove the upper bound first. As $\mathcal J_m$ is additive in the first component and non-negative, as shown in \autoref{Jm properties}, it suffices to consider $c=\infty$. We observe that by the Hilbert--Schmidt norm estimate 
\begin{align}
\mathcal J_m \left( [0,a) \times (-\infty, -b), \mathbb R \times \mathbb R^+ ;K\right) &\le \int_ {0}^{a} \mathrm d x_1 \int_{-\infty}^{-b} \mathrm d x_2 \int_{\mathbb R \times \mathbb R^+} \mathrm d y \, \lvert P_K((x_1,x_2),y) \rvert^2  \\
&\le a\int_{-\infty} ^{-b} \mathrm dx_2 \int_{D_{\lvert x_2 \rvert}(0)^\complement} \mathrm d y \,\lvert G_K(\|y\|/\sqrt 8) \rvert ^2 \\
&= 16\pi a \int_{b} ^{\infty} \mathrm d s  \, \int_{s /\sqrt 8} ^\infty \mathrm d t\, t \lvert G_K(t) \rvert^2  \\
&\le C a\begin{cases}  \ln \left( \frac{20K} {b+1} \right) &\text{ if } b<10K\,, \\ \exp(-\beta b) &\text{ if } b\ge 10K\,. \end{cases} 
 \end{align}
The last step relies upon \autoref{GK square int} and an easy estimate for the integral over $s$. The first claim just says that for any $b,c$, the function
\begin{align}
a\mapsto f_{b,c}(a) \coloneqq \mathcal J_m\left( [0,a) \times (-c,-b), \mathbb R \times \mathbb R^+ ;K\right)
\end{align}
is linear. As we have already shown, it has a linear upper bound, and hence it suffices to show that it satisfies the Cauchy functional equation, that is, $f_{b,c}(a)+f_{b,c}(a')=f_{b,c}(a+a')$. For shortness, we write $I\coloneqq (-c,-b)$. We use the additivity and translational invariance of $\mathcal J_m$, as proved in \autoref{Jm properties}. We observe
\begin{align}
f_{b,c}(a+a')&= \mathcal J_m\left([0,a+a') \times I, \mathbb R \times \mathbb R^+ ;K\right) \\
&= \mathcal J_m\left([0,a) \times I, \mathbb R \times \mathbb R^+ ;K\right) + \mathcal J_m\left([a,a+a') \times I, \mathbb R \times \mathbb R^+ ;K\right) \\
&=f_{b,c}(a) + \mathcal J_m\left([0,a') \times I, (\mathbb R -a ) \times \mathbb R^+;K \right) \\
&=f_{b,c}(a) + \mathcal J_m\left([0,a') \times I, \mathbb R  \times \mathbb R^+;K \right) \\
&=f_{b,c}(a)+f_{b,c}(a')\,.
\end{align}
This completes the proof.
\end{proof}

\section{\texorpdfstring{$K \ll L$}{K ^^e2^^89^^aa L}: Proof of \autoref{Jm domain reduction cor}}
\label{L dependency section}

The goal of this section is to prove \autoref{Jm domain reduction cor}, which stated
\JMdomaincor*

Let us also recall \autoref{LKL conditions}:
\LKLconditions*
The proof of  \autoref{Jm domain reduction cor} splits into the two cases according to this condition. Under Condition A, it follows from \autoref{reduction lem polygons} below and under Condition B it is implied by \autoref{Jm domain dependence lem} below.

\subsection{Proof of \autoref{Jm domain reduction cor} for polygons $\Lambda$}
The time has come for the formal definition of polygons.

\begin{definition} \label{def polygon} A polygon $\Lambda$ is a bounded, connected Lipschitz domain in $\R^2$ with boundary $\partial\Lambda$ such that there is a finite set $\mathcal V = \{x_1,\ldots, x_V \}$ of points in $\R^2$ with $\partial\Lambda$ being the union of the closed line segments $\gamma_i = [x_i,x_{i+1}], i= 1,\ldots,V$ between $x_i$ and $x_{i+1}$ with the convention $x_{V+1} \coloneqq x_0$. The points in $\mathcal V$ (or the boundary of the edges) of are called corners and the line segments $\gamma_i$ are called edges. The number $V$ is chosen minimal so that none of the interior angles, denoted by $\theta_i$, (at the corner $x_i$) are $\pi$ or $2\pi$. 
\end{definition} 

By definition, a polygon is open and simply-connected.  By the Lipschitz property, two edges can only intersect in a common corner.

All our claims below can be extended easily to polygons that are not simply-connected but this would be more cumbersome from a notational point of view.

\begin{lemma} \label{reduction lem polygons}
Let $\Lambda \subset \mathbb R^2$ be a polygon according to \autoref{def polygon}. Let $K, m\in \mathbb N, L \in \mathbb R$ with $L>2$. Then, as $L \to \infty$ and with $K\le C L/\ln(L))$ (for some finite constant $C$), we have the asymptotic expansion
\begin{align}
\mathcal J_m(L\Lambda, L\Lambda^\complement;K ) & = L  \lvert \partial \Lambda \rvert \, \mathcal J_m \left( [0,1) \times \mathbb R^-, \mathbb R \times \mathbb R^+ ;K \right) + \mathcal O( (\ln(L)^3+K \ln(K) ))\,. \label{reduction lem polygons.result eq}
\end{align}
\end{lemma}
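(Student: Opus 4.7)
The plan is to decompose $L\Lambda$ into boundary-adjacent edge strips, small corner caps, and a deep-interior remainder, and to reduce each edge strip to the half-plane model $\mathcal J_m([0,1)\times\mathbb R^-,\mathbb R\times\mathbb R^+;K)$. Fix parameters $R = T = \max(CK, C_0\ln L)$ and $r = 2(m+1)R$, where $C$ and $C_0$ are large constants depending only on $\Lambda$ and on the decay constant $\beta$ of \autoref{GK square int}; the hypothesis $K \le CL/\ln L$ ensures $r = o(L)$. Set $C_j \coloneqq L\Lambda \cap D_r(Lx_j)$ for each vertex $x_j$, $\Sigma_i \coloneqq \tau_i^{-1}([r,L\ell_i-r)\times(-T,0))$ where $\tau_i$ is an affine isometry straightening $\gamma_i$ so that $L\Lambda$ sits locally in $\{x_2<0\}$ and $\ell_i=|\gamma_i|$, and let $I$ denote the remaining interior. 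The additivity of $\mathcal J_m$ from \autoref{Jm properties} gives
\[
\mathcal J_m(L\Lambda,L\Lambda^\complement;K) = \sum_j \mathcal J_m(C_j,L\Lambda^\complement;K) + \sum_i \mathcal J_m(\Sigma_i,L\Lambda^\complement;K) + \mathcal J_m(I,L\Lambda^\complement;K).
\]

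The corner and interior contributions are estimated via the a-priori Hilbert--Schmidt bound from \autoref{Jm properties} and the kernel decay of \autoref{GK square int}. For $x\in C_j$ at distance $d = d(x,\partial L\Lambda)$, one has $\int_{L\Lambda^\complement}|P_K(x,y)|^2\,\mathrm dy \le C/(1+d) \wedge C e^{-\beta d}$, and the wedge geometry of $C_j$ (using that each interior angle $\theta_j$ is bounded away from $0,\pi,2\pi$) gives $\mathcal J_m(C_j,L\Lambda^\complement;K) = \mathcal O(r\ln r) = \mathcal O(K\ln K + \ln L\,\ln\ln L)$. For the interior, $d \ge T \ge 10K$ forces the exponential branch, hence $\mathcal J_m(I,L\Lambda^\complement;K) \le |I|\cdot C e^{-\beta T} = o(1)$ provided $C_0$ is large.

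For the edge contribution I work in $\tau_i$-coordinates, invoking the rigid-motion invariance from \autoref{Jm properties}. Because $r \ll L\ell_i$ and adjacent edges turn by a fixed non-zero angle, $\tau_i(L\Lambda^\complement)$ coincides with $H \coloneqq \mathbb R\times\mathbb R^+$ throughout an $\mathcal O(L)$-neighbourhood of $\Sigma_i$. I slice $\Sigma_i$ along the edge into consecutive boxes $B_{i,k} = [r+kR,r+(k+1)R)\times(-T,0)$ of diameter $R\sqrt 2$. The corner cut-off $r \ge 2(m+1)R$ and the bound $T \le c_\Lambda L$ ensure that the localisation ball $D_{(m+1)R\sqrt 2/2}(x_{i,k})$ around the centre of each box lies in the half-plane region, so applying \autoref{cutoff box 2} once with $\tau_i(L\Lambda^\complement)$ and once with $H$ yields
\[
\mathcal J_m\bigl(B_{i,k},\tau_i(L\Lambda^\complement);K\bigr) = \mathcal J_m(B_{i,k},H;K) + \mathcal O\bigl(m^5 e^{-\beta R/\sqrt 2}\bigr).
\]
Summing in $k$ and using the linearity in length from \autoref{Jm halfspace} recombines the boxes: $\sum_k \mathcal J_m(B_{i,k},H;K) = (L\ell_i - 2r)\,\mathcal J_m([0,1)\times(-T,0),\mathbb R\times\mathbb R^+;K)$, and the total box-error is $\mathcal O((L/R)\,e^{-\beta R/\sqrt 2}) = o(1)$ for $C_0$ large.

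Finally, the depth estimate in \autoref{Jm halfspace} replaces $(-T,0)$ by $\mathbb R^-$ at cost $\mathcal O(L\,e^{-\beta T}) = o(1)$, and restoring $L\ell_i$ from $L\ell_i - 2r$ adds $\mathcal O(rV\ln K) = \mathcal O(K\ln K + \ln L\,\ln K)$ via the crude bound $\mathcal J_m([0,1)\times\mathbb R^-,\mathbb R\times\mathbb R^+;K) = \mathcal O(\ln K)$ from \autoref{Jm halfspace}. Combining with the corner and interior estimates and using $\ln L\,\ln K \le \ln^2 L$ for $K\le L$, all error terms aggregate to $\mathcal O(\ln^3 L + K\ln K)$. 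The main obstacle is the choice of $R$: in the regime $R < 10K$ the per-box error from \autoref{cutoff box 2} is the linear $\mathcal O(m^5 R)$, which summed over the $L/R$ boxes per edge would yield $\mathcal O(L)$ -- far too large. Demanding instead $R \ge 10K$ gives the exponential per-box error, but then the localisation balls of radius $(m+1)R\sqrt 2/2$ must still fit inside the half-plane region of size $\asymp L$. The hypothesis $K \le CL/\ln L$ is exactly what keeps the admissible window $\max(10K,C_0\ln L) \le R \le c_\Lambda L/(m+1)$ non-empty.
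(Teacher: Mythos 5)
Your overall architecture is the same as the paper's (edge boxes of side $R\asymp\max(K,\ln L)$ localised via \autoref{cutoff box 2} and recombined through \autoref{Jm halfspace}, corner caps of size $\mathcal O(R\ln R)$, an exponentially negligible interior), and most of your error accounting is sound. But there is a concrete geometric gap at the corners. The claim that ``$\tau_i(L\Lambda^\complement)$ coincides with $H=\mathbb R\times\mathbb R^+$ throughout an $\mathcal O(L)$-neighbourhood of $\Sigma_i$'' is false: within distance comparable to $r$ of either end of the strip the boundary bends at the corner, and the adjacent edge enters the picture. What you actually need is the per-box statement that the localisation disk $D_{(m+1)R/\sqrt 2}(x_{i,k})$ meets $\partial(L\Lambda)$ only in the straight edge $\gamma_i$, and your angle-independent cut-off $r=2(m+1)R$ does not guarantee this. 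If the interior angle $\theta$ at a corner is close to $0$ or to $2\pi$, the adjacent edge runs nearly parallel to $\gamma_i$: a box centre at arclength $s$ from the corner and depth $-R/2$ lies at distance roughly $s\sin\theta+R/2$ from the adjacent edge, so the disk of radius $(m+1)R/\sqrt2$ is clear of it only when $s\gtrsim (m+1)R/\sin\theta$, not when $s\ge 2(m+1)R$. For very acute angles the same effect even makes your rectangles $\Sigma_i$ stick out of $L\Lambda$ or overlap one another, which breaks the additivity step $\mathcal J_m(L\Lambda,\cdot)=\sum_j\mathcal J_m(C_j,\cdot)+\sum_i\mathcal J_m(\Sigma_i,\cdot)+\mathcal J_m(I,\cdot)$.

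This is exactly the point where the paper introduces the polygon-dependent parameter $\varepsilon\coloneqq\min\{\lvert\sin\theta_i\rvert,d_i\}$ (the $d_i$ also controlling non-adjacent edges) and cuts the edges at distance $\asymp R(m+1)/\varepsilon$ from each corner, taking corner sets of radius $3R(m+1)/\varepsilon$. The fix to your argument is local: replace $r=2(m+1)R$ by $r=C_\Lambda(m+1)R$ with $C_\Lambda\gtrsim 1/\varepsilon$, and verify (as the paper does via its ``intersection property'') that the strips then lie in $L\Lambda$, are pairwise disjoint, and that each localisation disk sees only a straight piece of boundary. With that change your remaining bookkeeping goes through unchanged: the corner caps still contribute $\mathcal O(r\ln r)$ per corner (your direct distance-integration bound is a fine alternative to the paper's appeal to \autoref{pf thm} on a circular sector), the interior is killed by choosing $C_0$ large relative to $\beta$ (a legitimate variant of the paper's choice $R\gtrsim\ln(L)^2$), and the total error is again $\mathcal O(\ln(L)^3+K\ln K)$.
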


\begin{remark} The condition $K\le C L/\ln(L)$ fits the condition for polygons in the main \autoref{thm small K}. We could extend \autoref{reduction lem polygons} to $K,L$ such that $K/L\to0$ as $L\to\infty$ with essentially the same proof at the expense of a larger error term. 
\end{remark}

\begin{proof}

As we have additivity in the first argument of $\mathcal J_m$, we will decompose $L\Lambda$ as a (disjoint) union of certain sets, called edge sets, corner sets, the essential interior and a (remaining) null set. First, we consider the collar neighbourhood of the boundary $L\Lambda$, namely $D_R(\partial L \Lambda)\cap L\Lambda$, define $E_{\text{int}} \coloneqq L \Lambda \setminus\overline{ D_R(\partial L \Lambda)}$ and call this the essential interior. We want to cover (up to null sets) the collar neighbourhood by (lots of) open squares of length $R$, so that these squares are all inside $L\Lambda$, one side lies on an edge of $L\Lambda$ and there is a ball of radius $R(m+1)$ around the centre of the square, in which $L \Lambda$ looks like a half-space. To enable this condition and prevent squares of different edges from intersecting we stop in a safe distance from the corners. Around each corner of $L\Lambda$ we will cut out some measurable subset of a disk of radius $3 R(m+1)/\varepsilon$ centred at the corner and call the intersection with $L\Lambda$ a corner set. This distance depends also on the angle at a corner and therefore we introduce an $\varepsilon$. The variables $R$ and $\varepsilon$ will be chosen accordingly in the proof. A null set is included to really cover $L\Lambda$ by all these components.

\begin{figure}[H]
  \centering
  \includegraphics[height=5cm]{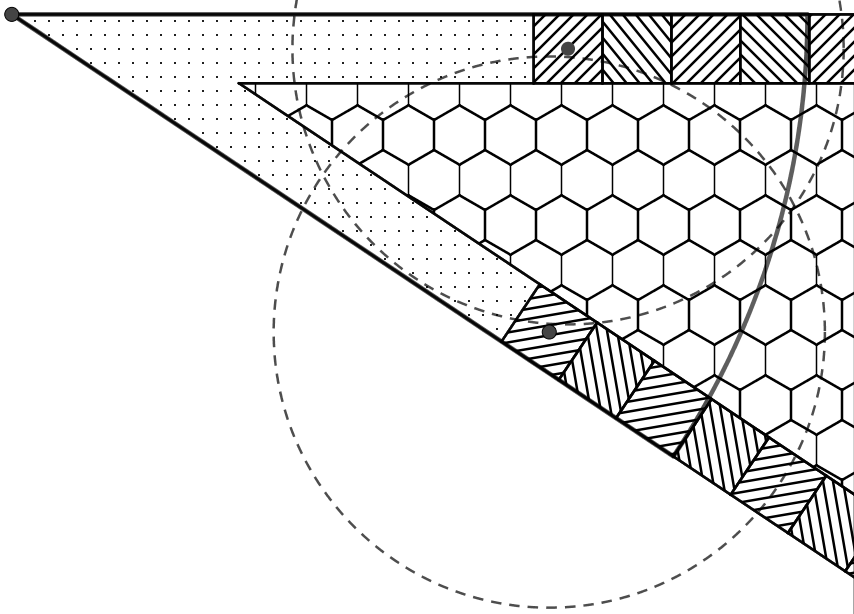} \hspace{1cm}
 \includegraphics[height=5cm]{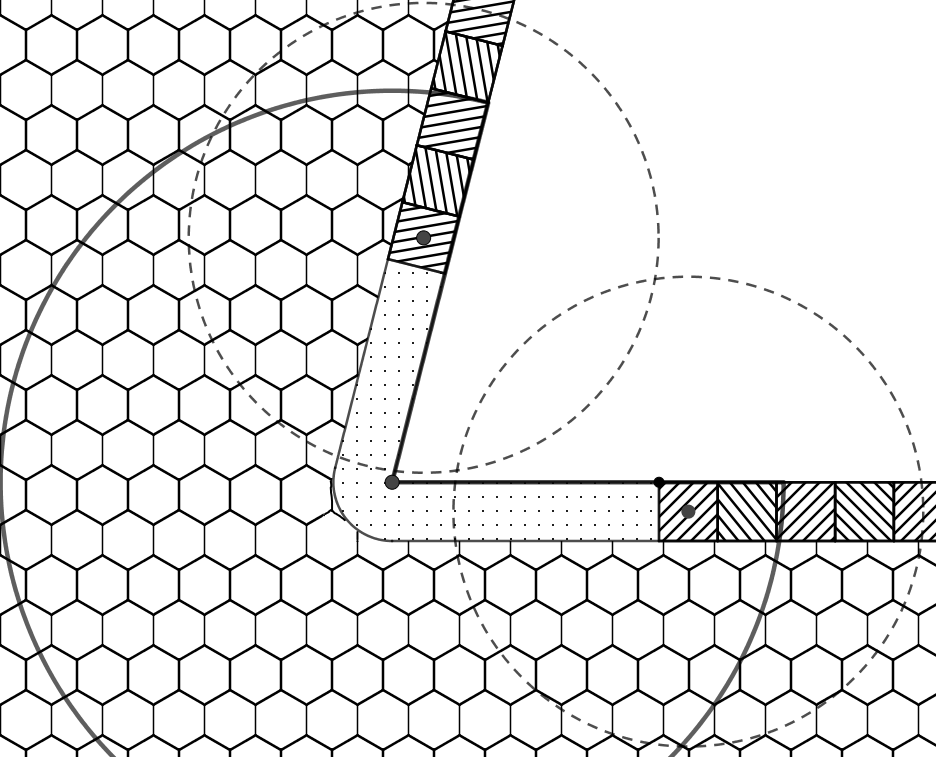}
  \caption{Two examples of a corner set (dotted), some edge sets (hatched) and the essential interior (covered by a hexagonal grid) near a corner with different angles. The dotted circles show the balls of radius $R(m+1)$ around the centre of edge sets and the circular sectors are the disks of radius $R(m+1)/\varepsilon$ around the corners intersected with $\Lambda$.}  
\end{figure}

As $\Lambda$ is a polygon, there is an $\varepsilon \in (0,1)$ with the following \emph{intersection property}: For each $\delta< \varepsilon$, $x \in \mathbb R^2$ such that the disk $D_{\varepsilon \delta }(x)$ intersects multiple edges, it intersects exactly two edges and the corner between these edges is at most $\delta$ away from $x$. One such $\varepsilon$ can be constructed as follows. Let $d_i$ be the smallest (Euclidean) distance between the corner $x_i$ and any edge of the polygon not containing $x_i$. Then we set (recall, $\theta_i$ is the interior angle at $x_i$)
\begin{align}
\varepsilon \coloneqq \min \{ |\sin(\theta_i)|,d_i : i=1,\ldots,V\}\,. 
\end{align}

Since $K < L/ \ln(L)$ and since we only care for the asymptotic behaviour as $L \to \infty$, we can assume 
\begin{align}\label{7.3}
30K(m+1) < \varepsilon^2 L\,.
\end{align}
Let $R \coloneqq \max(10K,  \varepsilon^2 \ln(L)^2 /(3(m+1)) )$. This ensures that $R$ behaves nicely with the domain, as 
\begin{align} \label{7.2} 3R(m+1)<\varepsilon^2 L
\end{align}
and that the error term in \autoref{cutoff box 2}, which is mainly  $\exp(-\beta R)$ decays faster than $L^{-1}$ as $L\to \infty$.

We will now construct a covering of $L \Lambda$ with the properties we just described. Let $\gamma \subset \partial \Lambda$ be an edge of length $|\gamma| = 2 \lambda$. After an isometry, we may assume that $\gamma = (-\lambda, \lambda) \times \{0\}$ and that the outward unit normal vector of $\Lambda$ at $0$ is $+e_2$. Scaling up with $L$, the edge $L\gamma$ is now $(-L \lambda, L \lambda) \times \{ 0\} $. The candidates for our edge sets are now the sets $E_{\gamma,j} \coloneqq (R(j-1/2), R(j+1/2) ) \times (-R,0)$ for some $ j \in \mathbb Z$. 

Let $\lvert j \rvert R \le L \lambda - 2 R (m+1)/\varepsilon$. In order to show that $E_{\gamma,j}$ is an edge set, according to the above definition, we assume, by contradiction, that $D_{R(m+1)}((Rj,-R/2)) \cap \partial (L \Lambda) \setminus L \gamma$ is non-empty and contains the point $x$. Note that $+e_2$ is the outward unit normal vector of $L \Lambda$ at $(Rj,0)$. Then, we have
\begin{align}
\delta \coloneqq \operatorname{dist}(x/L, (jR/L,0) ) / \varepsilon  & \le 1/L \left( \operatorname{dist} (x, (jR,-R/2)) + R/2\right) /\varepsilon \\
&< R(m+2)/(\varepsilon L) < 3 R(m+1)/(\varepsilon L)  < \varepsilon \, .
\end{align}
The last inequality follows by \eqref{7.2}. As the points $x/L \in \partial \Lambda$ and $(jR/L,0) \in \partial \Lambda$ have distance $\varepsilon \delta$, we know that the distance from $(jR/L,0)$ to the next corner, which is $(\pm \lambda,0)$, is at most $\delta<2 R(m+1)/L \varepsilon$, by the above intersection property of $\Lambda$. This contradicts the assumption $\lvert j \rvert R \le L \lambda - 2 R (m+1)/\varepsilon$. Thus, under this assumption, $E_{\gamma, j}$ is indeed an edge set.

For an edge $\gamma$, let $N_\gamma \coloneqq \left \lfloor \left( L \lambda - 2 R (m+1)/\varepsilon \right)/R \right \rfloor$ (with $\lambda = \lvert \gamma \rvert/2$). Thus, we have $2N_\gamma+1$ edge sets along $\gamma$ and we observe that
\begin{align} \label{Ngamma asmyp}
(2N_\gamma+1)R= L \lvert \gamma \rvert + \mathcal O(R ) \, .
\end{align}
Furthermore, together with a Lebesgue null set $E_{\gamma,\text{null}}$, they cover
\begin{align}
E_\gamma \coloneqq E_{\gamma,\text{null}} \cup \bigcup_{j=-N_\gamma}^{N_\gamma} E_{\gamma,j} = [-(N_\gamma+1/2) R,(N_\gamma+1/2)R ] \times [-R,0) \,.
\end{align}

The next claim is that any two edge sets $E_{\gamma,j}, E_{\gamma',j'}$ are disjoint. For $\gamma=\gamma'$, this is obvious.  Assume $\overline{E_{\gamma,j}}\cap\overline{ E_{\gamma',j'}}\neq \emptyset $ with $\gamma \neq \gamma'$. Let  $x$ be the centre  of  $E_{\gamma,j}$. Then, the set $\overline{D_{(1+\sqrt 2 /2) R}(x) }\cap \partial  L \Lambda$ contains points on the edge $L \gamma'$ and thus, as $(1+\sqrt 2 /2)R <2R \le R(m+1)$, we can conclude that $E_{\gamma,j}$ is not an edge set, which is a contradiction.

Let us now define the corner sets. For each $i=1,2, \dots, V$, let the corner set $E_{x_i}$ be defined by
\begin{align}
E_{x_i}\coloneqq D_{3R(m+1)/\varepsilon}(L x_i) \cap \left(  L \Lambda  \setminus \left( E_{\text{int}} \cup \bigcup_{i'=1}^V E_{\gamma_{i'}}  \right) \right) .
\end{align}
We will now show that $L \Lambda$ is covered by the sets $E_{\text{int}}, E_{\gamma_i}$ and $E_{x_i}$ for $i \in \{1,2,\dots,V\}$.

Let $x \in L \Lambda$ with $x \not \in E_{\text{int}}$ and for all $i \in \{1,2,\dots,V\}$, $x \not \in E_{\gamma_i}$. As $x \in L\Lambda \setminus D_R(\partial L \Lambda)$, we know that $\operatorname{dist}(x, \partial L \Lambda) <R$. Let $y \in \partial L\Lambda $ be a closest point to $x$, that is, $\| x-y \| = \operatorname{dist}(x, \partial L \Lambda) <R$. Let $\gamma \subset \partial L$ be an edge such that $y \in L \gamma$ \footnote{There is only one such edge unless $y$ is a corner, in which case, there are two possible $\gamma$.}. Assume without loss of generality that $\gamma = (-\lambda, \lambda) \times \{0\}$.  As $x \not \in E_\gamma$, we can conclude $y \in [-L \lambda, L \lambda] \setminus [-(N_\gamma+1/2) R,(N_\gamma+1/2)R ] $. Thus, the distance of $y$ to the closest corner $x_{i_0}$ is at most $2R(m+1)/\varepsilon +R$ and the distance of $x$ to $x_{i_0}$ is at most $2R(m+1)/\varepsilon +2R\le 3R(m+1)/\varepsilon$. This implies $x \in E_{x_{i_0}}$ and thus
\begin{align}
L \Lambda =E_{\text{int}} \cup \bigcup_{i=1}^V \left(E_{\gamma_i} \cup E_{x_i}  \right) .
\end{align}

Using the Hilbert--Schmidt norm estimate $\mathcal J_m \le \mathcal J_1$ in \autoref{Jm properties} and \autoref{GK square int}, we now estimate
\begin{align} \label{Jm int estimate.1}
\mathcal J_m(E_{\text{int}} , L \Lambda^\complement;K  ) &\le \int_{E_{\text{int}}} \mathrm d x \int_{L \Lambda^\complement} \mathrm dy \,\lvert P_K(x,y) \rvert^2 \\
& \le\int_{E_{\text{int}}} \mathrm d x \int_{D_R( x )^\complement} \mathrm dy \,\lvert G_K(\|y- x\|/\sqrt 8 ) \rvert^2 \\
&= C \lvert E_{\text{int}}  \rvert \int_{R/\sqrt 8} ^\infty \mathrm d s \,s\,\lvert G_K(s) \rvert^2 \\
& \le C L^2 \lvert \Lambda \rvert \exp(-\beta R) \\
&\le C\,. \label{Jm int estimate.end}
\end{align}
In the final step, we used that $R\ge \varepsilon^2 \ln(L)^2/ (3 (m+1))$ and in the step before, we used $R \ge 10K$. The important part is the $\ln(L)^2$, which leads to an annihilation of the polynomial growth.

For a corner set $E_{ x_i }$, let $\mathcal A(\theta_i, r)$ be the circular sector centred at the corner with radius $r$ and opening angle $\theta_i$ between the two edges touching this corner. We observe
\begin{align}
E_{ x_i } \subset \mathcal A(\theta_i, 3R(m+1)/\varepsilon ) \subset L \Lambda\,.
\end{align}
Due to translational and rotational invariance of $\mathcal J_m$, we can assume that the corner is at $0$ and one edge goes in direction $e_1$. We observe
\begin{align}
\mathcal J_m(E_{ x_i  }, L \Lambda^\complement;K ) &\le \mathcal J_1 (E_{ x_i }, L \Lambda^\complement ;K) \\
&\le \mathcal J_1 (\mathcal A(\theta_i, 3R(m+1)/\varepsilon ) , L \Lambda^\complement;K) \\
&\le \mathcal J_1 (\mathcal A(\theta_i, 3R(m+1)/\varepsilon ) , \mathcal A(\theta_i , 3R(m+1)/\varepsilon ) ^\complement;K ) \\
&\le C R \ln(R)\,.
\end{align} 

We used the Hilbert--Schmidt norm estimate and could then enlarge the domains in both arguments of $\mathcal J_1$, as $\mathcal J_1(E,E';K)=\lVert 1_E P_K 1_{E'} \rVert_2^2$. The final step is just an application of \autoref{pf thm} with scaling parameter $R$ instead of $L$. The constant $C$ depends only on the angle $\theta_i $, $m$ and $\varepsilon$.\footnote{The constant is continuous in $\theta_i \in [0,2\pi]$.}

We denote the total number of edge sets by $N$ and get immediately that $N = \sum_{i=1}^V (2N_{\gamma_i} +1)$. Using \eqref{Ngamma asmyp}, we conclude that
\begin{align} \label{edge set number}
RN= \sum_{i=1}^V R(2N_{\gamma_i}+1) = L \lvert \partial \Lambda \rvert + \mathcal O(V R)\, ,
\end{align}
where we recall that $V$ is the number of edges (or corners) of $\Lambda$.

Finally, for an edge set $E_{ \gamma_i,j }$, we translate and rotate it such that $E_{\gamma_i,j }=[0,R] \times [-R,0]$ and $L\partial \Lambda \cap E_{\text{edge}}=[0,R]\times\{0\}$. Let $x_0\coloneqq (R/2,-R/2) \in \mathbb R^2$. Due to one of the defining properties of edge sets, we have
\begin{align}
D_{R(m+1)}(x_0) \cap( L\Lambda)^\complement=D_{R(m+1)}(x_0) \cap \left( \mathbb R \times \mathbb R^+ \right)\,.
\end{align}
Thus, using the translational and rotational invariance of $\mathcal J_m$, as well as applying \autoref{cutoff box 2}, we can conclude
\begin{align}
\mathcal J_m \left(E_{\gamma_i,j } ,L \Lambda^\complement;K \right) = \mathcal J_m\left( [0,R) \times (-R,0) , \mathbb R\times \mathbb R^+;K \right) + \mathcal O(\exp(-\beta R) ) \,,
\end{align}
where the error term is uniform in $i$ and $j$.

We will now use the additivity of $\mathcal J_m$ and that it vanishes on zero sets in combination with \autoref{Jm halfspace} to see that
\begin{align}
\mathcal J_m & \left( [0,R) \times (-R,0),\mathbb R\times \mathbb R^+ ;K\right) \\
 & = \mathcal J_m\left( [0,R) \times (-\infty,0) , \mathbb R\times \mathbb R^+;K \right) - \mathcal J_m\left( [0,R) \times (-\infty,-R) , \mathbb R\times \mathbb R^+;K \right) \\
& = R \mathcal J_m\left( [0,1) \times (-\infty,0) , \mathbb R\times \mathbb R^+;K \right) + \mathcal O(R \exp(-\beta R))\,.
\end{align}
Again, as $R\ge C \ln(L)^2$, we the error term decays at least as $\mathcal O(L^{-1})$ as $L \to \infty$.

We recall that $V$ is the number of corners and $N = \sum_{i=1}^V (2N_{\gamma_i} +1)$ is the number of edge sets. Combining everything we have just shown, we arrive at 
\begin{align}
&\mathcal J_m(L \Lambda, L \Lambda^\complement;K)\\
&= \mathcal J_m(E_{\text{int}}, L \Lambda^\complement;K) + \sum_{i=1}^V  \Big( \mathcal J_m(E_{x_i}, L \Lambda^\complement;K) +  \sum_{j=-N_{\gamma_i}}^{N_{\gamma_i}} \mathcal J_m(E_{\gamma_i,j}, L \Lambda^\complement;K) \Big)  \\
&= \mathcal O(1) +  V \mathcal O(R \ln(R)) + N \left( R  \mathcal J_m\left( [0,1) \times (-\infty,0) , \mathbb R\times \mathbb R^+;K \right) + \mathcal O(L^{-1}) \right) \\
&= NR \mathcal J_m\left( [0,1) \times (-\infty,0) , \mathbb R\times \mathbb R^+ ;K\right) + \mathcal O( V  R \ln(R) +1)  \\
&= (L \lvert \partial \Lambda \rvert + \mathcal O(V R) ) \mathcal J_m\left( [0,1) \times (-\infty,0) , \mathbb R\times \mathbb R^+;K \right) + \mathcal O( V R \ln(R) ) \\
&= L \lvert \partial \Lambda \rvert \mathcal J_m\left( [0,1) \times (-\infty,0) , \mathbb R\times \mathbb R^+;K \right) + \mathcal O( VR( \ln(K)+ \ln(R) ) \,.
\end{align}

We used \autoref{Jm halfspace} to deal with the error term stemming from the expansion of $N$, which comes from \eqref{edge set number}.

The main term already agrees with the claim, but the error term still contains $R$. We now use that $R\ge K$ and $R \approx \max(K, \ln(L)^2)$ to conclude
\begin{align}
V R (\ln(K) + \ln(R)) \le C R \ln(R) \le C (K\ln(K) + \ln(L)^2 \ln ( \ln(L)^2 ) \le C(K\ln(K) + \ln(L)^3)\, .
\end{align}
The $\ln(L)^3$ is obviously not optimal, but it is not a relevant error term for our application.
\end{proof}

\subsection{Proof of \autoref{Jm domain reduction cor} for $\mathsf{C}^2$-smooth domains $\Lambda$}
The proof is similar to the polygonal case. However, instead of estimating the contribution of corners, we need to deal with flattening the $\mathsf{C}^2$-smooth boundary curve. We use rather crude estimates for the error terms, which will lead to the assumption $K^2<L$. The error estimates for some contributions will actually be rather sharp, but we guess that these error terms cancel each other and thus we could allow for a weaker assumption, at best $K\le C L$.

Let us begin with some technical results concerning $\mathsf{C}^2$-smooth domains. They are mainly stated for the convenience of the reader and introduce the notation that is used later on.

\begin{lemma}[$\mathsf{C}^2$-smooth tubular neighbourhood theorem] \label{tnt}
Let $\Lambda$ be a $\mathsf{C}^2$-smooth domain. Let $S_1, \dots, S_r$ be the connected components of $\partial \Lambda$. Then, there are $\varepsilon>0, C_\Lambda< \infty$ and for each $i=1, \dots ,r $ a $\mathsf{C}^1$-smooth function $g_i \colon [0, \lvert S_i \rvert) \times (-\varepsilon, \varepsilon) \to \mathbb R^2$, such that for any $i=1, \dots , r$, any $t,t_0 \in [0, \lvert S_i \rvert)$ we have 
\begin{enumerate}
\item $Dg_i(t,0) \in O(2)$, the orthogonal group,
\item $g_i$ is injective, it and its inverse have an $\varepsilon$-local Lipschitz constant of at most $2$  and for $i \neq i'$ the images of $g_i$ and $g_{i'}$ do not intersect, 
\item $s=0$ if and only if $g_i(t,s) \in \partial \Lambda$,
\item $s<0$ if and only if $g_i(t,s) \in \Lambda$, \label{s<0 case tnt}
\item $\operatorname{dist}(g(t,s), \partial \Lambda)= \lvert s \rvert$ and the image of $g_i$ is the $\varepsilon$-neighbourhood $D_\varepsilon(S_i)$, 
\item $\lVert Dg_i(t_0,0)-Dg_i(t,s) \rVert \le C_\Lambda( \lvert s \rvert + \lvert t- t_0 \rvert)$, \label{Dg is almost Lipschitz}
\item $\lVert Dg_i(0,0)-Dg_i(\lvert S_r \rvert-t ,s ) \rVert \le C_\Lambda(\lvert s \rvert + \lvert t \rvert)$.
\end{enumerate}
\end{lemma}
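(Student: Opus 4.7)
The plan is the standard tubular-neighbourhood construction, carried out in two dimensions with the minimum regularity ($\mathsf{C}^2$) needed to read off the quantitative estimates (6)–(7). First I would parametrise each connected boundary component $S_i$ by arc length via a $\mathsf{C}^2$-map $\gamma_i \colon \mathbb R /|S_i|\mathbb Z \to \R^2$ with $|\gamma_i'(t)|=1$, and let $\nu_i(t)$ be the outward unit normal (the unique unit vector perpendicular to $\gamma_i'(t)$ such that $\gamma_i(t)+s\nu_i(t) \in \Lambda$ for small $s<0$). Since $\nu_i$ is obtained from $\gamma_i'$ by a rotation by $\pi/2$, it is again $\mathsf{C}^1$ with a uniform Lipschitz constant on the compact curve. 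I would then define
\begin{equation*}
g_i(t,s) \coloneqq \gamma_i(t) + s\,\nu_i(t)\,, \qquad (t,s)\in [0,|S_i|)\times (-\varepsilon,\varepsilon)\,,
\end{equation*}
with $\varepsilon>0$ to be chosen. Properties (1), (3), (4), (5) are then built in: $Dg_i(t,0) = \bigl(\gamma_i'(t)\mid \nu_i(t)\bigr) \in O(2)$; the level set $\{s=0\}$ is exactly $S_i$; the sign of $s$ matches the interior/exterior side of $\Lambda$ by our choice of $\nu_i$; and $\operatorname{dist}(g_i(t,s),\partial\Lambda)=|s|$ holds in a tube of radius smaller than the reach of $S_i$ since the segment $s\mapsto g_i(t,s)$ is a perpendicular to $\partial\Lambda$ at $\gamma_i(t)$.

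For the local injectivity and Lipschitz bounds of (2), I would argue as follows. Since $Dg_i$ is continuous on the compact cylinder $[0,|S_i|]\times[-\delta,\delta]$ and equals a rotation at $s=0$, for $\varepsilon$ small enough $\|Dg_i(t,s)-Dg_i(t,0)\|<1/4$ uniformly, so the quantitative inverse function theorem yields local injectivity with Lipschitz constants $\le 2$ for both $g_i$ and $g_i^{-1}$. Global injectivity of $g_i$ on $[0,|S_i|)\times(-\varepsilon,\varepsilon)$ and non-intersection of the images of distinct $g_i,g_{i'}$ follow from a compactness argument: if it failed for arbitrarily small $\varepsilon$, one could extract a limit pair of distinct boundary points $p\ne p'$ on $\partial\Lambda$ whose perpendiculars at length $\le\varepsilon$ meet, contradicting (i) the fact that the reach of a compact $\mathsf{C}^2$-curve in $\R^2$ is strictly positive, and (ii) that $\min_{i\ne i'}\operatorname{dist}(S_i,S_{i'})>0$ since $\partial\Lambda$ has finitely many components. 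Taking the minimum over finitely many components gives one uniform $\varepsilon>0$.

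Properties (6) and (7) then follow from differentiating $g_i$ explicitly,
\begin{equation*}
Dg_i(t,s) = \bigl(\gamma_i'(t)+s\,\nu_i'(t)\,\big|\,\nu_i(t)\bigr)\,,
\end{equation*}
and estimating column by column. Since $\gamma_i\in\mathsf{C}^2$ and $\nu_i\in\mathsf{C}^1$, both $\gamma_i'$ and $\nu_i$ are Lipschitz with a constant depending only on $\sup_i\|\gamma_i''\|_\infty$; therefore
\begin{equation*}
\|Dg_i(t_0,0)-Dg_i(t,s)\| \le \|\gamma_i'(t_0)-\gamma_i'(t)\|+|s|\,\|\nu_i'(t)\|+\|\nu_i(t_0)-\nu_i(t)\| \le C_\Lambda(|t-t_0|+|s|)\,,
\end{equation*}
which is (6). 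Property (7) is the same estimate read through the $|S_r|$-periodicity of $\gamma_r$ and $\nu_r$: the parameter $|S_r|-t$ is identified with $-t \pmod{|S_r|}$, so the Lipschitz bound on $\gamma_r'$ and $\nu_r$ transfers without change.

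The only genuinely non-routine point is the uniform choice of $\varepsilon$ in step (2): the inverse function theorem by itself only yields injectivity in small neighbourhoods of each $(t,0)$, whereas we need a single tube width that works along the whole (possibly long) boundary and that simultaneously separates the different components. This is handled by the reach-type compactness argument sketched above, using that a $\mathsf{C}^2$-curve in $\R^2$ has bounded curvature and hence a strictly positive reach, together with the finiteness of the set $\{S_1,\dots,S_r\}$.
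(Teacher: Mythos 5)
Your construction is essentially the paper's own proof: the same normal map $g_i(t,s)=\gamma_i(t)+s\,\mathcal R\gamma_i'(t)$ with the outward normal, a compactness argument to extract one uniform tube width $\varepsilon$ valid for all components simultaneously, and the same column-by-column estimate of $Dg_i$ (Lipschitz continuity of $\gamma_i'$ and $\nu_i$ from the bounded second derivative, plus periodicity) for properties (6)--(7). The only difference is cosmetic: where you cite the positive reach of a compact $\mathsf{C}^2$ curve to get global injectivity, the surjectivity onto $D_\varepsilon(S_i)$ and $\operatorname{dist}(g_i(t,s),\partial\Lambda)=\lvert s\rvert$, the paper proves these directly via local injectivity radii together with a closest-point/orthogonality argument, so the mathematical content coincides.
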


\begin{proof}
Each of the connected components $S_1, S_2 ,\dots, S_r$ is a closed loop in $\mathbb R^2$. For each such loop $S_i$, as $\partial \Lambda$ is $\mathsf{C}^2$-smooth, we can choose a parametrisation, that is, a periodic $\mathsf{C}^2$-smooth function  $ f_i \colon \mathbb R \to S_i$ which satisfies $\lVert f_i'(t) \rVert=1$ for all $t \in \mathbb R$. Thus, its period is $\lvert S_i \rvert$, the length of the loop and we can regard $f$ as an injective $\mathsf{C}^2$-smooth function on $T_i \coloneqq [0, \lvert S_i \rvert] / \{0, \lvert S_i \rvert\}$, the interval with identified endpoints. We proceed to define a function $g_i \colon T_i \times [-1,1] \to \mathbb R^2$ by setting
\begin{align}
g_i(t,s) \coloneqq f_i(t) + s \mathcal R f_i'(t)\,,
\end{align}
where $\mathcal R= \begin{pmatrix} 0 & 1 \\-1 & 0 \end{pmatrix} $ is the matrix associated to a $-\pi/2$ rotation. We know that $\mathcal R f_i'(t)$ has norm $1$, is continuous in $t$ and is always orthogonal to (the tangent line at) $S_i$. Thus, with respect to $\partial\Lambda$, it is either the inward normal vector for all $t \in T_i$ or the outward normal vector for all $t \in T_i$. We assume that it is the outward normal vector. We observe that  $g_i$ is $\mathsf{C}^1$-smooth and that
\begin{align}
Dg_i(t,0)= \left( f_i'(t) , \mathcal R f_i'(t) \right)
\end{align}
is an orthogonal matrix, as $\lVert f_i'(t) \rVert=1$ and thus the two column vectors form an orthonormal basis. In particular, $Dg_i(t,0)$ is always invertible. Thus, for each $t \in T_i$, there is an $\varepsilon_t>0$ such that $g_i$ is injective on $D_{\varepsilon_t}((t,0))\subset T_i \times [-1,1]$. This forms an open cover of the compact set $T_i \times \{0\}$, which means that there is a fixed $\varepsilon>0$, such that $g_i$ is injective on any disk of radius $7\varepsilon$. Assume $g_i(t,s)=g_i(t',s')$ for $(t,s),(t',s') \in T_i \times [-\varepsilon, \varepsilon]$. As $\lVert g_i(t,s)- f_i(t) \rVert \le \varepsilon$, we have $\lVert f_i(t)-f_i(t') \rVert \le 2 \varepsilon$. Since $f_i$ is injective, $\mathsf{C}^2$-smooth and $\lVert f_i'(t) \rVert=1$, for sufficiently small $\varepsilon>0$, we can conclude that $\operatorname{dist}_{T_i}(t,t') = \min\left( \lvert t-t' \rvert, \lvert S_i \rvert - \lvert t-t' \rvert \right)  \le 4 \varepsilon$. Thus, the distance between $(t,s)$ and $(t',s')$ is at most $6 \varepsilon$, which implies $(t,s)=(t',s')$, as $6<7$.

We have now proved that $g_i$ is injective on $T_i \times [-\varepsilon, \varepsilon]$ for sufficiently small $\varepsilon>0$. By choosing $\varepsilon_0>0$ even smaller, we can ensure that for any $\varepsilon<\varepsilon_0$ this holds for all $i=1, \dots , r$ simultaneously, that $g_i$ and $g_i^{-1}$ have $2$ as an $\varepsilon$-local Lipschitz constant\footnote{This is possible as $Dg_i(t,0) \in O(2)$ preserves the Euclidean  norm.}  and that $g_i(T_i \times (-\varepsilon, \varepsilon)) \cap g_{i'} (T_{i'} \times (-\varepsilon, \varepsilon))=\emptyset$ for $i \neq i'$. 

If $g_i(t,s) \in \partial \Lambda$, then, as there are $i', t'$ with $g_i(t,s)=g_{i'}(t',0)$, due to injectivity, we know that $(i,t,s)=(i',t',0)$ and thus $s=0$ if and only if $g_i(t,s) \in \partial \Lambda$. Because we choose $\mathcal R f_i'(t)$ to be the outward normal vector, we can conclude that for each fixed $t$, \eqref{s<0 case tnt} holds whenever $\lvert s \rvert$ is sufficiently small and thus, by continuity, it holds for all $\lvert s \rvert< \varepsilon$, as $g_i(t,s) \in \partial \Lambda$ implies $s=0$. 

For any  $x \in D_\varepsilon(\partial \Lambda)$, the closest boundary point $y \in \partial \Lambda$ has to be in some $S_i$ and thus $y=g_i(t,0)$. Furthermore, the line from $x$ to $y$ has to intersect $\partial \Lambda$ orthogonally at $y$. However, by definition, that is the line $s \mapsto g_i(t,s)$. Thus, as $\lvert s \rvert< \varepsilon$, the injectivity tells us that $g_i(t, \pm s)=x$, where the sign is negative if and only if $x \in \Lambda$. 

For the final two claims, we observe that
\begin{align}
\left \lVert Dg_i(t,s)- Dg_i(t_0,0) \right \rVert &= \left \lVert \left( f_i'(t)-f_i'(t_0)  + s \mathcal R f_i''(t) , \quad \mathcal R f_i'(t) - \mathcal R f_i'(t_0)   \right)  \right \rVert  \\
&\le 2 \lVert f_i'(t) - f_i'(t_0) \rVert + \lvert s \rvert \lVert f''(t) \rVert\,.
\end{align}
As $f''$ is uniformly bounded, due to the mean value theorem on one of the intervals $(t,t_0), (t_0,t)$ or $(t, \lvert S_r \rvert + t_0)$, the last one using periodicity of $f_i$, we can conclude the final two claims.
\end{proof}

\begin{lemma} \label{Jm domain dependence lem}
Let $\Lambda$ be a $\mathsf{C}^2$-smooth domain. Let $m,K \in \mathbb N$ and  $L\in \mathbb R^+$. Then, asymptotically as $L \to \infty$ and uniformly in $K$ as long as $K^2 =o(L)$, we have
\begin{align}
\mathcal J_m(L \Lambda, L \Lambda^\complement;K) = L \lvert \partial \Lambda \rvert \mathcal J_m \left( [0,1) \times \mathbb R^-, \mathbb R \times \mathbb R^+;K \right) + \mathcal O( K^2 + \ln(L)^4 )\,.
\end{align}  
\end{lemma}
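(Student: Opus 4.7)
The plan is to adapt the scheme of \autoref{reduction lem polygons}, replacing the corner handling by a curvature-based straightening estimate. Set $R \coloneqq \max\{10K,\, C\ln(L)^2\}$ for a sufficiently large constant $C$. Apply \autoref{tnt} (after rescaling by $L$) to parametrise each connected component of $\partial(L\Lambda)$; since $K^2 = o(L)$ we have $R/L \to 0$, hence $R \le \varepsilon L$ for large $L$ and the charts are valid. On each boundary component of arc-length $L|S_i|$ take $N_i = \lfloor L|S_i|/R\rfloor$ consecutive sub-arcs of length $\approx R$ and define the edge sets $E_j \coloneqq g_i\bigl((jR,(j+1)R)\times(-R,0)\bigr)$. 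Together with the essential interior $E_{\text{int}} \coloneqq L\Lambda\setminus\overline{D_R(\partial L\Lambda)}$ and a Lebesgue null set, these partition $L\Lambda$. The interior contribution is handled as in \eqref{Jm int estimate.1}--\eqref{Jm int estimate.end}, giving $\mathcal J_m(E_{\text{int}},L\Lambda^\complement;K) = \mathcal O(L^2 e^{-\beta R})$, which is super-polynomially small under our choice of $R$.

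The heart of the argument is a three-step substitution in the second argument of $\mathcal J_m$ for each edge set $E_j$. Let $x_j \in \partial(L\Lambda)$ be the boundary point closest to the centre of $E_j$. First, use \autoref{cutoff box 2} to restrict $L\Lambda^\complement$ to $L\Lambda^\complement \cap D_{(m+1)R}(x_j)$, with error $\mathcal O(e^{-\beta R})$. Second, let $H_j$ be the outward tangent half-plane to $L\Lambda$ at $x_j$: by item (\ref{Dg is almost Lipschitz}) of \autoref{tnt}, the unit tangent along $\partial(L\Lambda)$ has Lipschitz constant of order $1/L$, so integration yields a normal deviation of at most $\mathcal O(R^2/L)$ on the arc inside $D_{(m+1)R}(x_j)$, whence $\bigl|(L\Lambda^\complement\triangle H_j^\complement)\cap D_{(m+1)R}(x_j)\bigr| = \mathcal O(R^3/L)$. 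The Lipschitz-type estimate in \autoref{Jm properties} then gives a substitution error of $\mathcal O(R^3/L)$. Third, apply \autoref{cutoff box 2} in reverse to enlarge $H_j^\complement \cap D_{(m+1)R}(x_j)$ back to $H_j^\complement$, paying another $\mathcal O(e^{-\beta R})$. After a rigid motion sending $x_j$ to the origin and the tangent to the $x_1$-axis, the isometry invariance of $\mathcal J_m$ (\autoref{Jm properties}) combined with \autoref{Jm halfspace} gives
\begin{align*}
\mathcal J_m(E_j, L\Lambda^\complement; K) = R\,\mathcal J_m\bigl([0,1)\times\mathbb R^-,\, \mathbb R\times\mathbb R^+;\,K\bigr) + \mathcal O\bigl(R^3/L + R e^{-\beta R}\bigr).
\end{align*}

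Summing over the $N = \sum_i N_i$ edge sets and using $NR = L|\partial\Lambda| + \mathcal O(R)$ together with the a-priori bound $\mathcal J_m([0,1)\times\mathbb R^-,\mathbb R\times\mathbb R^+;K) = \mathcal O(\ln K)$ from \eqref{a priori stripe bound} yields the stated leading term with total error $\mathcal O\bigl(N R^3/L + R\ln K + L^2 e^{-\beta R}\bigr) = \mathcal O(R^2 + R\ln K)$. For $R = \max\{10K,\,C\ln(L)^2\}$ this evaluates to $\mathcal O(K^2 + \ln(L)^4)$, matching the claim. The principal technical obstacle is the second step: extracting the uniform $\mathcal O(R^2/L)$ normal-deviation bound on the entire arc inside $D_{(m+1)R}(x_j)$ from the near-Lipschitz control of the chart Jacobian $Dg_i$ provided by \autoref{tnt}, with an implicit constant depending only on $\Lambda$ and not on $L$ or the base point. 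Any weaker bound would force a hypothesis stricter than $K^2 = o(L)$, which matches the stronger condition $K \le C L/\ln(L)$ achieved in the polygonal case where this straightening step is trivial.
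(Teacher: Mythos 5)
Your scheme is the same as the paper's: tubular-neighbourhood charts from \autoref{tnt}, edge sets of side $R=\max\{10K,C\ln(L)^2\}$ plus an essential interior killed by the exponential estimate, localisation via \autoref{cutoff box 2}, a straightening estimate of size $\mathcal O(R^2/L)$ turned into a symmetric-difference bound $\mathcal O(R^3/L)$ per edge set through the Lipschitz-type property of \autoref{Jm properties}, and summation over $\sim L/R$ edge sets giving $\mathcal O(R^2)=\mathcal O(K^2+\ln(L)^4)$. The quantitative input you isolate as the ``principal obstacle'' is exactly what item~(\ref{Dg is almost Lipschitz}) of \autoref{tnt} delivers in the paper (there phrased as $\lVert \hat g_{ij}(x)-x\rVert\le C_\Lambda\delta^2(m+1)^2L$ with $\delta=R/L$), so that part is sound.

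There is, however, a step that does not go through as written: after substituting the tangent half-plane in the \emph{second} argument you invoke \autoref{Jm halfspace} directly for $\mathcal J_m(E_j, H_j^\complement;K)$, but $E_j=g_i(\cdot)$ is a curvilinear set, while \autoref{Jm halfspace} (both the exact scaling identity and the a-priori bound) only applies when the first argument is a rectangle $[0,a)\times(-c,-b)$ aligned with the half-plane. You must also replace $E_j$ by the flat square $[0,R)\times(-R,0)$, which costs another application of the Lipschitz-type estimate with $\lvert E_j\,\Delta\,([0,R)\times(-R,0))\rvert=\mathcal O(rR)=\mathcal O(R^3/L)$, obtained from the same normal-deviation bound; this is precisely how the paper proceeds, comparing the pair $(E_{ij},L\Lambda^\complement)$ to $(Q_{R_i},\mathbb R\times\mathbb R^+)$ in both slots at once via the near-identity chart $\hat g_{ij}$, so the fix is immediate and leaves your error budget unchanged. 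A second, smaller bookkeeping point: with $N_i=\lfloor L\lvert S_i\rvert/R\rfloor$ arcs of length exactly $R$ the edge sets do not tile the collar, so your claimed partition of $L\Lambda$ has a leftover piece of area $\mathcal O(R^2)$ per component; either bound it by \autoref{easy box estimate} (harmless, same order) or do as the paper does and choose $\delta_i\in[\delta,2\delta)$ so that the arcs tile exactly, which also removes your extra $\mathcal O(R\ln K)$ term coming from $NR=L\lvert\partial\Lambda\rvert+\mathcal O(R)$.
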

\begin{proof}  As mentioned, this proof is quite similar to the polygonal case, \autoref{reduction lem polygons}. Now, let $C_\Lambda, \varepsilon, (S_i,g_i)_{i=1}^r$ be given by \autoref{tnt}, $R\coloneqq \max(10K,\ln(L)^2), \delta \coloneqq R/L$  and  let $Q_R\coloneqq [0,R) \times (-R,0)$.\footnote{This square shows up a lot in this proof and we hope this notation improves readability.} 

The claimed error term is just $\mathcal O(R^2)$. As in the polygonal case, we can use \autoref{pf thm}, \autoref{Jm properties} and \autoref{Jm halfspace} to get the a-priori estimates
\begin{align}
\mathcal J_m(L \Lambda, L \Lambda^\complement;K) &\le CL \ln(L)\,, \\
L \lvert \partial \Lambda \rvert \, \mathcal J_m \left( [0,1) \times \mathbb R^-, \mathbb R \times \mathbb R^+ ;K\right)  &\le C L \ln(K)\,.
\end{align}
Thus, we can assume that $\delta < \varepsilon/(12(m+1))$ and $288C_\Lambda \delta (m+1)^2 <1/2$, as otherwise the error term is larger than both the main term and the actual result.

For $i=1, \dots , r$, we have $\delta <\varepsilon < \lvert S_i \rvert$ and we can thus choose $\delta_i \in [\delta, 2 \delta)$, such that $N_i \coloneqq \lvert S_i \rvert/\delta_i$ is an integer. For any $j \in \mathbb N$ with $0 <j \le N_i$, we define
\begin{align}
E_{ij} \coloneqq  L g_i\left( [\delta_i(j-1), \delta_i j) \times ( -\delta_i, 0 ) \right) \subset L \Lambda\,.
\end{align}
These correspond to the \emph{edge sets} in the polygonal case. Let $R_i \coloneqq L \delta_i$. Then, we have
\begin{align}
\sum_{i=1}^r N_i R_i =L \lvert \partial \Lambda \rvert\,. \label{sum of the Nis}
\end{align}
We now define our \emph{essential interior}. It is given by 
\begin{align}
E_{\text{int}} \coloneqq  &L \left( \Lambda \setminus \bigcup_{i=1}^r D_{\delta_i} (  S_i) \right)\subset L \Lambda \setminus D_R(L \partial \Lambda)\,,
\end{align}
since $D_R(L \partial \Lambda) \subset \bigcup_{i=1}^r L D_{\delta_i} (  S_i)$ as $\delta_i\ge \delta = R/L$. Its contribution to $\mathcal J_m$ can be estimated identically to the polygonal case, see \eqref{Jm int estimate.1}--\eqref{Jm int estimate.end}. This yields
\begin{align}
\mathcal J_m(E_{\text{int}}, L \Lambda^\complement;K) \le C\,. \label{Jm C2 int est eq}
\end{align}
As a simple consequence of the properties in \autoref{tnt}, we have $D_{R_i}(L S_i) \cap (L\Lambda)=L(D_{\delta_i}(S_i) \cap \Lambda ) = \bigcup_{i=1}^r \bigcup_{j=1}^{N_i} E_{ij}$ and thus
\begin{align}
L \Lambda = E_{\text{int}} \cup  \bigcup_{i=1}^r \bigcup_{j=1}^{N_i} E_{ij}\,,
\end{align}
being a disjoint union\footnote{In particular, we do not get any \emph{corner sets}, as $\mathsf{C}^2$-smooth domains do not have corners.}. 
Thus, as $\mathcal J_m(\blank, L \Lambda^\complement;K)$ is additive, we have
\begin{align}
\mathcal J_m(L \Lambda, L\Lambda^\complement;K) = \mathcal J_m(E_{\text{int}},  L\Lambda^\complement;K) + \sum_{i=1}^r \sum_{j=1}^{N_i} \mathcal J_m(E_{ij},  L\Lambda^\complement;K)\, . \label{Jm C2 domain split est eq}
\end{align}
We have already seen that $E_{\text{int}}$ can be absorbed into the error term. We will now show that 
\begin{align}
\mathcal J_m(E_{ij},L \Lambda^\complement;K)= R_i \mathcal J_m([0,1) \times \mathbb R^-, \mathbb R \times \mathbb R^+;K) + \mathcal O( \delta R^2)\,,
\end{align}
where the upper bound for the error is independent of $j$. In combination with \eqref{sum of the Nis}, this leads to the claim. Unlike in the polygonal case, the boundary curve is not straight. However, due to the assumption $K^2 \ll L$, we can approximate it sufficiently well by a smooth curve. Due to additivity and \autoref{Jm halfspace}, we know that
\begin{align}
\mathcal J_m(Q_{R_i},\mathbb R \times \mathbb R^+;K) &= \mathcal J_m([0,R_i)\times\R^-,\mathbb R \times \mathbb R^+;K) - \mathcal J_m([0,R_i)\times(-\infty,R_i),\mathbb R \times \mathbb R^+;K)
\\
& = R_i \mathcal J_m([0,1) \times \mathbb R^- , \mathbb R \times \mathbb R^+;K ) + \mathcal O(1)\,.
\end{align}
We now fix $i,j$ and take care that the error term bounds only depend on $i,R,L$ and the constants $\varepsilon, C_\Lambda$ in \autoref{tnt}.

By choosing an appropriate affine-linear unitary transformation $\mathcal A_{ij}$, we may assume $g_i(\delta_i(j-1),0)=0$ and $Dg_i(\delta_i(j-1),0)= \text{Id}$ (the $2 \times 2$ identity matrix) without changing the constants $\varepsilon, C$ in \autoref{tnt}.  As $g_i$ has a Lipschitz constant of at most $2$, we know $E_{ij} = L g_i((\delta_i(j-1),\delta_i j),(-R_i,0)) \subset LD_{3\delta_i}(0) =D_{3R_i}(0)$. We recall that $3R_i(m+1)< 6 R(m+1)<\varepsilon L/2$. Let $(E,E')$ be one of the set of pairs $(E_{ij}, L \Lambda^\complement)$ and $ \left( Q_{R_i},  \mathbb R \times \mathbb R^+ \right)$. Thus $E\subset D_{3R_i}(0)$. We utilize \autoref{cutoff box 2} to obtain
\begin{align}
\mathcal J_m(E,E';K) = \mathcal J_m(E,E'\cap D_{3R_i(m+1)}(0);K) + \mathcal O(m^5 \exp(-\beta R_i)) \,.
\end{align}
Using the Lipschitz-type property in \autoref{Jm properties}, we see that 
\begin{align}
&\left \lvert \mathcal J_m(E_{ij}, L \Lambda^\complement;K)- \mathcal J_m \left( Q_{R_i},  \mathbb R \times \mathbb R^+;K \right) \right \rvert \\
&\le \mathcal O(m^5 \exp(-\beta R_i))  + \frac 1 {2\pi} \left \lvert E_{ij} \Delta Q_{R_i} \right \rvert+ \frac m{2 \pi} \left \lvert \left( L \Lambda^\complement \cap D_{3R_i(m+1)}(0)\right)  \Delta \left(  \mathbb R \times \mathbb R^+ \cap D_{3R_i(m+1)} (0) \right) \right \rvert.
\end{align}
Thus, we have reduced our claim to a purely geometric estimate\footnote{This estimate is probably rather rough and could be the reason we require $K^2\ll L$. It only checks how close the boundary curve of $L \Lambda$ and $L\Lambda^\complement$ approach the same line locally, without fully using the fact that it is the same boundary curve. We would not be surprised if the error terms in this step cancel out to some order, but we have not yet found a better way to estimate them.}.

We define $\hat g_{ij} \colon D_{6R_i(m+1)}(0) \to \mathbb R^2$ by 
\begin{align}
 \hat g_{ij}(x_1,x_2)\coloneqq L g_i(x_1/L+\delta(j-1) , x_2/L)\,.
 \end{align}
 As $D_{6R_i(m+1)}(0)\subset L D_\varepsilon(\Lambda)$, we know that $\hat g_{ij}^{-1}$ is well-defined on $D_{6R_i(m+1)}(0)$. As $g_i$ and $g_i^{-1}$ have Lipschitz constants of at most $2$, so do the rescaled functions $\hat g_{ij},\hat g_{ij}^{-1}$. As $g_{ij}(0)=0$, we conclude that $D_{3R_i(m+1)}(0) \subset \hat g_{ij}(D_{6R_i(m+1)}(0))$ and  $\hat g_{ij} (D_{3R_i(m+1)}(0) )\subset D_{6R_i(m+1)}(0)$.  
This allows us to write
 \begin{align}
 E_{ij}&=\hat g_{ij} \left(Q_{R_i} \right) , \\
 L \Lambda^\complement \cap D_{3R_i(m+1)}(0)&= \hat g_{ij} \left(  \mathbb R \times \mathbb R^+ \cap D_{6R_i(m+1)}(0)\right) \cap D_{3R_i(m+1)}(0)\,.
 \end{align}
As $Dg_i(\delta (j-1), 0) =\text{Id}$, we have  for any $x \in D_{6R_i(m+1)}(0)$, 
\begin{align}
\lVert D \hat g_{ij} (x_1,x_2) - \text{Id} \rVert &=\lVert Dg_i(x_1/L+\delta(j-1), x_2/L) - Dg_i(\delta (j-1), 0) \rVert \\
&\le C_\Lambda (\lvert x_1/L \rvert + \lvert x_2/L \rvert) \le 4C_\Lambda (3R_i(m+1))/L \le 24C_\Lambda \delta (m+1)\,.
\end{align}
Let $ h_{ij}(x) \coloneqq \hat g_{ij}(x)-x$. Then, by the mean value theorem we can conclude for any $x \in D_{6R_i(m+1)}(0)$,
\begin{align}
\lVert \hat g_{ij}(x) - x \rVert &=\lVert  h_{ij}(x) \rVert = \lVert  h_{ij}(x)- h_{ij}(0) \rVert \le (24C_\Lambda \delta (m+1)) (6R_i(m+1)) 
\\
&\le  288C_\Lambda \delta^2 (m+1)^2 L<R/2\,.
\end{align}
The last inequality is based on the assumption $288C_\Lambda \delta(m+1)^2<1/2$, which is stated earlier in the proof.

 Let $r \coloneqq  288C_\Lambda \delta^2 (m+1)^2 L$ 
and let $\mathcal E \subset \mathbb R^2$. We want to estimate $\left \lvert \left( \mathcal E \Delta \hat g_{ij}(\mathcal E) \right) \cap D_{3R_i(m+1)}(0) \right \rvert$ for $\mathcal E =  Q_{R_i} $ and $\mathcal E= \mathbb R \times \mathbb R^+$.

 Let $y \in  \mathcal E \setminus \hat g_{ij}(\mathcal E)$ with $y \in D_{3R_i(m+1)}(0)$. Then, there is an $x \in D_{6R_i(m+1)}(0) \cap \mathcal E^\complement$ with $\hat g_{ij}(x)=y$. As $\lVert x-y \rVert =\lVert x- \hat g_{ij}(x) \rVert \le r$, we see that $y \in D_r(\mathcal E^\complement)$. Thus, as $y \in \mathcal E$, we see $y \in D_r(\partial \mathcal E)$. On the other hand, if $y \in \hat g_{ij}(\mathcal E) \setminus \mathcal E$ with $y \in D_{3R_i(m+1)}(0)$, there is an $x \in D_{6R_i(m+1)}(0)\cap \mathcal E$ with $\hat g_{ij}(x)=y$ and thus $\lVert x-y \rVert \le r$, which implies $y \in D_r(\partial \mathcal E)$, again. Thus, we have shown that 
 \begin{align}
  \left( \mathcal E \Delta \hat g_{ij}(\mathcal E) \right) \cap D_{3R_i(m+1)}(0) \subset D_r(\partial \mathcal E) \cap D_{3R_i(m+1)}(0)\,,
  \end{align}
  which leads to
  \begin{align}
  \left \lvert   \left( \mathcal E \Delta \hat g_{ij}(\mathcal E) \right) \cap D_{3R_i(m+1)}(0) \right \rvert \le \left \lvert  D_r(\partial \mathcal E) \cap D_{3R_i(m+1)}(0)\right \rvert\,.
  \end{align}
Using this for the half space, we easily get
 \begin{align}
 &\left \lvert \left( L \Lambda^\complement \cap D_{3R_i(m+1)}(0)\right)  \Delta \left(  \mathbb R \times \mathbb R^+ \cap D_{3R_i(m+1)}(0) \right) \right \rvert \\
 &\le   \left \lvert  D_r(\mathbb R \times \{0\}) \cap D_{3R_i(m+1)}(0)\right \rvert \\
 &\le 4r (3R_i(m+1)) \le 4 \cdot 288C_\Lambda \delta^2 (m+1)^2 L \cdot 6R(m+1) < Cm^3 \delta R^2 \,.
 \end{align}
 In the last step we used that $\delta=R/L$.
 
 For the square, we use that $r<R/2$, which means that $r$ is less than half of the side length of the square. This allows us to estimate
 \begin{align}
 \left \lvert E_{ij} \Delta Q_{R_i} \right \rvert \le  \left \lvert D_r \left( \partial Q_{R_i} \right) \right \rvert 
= 8Rr- (4 -\pi)r^2  < 8Rr=8R \cdot 288C_\Lambda \delta^2 (m+1)^2 L < C m^2 \delta R^2\,.
 \end{align}
Thus, we have completed the proof of the asymptotic expansion
\begin{align}
\mathcal J_m(E_{ij},L \Lambda^\complement;K)=R_i \mathcal J_m([0,1)\times \mathbb R^-, \mathbb R\times \mathbb R^+;K) + \mathcal O ( \delta R^2)\,,
\end{align}
where the upper bound for the error term is independent of $i,j$. Thus, we can sum this expression over $i,j$ and recalling \eqref{Jm C2 domain split est eq}, \eqref{Jm C2 int est eq} and \eqref{sum of the Nis} to observe
\begin{align}
\mathcal J_m(L \Lambda, L \Lambda^\complement;K) &= \sum_{i=1}^r \sum_{j=1}^{N_i} \mathcal J_m(E_{ij} , L \Lambda^\complement;K)   +\mathcal J_m(E_{\text{int}},L \Lambda^\complement;K)\\
&= \left( \sum_{i=1}^r \sum_{j=1}^{N_i} R_i \mathcal J_m([0,1)\times \mathbb R^-, \mathbb R\times \mathbb R^+;K) + \mathcal O ( \delta R^2) \right) + \mathcal O(1) \\
&= \left( \sum_{i=1}^r  N_iR_i \right)  \mathcal J_m([0,1)\times \mathbb R^-, \mathbb R\times \mathbb R^+;K) + \mathcal O(R^2 \sum_{i=1}^r N_i \delta_i ) + \mathcal O(1)\label{7.69} \\
&= L \lvert \partial \Lambda \rvert  \mathcal J_m([0,1)\times \mathbb R^-, \mathbb R\times \mathbb R^+;K)+ \mathcal O(R^2 L\lvert \partial \Lambda \rvert /L ) +\mathcal O(1) \\
&= L \lvert \partial \Lambda \rvert  \mathcal J_m([0,1)\times \mathbb R^-, \mathbb R\times \mathbb R^+;K)+\mathcal O(R^2)\,.
\end{align}
In \eqref{7.69}, we used $\delta< \delta_i$ in the error term to apply \eqref{sum of the Nis}. As $R^2 \le C (K^2+ \ln(L)^4)$, this finishes the proof of this lemma.
\end{proof}

Altogether, \autoref{reduction lem polygons} and \autoref{Jm domain dependence lem} finally prove \autoref{Jm domain reduction cor}.

\section{\texorpdfstring{$K \ll L$}{K ^^e2^^89^^aa L}: Proof of \autoref{Jm=I thm}} \label{dependence on K section}


In this section we will prove \autoref{Jm=I thm}. Let us recall that it states

\JMtoIthm*

In accordance with \cite[(2.11)]{Leschke2021}, we write 
\begin{align}
\mathcal K_K \coloneqq \sum_{\ell=0}^{K-1} |\psi_\ell \rangle \langle \psi_\ell |\,,
\end{align}
where $\psi_\ell$ are the Hermite functions (as in \cite[(2.9)]{Leschke2021}) given by
\begin{align}
\psi_\ell(s) \coloneqq \left(\sqrt \pi  2^\ell \ell ! \right) ^{-1/2}  H_\ell(s) \exp(-s^2/2) \,,\quad s\in\R\,.
\end{align}
For any polynomial $f$ with $f(0)=f(1)=0$, we define
\begin{align}\label{def M_K}
\mathsf{M}_{<K}(f) \coloneqq \frac{1}{2\pi}\int_{\mathbb R } \mathrm d p\, \operatorname{tr} f(1_{>p} \mathcal K_K 1_{>p}) \,,
\end{align}
where $1_{>p}\coloneqq 1_{(p,\infty)}$. This $\mathsf{M}_{<K}(f)$ agrees with $\mathsf{M}_{\le K-1}(f)$ defined in \cite[(2.12)]{Leschke2021}.

While there are asymptotic formulas for the Hermite polynomials directly, we did not find the exact statement we needed. However, as we already go into detail on the asymptotics for the Laguerre polynomials, it is convenient to reduce the Hermite polynomials to Laguerre polynomials and only go deep into the asymptotic expansion of one of these polynomials.

\begin{lemma}  \label{MK=Jm}
The asymptotic scaling coefficient for fixed $K$ agrees with the one shown in \cite{Leschke2021}. That is,
\begin{align}
\mathcal J_m([0,1) \times \mathbb R^-, \mathbb R \times \mathbb R^+;K )=  \frac 1 {\sqrt{K}}  \mathsf{M}_{<K} (t \mapsto t(1-t)^m )\,.
\end{align}
\end{lemma}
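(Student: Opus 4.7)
The plan is to exploit the translation invariance of $H_B$ in one coordinate direction (in a suitable gauge) in order to realise $P_K$ as a direct integral whose fibres are precisely the one-dimensional projections $\mathcal K_K$ onto spans of Hermite functions, and then to match the trace $\mathcal J_m$ with $\mathsf M_{<K}(f_m)$, with the factor $1/\sqrt K$ emerging from an explicit $x_2$-rescaling.

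First I would pass to the Landau gauge $a_{\mathrm L}(x)\coloneqq (x_2/K,0)$, which differs from the symmetric gauge by $a_{\mathrm{sym}}-a_{\mathrm L}=\nabla\chi$ with $\chi(x)\coloneqq -x_1x_2/(2K)$. Conjugation by the unitary multiplication operator $\mathrm e^{\mathrm i\chi}$ preserves the trace and leaves the multiplication operators $1_{[0,1)\times\R^-}$ and $1_{\R\times\R^+}$ untouched, while converting $H_B$ into $H_B^{\mathrm L}\coloneqq (-\mathrm i\partial_{x_1}-x_2/K)^2+(-\mathrm i\partial_{x_2})^2$, which is translation invariant in $x_1$. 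A unitary Fourier transform $\mathcal F_1$ in $x_1$ (with conjugate variable $p$) therefore decomposes $H_B^{\mathrm L}$ as a direct integral $\int_\R^\oplus H_B^{\mathrm L}(p)\,\mathrm dp$ with fibres $H_B^{\mathrm L}(p)=(p-x_2/K)^2+(-\mathrm i\partial_{x_2})^2$. The fibrewise unitary $(U_p\phi)(s)\coloneqq K^{1/4}\phi(s\sqrt K+pK)$ rescales $H_B^{\mathrm L}(p)$ into $K^{-1}(-\partial_s^2+s^2)$, whose projection onto eigenvalues $\le 2$ is exactly $\mathcal K_K$; at the same time $U_p 1_{x_2>0}U_p^{-1}$ and $U_p 1_{x_2<0}U_p^{-1}$ become multiplication by $1_{s>-p\sqrt K}$ and $1_{s<-p\sqrt K}$, respectively.

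Second, since $A\coloneqq 1_{x_2<0}(P_K 1_{x_2>0})^m P_K 1_{x_2<0}$ is translation invariant in $x_1$ and non-negative, the direct-integral trace formula (a short Fourier-inversion computation applied to the diagonal values of the kernel $k_A(x_1-y_1,x_2,y_2)$ of $A$)
\begin{align*}
\operatorname{tr}\bigl(1_{[0,1)}(x_1)\,A\,1_{[0,1)}(x_1)\bigr)=\frac{1}{2\pi}\int_\R \operatorname{tr} A(p)\,\mathrm dp
\end{align*}
together with the change of variables $q=-p\sqrt K$ (so $\mathrm dp=\mathrm dq/\sqrt K$) yields
\begin{align*}
\mathcal J_m\bigl([0,1)\times\R^-,\R\times\R^+;K\bigr)=\frac{1}{2\pi\sqrt K}\int_\R \operatorname{tr}\bigl[1_{(-\infty,q)}(\mathcal K_K 1_{(q,\infty)})^m \mathcal K_K 1_{(-\infty,q)}\bigr]\mathrm dq.
\end{align*}

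Third, the one-dimensional version of the trace identity \eqref{tr(f).1} — applied with $\mathcal K_K$ in place of $P_K$ and $(-\infty,q)$ in place of $L\Lambda$ — rewrites the inner trace as $\operatorname{tr} f_m(1_{(-\infty,q)}\mathcal K_K 1_{(-\infty,q)})$ with $f_m(t)=t(1-t)^m$. The parity relation $\psi_\ell(-s)=(-1)^\ell \psi_\ell(s)$ makes the integral kernel of $\mathcal K_K$ invariant under the simultaneous reflection $(s,t)\mapsto(-s,-t)$, so that
\begin{align*}
\operatorname{tr} f_m(1_{(-\infty,q)}\mathcal K_K 1_{(-\infty,q)})=\operatorname{tr} f_m(1_{(-q,\infty)}\mathcal K_K 1_{(-q,\infty)})\,.
\end{align*}
The change of variables $q\mapsto -q$ in the outer integral then turns the right-hand side into $2\pi\,\mathsf M_{<K}(f_m)$ by \eqref{def M_K}, which is the claim. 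The only genuine technical point is the direct-integral trace formula above; the remainder is bookkeeping of the gauge change, the harmonic-oscillator normalisation, and the rescaling $s=(x_2-pK)/\sqrt K$ that produces the overall factor $1/\sqrt K$.
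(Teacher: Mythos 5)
Your proposal is correct, but it takes a genuinely different route from the paper. The paper's proof is a pure coefficient comparison: it fixes $K$, takes $\Lambda$ the unit disk, and computes $\lim_{L\to\infty}\tfrac{1}{2\pi L}\mathcal J_m(L\Lambda,L\Lambda^\complement;K)$ in two ways — once via \autoref{Jm domain dependence lem}, which yields $\mathcal J_m([0,1)\times\R^-,\R\times\R^+;K)$, and once via Theorem 2 of \cite{Leschke2021}, which yields $\tfrac{1}{\sqrt K}\mathsf M_{<K}(f_m)$ — and equates the two limits; a direct transformation of the integral representations is only sketched, with reference to Lemmata 5 and 6 of \cite{Leschke2021}. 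You instead derive the identity from scratch: gauge change to the Landau gauge (your $\chi=-x_1x_2/(2K)$ is the correct gauge function, and the trace is unaffected since $\mathrm e^{\mathrm i\chi}$ commutes with the indicator functions), direct-integral decomposition in the $x_1$-momentum, fibrewise rescaling $x_2=s\sqrt K+pK$ turning the fibre Hamiltonian into $K^{-1}(-\partial_s^2+s^2)$ so that the fibre of $P_K$ below energy $2$ is exactly $\mathcal K_K$ (the eigenvalue count $\ell\le K-1$ matches, and $2$ is never an eigenvalue), and then the one-dimensional analogue of \eqref{tr(f).1} plus the parity/reflection argument and the substitution $q=-p\sqrt K$, which is precisely where the factor $1/\sqrt K$ and the definition \eqref{def M_K} emerge. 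What each approach buys: the paper's argument is two lines given its machinery, but it leans on the external result of \cite{Leschke2021} and on the lengthy domain-reduction lemma for smooth domains; yours is self-contained and structurally transparent about why the Hermite-projection coefficient $\mathsf M_{<K}$ and the $1/\sqrt K$ appear, essentially carrying out the paper's sketched alternative proof but via the fibre decomposition rather than Roccaforte-type kernel manipulations. The one point you should flesh out is the direct-integral trace formula $\operatorname{tr}(1_{[0,1)}(x_1)A1_{[0,1)}(x_1))=\tfrac{1}{2\pi}\int_\R\operatorname{tr}A(p)\,\mathrm dp$: since all fibre operators are positive and of rank at most $K$, this follows by Tonelli (or by computing Hilbert--Schmidt norms fibrewise for $1_E(P_K1_{E'})^{\lceil m/2\rceil}$), with the finiteness needed to close the argument supplied by the a priori bound \eqref{a priori stripe bound} of \autoref{Jm halfspace}.
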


\begin{remark} 
We will present a full proof of this statement, which is based on a simple comparison of coefficients and a sketch of an alternative proof. 
\end{remark}

\begin{proof}
Let $f_m(t) \coloneqq t (1-t)^m  $ for any $t \in [0,1]$. Consider the domain $\Lambda_0=D_1(0)$, the unit disk. This is a $\mathsf{C}^\infty$-smooth domain and thus, we can apply both \autoref{Jm domain dependence lem} as well as \cite[Theorem 2]{Leschke2021} to it. We now fix $K \in \mathbb N$ and consider
\begin{align}
\lim_{L \to \infty} \frac 1 {2 \pi L}  \mathcal J_m(L \Lambda, L \Lambda ^\complement;K)\,.
\end{align}
On the one hand, according to \autoref{Jm domain dependence lem}, we observe
\begin{align}
\lim_{L \to \infty} \frac 1 {2 \pi L}  \mathcal J_m(L \Lambda, L \Lambda ^\complement;K)= \mathcal J_m([0,1)\times \mathbb R^-, \mathbb R \times \mathbb R^+;K)\,. 
\end{align}
On the other hand, according to \cite[Theorem 2]{Leschke2021}, we have
\begin{align}
\lim_{L \to \infty} \frac 1 {2 \pi L}  \mathcal J_m(L \Lambda, L \Lambda ^\complement;K) = \lim_{L \to \infty} \frac 1 {2 \pi L} \operatorname{tr} f_m(1_{L\Lambda} P_K 1_{L \Lambda}) = \frac{1}{\sqrt K}\, \mathsf{M}_{<K}(f_m)\,,
\end{align}
where we used that $K=1/B$. This completes the coefficient comparison proof.
\end{proof}
\begin{proof}[Sketch of an alternative proof]
One can also directly transform the two integral representations into one another. Most of this work has been executed in the proof of Lemmata 5 \& 6 in \cite{Leschke2021}. However, they use Roocaforte's  approximation and get an error term, that they bound using the exponential decay of $P_K$ for fixed $K$, while  our error term needs to be bounded in $K$. 

Let $E \coloneqq [0,1) \times \mathbb R^-$ and $E' \coloneqq \mathbb R \times \mathbb R^+$. According to Mercer's theorem, we get
\begin{align}
\mathcal J_m&\left( [0,1) \times \mathbb R^-, \mathbb R \times \mathbb R^+;K \right) \,=\, \operatorname{tr} 1_E (P_K 1_{E'})^m P_K 1_E \\
&= \int_E \mathrm dx \int_{E'} \mathrm d x_1 \int_{E'} \mathrm d x_2\cdots \int_{E'} \mathrm d x_m \, P_K(x,x_1) P_K(x_1,x_2) \cdots P_K(x_m,x) \,.
\end{align}

This integral now looks very similar to the one studied in \cite[Proof of Lemma 5]{Leschke2021}. The key advantage is that due to our choice of $E,E'$, the Roccaforte approximation is exact, which means basically their (3.7) does not carry an error term. However, as we are studying a different base polynomial ($f_m$ instead of $t\mapsto t^m$), it looks slightly different. The main idea is that the intersection of the offset sets depends entirely linearly  on the maximum offset in $e_2$ direction. 

They have performed all the remaining integral transformations to carry out the remaining parts of the proof.
\end{proof}

The symmetry of the coefficient $\mathsf{M}_{<K}(f)$ in the next lemma is of independent interest. It is inspired by the same and obvious symmetry relation of the functional $\mathsf{I}$ of \eqref{def I(f)}. But is is also useful from a technical point of view as it simplifies the construction of suitable intervals as in \autoref{def: intervals}.

\begin{lemma} \label{MK symmetry}
Let $K \in \mathbb N$ and let $f$ be a function with $f(0)=f(1)=0$ and $\lvert f(t) \rvert \le Ct^\alpha (1-t)^\alpha$ for any $t \in [0,1]$ and some $\alpha>0$. We define the function $g$ by $g(t)\coloneqq f(1-t)$ for any $t \in [0,1]$. Then, the following equalities hold,
\begin{align}
\mathsf{M}_{<K}(f)=\mathsf{M}_{<K}(g) = \frac 1 {2\pi} \int_0^\infty \mathrm d p\,\big[ \operatorname{tr} f( 1_{>p} \mathcal K_K 1_{>p} ) + \operatorname{tr} g(  1_{>p} \mathcal K_K 1_{>p})\big]  \,.
\end{align}
\end{lemma}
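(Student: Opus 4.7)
The plan rests on two ingredients. The first is a parity symmetry of $\mathcal K_K$: since the Hermite functions satisfy $\psi_\ell(-s)=(-1)^\ell\psi_\ell(s)$, each of the rank-one projections in $\mathcal K_K$ commutes with the unitary reflection $R\colon\varphi(s)\mapsto\varphi(-s)$, so $R\mathcal K_K R=\mathcal K_K$. Combined with $R\,1_{>p}\,R=1_{<-p}$, conjugating by $R$ gives, for any bounded Borel function $h\colon[0,1]\to\mathbb R$ with $h(0)=0$,
\begin{equation*}
\operatorname{tr} h(1_{>p}\mathcal K_K 1_{>p}) \;=\; \operatorname{tr} h(1_{<-p}\mathcal K_K 1_{<-p})\,.
\end{equation*}

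The second ingredient is a standard two-projection fact: for orthogonal projections $P$ and $Q$ with $Q$ of finite rank, the non-trivial (i.e.\ lying in $(0,1)$) eigenvalues of $PQP$ and of $(1-P)Q(1-P)$ are in bijection via $\lambda\leftrightarrow 1-\lambda$. Indeed, polar decomposition shows that $PQP$ and $QPQ$ share their non-zero spectrum (and likewise $(1-P)Q(1-P)$ and $Q(1-P)Q$), while $QPQ+Q(1-P)Q=Q$ acts as the identity on $\operatorname{ran}Q$, so the two operators $QPQ$ and $Q(1-P)Q$ restricted there have eigenvalues pairing as $\lambda$ and $1-\lambda$. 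Since $f(0)=f(1)=0$, the boundary eigenvalues contribute nothing to the trace, and one obtains
\begin{equation*}
\operatorname{tr} f(1_{\le p}\mathcal K_K 1_{\le p}) \;=\; \operatorname{tr} g(1_{>p}\mathcal K_K 1_{>p})\,.
\end{equation*}

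Chaining the two identities (and noting that $1_{<-p}$ and $1_{\le -p}$ differ on a single point and hence give identical operators) yields the master identity
\begin{equation*}
\operatorname{tr} f(1_{>p}\mathcal K_K 1_{>p}) \;=\; \operatorname{tr} g(1_{>-p}\mathcal K_K 1_{>-p})\,.
\end{equation*}
To derive $\mathsf M_{<K}(f)=\mathsf M_{<K}(g)$, I insert this identity into the defining integral and apply the substitution $p\mapsto -p$. For the third expression in the statement, I split $\int_{\mathbb R}\mathrm dp=\int_{-\infty}^0\mathrm dp+\int_0^\infty\mathrm dp$, apply $p\mapsto -p$ to the first piece, and use the master identity to convert its integrand into $\operatorname{tr} g(1_{>p}\mathcal K_K 1_{>p})$.

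No step appears hard. The only delicate point is the accounting of eigenvalues at $0$ and $1$ in the two-projection identity; but because $\mathcal K_K$ has finite rank $K$, every operator encountered has at most $K$ non-zero eigenvalues, all traces are honest finite sums, and the decay hypothesis $|f(t)|\le Ct^\alpha(1-t)^\alpha$ enters only to guarantee absolute convergence of the defining $p$-integral, just as it does in the definition of $\mathsf M_{<K}(f)$ itself.
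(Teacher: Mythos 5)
Your proposal is correct and follows essentially the same route as the paper: reflection symmetry of $\mathcal K_K$ via the parity of the Hermite functions, plus the two-projection trick (nonzero spectrum of $PQP$ equals that of $QPQ$, and $Q(1-P)Q=Q-QPQ$ acts as $\mathds 1-QPQ$ on $\operatorname{ran}Q$, which is exactly your $\lambda\leftrightarrow 1-\lambda$ pairing), followed by the same splitting and substitution $p\mapsto-p$ in the defining integral. The only difference is organizational — you isolate a ``master identity'' before doing the integral bookkeeping, while the paper carries out the chain of trace identities directly under the integral — and your accounting of the eigenvalues $0$ and $1$ is sound since $f(0)=f(1)=g(0)=g(1)=0$.
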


Due to the symmetry of $M_{\le K-1}(f)$ in $f$, meaning $\mathsf{M}_{<K}(f)=\mathsf{M}_{<K}(t \mapsto f(1-t))$, for any polynomial $f$ with $f(0)=f(1)=0$, it suffices to consider $\mathsf{M}_{<K}(f)$ for symmetric polynomials (meaning $f(t)=f(1-t)$). Thus, we can restrict it to the case $f(t)=(t(1-t))^m$ for some $m \in \mathbb N$. Using the same idea as in \eqref{tr(f).1}, we can establish
\begin{align}
 2\pi \,\mathsf{M}_{<K}(t \mapsto [t(1-t)]^m ) = \int_{\mathbb R} \mathrm d p\,  \operatorname{tr} \left(1_{<p} \mathcal K_K 1_{>p} \mathcal K_K 1_{<p} \right)^m \,.
\end{align}
The advantage of the expression is that each occurrence of $\mathcal K_K$ is flanked by $1_{< \xi}$ and $1_{\ge \xi}$, which should help with estimating the Hilbert--Schmidt norms of error terms in the asymptotic expansion of the kernel $\mathcal K_K$ for large $K$.
\begin{proof}[Proof of \autoref{MK symmetry}]
Let us first point out that $g$ also satisfies $\lvert g(t) \rvert \le C t^\alpha (1-t)^\alpha$ for any $t \in [0,1]$. Thus, \cite[Lemma 3]{Leschke2021} tells us that $\mathsf{M}_{<K}(f)$ and $\mathsf{M}_{<K}(g)$ are well-defined.

We recall that by definition
\begin{align}
 2\pi\, \mathsf{M}_{<K}(f) = \int_{\mathbb R } \mathrm d p\, \operatorname{tr} f(1_{>p} \mathcal K_K 1_{>p}) \,.
\end{align}
For any $\ell \in \{0,1,\dots,K-1\}$ and any $x,y \in \mathbb R$, we have $\psi_\ell(-x)=(-1)^{\ell} \psi_\ell(-x)$ and thus 
\begin{align}
\mathcal K_K (-x,-y)=\sum_{\ell=0}^{K-1} (-1)^{2\ell} \psi_\ell(x) \psi_\ell(y) = \mathcal K_K(x,y)\, .
\end{align}
Thus, $\mathcal K_K$ commutes with the reflection operator $\mathcal R$ on $\Lp^2(\mathbb R)$, which is defined by $\mathcal R(\phi)(x)\coloneqq \phi(-x)$.
 
For any projections $P,Q$, the eigenvalues (including multiplicities, except for $0$) of the operators $PQP$ and $QPQ$ agree, as they are both given by the squares of the singular values of $PQ$ (or equivalently $QP=(PQ)^*$). Thus, for any function $f$ with $f(0)=0$, we have $\operatorname{tr}f(PQP)=\operatorname{tr}f(QPQ)$, if the traces exist. In our case $Q=\mathcal K_K$ is finite rank and thus $f(QPQ)$ and $f(PQP)$ are both finite rank and in particular trace class. As $f(0)=f(1)=0$, we can also show
\begin{align}
f(Q-QPQ)=f(\mathds{1}-QPQ)\,. \label{Q-QPQ.1}
\end{align}
As the operators $Q-QPQ$ and $\mathds{1}-QPQ$ both commute with $Q$, it suffices to prove this on the eigenspaces of $Q$. Restricted to the eigenspace of eigenvalue one (the image) of $Q$, both operators agree. Restricted to the kernel of $Q$, however, the first operator becomes $0$ while the second one becomes $\mathds{1}$. As $f(0)=f(1)=0$, the identity still holds.

Now, we get
\begin{align}
 2\pi\, \mathsf{M}_{<K}(f)=&\int_{\mathbb R} \mathrm d p \, \operatorname{tr} f(1_{>p} \mathcal K_K 1_{>p}) \\
=&\int_{0}^\infty \mathrm d p \, \operatorname{tr} f(1_{>p} \mathcal K_K 1_{>p}) + \int_{-\infty}^0 \mathrm d p\, \operatorname{tr} f( \mathcal K_K 1_{>p} \mathcal K_K) \\
=&\int_{0}^\infty \mathrm d p \,\operatorname{tr} f(1_{>p} \mathcal K_K 1_{>p}) + \int_0^{\infty} \mathrm d p\, \operatorname{tr} f( \mathcal K_K 1_{>-p} \mathcal K_K) \\
=&\int_{0}^\infty \mathrm d p \,\big[\operatorname{tr} f(1_{>p} \mathcal K_K 1_{>p}) + \operatorname{tr} f( \mathcal K_K 1_{<p} \mathcal K_K)\big] \\
=&\int_{0}^\infty \mathrm d p \,\big[\operatorname{tr} f(1_{>p} \mathcal K_K 1_{>p}) + \operatorname{tr} f(\mathcal K_K - \mathcal K_K 1_{>p} \mathcal K_K)\big] \\
=&\int_{0}^\infty \mathrm d p \,\big[\operatorname{tr} f(1_{>p} \mathcal K_K 1_{>p}) +\operatorname{tr} f(\mathds{1}- \mathcal K_K 1_{>p} \mathcal K_K)\big] \\
=&\int_{0}^\infty \mathrm d p \,\big[\operatorname{tr} f(1_{>p} \mathcal K_K 1_{>p}) +  \operatorname{tr} g( \mathcal K_K 1_{>p} \mathcal K_K)\big] \\
=&\int_{0}^\infty \mathrm d p \,\big[\operatorname{tr} f(1_{>p} \mathcal K_K 1_{>p}) + \operatorname{tr} g(1_{>p} \mathcal K_K 1_{>p})\big] = 2\pi \, \mathsf{M}_{<K}(g)\,.
\end{align}

In the second step, we perform a substitution $p \mapsto -p$. Then, we conjugate the expression inside the trace with $\mathcal R$ and use that $\mathcal R 1_{>-p}\mathcal R = 1_{<p}$. Afterwards, we insert $1_{<p}+1_{>p}=1$ followed by \eqref{Q-QPQ.1}, which brings us to the closure.
\end{proof}

The following theorem is proved in \autoref{main term is MI proof section}. It relies on the study of the asymptotic behaviour of $\mathcal K_K$, which is based on the Laguerre asymptotics we studied in \autoref{kernel asymp section}.

\begin{theorem} \label{main term is MI}  
There are families of intervals $(I_p,J_p)_{p \in \R^+}$, depending on $K$, such that
\begin{align}
\frac 1 {\pi \sqrt K}   \int_0^\infty \mathrm d p \, \operatorname{tr}  \left \lvert  1_{I_p} \mathcal K_K 1_{J_p} \right \rvert^{2m} = \frac{2\sqrt{2}}{\pi}\, \mathsf{I}(t\mapsto [t(1-t)]^m) \ln(K)+ \mathcal O(\ln\ln(K))\,,
\end{align}
with $\mathsf{I}(f)$ defined in \eqref{def I(f)}.
\end{theorem}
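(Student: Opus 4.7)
We propose to choose $(I_p,J_p)$ as a pair of adjacent bounded intervals meeting at the splitting point $p$, each contained in the classical bulk $(-\sqrt{2K-1},\sqrt{2K-1})$ of the Hermite spectrum and of length comparable to a power of $\sqrt K$; a concrete candidate is $I_p=[\max(p-c\sqrt K,-\sqrt{2K}+K^{1/3}),p]$ and $J_p=[p,\min(p+c\sqrt K,\sqrt{2K}-K^{1/3})]$ for some fixed $c>0$. This keeps the intervals inside the bulk, uniformly away from the Airy transition strip of width $\sim K^{1/3}$ around the classical turning points $\pm\sqrt{2K-1}$.

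The first technical ingredient is a sine-kernel asymptotic for $\mathcal K_K$ in the bulk. Via the classical identities $H_{2n}(x)=(-1)^n 2^{2n}n!\,L_n^{(-1/2)}(x^2)$ and $H_{2n+1}(x)=(-1)^n 2^{2n+1}n!\,x\,L_n^{(1/2)}(x^2)$, the Hermite kernel $\mathcal K_K$ reduces to generalised Laguerre polynomials of parameter $\alpha=\pm\tfrac12$. The cosine asymptotic from \autoref{?? for x<1/2} with these $\alpha$ then yields a WKB-type representation
\begin{align*}
\mathcal K_K(x,y) = \frac{1}{\pi}\,\frac{\sin\bigl(\phi_K(x)-\phi_K(y)\bigr)}{x-y} + \mathcal E_K(x,y),
\end{align*}
valid for $x,y$ in the bulk, where $\phi_K$ is a smooth phase with $\phi_K'(x)=\sqrt{(2K-1)-x^2}$ (the local Fermi wavenumber) and $\mathcal E_K$ has controlled Hilbert--Schmidt norm on $I_p\times J_p$.

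For fixed $p$ in the bulk we then change variables $u=\phi_K(x),\,v=\phi_K(y)$. Up to a smooth Jacobian and an oscillatory gauge that cancels in $|\,\cdot\,|^{2m}$, the leading kernel is mapped to the standard sine kernel $\tilde S(u,v)=\sin(u-v)/[\pi(u-v)]$, with $I_p,J_p$ mapped to adjacent intervals $\tilde I_p,\tilde J_p$ of length of order $k_F(p)\sqrt K$ sharing the common endpoint $\phi_K(p)$. We then invoke the one-dimensional Landau--Widom logarithmically enhanced area-law for the sine kernel (as used by Pfirsch--Sobolev~\cite{Pfirsch2018} and cited in the introduction) at this single internal boundary, which gives
\begin{align*}
\mathrm{tr}\bigl|1_{\tilde I_p}\,\tilde S\,1_{\tilde J_p}\bigr|^{2m} = 2\,\mathsf{I}\bigl(t\mapsto[t(1-t)]^m\bigr)\,\ln K + \mathcal O(1),
\end{align*}
where the factor $2$ reflects Widom's geometric coefficient $J(\partial\Gamma,\{p\})$ for the 1D Fermi sea $\Gamma=[-1,1]$ against a single boundary point.

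Integrating this local asymptotic over $p\in(0,\sqrt{2K-1}-K^{1/3})$ and dividing by $\pi\sqrt K$ yields the claimed main term $\tfrac{2\sqrt 2}{\pi}\,\mathsf{I}(f_m)\,\ln K$ from the bulk length $\sqrt{2K}$. Contributions from the Airy transition strip $|p-\sqrt{2K-1}|\le K^{1/3}$ are bounded via \autoref{Laguerre to Airy}, and the exponential tail $|p|\ge\sqrt{2K-1}+K^{1/3}$ is handled by the second part of \autoref{GK for t>1/2 K}. The main obstacle is twofold: first, making the sine-kernel approximation uniform in $p,x,y$ throughout the bulk and controlling the propagation of the error $\mathcal E_K$ through the $2m$-fold trace; second, upgrading the naive $\mathcal O(1)$ remainder to the sharper $\mathcal O(\ln\ln K)$. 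The latter should emerge from a dyadic decomposition of the $p$-integration near the bulk edge, where the local Landau--Widom logarithm degenerates as $k_F(p)\to 0$, and from the resulting double-logarithmic accumulation of boundary-layer contributions.
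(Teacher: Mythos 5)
Your route is, in essence, the paper's route: reduce the Hermite functions to Laguerre polynomials with $\alpha=\pm\tfrac12$ (\autoref{psi n asymp}), pass to the WKB phase $\phi_K$ with $\phi_K'(x)=\sqrt{2K-1-x^2}$ — this is exactly the change of variables by $\eta_{2K-1}$ implemented unitarily in \autoref{KK unitary equivalence} — approximate the transformed kernel by the sine kernel on a global scale (\autoref{hat K kernel est}), apply a Landau--Widom computation at the single interior boundary point for each $p$, and integrate over the bulk. The gaps are precisely at the points you flag as ``obstacles'', plus two assertions that do not hold as stated. First, with adjacent intervals $I_p,J_p$ touching at $p$, the claim that the error kernel $\mathcal E_K$ has controlled Hilbert--Schmidt norm on $I_p\times J_p$ is unsupported: the error bound obtainable without derivative estimates on the Hermite functions is $\mathcal O(\varepsilon/\lvert s-t\rvert)$ and only for separations $\lvert s-t\rvert\ge 1/2$ (see the remark after \autoref{hat K kernel est}), and a bound of that shape is not square-integrable across the common endpoint. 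The paper avoids this by inserting a gap of length $1/2$ between $\hat I_p$ and $\hat J_p$ (\autoref{def: intervals}), which costs only $\mathcal O(1)$ per $p$. Moreover, the bound $\mathcal O(\varepsilon/\lvert s-t\rvert)$ is only established for $\lvert s-t\rvert\le K\varepsilon^6$, because the phase and Jacobian comparisons degrade over longer ranges; your intervals of length $c\sqrt K$ (length of order $K$ after rescaling) exceed this, so they must be shortened to length $\sim K\varepsilon^6$ with $\varepsilon=1/\ln K$.

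Second, the single-boundary-point Landau--Widom asymptotics you invoke for $\operatorname{tr}\lvert 1_{\tilde I_p}\tilde S 1_{\tilde J_p}\rvert^{2m}$ is not directly quotable; it has to be derived, as in the paper, from the trace-norm lemmas of \cite{LandauWidom} together with \cite{Widom1982}, by showing that the one-sided configuration equals one half of the two-endpoint quantity $\operatorname{tr}\big(1_{(0,\lambda)}T1_{(0,\lambda)^\complement}T1_{(0,\lambda)}\big)^m$ up to $\mathcal O(1)$, the latter being $4\ln(\lambda)\,\mathsf I(t\mapsto[t(1-t)]^m)+\mathcal O(1)$. Finally, the $\mathcal O(\ln\ln K)$ does not come from a dyadic analysis of the degenerating local logarithm near the spectral edge: in the working argument the edge region is simply excised ($p\le(1-2\varepsilon)\sqrt{2K-1}$) at total cost $\mathcal O(1)$, and the double logarithm arises because the admissible interval length is $\lambda\sim K/\ln(K)^6$, so that $\ln\lambda=\ln K-6\ln\ln K+\mathcal O(1)$ per value of $p$. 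With these repairs (a gap of $1/2$ at the splitting point, intervals of length $K\varepsilon^6$, and the explicit reduction to Landau--Widom), your outline becomes the paper's proof.
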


This theorem essentially tells us that the restriction to some intervals $(I_p,J_p)$ already contains the expected main term. To deal with the error term resulting from this restriction, we will first show that this error term for arbitrary $m \in \mathbb N$ can be bounded by the one for $m=1$. This is achieved by the next lemma. The case $m=1$ corresponds to the function $f(t)=t(1-t)$, which we already studied in \autoref{pf thm}. This allows us to conclude that the error term for $m=1$ is small, which will finish this section.

\begin{lemma} \label{reduction to EK}
Let us define for $K\in\N$,
\begin{align}
\mathcal E(K) \coloneqq \frac 1 {\pi \sqrt K}   \int_0^\infty \mathrm d p\, \operatorname{tr} \left(  \left \lvert 1_{<p} \mathcal K_K 1_{>p} \right \rvert^2 -  \left \lvert  1_{I_p} \mathcal K_K 1_{J_p} \right \rvert^{2}\right)  \, .
\end{align}
Then,  $\mathcal E(K)>0$ and for any $m \ge 2$, we have
\begin{align}
\frac 1 {\sqrt{K}} \mathsf{M}_{<K}(t \mapsto [t(1-t)]^m )= \frac 1 {\pi \sqrt K}   \int_0^\infty \mathrm d p\, \operatorname{tr}  \left \lvert  1_{I_p} \mathcal K_K 1_{J_p} \right \rvert^{2m} +  \mathcal O(m \mathcal E(K)) \, ,
\end{align}
while for $m=1$ we have
\begin{align}
\frac 1 {\sqrt{K}}  \mathsf{M}_{<K}(t \mapsto t(1-t) )= \frac 1 {\pi \sqrt K}   \int_0^\infty \mathrm d p\, \operatorname{tr}  \left \lvert  1_{I_p} \mathcal K_K 1_{J_p} \right \rvert^{2m}  + \mathcal E(K) \, .
\end{align}
\end{lemma}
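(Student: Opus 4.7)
The plan is to first pass from the definition of $\mathsf{M}_{<K}$ to an expression involving $|1_{<p}\mathcal K_K 1_{>p}|^2$, then to exploit the fact that the target operator $T_0 = 1_{I_p}\mathcal K_K 1_{J_p}$ is a two-sided compression of $T = 1_{<p}\mathcal K_K 1_{>p}$. Since $f(t) = [t(1-t)]^m$ is symmetric under $t\mapsto 1-t$, \autoref{MK symmetry} applied with $g=f$ yields
\begin{equation*}
\pi\,\mathsf{M}_{<K}\big(t \mapsto [t(1-t)]^m\big)
= \int_0^\infty \operatorname{tr} f\big(1_{>p}\mathcal K_K 1_{>p}\big)\,\mathrm dp\,.
\end{equation*}
Writing $A = 1_{>p}\mathcal K_K 1_{>p}$ and using $\mathcal K_K^2=\mathcal K_K$ together with $1_{<p}+1_{>p}\equiv \mathds{1}$ a.e., one computes
\begin{equation*}
A(\mathds{1}-A) = 1_{>p}\mathcal K_K 1_{>p} - 1_{>p}\mathcal K_K 1_{>p}\mathcal K_K 1_{>p}
= 1_{>p}\mathcal K_K 1_{<p}\mathcal K_K 1_{>p} = T^{*}T\,,
\end{equation*}
so that $f(A)=[A(\mathds{1}-A)]^m = (T^{*}T)^m$ and consequently $\operatorname{tr} f(A)=\operatorname{tr}|T|^{2m}$. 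Hence
\begin{equation*}
\frac{1}{\sqrt K}\,\mathsf{M}_{<K}\big(t\mapsto [t(1-t)]^m\big)
= \frac{1}{\pi\sqrt K}\int_0^\infty \operatorname{tr}|1_{<p}\mathcal K_K 1_{>p}|^{2m}\,\mathrm dp\,.
\end{equation*}

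Next, I would observe that by the construction of the intervals in \autoref{def: intervals}, one has $I_p\subset(-\infty,p)$ and $J_p\subset(p,\infty)$, so that $T_0 \coloneqq 1_{I_p}\mathcal K_K 1_{J_p} = 1_{I_p}\,T\,1_{J_p}$ is a two-sided compression of $T$ by orthogonal projections. By the min-max principle the singular values then satisfy $\sigma_k(T_0)\le \sigma_k(T)$ for every $k$. In particular, $\sigma_k(T),\sigma_k(T_0)\in[0,1]$ since $\mathcal K_K$ is an orthogonal projection, and summing over $k$ yields $\operatorname{tr}|T|^{2}\ge\operatorname{tr}|T_0|^{2}$, which after integrating in $p$ establishes $\mathcal E(K)\ge 0$ and settles the case $m=1$ directly from the definition of $\mathcal E(K)$.

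For $m\ge 2$ I would use the elementary identity $\alpha^m-\beta^m=(\alpha-\beta)\sum_{j=0}^{m-1}\alpha^{m-1-j}\beta^{j}$, which for $1\ge\alpha\ge\beta\ge 0$ gives $\alpha^m-\beta^m\le m(\alpha-\beta)$. Applied to $\alpha=\sigma_k(T)^2$ and $\beta=\sigma_k(T_0)^2$ and summed over $k$, this produces the key monotonicity bound
\begin{equation*}
\operatorname{tr}|T|^{2m}-\operatorname{tr}|T_0|^{2m}
\le m\big(\operatorname{tr}|T|^{2}-\operatorname{tr}|T_0|^{2}\big)\,.
\end{equation*}
Integrating over $p\in(0,\infty)$ and dividing by $\pi\sqrt K$ turns the right-hand side into $m\,\mathcal E(K)$, so combining with the exact identity for $\mathsf{M}_{<K}$ derived in the first paragraph yields
\begin{equation*}
\frac{1}{\sqrt K}\,\mathsf{M}_{<K}\big(t\mapsto[t(1-t)]^m\big)
= \frac{1}{\pi\sqrt K}\int_0^\infty \operatorname{tr}|1_{I_p}\mathcal K_K 1_{J_p}|^{2m}\,\mathrm dp + \mathcal O(m\,\mathcal E(K))\,,
\end{equation*}
which is the claim.

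The only non-trivial step is the algebraic identification $A(\mathds{1}-A)=T^{*}T$ that converts the symmetric functional $\mathsf{M}_{<K}$ into a Hilbert--Schmidt type quantity; everything else reduces to the standard singular-value monotonicity under compression and the elementary scalar inequality $\alpha^m-\beta^m\le m(\alpha-\beta)$. I do not expect any technical obstacle here, provided the intervals $I_p,J_p$ are defined as subsets of $(-\infty,p)$ and $(p,\infty)$ respectively, as indicated in \autoref{def: intervals}.
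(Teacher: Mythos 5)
Your proposal is correct, and its first half coincides with the paper's: both use \autoref{MK symmetry} with $g=f$ and the identity $A(\mathds{1}-A)=T^*T$ for $A=1_{>p}\mathcal K_K 1_{>p}$, $T=1_{<p}\mathcal K_K 1_{>p}$, to rewrite $\pi\,\mathsf{M}_{<K}(t\mapsto[t(1-t)]^m)=\int_0^\infty \operatorname{tr}|T|^{2m}\,\mathrm dp$. Where you differ is the key comparison of $\operatorname{tr}|T|^{2m}$ with $\operatorname{tr}|T_0|^{2m}$, $T_0=1_{I_p}\mathcal K_K 1_{J_p}$: the paper telescopes at the operator level, replacing first $1_{>p}$ by $1_{J_p}$ and then $1_{<p}$ by $1_{I_p}$, invoking the intermediate estimate of \autoref{tele sum lem1} and then using that $1_{>p}-1_{J_p}$ and $1_{<p}-1_{I_p}$ are nonnegative multiplication operators to turn the resulting trace norms into traces and recombine them into $m\,\operatorname{tr}(|T|^2-|T_0|^2)$; you instead observe that, because $I_p\subset(-\infty,p)$ and $J_p\subset(p,\infty)$ (which holds by \autoref{def: intervals}, since $\eta_{2K-1}$ is increasing), $T_0=1_{I_p}T1_{J_p}$ is a two-sided compression, so $\sigma_k(T_0)\le\sigma_k(T)\le 1$ for all $k$, and you then apply the scalar bound $\alpha^m-\beta^m\le m(\alpha-\beta)$ singular-value-wise. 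Both arguments hinge on exactly the same interval inclusions; yours is a bit more elementary (no intermediate operator $1_{<p}\mathcal K_K 1_{J_p}$, no trace-norm step) and yields the sign information $\operatorname{tr}|T|^{2m}\ge\operatorname{tr}|T_0|^{2m}$ for free, while the paper's version reuses a lemma already needed elsewhere and would still give a trace-norm bound even without the nesting (though then it could not be expressed through $\mathcal E(K)$). One small remark: like the paper's displayed chain, your argument establishes $\mathcal E(K)\ge 0$, which is all that is used later; the strict inequality $\mathcal E(K)>0$ in the statement requires only the additional trivial observation that, e.g., $\lVert 1_{(p,\infty)\setminus J_p}\mathcal K_K 1_{<p}\rVert_2^2>0$ for $p$ in a set of positive measure, since the kernel $\mathcal K_K$ is a finite sum of products of Hermite functions and cannot vanish on such rectangles.
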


\begin{proof}
Let $p \in \mathbb R$ and  $m \in \mathbb N$. With $\operatorname{tr} \lvert A \rvert^{2m} = \operatorname{tr} \lvert A^* \rvert^{2m}$, we observe
\begin{align}
&\left \lvert \operatorname{tr} \left \lvert 1_{<p} \mathcal K_K 1_{>p} \right \rvert^{2m} - \left \lvert 1_{I_p} \mathcal K_K 1_{J_p} \right \rvert^{2m} \right \rvert \\
 &\le  \left \lvert \operatorname{tr}  \left \lvert 1_{<p} \mathcal K_K 1_{>p} \right \rvert^{2m}  -  \left \lvert 1_{<p} \mathcal K_K 1_{J_p} \right \rvert^{2m} \right \rvert + \left \lvert \operatorname{tr}  \left \lvert 1_{J_p} \mathcal K_K 1_{<p} \right \rvert^{2m}  -  \left \lvert 1_{J_p} \mathcal K_K 1_{I_p} \right \rvert^{2m}  \right \rvert  \\
&\le  m \left( \left \lVert 1_{<p} \mathcal K_K \left( 1_{>p} - 1_{J_p} \right) \mathcal K_K 1_{<p} \right \rVert_1 + \left \lVert  1_{J_p} \mathcal K_K \left( 1_{<p} - 1_{I_p} \right) \mathcal K_K 1_{J_p} \right \rVert_1  \right) \\
&=m \left( \operatorname{tr} 1_{<p} \mathcal K_K \left( 1_{>p} - 1_{J_p} \right) \mathcal K_K 1_{<p} + \operatorname{tr}  1_{J_p} \mathcal K_K \left( 1_{<p} - 1_{I_p} \right) \mathcal K_K 1_{J_p} \right) \\
&= m \left( \operatorname{tr}\left \lvert 1_{<p} \mathcal K_K 1_{>p} \right \rvert^2 - \operatorname{tr}\left \lvert 1_{<p} \mathcal K_K 1_{J_p} \right \rvert^2 +  \operatorname{tr}\left \lvert 1_{J_p} \mathcal K_K 1_{<p} \right \rvert^2 -  \operatorname{tr}\left \lvert 1_{J_p} \mathcal K_K 1_{I_p} \right \rvert^2 \right)  \\
&= m \operatorname{tr} \left( \left \lvert 1_{<p} \mathcal K_K 1_{>p} \right \rvert^2  - \left \lvert 1_{I_p} \mathcal K_K 1_{J_p} \right \rvert^2 \right)  \, .
 \end{align}
 The second step uses the intermediate estimate in \autoref{tele sum lem1}. As $I_p \subset (-\infty,p)$ and $J_p \subset (p,\infty)$, the operators inside the trace norms are positive. Thus, their trace norms are just their trace. 

Due to \autoref{MK symmetry}, we can conclude
\begin{align}
&\frac 1 {\sqrt{K}} \mathsf{M}_{<K}(t \mapsto [t(1-t)]^m )\\
&= \frac 1 {\pi \sqrt K}    \int_0^\infty  \mathrm d p\, \Big[\operatorname{tr} \left \lvert 1_{I_p} \mathcal K_K 1_{J_p} \right \rvert^{2m}  +  \operatorname{tr} \left( \left \lvert 1_{<p} \mathcal K_K 1_{>p} \right \rvert^{2m}  -  \left \lvert 1_{I_p} \mathcal K_K 1_{J_p} \right \rvert^{2m}  \right) \Big] \label{reduction to EK proof eq1} \\
&=\frac 1 {\pi \sqrt K}   \int_0^\infty  \mathrm d p\, \operatorname{tr}  \left \lvert  1_{I_p} \mathcal K_K 1_{J_p} \right \rvert^{2m}  +  \mathcal O(m \mathcal E(K)) \, .
\end{align}
The case $m=1$ is just \eqref{reduction to EK proof eq1}.
\end{proof}

Here comes the trick to bound the error term.
\begin{lemma}  \label{EK small trick lem}
For the function $\mathcal E$ defined in \autoref{reduction to EK}, we have $\mathcal E(K) \le C \ln\ln(K)$ for any $K\in\N$ with $K \ge 3$.
\end{lemma}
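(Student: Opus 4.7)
The plan is to calibrate the half-plane quantity $\mathcal J_1([0,1)\times\R^-,\R\times\R^+;K)$ against the quadratic area-law \autoref{pf thm}, whose leading coefficient and $\mathcal O(L)$ error term are controlled independently of $K$. This is the ``trick'' alluded to in the lemma name: the $m=1$ case was already settled by an entirely independent direct computation in \autoref{section: pf}, and we exploit that as a benchmark against the sine-kernel machinery underlying \autoref{main term is MI}.

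First I would specialise \autoref{reduction to EK} to $m=1$, obtaining
\begin{align}
\mathcal E(K) = \tfrac{1}{\sqrt K}\mathsf{M}_{<K}\bigl(t\mapsto t(1-t)\bigr) - \tfrac{1}{\pi\sqrt K}\int_0^\infty \operatorname{tr}\lvert 1_{I_p}\mathcal K_K 1_{J_p}\rvert^{2}\,\mathrm dp.
\end{align}
By \autoref{MK=Jm} the first summand equals $\mathcal J_1([0,1)\times\R^-,\R\times\R^+;K)$, and by \autoref{main term is MI} with $m=1$ the second one equals $\frac{2\sqrt2}{\pi}\mathsf{I}(t\mapsto t(1-t))\ln(K)+\mathcal O(\ln\ln(K))$. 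A direct calculation gives $\mathsf{I}(t\mapsto t(1-t))=\frac{1}{4\pi^2}$, so $\frac{2\sqrt 2}{\pi}\mathsf I(t\mapsto t(1-t))=\frac{1}{\sqrt 2\pi^3}$, which is precisely the coefficient appearing in \autoref{pf thm}, as the remark there already notes.

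Next I would pin down $\mathcal J_1([0,1)\times\R^-,\R\times\R^+;K)$ by matching the two available asymptotic expansions of $\operatorname{tr} f(P_K(L\Lambda))$ for $f(t)=t(1-t)$ on a single convenient choice of $\Lambda$ and $L$. Taking $\Lambda=D_1(0)$ (which is $\mathsf C^2$-smooth, hence piecewise $\mathsf C^2$-smooth) and $L=K^3$, Condition B of \autoref{LKL conditions} is satisfied so \autoref{Jm domain reduction cor} gives
\begin{align}
\operatorname{tr} f(P_K(L\Lambda)) = L|\partial\Lambda|\,\mathcal J_1([0,1)\times\R^-,\R\times\R^+;K) + \mathcal O(L),
\end{align}
while \autoref{pf thm} in its $K<L$ regime gives
\begin{align}
\operatorname{tr} f(P_K(L\Lambda)) = \tfrac{1}{\sqrt 2\pi^3}\,L|\partial\Lambda|\ln(K)+\mathcal O(L).
\end{align}
Equating and dividing by the positive quantity $L|\partial\Lambda|=2\pi K^3$ yields
\begin{align}
\mathcal J_1([0,1)\times\R^-,\R\times\R^+;K) = \tfrac{1}{\sqrt 2\pi^3}\ln(K)+\mathcal O(1).
\end{align}

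Substituting this back into the expression for $\mathcal E(K)$ derived in the first step, the two $\ln(K)$ terms cancel exactly and only error terms remain, giving $\mathcal E(K)=\mathcal O(1)+\mathcal O(\ln\ln(K))=\mathcal O(\ln\ln(K))$. The $\ln\ln(K)$ is inherited entirely from \autoref{main term is MI}; indeed the difference-of-main-terms part of the argument is $\mathcal O(1)$, so one could even hope to sharpen the bound if \autoref{main term is MI} were sharpened. There is no real obstacle here: the whole argument is a self-consistency check between the sharp quadratic result of \autoref{section: pf} and the sine-kernel asymptotics of \autoref{main term is MI}, with the matching of numerical constants being the (previously flagged) reason for defining $\mathsf I(f)$ as in \eqref{def I(f)}.
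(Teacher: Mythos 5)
Your proposal is correct and follows essentially the same route as the paper: both exploit the $m=1$ case of \autoref{reduction to EK} and calibrate the sine-kernel asymptotics of \autoref{main term is MI} against the independently proven quadratic area-law of \autoref{pf thm} (via \autoref{MK=Jm} and the half-space reduction) so that the $\frac{1}{\sqrt2\pi^3}\ln(K)$ terms cancel and only the $\mathcal O(\ln\ln(K))$ error survives. The only cosmetic difference is that you pin down $\mathcal J_1([0,1)\times\R^-,\R\times\R^+;K)$ by matching at the specific scale $L=K^3$ under Condition B, whereas the paper takes the limit $L\to\infty$ at fixed $K$ through \autoref{Jm domain dependence lem}; both uses are legitimate since the relevant error constants are uniform in $K$.
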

While one might be able to study the asymptotics of $\mathcal K_K$ directly and show this estimate, we found a significantly more elegant solution.
\begin{proof}
We consider two approximations for $(1/\sqrt{K}) \mathsf{M}_{<K} (t \mapsto t(1-t)) $. On the one hand, \autoref{main term is MI}, \autoref{reduction to EK} and $ \frac{2\sqrt{2}}{\pi}\,\mathsf{I}(t\mapsto t(1-t))= 1/(\sqrt 2 \pi^3)$ tell us that
\begin{align}
\frac1{\sqrt{K}}\, \mathsf{M}_{<K} (t \mapsto t(1-t)) &= \frac 1 {\pi \sqrt K}   \int_0^\infty  \mathrm d p\, \operatorname{tr}  \left \lvert  1_{I_p} \mathcal K_K 1_{J_p} \right \rvert^{2}   + \mathcal E(K) \\
&=\frac{1}{\sqrt 2 \pi^3}\, \ln(K)+ \mathcal O(\ln\ln(K)) +  \mathcal E(K)  \, .
\end{align}
On the other hand, \autoref{MK=Jm}, \autoref{Jm domain dependence lem}, \eqref{def Jm} and \autoref{pf thm} with $\Lambda=D_1(0)$ the unit disk imply
\begin{align}
\frac1{\sqrt{K}}\, \mathsf{M}_{<K} (t \mapsto t(1-t))&= \mathcal J_1 \left( [0,1) \times \R^-, \R\times \R^+ ;K \right) \\
&= \lim_{L \to \infty} \frac{1}{2 \pi L}\, \mathcal J_1( D_L(0), D_L^\complement(0);K ) \\
&= \lim_{L \to \infty} \frac{1}{2 \pi L}\,\operatorname{tr} 1_{D_L(0)} P_K 1_{D_L^\complement(0)} P_K 1_{D_L(0)} \\
&= \frac{1}{\sqrt 2 \pi^3}\, \ln(K) + \mathcal O(1) \, .
\end{align}
From these two approximations, we can infer 
\begin{align}
 \mathcal E(K) \le C \ln\ln(K) \, .
\end{align}
\end{proof}

We can now conclude this section with the
\begin{proof} [Proof of \autoref{Jm=I thm}]
We will first show that, for any polynomial $f$ with $f(0)=f(1)=0$, we have
\begin{align} \label{Jm=I thm proof eq1}
\frac{1}{\sqrt{K}}\, \mathsf{M}_{<K} ( f) = \frac{2\sqrt{2}}{\pi}\,\mathsf{I}(t \mapsto (f) + \mathcal O(\ln\ln(K)) \,.
\end{align}
Due to \autoref{MK symmetry}, we know $\mathsf{M}_{<K}(f)= \mathsf{M}_{<K} (t \mapsto f(1-t))$ and $\mathsf{I}(f)= \mathsf{I}(t \mapsto f(1-t))$ is obvious from \eqref{def I(f)}. Thus, we can assume without loss of generality that $f(t)=f(1-t)$ by replacing $f$ with $t \mapsto (f(t)+f(1-t))/2$. Due to linearity, it further suffices to show \eqref{Jm=I thm proof eq1} for a basis of all polynomials with $f(0)=f(1)=0$ and $f(t)=f(1-t)$. Such a basis is given by the polynomials $t \mapsto [t(1-t)]^m$ for $m \in \mathbb N$. Due to \autoref{reduction to EK} and \autoref{EK small trick lem}, we can conclude
\begin{align}
\frac{1}{\sqrt{K}}\, \mathsf{M}_{<K} ( t \mapsto [t(1-t)]^m) = \frac{2\sqrt{2}}{\pi}\,\mathsf{I}(t \mapsto [t(1-t)]^m ) + \mathcal O(\ln\ln(K)) 
\end{align}
and thus \eqref{Jm=I thm proof eq1}. To conclude the proof, we insert $f(t)=t(1-t)^m$ into \eqref{Jm=I thm proof eq1} and apply \autoref{MK=Jm}, which tells us
\begin{align}
\mathcal J_m([0,1) \times \R^- , \R \times \R^+ ;K)= \frac{2\sqrt{2}}{\pi}\, \mathsf{I}( t \mapsto t(1-t)^m )\, \ln(K) + \mathcal O(\ln \ln(K)) \,. 
\end{align}
\end{proof}


\appendix
\section{Two simple trace(-norm) inequalities}

\begin{lemma} \label{tele sum lem1}
Let $A,B$ be two Hilbert--Schmidt operators with $\lVert A \rVert \le 1, \lVert B \rVert \le  1 $ and let $m \in \mathbb N$. Then, we have
\begin{align}
\lVert (A^*A)^m - (B^*B)^m \rVert_1 \le m \lVert A^*A - B^*B \rVert_1 \le m (\lVert A \rVert_2 + \lVert B \rVert_2 ) \lVert A-B \rVert_2 \, . 
\end{align}
\end{lemma}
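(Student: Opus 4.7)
The plan is to establish the two inequalities separately, both by elementary manipulations.

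For the first inequality, I would use the standard telescoping identity
\begin{equation*}
(A^*A)^m - (B^*B)^m = \sum_{k=0}^{m-1} (A^*A)^k \, (A^*A - B^*B) \, (B^*B)^{m-1-k}\, .
\end{equation*}
Taking trace norms and using the H\"older-type bound $\lVert XYZ \rVert_1 \le \lVert X \rVert \, \lVert Y \rVert_1 \, \lVert Z \rVert$ for each summand, together with the elementary estimates $\lVert A^*A \rVert \le \lVert A \rVert^2 \le 1$ and $\lVert B^*B \rVert \le 1$, each of the $m$ summands contributes at most $\lVert A^*A - B^*B \rVert_1$, yielding the first inequality.

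For the second inequality, I would use the decomposition
\begin{equation*}
A^*A - B^*B = A^*(A-B) + (A-B)^*B\, .
\end{equation*}
Applying the triangle inequality in trace norm together with the bound $\lVert XY \rVert_1 \le \lVert X \rVert_2 \, \lVert Y \rVert_2$ to each of the two terms gives
\begin{equation*}
\lVert A^*A - B^*B \rVert_1 \le \lVert A^* \rVert_2 \, \lVert A - B \rVert_2 + \lVert (A-B)^* \rVert_2 \, \lVert B \rVert_2 = (\lVert A \rVert_2 + \lVert B \rVert_2)\, \lVert A - B \rVert_2\, ,
\end{equation*}
using $\lVert X^* \rVert_2 = \lVert X \rVert_2$.

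Neither step presents a real obstacle; both inequalities are routine applications of the ideal property of trace-class operators and H\"older's inequality for Schatten norms. The only point worth being careful about is that the operator-norm bounds $\lVert A \rVert, \lVert B \rVert \le 1$ are genuinely needed in the first inequality (to control the factors $(A^*A)^k$ and $(B^*B)^{m-1-k}$), whereas the second inequality does not require them.
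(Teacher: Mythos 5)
Your proposal is correct and follows essentially the same route as the paper: the same telescoping identity for $(A^*A)^m-(B^*B)^m$ controlled by operator norms $\le 1$, and the same decomposition $A^*A-B^*B=A^*(A-B)+(A-B)^*B$ estimated via the Schatten--H\"older inequality $\lVert XY\rVert_1\le\lVert X\rVert_2\lVert Y\rVert_2$. Nothing further is needed.
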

\begin{proof}
We just observe that 
\begin{align}
\lVert (A^*A)^m - (B^*B)^m \rVert_1 &\le \sum_{j=1}^m \left \lVert (A^*A)^{j-1} ( A^*A-B^*B) (B^*B)^{m-j} \right \rVert_1 \\
&\le m \lVert A^*A-B^*B \rVert_1 \\
&\le  m \left( \lVert A^*(A-B) \rVert_1 + \lVert (A-B)^*B \rVert_1 \right) \\
&\le m \left( \lVert A \rVert_2 \lVert A-B \rVert_2 + \lVert A-B \rVert_2 \lVert B \rVert_2 \right) \\
&= m (\lVert A \rVert_2 + \lVert B \rVert_2 ) \lVert A-B \rVert_2 \, ,
\end{align}
which completes the proof.
\end{proof}

\begin{lemma} \label{tele sum lem2}
Let $A_1, A_2$ be two Hilbert--Schmidt operators. Then for $m\in\mathbb N$, $m\ge2$,
\begin{align}
\lvert \operatorname{tr} (A_1^m -A_2^m ) \rvert \le m \max(\lVert A_1 \rVert , \lVert A_2 \rVert )^{m-2} \, \max(\lVert A_1 \rVert_2, \lVert A_2 \rVert_2)  \lVert A_1 - A_2 \rVert_2\,.\label{K>L op algebra est}
\end{align}
\end{lemma}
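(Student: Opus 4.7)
The plan is to expand $A_1^m - A_2^m$ as a standard telescoping sum in which exactly one factor of the difference $A_1-A_2$ appears in each summand, and then to bound each term using cyclicity of the trace together with the Schatten--H\"older inequalities $|\operatorname{tr}(XY)|\le \|X\|_2\|Y\|_2$ and $\|XY\|_2\le \|X\|\,\|Y\|_2$.

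First I would write
\begin{align}
A_1^m-A_2^m \;=\;\sum_{j=1}^{m} A_1^{j-1}(A_1-A_2)\,A_2^{m-j}\,,
\end{align}
take the trace term by term, and then for each $j$ split off exactly one further factor to act as the second Hilbert--Schmidt slot. Concretely, for $2\le j\le m$ use
\begin{align}
\operatorname{tr}\bigl(A_1^{j-1}(A_1-A_2)A_2^{m-j}\bigr)=\operatorname{tr}\bigl(A_1\cdot A_1^{j-2}(A_1-A_2)A_2^{m-j}\bigr)\,,
\end{align}
and estimate by $\|A_1\|_2\,\|A_1\|^{j-2}\|A_1-A_2\|_2\|A_2\|^{m-j}$; for $j=1$ apply the cyclic identity $\operatorname{tr}((A_1-A_2)A_2^{m-1})=\operatorname{tr}(A_2\cdot A_2^{m-2}(A_1-A_2))$ and bound by $\|A_2\|_2\,\|A_2\|^{m-2}\|A_1-A_2\|_2$. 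Setting $C\coloneqq \max(\|A_1\|,\|A_2\|)$ and $D\coloneqq \max(\|A_1\|_2,\|A_2\|_2)$, each of the $m$ terms is at most $D\,C^{m-2}\|A_1-A_2\|_2$, so the triangle inequality gives the desired bound.

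There is essentially no obstacle: the only thing to check is that the case $m=2$ is still covered by the argument (here ``$A_1^{j-2}$'' is the identity, so no operator-norm factors appear and $C^{m-2}=1$, which is consistent with the statement). Because $m\ge 2$, we always have at least one factor of $A_1$ or $A_2$ available to serve as the second Hilbert--Schmidt slot, and this is what makes the combination $\|A_1-A_2\|_2\cdot D\cdot C^{m-2}$ appear instead of some weaker product.
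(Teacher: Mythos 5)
Your proposal is correct and follows essentially the same route as the paper: the paper also writes $A_1^m-A_2^m$ as the telescoping sum $\sum_{i=1}^{m}(A_1^{m-i+1}A_2^{i-1}-A_1^{m-i}A_2^{i})=\sum_{i=1}^{m}A_1^{m-i}(A_1-A_2)A_2^{i-1}$ and then applies the triangle and H\"older (Schatten) inequalities, exactly as you do. Your explicit handling of the $j=1$ term via cyclicity and the check of the case $m=2$ are just spelled-out details of the same argument.
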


\begin{proof} The proof goes along the same line as the previous one. We write the difference $A_1^m-A_2^m$ in the form $\sum_{i=1}^{m} (A_1^{m-i+1}A_2^{i-1} - A_1^{m-i}A_2^i)$ and use the triangle and H\"older inequality. 
\end{proof}

\section{Sine-kernel asymptotics and the leading asymptotic coefficient} 
\label{main term is MI proof section}

In this rather long section we finally prove \autoref{main term is MI}. To this end, we start with results on the kernel $\mathcal K_K$ and prove the convergence to the sine-kernel on a global scale, see \autoref{hat K kernel est}. In the second subsection, this is used to deal with the asymptotics of an integral of certain traces involving this kernel for large $K$ and we evaluate the asymptotic coefficient $\mathsf{M}_{<K}(f)$ to leading order in $K$ for polynomials $f$. The proof is based on a seminal result by Landau and Widom \cite{LandauWidom} and an improvement by Widom \cite{Widom1982}.

\subsection{Sine-kernel asymptotics on a global scale}

\bigskip For the results in this section we were inspired by the results of Kriecherbauer, Schubert, Schüler, and Venker in \cite{Kriecherbauer2014}. While the convergence of $\mathcal K_K$ ``in the bulk" $(-(2K-1)\pi/4,(2K-1)\pi/4)$ to the sine-kernel is standard on a local scale of order one, this is not the case on the much larger scale of order $K$ needed here. To go this scale, we use the function $\eta_{2K\pm1}$. 

Recall that $\eta(s)= \xi(s^2)$, see \eqref{xi in eta}. The advantage of $\eta$ over $\xi$ is that $\eta'(s)=\sqrt{1-s^2}$ for $|s|\le1$. For any $\lambda \in \mathbb R^+$, let $\eta_\lambda$ be the rescaled function
\begin{align}\label{def eta_lambda}
\eta_\lambda(s)\coloneqq \lambda \eta (s/\sqrt \lambda)\,,\quad |s|\le \sqrt{\lambda}\,.
\end{align}
Thus, $\eta_\lambda'(s) \coloneqq \left( \eta_\lambda\right)'(s)=\sqrt{ \lambda(1-s^2/\lambda)}=\sqrt{\lambda-s^2}$ and $\eta_{\lambda}^{-1}(s)\coloneqq \left(\eta_\lambda\right)^{-1}(s)= \sqrt{\lambda} \eta^{-1}(s/\lambda)$.
\begin{lemma} \label{psi n asymp}
 Let $\varepsilon>0$ and $n \in \mathbb N$. Then, for any $s \in \mathbb R$ with $\lvert s \rvert \le (1-\varepsilon) \sqrt{2n+1}$ we have the asymptotic expansion as $n\to\infty$,
 \begin{align}
\sqrt{ \eta_{2n+1}'(s)}  \psi_n(s)= \sqrt{2 / \pi} \cos(\eta_{2n+1}(s) - n \pi /2 ) + \mathcal O(1/(1+n \varepsilon^{3/2}))\,.
 \end{align}
  
\end{lemma}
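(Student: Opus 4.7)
The plan is to reduce the claim to the Laguerre asymptotics established in \autoref{kernel asymp section}. I would begin with the classical Hermite--Laguerre identities
\begin{align*}
H_{2K-2}(s) &= (-1)^{K-1}\, 2^{2K-2}(K-1)!\, \Ln_{K-1}^{(-1/2)}(s^2)\,,\\
H_{2K-1}(s) &= (-1)^{K-1}\, 2^{2K-1}(K-1)!\, s\, \Ln_{K-1}^{(1/2)}(s^2)\,,
\end{align*}
which split the problem into the two parities of $n$. I would set $\alpha=-1/2$ when $n=2K-2$ is even and $\alpha=+1/2$ when $n=2K-1$ is odd, so that $\nu=4K+2(\alpha-1)=2n+1$ matches the parameter appearing in the definition of $F_K^{(\alpha)}$.

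Next, I would substitute $x=s^2/\nu$ and unwrap the definition of $F_K^{(\alpha)}$. Using $\eta_\nu'(s)=\sqrt{\nu-s^2}$ together with the identity $\nu\xi(s^2/\nu)=\eta_\nu(s)$ from \eqref{xi in eta}, a direct computation gives
\[
\sqrt{\eta_\nu'(s)}\, e^{-s^2/2}\,\Ln_{K-1}^{(\alpha)}(s^2)\,=\,\frac{\nu^{\alpha/2}}{2^\alpha\, s^{\alpha+1/2}}\,F_K^{(\alpha)}(s^2/\nu)\,.
\]
Inserted into the definition of $\psi_n(s)$, the explicit $s$ in the odd Hermite identity precisely cancels the $s^{\alpha+1/2}=s$ in the denominator, so all $s$-dependence collapses into $F_K^{(\alpha)}(s^2/\nu)$, leaving an identity of the form
\[
\sqrt{\eta_\nu'(s)}\,\psi_n(s)\,=\,(-1)^{K-1}\,C_{K,\alpha}\,F_K^{(\alpha)}(s^2/\nu)
\]
with a purely numerical factor $C_{K,\alpha}$. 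To evaluate $C_{K,\alpha}$ I would invoke the Legendre duplication formula $\Gamma(2K)=\pi^{-1/2}\,2^{2K-1}\Gamma(K)\Gamma(K+1/2)$ (and its analogue at $2K-1$) together with Stirling's estimate $\Gamma(K+1/2)/\Gamma(K)=\sqrt{K}\bigl(1+\mathcal O(1/K)\bigr)$. Careful accounting should yield $C_{K,\alpha}=1+\mathcal O(1/K)$ in both parities. This bookkeeping of factorials and powers of $2$ is the main technical step, and the one most prone to sign or normalisation errors.

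To finish, I would combine \autoref{?? for x<1/2} on $[0,1/2]$ with \autoref{Laguerre to Airy} on $[1/2,1]$. The hypothesis $|s|\le(1-\varepsilon)\sqrt{2n+1}$ translates to $x=s^2/\nu\le(1-\varepsilon)^2$, so $1-x\ge\varepsilon$ throughout the range, and (using $1/K=\mathcal O(1/(1+K\varepsilon^{3/2}))$) the two half-range expansions merge into
\[
F_K^{(\alpha)}(x)\,=\,\sqrt{2/\pi}\,\cos\!\bigl(\nu\xi(x)-\alpha\pi/2-\pi/4\bigr)\,+\,\mathcal O\!\bigl(1/(1+K\varepsilon^{3/2})\bigr)
\]
uniformly in $x$. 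A short trigonometric identification,
\[
(-1)^{K-1}\cos\!\bigl(\eta_\nu(s)-\alpha\pi/2-\pi/4\bigr)\,=\,\cos\!\bigl(\eta_\nu(s)-n\pi/2\bigr)
\]
(for $n=2K-1$ the left side equals $(-1)^{K-1}\sin\eta_\nu$, which agrees with $\cos(\eta_\nu-(2K-1)\pi/2)$; the even case is analogous) then converts the phase to the form claimed in the lemma. The qualitative asymptotic input is entirely supplied by \autoref{kernel asymp section}; the hard part is the Gamma-function gymnastics in the constant $C_{K,\alpha}$.
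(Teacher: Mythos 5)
Your proposal is correct and follows essentially the same route as the paper's proof: parity splitting via the Hermite--Laguerre identities, rewriting in terms of $F_K^{(\alpha)}$ with $\nu=2n+1$, Stirling-type evaluation of the normalising constant, and then the $F_K^{(\alpha)}$ asymptotics of \autoref{?? for x<1/2} and \autoref{Laguerre to Airy} together with the phase identity $(-1)^{K-1}\cos(\eta_\nu(s)-\alpha\pi/2-\pi/4)=\cos(\eta_\nu(s)-n\pi/2)$. The only (inessential) difference is that you evaluate the constant via the Legendre duplication formula and a Gamma-ratio estimate, while the paper applies Stirling's formula directly to the factorials; both yield $1+\mathcal O(1/n)$.
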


\begin{proof}
Let $n=2 \ell$ or $n=2\ell+1$. According to \cite[\href{http://dlmf.nist.gov/18.7\#E19}{(18.7.19)}, \href{http://dlmf.nist.gov/18.7\#E20}{(18.7.20)}]{NIST:DLMF}, the Hermite polynomials can be expressed in terms of Laguerre polynomials.  Thus, we have
\begin{align}
\psi_{2 \ell}(s) &=  \frac   { (-1)^\ell 2^{2\ell} \ell! } {\left(\sqrt \pi 2^{2\ell} (2\ell) ! \right) ^{1/2} }\Ln_{\ell}^{(-1/2)}(s^2) \exp(-s^2/2) \\
&= \frac   { (-1)^\ell 2^{2\ell} \ell! } {\left(\sqrt \pi 2^{2\ell} (2\ell) ! \right) ^{1/2} } \frac{ F_{\ell+1}^{(-1/2)}\left( s^2/(4\ell+1) \right) } { 2^{-1/2} \sqrt {4\ell+1} \lvert 1- s^2/(4\ell+1) \rvert^{1/4} }  \\
&= \frac   { (-1)^\ell 2^{\ell} \ell!  } {\left(\sqrt \pi  (2\ell) ! \right) ^{1/2} \sqrt {2\ell+1/2} } \frac{ F_{\ell+1}^{(-1/2)}\left( s^2/(4\ell+1) \right) } {  \lvert 1- s^2/(4\ell+1) \rvert^{1/4} } \,, \\
\psi_{2 \ell+1}(s) &=  \frac   { (-1)^\ell 2^{2\ell+1} \ell! } {\left(\sqrt \pi 2^{2\ell+1} (2\ell+1) ! \right) ^{1/2} }s \Ln_{\ell}^{(+1/2)}(s^2) \exp(-s^2/2) \\
&= \frac   { (-1)^\ell 2^{2\ell+1} \ell! } {\left(\sqrt \pi 2^{2\ell+1}   (2\ell+1) ! \right) ^{1/2} } \frac{s F_{\ell+1}^{(+1/2)}\left( s^2/(4\ell+3) \right) } { 2^{+1/2} \sqrt {4\ell+3} \frac s {\sqrt {4\ell+3}} \lvert 1- s^2/(4\ell+3) \rvert^{1/4} }  \\
&= \frac   { (-1)^\ell 2^{\ell} \ell!  } {\left( \sqrt \pi   (2\ell) ! \right) ^{1/2} \sqrt{2\ell+1}} \frac{ F_{\ell+1}^{(+1/2)}\left( s^2/(4\ell+3) \right) } { \lvert 1- s^2/(4\ell+3) \rvert^{1/4} } \,.
\end{align}
In both cases, the first factor depends only on $\ell$ and points to a Stirling approximation. The version of the Stirling approximation we are using is $m!= \sqrt {2 \pi m} \left( m/e\right)^m \left(1+ \mathcal O\left(1/m\right) \right)$. Thus, for $a \in \{1/2, 1\}$, we observe
\begin{align}
\frac   { (-1)^\ell 2^{\ell} \ell!  } {\left( \sqrt \pi   (2\ell) ! \right) ^{1/2} \sqrt{2\ell+a}}&= \frac   { (-1)^\ell 2^{\ell} \sqrt {2 \pi \ell}  (\ell/e)^\ell  } {\left( \sqrt \pi   \sqrt{4 \pi \ell}  (2\ell/e)^{2\ell}  \right) ^{1/2} \sqrt{2\ell}} \left( 1 + \mathcal O(1/\ell)\right) \\
&= \frac   { (-1)^\ell  \sqrt {\ell}    } {\left(   \sqrt{ \ell}  \right) ^{1/2} \sqrt{2\ell}} \left( 1 + \mathcal O(1/\ell)\right) \\
&= \frac   { (-1)^\ell   } {\left( 2n+1 \right)^{\frac 1 4}} \left( 1 + \mathcal O(1/n)\right).
\end{align}


So far, we have shown that
\begin{align}
\left( (2n+1) \lvert 1-s^2/(2n+1) \rvert \right)^{1/4} \psi_n(s) =& \begin{cases} 
	(-1)^{\ell} F_{\ell+1} ^{(-1/2)} (s^2 / (2n+1) ) (1+ \mathcal O(1/n))\,,  &\text{ if } n=2\ell\,, \\
	(-1)^{\ell} F_{\ell+1} ^{(+1/2)} (s^2 / (2n+1) )   (1+ \mathcal O(1/n))\,,  &\text{ if } n=2\ell +1 \,.
	\end{cases}
\end{align}
We will use the asymptotics for $F_K^{(\alpha)}$, which we developed in \autoref{?? for x<1/2} and \autoref{Laguerre to Airy}. The parameter $\nu$ appearing in the asymptotics of $F_K^{(\alpha)}$, is given by $\nu=4K+2(\alpha-1)$, where $K=\ell+1$. If $n=2\ell$, this simplifies to $4(\ell+1)+2(-1/2-1)=4 \ell +1 = 2n+1$. In the other case $n=2\ell +1$, it is $4(\ell+1)+2(1/2-1)=4\ell+3=2n+1$. Thus, independent of the parity of $n$, the parameter $\nu$ is given by $2n+1$. Thus, 
\begin{align}
\nu \xi(s^2/(2n+1))=(2n+1) \eta(s/\sqrt{2n+1})=\eta_{2n+1}(s)\,.
\end{align}
We observe that $\alpha=-1/2 \,(-1)^n$. We can now insert this into the asymptotics granted by \autoref{?? for x<1/2} and \autoref{Laguerre to Airy} for the case $\lvert s\rvert \le \sqrt{2n+1}$ and get
\begin{align}
&\sqrt{ \eta_{2n+1}'(s)} \sqrt{\pi/2}\, \psi_n(s) \\
&=\left( (2n+1) \lvert 1-s^2/(2n+1) \rvert \right)^{1/4} \sqrt{\pi /2} \, \psi_n(s)\\
&= (-1)^\ell  \cos ( \nu \xi(s^2/(2n+1) - \alpha \pi/2 - \pi/4 ) +\mathcal O\left(\frac 1{1+ (\ell+1) (1-s^2/(2n+1))^{3/2}} \right) + \mathcal O\left( \frac 1  n\right) \\
&= (-1)^\ell \cos(\eta_{2n+1}(s) - \pi/4 (1-(-1)^n)) +\mathcal O\left(\frac 1{1+ n(1-s^2/(2n+1))^{3/2}} \right) \\
&= \cos(\eta_{2n+1}(s) - \pi/4 (1-(-1)^n)-\ell \pi ) +\mathcal O\left(\frac 1{1+ n(1-(1-\varepsilon)^2)^{3/2}} \right)\\
&= \cos(\eta_{2n+1}(s) - n\pi/2 ) +\mathcal O\left(\frac 1{1+ n\varepsilon^{3/2}} \right).
\end{align}
The last step relies on the identity $n=2\ell + (1-(-1)^n)/2$. 
\end{proof}

The next lemma deals with the integral kernel of $\mathcal K_K$ under scaling, that is, a change of coordinates by $\eta_{2K-1}$.

\begin{lemma} \label{KK unitary equivalence}
Let $K \in \mathbb N$ and let $I ,J \subset (-\sqrt{2K-1},\sqrt{2K-1})\eqqcolon \Omega_1$ be intervals. We define for $s, t \in (-(2K-1)\pi/4, (2K-1)\pi/4)\eqqcolon \Omega_2$, 
\begin{align} \label{hat K}
\hat {\mathcal K}_K(s,t) \coloneqq \sqrt{ \left( \eta_{2K-1}^{-1} \right)'(s) \left( \eta_{2K-1}^{-1} \right)'(t)}\, \mathcal K_K\big(\eta_{2K-1}^{-1}(s),\eta_{2K-1}^{-1}(t)\big) \,.
\end{align}
Then, the operators $1_I \mathcal K_K 1_J$ and $1_{\eta_{2K-1}(I)} \hat{\mathcal K}_K 1_{\eta_{2K-1}(J)}$ are unitarily equivalent.
\end{lemma}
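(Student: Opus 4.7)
The plan is to exhibit an explicit unitary operator realising the equivalence, obtained as the natural change of variables induced by the diffeomorphism $\eta_{2K-1}\colon \Omega_1\to\Omega_2$ together with the Jacobian weight that makes it an $\mathsf L^2$-isometry. Concretely, define $U\colon \mathsf L^2(\Omega_2)\to\mathsf L^2(\Omega_1)$ by
\begin{align*}
(U\phi)(x) \coloneqq \sqrt{\eta'_{2K-1}(x)}\,\phi\bigl(\eta_{2K-1}(x)\bigr)\,,\quad x\in\Omega_1\,.
\end{align*}
Since $\eta_{2K-1}$ is a $\mathsf{C}^1$-diffeomorphism from $\Omega_1$ onto $\Omega_2$ (its derivative $\sqrt{2K-1-x^2}$ is strictly positive on $\Omega_1$), the substitution $y=\eta_{2K-1}(x)$ instantly gives $\lVert U\phi\rVert_{\mathsf L^2(\Omega_1)}=\lVert \phi\rVert_{\mathsf L^2(\Omega_2)}$, so $U$ is unitary with inverse $(U^{-1}\psi)(s)=\sqrt{(\eta_{2K-1}^{-1})'(s)}\,\psi(\eta_{2K-1}^{-1}(s))$.

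Next I would check the transformation of the two building blocks. For an indicator, a direct computation using $\sqrt{(\eta_{2K-1}^{-1})'(s)\cdot\eta'_{2K-1}(\eta_{2K-1}^{-1}(s))}=1$ yields $U^{-1} 1_I U = 1_{\eta_{2K-1}(I)}$ and likewise for $J$. For the kernel part, regarding $1_I\mathcal K_K 1_J$ as an operator on $\mathsf L^2(\Omega_1)$ (its range and support both lie in $\Omega_1$), I would compute, for $\phi\in \mathsf L^2(\Omega_2)$,
\begin{align*}
(U^{-1}\mathcal K_K U\phi)(s) &= \sqrt{(\eta_{2K-1}^{-1})'(s)}\int_{\Omega_1} \mathcal K_K\bigl(\eta_{2K-1}^{-1}(s),y\bigr)\sqrt{\eta'_{2K-1}(y)}\,\phi(\eta_{2K-1}(y))\,\mathrm d y\\
&= \int_{\Omega_2}\sqrt{(\eta_{2K-1}^{-1})'(s)(\eta_{2K-1}^{-1})'(t)}\,\mathcal K_K\bigl(\eta_{2K-1}^{-1}(s),\eta_{2K-1}^{-1}(t)\bigr)\,\phi(t)\,\mathrm d t\,,
\end{align*}
after the substitution $t=\eta_{2K-1}(y)$, the Jacobian $\mathrm d y=(\eta_{2K-1}^{-1})'(t)\,\mathrm d t$, and absorbing one factor of the square root. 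The right-hand kernel is exactly $\hat{\mathcal K}_K(s,t)$ as defined in \eqref{hat K}.

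Combining the two computations via $U^{-1}(1_I\mathcal K_K 1_J)U = (U^{-1}1_I U)(U^{-1}\mathcal K_K U)(U^{-1}1_J U) = 1_{\eta_{2K-1}(I)}\hat{\mathcal K}_K 1_{\eta_{2K-1}(J)}$ yields the claimed unitary equivalence. The only mildly delicate point is bookkeeping the domains: one must verify that $\mathcal K_K$, $\hat{\mathcal K}_K$ and the indicators can legitimately be viewed as operators on $\mathsf L^2(\Omega_1)$ respectively $\mathsf L^2(\Omega_2)$ for the purpose of the conjugation identity. This is immediate because $I,J\subset\Omega_1$ and $\eta_{2K-1}(I),\eta_{2K-1}(J)\subset\Omega_2$, so each of the operators in question has both initial and final support inside the respective open interval, and the bracketing by $1_I,1_J$ (respectively their $\eta_{2K-1}$-images) makes the integrals over all of $\mathbb R$ agree with the integrals over $\Omega_1$ (respectively $\Omega_2$). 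No further analytic obstacle remains.
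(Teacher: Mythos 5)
Your proposal is correct and follows essentially the same route as the paper's own proof: the paper likewise conjugates by the weighted composition operator induced by $\eta_{2K-1}$ (your $U$ is exactly its $A_\theta^{-1}$), verifies $A_\theta 1_I A_\theta^{-1}=1_{\theta(I)}$, and performs the same change-of-variables computation to identify the transformed kernel with $\hat{\mathcal K}_K$. Nothing is missing; the domain bookkeeping you mention is handled in the paper by the same observation that $I,J\subset\Omega_1$ and $\eta_{2K-1}(\Omega_1)=\Omega_2$.
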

\begin{proof}
As $I,J \subset \Omega_1$, it suffices to consider $\mathcal K_K$ as an integral operator on $\Lp^2( \Omega_1)$ and as $\eta_{2K-1}(\Omega_1)=\Omega_2$, it suffices to consider $\hat {\mathcal K}_K$ as an integral operator on $\Lp^2(\Omega_2)$. We define the shorthand $\theta \coloneqq \eta_{2K-1}$. All we need to know about $\theta$ is that $\theta$ is a $\mathsf{C}^1$-smooth bijection from $\Omega_1$ to $\Omega_2$. We can now define the unitary operator
\begin{align}
A_\theta^{-1} \colon \Lp^2( \Omega_2) &\to \Lp^2(\Omega_1)\,,  &f \mapsto& (t \mapsto f(\theta(t)) \sqrt{\theta'(t)})\,, \\
A_\theta \colon \Lp^2( \Omega_1) &\to \Lp^2(\Omega_2)\,,  &g \mapsto& (t \mapsto g(\theta^{-1}(t)) \sqrt{\left(\theta^{-1}\right)'(t)})\,.
\end{align}

We see that $A_\theta 1_I A_\theta^{-1}= 1_{\theta(I)}$ and  $A_\theta 1_J A_\theta^{-1}= 1_{\theta(J)}$. Let $f \in \Lp^2(\Omega_2)$ and $a \in \Omega_2$. Then, we observe that
\begin{align}
\left(A_\theta \mathcal K_K A_\theta ^{-1} f \right)(a)&= \left( A_\theta \int_{\Omega_1} \mathcal K_K(\blank ,t) f(\theta(t))\sqrt{\theta'(t)} \mathrm d t \right) (a) \\
&= \int_{\Omega_1}\sqrt{\left(\theta^{-1}\right)'(a)}\, \mathcal K_K \left(\theta^{-1}(a), \theta^{-1}(\theta(t))\right) f(\theta(t)) 1/\sqrt{\left(\theta^{-1}\right)' (\theta(t))}\, \mathrm d t \\
&= \int_{\Omega_2} \sqrt{\left(\theta^{-1}\right)'(a)} \left(\theta^{-1}\right)' (b) \,\mathcal K_K \left(\theta^{-1}(a), \theta^{-1}(b)\right) f(b) 1/\sqrt{\left(\theta^{-1}\right)' (b)} \,\mathrm d b \\
&= \int_{\Omega_2} \hat {\mathcal K}_K (a,b) f(b)\, \mathrm d b\\
&= (\hat{\mathcal K}_K f)(a)\,.
\end{align}
Thus, $1_I \mathcal K_K 1_J  = A_\theta^{-1} 1_{\theta(I)} \hat{\mathcal K}_K  1_{\theta(J)} A_\theta $ and $1_{\theta(I)} \hat{\mathcal K}_K 1_{\theta(J)}$ are unitarily equivalent, which was the claim.
\end{proof}

The next result is the mentioned sine-kernel asymptotics of $\mathcal K_K$ in the bulk, at least in the most relevant portion of it.

\begin{theorem} \label{hat K kernel est}
Let $0<\varepsilon<1/2, K \in \mathbb N$ with $\varepsilon^6 K>C$. (This $C$ is fixed, but currently unknown.) For any $s,t \in \mathbb R$ with $(2K-1)\eta(-1/2) < s < t < (2K-1)\eta(1-\varepsilon)$ and $1/2  \le \lvert s-t \rvert \le K\varepsilon^6$, we have  with $\hat {\mathcal K}_K(s,t)$ defined in \eqref{hat K},
\begin{align} \label{sine kernel is here eq}
\hat {\mathcal K}_K(s,t) = \frac{ \sin(s-t)}{\pi (s-t)} + \mathcal O\left( \frac{\varepsilon} {\lvert s-t \rvert }\right).
\end{align}
\end{theorem}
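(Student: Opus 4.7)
The plan is to reduce $\mathcal K_K(x,y)$ to a Christoffel--Darboux type expression in $\psi_K$ and $\psi_{K-1}$ alone, then insert the WKB asymptotics from \autoref{psi n asymp}, and finally extract the sine-kernel after some trigonometry. Write $x \coloneqq \eta_{2K-1}^{-1}(s)$ and $y \coloneqq \eta_{2K-1}^{-1}(t)$, so the hypothesis $s \in ((2K-1)\eta(-1/2),(2K-1)\eta(1-\varepsilon))$ translates into $x\in(-\sqrt{2K-1}/2,(1-\varepsilon)\sqrt{2K-1})$, and likewise for $y$. The Christoffel--Darboux identity for Hermite functions gives
\begin{align*}
\mathcal K_K(x,y)=\sqrt{K/2}\,\frac{\psi_K(x)\psi_{K-1}(y)-\psi_{K-1}(x)\psi_K(y)}{x-y},
\end{align*}
so after multiplying by the Jacobian factor in the definition of $\hat{\mathcal K}_K$, everything reduces to expanding the four Hermite functions $\psi_K(x),\psi_{K-1}(x),\psi_K(y),\psi_{K-1}(y)$ via \autoref{psi n asymp}. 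Since $x,y$ satisfy $|x|,|y|\le(1-\varepsilon)\sqrt{2K-1}\le(1-\varepsilon')\sqrt{2K\pm 1}$ with essentially the same $\varepsilon$, that lemma applies with an error $\mathcal O(1/(K\varepsilon^{3/2}))$ in each of the four $\psi_n$.

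Writing $\Phi_n(z)\coloneqq \eta_{2n+1}(z)-n\pi/2$, the main terms in the numerator have the form $\tfrac{2/\pi}{\sqrt{\eta'_{2K+1}(x)\eta'_{2K-1}(y)}}\cos\Phi_K(x)\cos\Phi_{K-1}(y)-\text{(symmetric)}$. The next step is to apply $\cos A\cos B=\tfrac12(\cos(A-B)+\cos(A+B))$ and then $\cos P-\cos Q=-2\sin\tfrac{P+Q}{2}\sin\tfrac{P-Q}{2}$ to each of the two resulting pieces. To control the small $u$-derivative $\tfrac{d}{du}\eta_u(z)=\tfrac12\arcsin(z/\sqrt u)$ one introduces $\alpha(z)\coloneqq \arcsin(z/\sqrt{2K-1})$, so that $\eta_{2K+1}(z)-\eta_{2K-1}(z)=\alpha(z)+\mathcal O(1/(K\varepsilon)^{3/2})$, and simultaneously $\eta'_{2K+1}(z)/\eta'_{2K-1}(z)=1+\mathcal O(1/(K\varepsilon))$.

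The ``difference'' term $\cos(\Phi_K(x)-\Phi_{K-1}(y))-\cos(\Phi_{K-1}(x)-\Phi_K(y))$ has phase-sum $2(s-t)+\alpha(x)-\alpha(y)$ and phase-difference $\alpha(x)+\alpha(y)-\pi$, so it equals $2\sin(s-t+O)\cos\tfrac{\alpha(x)+\alpha(y)}{2}$. The key algebraic miracle is
\begin{align*}
\cos\!\Big(\tfrac{\alpha(x)+\alpha(y)}{2}\Big)=\frac{\sqrt{\eta'_{2K-1}(x)\eta'_{2K-1}(y)}}{\sqrt{2K-1}}\bigl(1+\mathcal O(|\alpha(x)-\alpha(y)|^2)\bigr),
\end{align*}
which, together with $\sqrt{K/2}\cdot 2/\sqrt{2K-1}=1+\mathcal O(1/K)$ and $s-t=\eta'_{2K-1}(\xi)(x-y)$ for some intermediate $\xi$, cancels precisely the Jacobian factors and the denominator $x-y$ to leave $\sin(s-t)/(\pi(s-t))$. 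The ``sum'' term $\cos(\Phi_K(x)+\Phi_{K-1}(y))-\cos(\Phi_{K-1}(x)+\Phi_K(y))$ produces $-2\sin(\cdots)\sin\tfrac{\alpha(x)-\alpha(y)}{2}$; since $|\alpha(x)-\alpha(y)|=\mathcal O(|x-y|/\eta'_{2K-1}(x))$, this factor is small and, together with the $1/(x-y)$ in the Christoffel--Darboux formula and the Jacobian factors, contributes only a negligible term to $\hat{\mathcal K}_K$.

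The main obstacle, and where the condition $\varepsilon^6 K>C$ is used, is to control the accumulated errors uniformly over the entire allowed range $|s|,|t|\le (2K-1)\eta(1-\varepsilon)$. In the worst case $x,y$ are close to the turning point $\sqrt{2K-1}$, where $\eta'_{2K\pm 1}$ is only of order $\sqrt{K\varepsilon}$; consequently the remainder $\mathcal O(1/(K\varepsilon^{3/2}))$ of \autoref{psi n asymp}, after being divided by $|x-y|\asymp |s-t|/(K\varepsilon)^{1/2}$ and decorated with the Jacobians and the Christoffel--Darboux prefactor $\sqrt{K/2}$, contributes an error of order $1/(K^2\varepsilon^3|s-t|)$, which is absorbed into the claimed $\mathcal O(\varepsilon/|s-t|)$ precisely under $K\varepsilon^6\gg 1$. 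The upper bound $|s-t|\le K\varepsilon^6$ enters through the second-order Taylor corrections to $\alpha(x)-\alpha(y)$ and to $\cos((\alpha(x)+\alpha(y))/2)$ near the turning point, whose third-order errors of the form $|x-y|^3 \alpha''(x)^2$ translate into the admissible $\varepsilon/|s-t|$ only when $|s-t|$ stays below this threshold. The remaining bookkeeping consists of expanding $\sin$ and $\cos$ in the various small error terms, which is routine but lengthy.
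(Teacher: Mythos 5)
Your proposal is correct and follows essentially the same route as the paper: Christoffel--Darboux, the WKB asymptotics of \autoref{psi n asymp} for $\psi_K,\psi_{K-1}$, the key phase identity $\eta_{2K+1}(\eta_{2K-1}^{-1}(s))=s+\arcsin(\eta^{-1}(s/(2K-1)))+\mathcal O(\cdot)$, cancellation of the Jacobian factors against the $\eta'$-dependence of the amplitudes, and error control near the turning point via $K\varepsilon^6>C$ and $\lvert s-t\rvert\le K\varepsilon^6$. The only differences are cosmetic (you organize the trigonometry with sum-to-product identities and the comparison $\cos\frac{\alpha(x)+\alpha(y)}{2}\approx\sqrt{\cos\alpha(x)\cos\alpha(y)}$ instead of the paper's $-2\cos a\sin b+\sin(a+b)=\sin(a-b)$ and $a+b=2\sqrt{ab}+\mathcal O(\lvert a-b\rvert)$), plus a couple of slightly optimistic but harmless intermediate error exponents.
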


\begin{remark}
That this asymptotics also holds for $\lvert s-t \rvert \le 1/2$ (local scale) is well-known (for instance in random matrix theory), but our approach would need estimates on $\psi_\ell '$, which we don't look into.  
\end{remark}

\begin{proof}
We define
\begin{align} \label{def D(s,t)}
D(s,t) \coloneqq &\sqrt{ \left( \eta_{2K-1}^{-1} \right)'(s) \left( \eta_{2K-1}^{-1} \right)'(t)} \\
&=\sqrt{\eta_{2K-1}'(\eta_{2K-1}^{-1}(s)) \eta_{2K-1}'(\eta_{2K-1}^{-1}(t)) }^{-1}\\
&= \sqrt{(2K-1) \eta'(\eta^{-1}(s/(2K-1)))  \eta'(\eta^{-1}(t/(2K-1)))}^{-1} \,.
\end{align}
We recall that by the Christoffel--Darboux formula 
\begin{align}
\mathcal K_K(x,y)=  \sqrt{K/2} \,\frac{ \psi_K(x)\psi_{K-1}(y)- \psi_{K-1}(x)\psi_K(y)}{x-y} \,.
\end{align}
Therefore, we write
\begin{align} \label{K hat split}
\hat{\mathcal K}_K(s,t)=&\overbrace{  \frac{ D(s,t)} { \eta_{2K-1}^{-1}(s) - \eta_{2K-1}^{-1}(t)} }^{\eqqcolon T_1(s,t)}  \sqrt{K/2} \, D(s,t) \\
&\times \Bigg[\underbrace{\frac{ \psi_K\big(\eta_{2K-1}^{-1}(s)) \psi_{K-1}(\eta_{2K-1}^{-1}(t)\big)}{D(s,t)}}_{\eqqcolon T_2(s,t)} -\underbrace{\frac{\psi_{K-1}\big(\eta_{2K-1}^{-1}(s)\big) \psi_{K}\big(\eta_{2K-1}^{-1}(t)\big)}{D(s,t)} }_{ \eqqcolon T_3(s,t)}\Bigg]\,.
\end{align}

We take a closer look at the condition $t \le(2K-1)\eta(1-\varepsilon)$. Together with $s \ge (2K-1)\eta(-1/2)$ and $s<t$, this tells us that $s/(2K-1)$ and $t/(2K-1)$ are in $\eta((-1,1))=(\eta(-1),\eta(1))$. Thus, the expressions $\eta_{2K-1}^{-1}(s) , \eta_{2K-1}^{-1}(t)$ are well-defined. Furthermore, as $\eta$ is increasing on $(-1,1)$, its inverse is increasing as well and thus, we can conclude
\begin{align} \label{bounds on eta inv}
-1/2 \le \eta^{-1}(s/(2K-1))< \eta^{-1}(t/(2K-1))\le  1 -  \varepsilon\,.
\end{align}
The assumption $\lvert s-t \rvert < K \varepsilon^6$ now leads to
\begin{align} \label{eta inv Lip}
\big\lvert \eta^{-1}(s/(2K-1)) - \eta^{-1}(t/(2K-1)) \big\rvert &\le \frac{\lvert s-t \rvert}{K \inf_{a \in (-1/2, 1-\varepsilon)} \eta'(a)} \\
&= \frac{\lvert s-t \rvert}{K \inf_{a \in (-1/2, 1-\varepsilon)} \sqrt{1-a^2}}< \frac {\lvert s-t \rvert }{K \sqrt \varepsilon} < \varepsilon^5\,.
\end{align}

Next, we need a technical result. Let $u,v \in (-1/2,1-\varepsilon)$ with $\lvert u- v \rvert < \varepsilon$. Then, we claim that 
\begin{align} \label{eta' log Lip}
\eta'(u)/\eta'(v) =1 + \mathcal O( \lvert u- v \rvert /\varepsilon)\,.
\end{align}
To this end, we consider the function $ a \mapsto 2\ln(\eta'(a))$ and take its derivative at some $a \in (-1/2,1-\varepsilon)$. Thus, we get
\begin{align}
\lvert 2\ln(\eta'(a))' \rvert = \lvert \ln(1-a^2)' \rvert = 2\lvert a \rvert/(1-a^2)< 1/\varepsilon\,.
\end{align}
Thus, $1/ \varepsilon$ is a Lipschitz constant for $a \mapsto \ln(\eta'(a))$. Since $\lvert u-v \rvert < \varepsilon$, our claim
\begin{align}
\eta'(u)/\eta'(v)=\exp(\mathcal O(\lvert u-v \rvert/\varepsilon)) = 1+ \mathcal O(\lvert u-v \rvert/\varepsilon)
\end{align}
follows.

We begin by showing that $T_1(s,t) \approx 1/(s-t)$. 

We recall that $\eta_{2K-1}^{-1}$ is smooth and strictly increasing on the interval $(s,t)$. Let $\hat s \in (s,t)$ be (uniquely) determined by the mean-value theorem for $\eta_{2K-1}^{-1}$ on $(s,t)$, as seen next. We consider the expression
\begin{align}
\frac{ (s-t) D(s,t) }{\eta_{2K-1}^{-1}(s) - \eta_{2K-1}^{-1}(t)}=& \frac{\sqrt{ \left( \eta_{2K-1}^{-1} \right)'(s) \left( \eta_{2K-1}^{-1} \right)'(t)}  }{\left( \eta_{2K-1}^{-1} \right)'(\hat s)} \\
&= \frac{ \sqrt{2K-1}\eta'(\eta^{-1}(\hat s/(2K-1)))} { \sqrt{(2K-1) \eta'(\eta^{-1}(s/(2K-1)))  \eta'(\eta^{-1}(t/(2K-1)))}} \\
&= 1 +\mathcal O \left( \varepsilon^5/\varepsilon \right) = 1 +\mathcal O \left( \varepsilon^4 \right) \,.
\end{align}
The final step relies on \eqref{eta' log Lip} and \eqref{eta inv Lip}. Thus, we have just shown that
\begin{align} \label{T1 final}
T_1(s,t)=\frac 1 {s-t} \left( 1 + \mathcal O(\varepsilon^4)\right).
\end{align}

Next, we consider $T_2$ and expand
\begin{align}
T_2(s,t)&=  \frac{ \psi_K\big(\eta_{2K-1}^{-1}(s)) \psi_{K-1}(\eta_{2K-1}^{-1}(t)\big)}{D(s,t)} \\
&= \frac{ \sqrt{\eta_{2K-1}'(\eta_{2K-1}^{-1}(s))} }{ \sqrt{\eta_{2K+1}'(\eta_{2K-1}^{-1}(s))} } \left(  \sqrt{\eta_{2K+1}'(\eta_{2K-1}^{-1}(s))}  \psi_K\big(\eta_{2K-1}^{-1}(s)) \right) \\ &\times \left(  \sqrt{\eta_{2K-1}'(\eta_{2K-1}^{-1}(t))}  \psi_{K-1}\big(\eta_{2K-1}^{-1}(t)) \right) .
\end{align}
The final two terms are already in the form of \autoref{psi n asymp}. For the first factor, we just observe
\begin{align}
\frac{ \eta_{2K-1}'(\eta_{2K-1}^{-1}(s)) }{ \eta_{2K+1}'(\eta_{2K-1}^{-1}(s)) } =& \frac{ \sqrt{2K-1} \eta'(\eta^{-1}(s/(2K-1))}{ \sqrt{2K+1} \eta'\left( \sqrt{(2K-1)/(2K+1)} \eta^{-1} (s/(2K-1)) \right) }\\
=& \left( 1+ \mathcal O(1/K) \right) \left(1 + \mathcal O(1/(K\varepsilon))\right) = 1 + \mathcal O(\varepsilon^5)\, . 
\end{align}
We used $\sqrt{(2K-1)/(2K+1)} =1+\mathcal O(1/K)$, \eqref{bounds on eta inv} and \eqref{eta' log Lip}.  Since $|\eta_{2K-1}^{-1}(s)|\le (1-\varepsilon) \sqrt{2K-1}$ (and the same with $t$ replacing $s$) by \eqref{bounds on eta inv} and $1/(K \varepsilon^{3/2}) < \varepsilon^4$ we can apply \autoref{psi n asymp} to see that
\begin{align}
T_2(s,t)&= \left( 1 + \mathcal O(\varepsilon^5)\right)\frac 2 \pi \left( \cos(\eta_{2K+1}(\eta_{2K-1}^{-1}(s)) - K \pi /2 ) + \varepsilon^4 \right) \\
&\times \left( \cos(\eta_{2K-1} (\eta_{2K-1}^{-1}(t))- (K-1) \pi/2 ) + \mathcal O(\varepsilon^4)\right)\\
&=- \frac 2 \pi \cos(\eta_{2K+1}(\eta_{2K-1}^{-1}(s)) - K \pi /2 )  \sin ( t- K \pi/2) + \mathcal O(\varepsilon^4)\,.
\end{align}

Our next goal is to understand $\eta_{2K+1}\circ \eta_{2K-1}(s)$. By definition, 
\begin{equation} \eta_{2K-1}(s) = (2K-1)\eta(s/\sqrt{2K-1}) 
\end{equation}
so that $\eta_{2K-1}^{-1}(s) = \sqrt{2K-1}\eta^{-1}(s/(2K-1))$. We also recall $\eta'(t) = \sqrt{1-t^2}$. Using the mean-value theorem and \eqref{eta' log Lip}, we can conclude
\begin{align} 
&\eta_{2K+1}\big(\eta_{2K-1}^{-1}(s)\big) \\
&= (2K+1)\eta\Big(\frac{\eta_{2K-1}^{-1}(s)}{\sqrt{2K+1}}\Big)
\\
&= (2K+1)\eta\Big( \eta^{-1}\left(s/(2K-1)\right) \Big(1-\frac{1}{2K+1}+\mathcal O\left( \frac 1 {K^2} \right)\Big)\Big)
\\
&= (2K+1) \Bigg[\frac s {2K-1}  - \eta'\left(\eta^{-1}\left( \frac{s}{2K-1} \right) \right) \frac{1+\mathcal O(1/(K \varepsilon))}{2K+1} \eta^{-1}\left( \frac{s}{2K-1} \right)+\mathcal O\left( \frac 1 {K^{2}} \right)\Bigg]
\\
&=(2K+1)\Big[\frac{s}{2K-1} - \sqrt{1-(\eta^{-1}(s/(2K-1)))^2} \,\frac{1+\mathcal O(\varepsilon^5)}{2K+1} \,\eta^{-1}(s/(2K-1))+\mathcal O(\varepsilon^5/K)\Big]
\\
&= s + 2s/(2K-1) -\eta^{-1}(s/(2K-1)) \sqrt{1-(\eta^{-1}(s/(2K-1)))^2} +\mathcal O(\varepsilon^5)
\\
&- \sin^{-1}(\eta^{-1}(s/(2K-1))) + \sin^{-1}(\eta^{-1}(s/(2K-1))) 
\\
&=s + 2s/(2K-1) - 2\eta(\eta^{-1}(s/(2K-1))) + \sin^{-1}(\eta^{-1}(s/(2K-1)))  +\mathcal O(\varepsilon^5)
\\
&=s + \sin^{-1}(\eta^{-1}(s/(2K-1))) +\mathcal O(\varepsilon^5)\,.
\end{align}
Thus, as the cosine is Lipschitz-continuous, we can conclude
\begin{align}
T_2(s,t)= -\frac 2 \pi \cos\Big(s+\sin^{-1}(\eta^{-1}(s/(2K-1)))   - K \pi /2 \Big)  \sin ( t- K \pi/2) + \mathcal O(\varepsilon^4)\,.
\end{align}
The identity $-2\cos(a)\sin(b)+\sin(a+b)=\sin(a-b)$ yields 
\begin{align}
&\pi T_2(s,t) + \sin\left(s+t +\sin^{-1}(\eta^{-1}(s/(2K-1)))-K \pi \right)\\
&=  \sin\left(s-t + \sin^{-1}(\eta^{-1}(s/(2K-1)))\right)  + \mathcal O(\varepsilon^4) \\
&= \sin(s-t) \cos\circ \sin^{-1} ( \eta^{-1}(s/(2K-1))) + \cos(s-t) \sin \circ \sin^{-1} (\eta^{-1}(s/(2K-1))) + \mathcal O(\varepsilon^4) \\
&= \sin(s-t) \eta'(\eta^{-1}(s/(2K-1))) + \cos(s-t) \eta^{-1}(s/(2K-1)) + \mathcal O(\varepsilon^4) \,.
\end{align}

This allows us to conclude
\begin{align} 
\pi T_2(s,t)- \pi T_3(s,t)& = \pi T_2(s,t)- \pi T_2(t,s) \\
&= \sin(s-t) \left( \eta'(\eta^{-1}(s/(2K-1))) + \eta'(\eta^{-1}(t/(2K-1))) \right)\\
& + \cos(s-t) \left( \eta^{-1}(s/(2K-1)) - \eta^{-1}(t/(2K-1)) \right) \\
&- \sin\left(s+t +\sin^{-1}(\eta^{-1}(s/(2K-1)))-K \pi \right) \\
&+\sin\left(s+t +\sin^{-1}(\eta^{-1}(t/(2K-1)))-K \pi \right)+ \mathcal O(\varepsilon^4) \\
&= \sin(s-t) \left( 2 \sqrt{\eta'(\eta^{-1}(s/(2K-1)))  \eta'(\eta^{-1}(t/(2K-1)))} \right) \\
&+ \mathcal O\left( \left \lvert \eta'(\eta^{-1}(s/(2K-1)))-\eta'(\eta^{-1}(t/(2K-1))) \right \rvert \right) \\
&+\mathcal O\left( \left \lvert \eta^{-1}(s/(2K-1))-\eta^{-1}(t/(2K-1)) \right \rvert \right) \\
&+ \mathcal O\left(\left \lvert \sin^{-1}(\eta^{-1}(s/(2K-1))) - \sin^{-1}(\eta^{-1}(t/(2K-1))) \right \rvert \right)+\mathcal O(\varepsilon^4)\,.
\end{align}
We used the rough estimates $a+b=2 \sqrt{ab} + \mathcal O(\lvert a-b\rvert )$ and $\lvert \sin(a)-\sin(b) \rvert \le \lvert a-b \rvert$. As $\sin^{-1}$ and $a \mapsto \eta'(a)=\sqrt{1-a^2}$ are $1/2$-Hölder continuous, \eqref{eta inv Lip} leads to
\begin{align}
\pi( T_2(s,t)-  T_3(s,t))&=\sin(s-t) \left( 2 \sqrt{\eta'(\eta^{-1}(s/(2K-1)))  \eta'(\eta^{-1}(t/(2K-1)))} \right) + \mathcal O(\varepsilon^2)\\
&= \frac{ 2\sin(s-t)}{\sqrt{2K-1} D(s,t)} + \mathcal O(\varepsilon^2) \\
&= \frac{ \sin(s-t)}{\sqrt{K/2} D(s,t)} + \mathcal O(\varepsilon^2) \,. \label{T2-T3 final}
\end{align}
Combining \eqref{K hat split}, \eqref{T1 final}, and \eqref{T2-T3 final}, we can conclude
\begin{align}
\hat {\mathcal K}_K(s,t)=& \frac 1 {s-t} \left( 1+ \mathcal O(\varepsilon^4) \right) \sqrt{K/2} D(s,t) \left(\frac{ \sin(s-t)}{\pi\sqrt{K/2} D(s,t)} + \mathcal O(\varepsilon^2) \right) \\
=& \frac{\sin(s-t)}{\pi(s-t)}+ \mathcal O\left( \frac{ \varepsilon^4+\varepsilon^2 \sqrt K D(s,t)} {\lvert s-t \rvert} \right).
\end{align}
We are left to show that $\varepsilon \sqrt K D(s,t)$ is bounded independent of $s,t,\varepsilon$ and $K$. Using \eqref{def D(s,t)}, we observe
\begin{align}
\varepsilon \sqrt K D(s,t) &= \frac {\varepsilon \sqrt K } {\sqrt{(2K-1) \eta'(\eta^{-1}(s/(2K-1)))  \eta'(\eta^{-1}(t/(2K-1)))}} \\
&\le \frac{ \varepsilon}{ \inf_{a,b \in (-1/2,1-\varepsilon)} \sqrt{\eta'(a) \eta'(b)}} \\
&\le  \frac{\varepsilon}{\inf_{\varepsilon< h <1} \sqrt {1-(1-h)^2}} \le \frac {\varepsilon}{\sqrt{\varepsilon}}<1\,.
\end{align}
This concludes the proof.
\end{proof}

\subsection{Evaluation of the leading asymptotic coefficient $\mathsf{M}_{<K}(f)$}

The above asymptotic results on the kernel $\mathcal K_K$ will now be applied to the deal with the trace of powers of the restricted operators $1_{<p} \mathcal K_K 1_{>p}$. Given \autoref{MK=Jm} and  \autoref{MK symmetry}, we are left to study
\begin{align}
\pi \,\mathsf{M}_{< K} (t \mapsto [t(1-t)]^m ) =& \int_0^\infty \mathrm d p  \,\operatorname{tr} \left \lvert 1_{<p} \mathcal K_K 1_{>p} \right \rvert^{2m}\,,
\end{align}
where $\lvert A \rvert^{2m}= (A^*A)^m$. \autoref{KK unitary equivalence} and \autoref{hat K kernel est} provide us with a good understanding of the operator $\mathcal K_K$ on certain pairs of non-intersecting intervals. We will now define a specific pair of intervals for each (sufficiently large) $K \in \mathbb N$ and $p \in \mathbb R^+$, which will lead to the main contribution. 

\begin{definition} \label{def: intervals}
Let $K \in \mathbb N$ with $K\ge 100$ and let $p \in \mathbb R^+$. With $\varepsilon \coloneqq 1/\ln(K)<1/4$ we define the intervals $\hat I_p$ and $\hat J_p$ as
\begin{align}
\hat I_p &\coloneqq \begin{cases} \left( \eta_{2K-1}(p)-K \varepsilon^6/2 , \eta_{2K-1}(p) -1/2\right) \quad & \text{ if } p\le  (1- 2 \varepsilon) \sqrt{2K-1}\,,\\ \emptyset &\text{ else,} \end{cases} \\
\hat J_p &\coloneqq \begin{cases} \left(\eta_{2K-1}( p),\eta_{2K-1}(p)+K \varepsilon^6/2 \right) \quad & \text{ if } p\le  (1- 2 \varepsilon) \sqrt{2K-1}\,, \\ \emptyset &\text{ else}. \end{cases}
\end{align}
Furthermore, let $I_p \coloneqq \eta_{2K-1}^{-1}(\hat I_p)$ and $I_p \coloneqq \eta_{2K-1}^{-1}(\hat J_p)$.
\end{definition}

\begin{remark} That the above intervals $I_p$ and $J_p$ are well-defined will become clear in the following proof. 
\end{remark}

With these families of intervals, we are finally ready for the
\begin{proof}[Proof of \autoref{main term is MI}]
We begin by taking a closer look at the intervals $\hat I_p,\hat J_p$. Using that $\hat I_p,\hat J_p$ are empty unless $0 < p \le (1-2 \varepsilon) \sqrt{2K-1}$, we observe
\begin{align}
\hat I_p \cup \hat J_p &\subset \left(\eta_{2K-1}(p)- K \varepsilon^6/2, \eta_{2K-1} (p)+ K \varepsilon^6/2 \right) \\
&\subset \left( - K \varepsilon^6/2, (2K-1) \eta(1-2 \varepsilon)+ K \varepsilon^6/2 \right) \\
&\subset \left(-(2K-1) \varepsilon^6/2, (2K-1)(\eta(1-2 \varepsilon)+\varepsilon^6/2)\right) \\
&\subset \left((2K-1)\eta(-1/2), (2K-1) \eta(1-\varepsilon) \right).
\end{align}
The last step relies on the estimates $\varepsilon^6/2 <(1/2)^7 < (\pi/4) /2 \le \eta(1/2)$ (see \eqref{eta lower linear bound}) and $\eta(1-2\varepsilon)+\varepsilon^6/2< \eta(1-\varepsilon)$, which can be seen as follows:
\begin{align}
\eta(1-\varepsilon)& > \eta(1-2\varepsilon)+ \varepsilon \inf_{h \in (\varepsilon,2\varepsilon)}\eta'(1-h) \\
&=\eta(1-2\varepsilon)+ \varepsilon \inf_{h \in (\varepsilon,2\varepsilon)}\sqrt{1-(1-h)^2} \\
&> \eta(1-2\varepsilon)+ \varepsilon \sqrt{\varepsilon} >\eta(1-2\varepsilon)+ \varepsilon^6/2\,.
\end{align}
Thus, $I_p=\eta_{2K-1}^{-1}(\hat I_p),J_p= \eta_{2K-1}^{-1}(\hat J_p)$ are well defined, see \eqref{def eta_lambda} for the definition of $\eta_{2K-1}$. For any pair $s \in I_p,t \in J_p$ we have that $(2K-1)\eta(-1/2) < s < t < (2K-1)\eta(1-\varepsilon)$ and $1/2 \le \lvert s-t \rvert \le K\varepsilon^6$, which are the assumptions of \autoref{hat K kernel est}. Furthermore, $I_p,J_p$ satisfy the conditions of \autoref{KK unitary equivalence}. Thus, \autoref{KK unitary equivalence} yields
\begin{align}
\frac 1 {\pi \sqrt K}   \int_0^\infty \mathrm d p  \operatorname{tr} \left \lvert  1_{I_p} \mathcal K_K 1_{J_p} \right \rvert^{2m} &= \frac 1 {\pi \sqrt K} \int_0^{\sqrt{2K-1}(1-2 \varepsilon)} \mathrm d p  \,\operatorname{tr} \left \lvert  1_{\hat I_p}\hat{ \mathcal K}_K 1_{\hat J_p} \right \rvert^{2m}\\
&= \frac 1 {\pi \sqrt K} \int_0^{\sqrt{2K-1}(1-2 \varepsilon)} \mathrm d p  \, \left \lVert  1_{\hat I_p}\hat{ \mathcal K}_K 1_{\hat J_p} \right \rVert_{2m}^{2m}\,.
\end{align}
Let us define the sine-kernel $T$ for any $s,t \in \mathbb R$ by
\begin{align}
T(s,t)=\frac{\sin(s-t)}{\pi(s-t)} \,.
\end{align}
Let $\lambda \coloneqq K \varepsilon^6/2$. Let $\transop_p$ be the (unitary) shift operator that sends $f \in \Lp^2(\R)$ to $t \mapsto f(t-\eta_{2K-1}(p))$. Thus, $\transop_p^{-1} 1_{\hat I_p} \transop_p =1_{(-\lambda,-1/2)}=1_{\hat I_0}$ and $\transop_p^{-1} 1_{\hat J_p} \transop_p =1_{(0,\lambda)}=1_{\hat J_0}$. Furthermore, $\transop_p$ commutes with $T$.

Let us define the integral operator $T_p: \Lp^2(\hat J_0)\to\Lp^2(\hat I_0)$ by setting for any $s \in I_0, t \in J_0$, 
\begin{align}
T_p(s,t)\coloneqq \hat{\mathcal K}_K(s+\eta_{2K-1}(p),t+\eta_{2K-1}(p))-T(s,t)\,.
\end{align}
This lets us write
\begin{align}
\transop_p^{-1} 1_{\hat I_p} \hat {\mathcal K}_K 1_{\hat J_p} \transop_p = 1_{\hat I_0} (T+T_p) 1_{\hat J_0}\,.
\end{align}
\autoref{hat K kernel est} tells us that 
\begin{align}
T_p(s,t)= \mathcal O \left( \frac \varepsilon {\lvert s-t \rvert} \right).
\end{align}
This leads to
\begin{align}
\left \lVert 1_{\hat I_0} T_p 1_{\hat J_0} \right \rVert_2^2 &\le C \varepsilon^2 \int_{-\lambda}^{-1/2} \mathrm d s\int_{0}^{\lambda} \mathrm d t   \,\frac 1 {\lvert s-t \rvert^2} \\
&\le C\varepsilon^2 \int_{-K/2} ^{-1/2} \mathrm ds \int_{0}^{\infty} \mathrm d t  \, \frac 1 {(s-t)^2} = C \varepsilon^2 \ln(K) =C/\ln(K)\,.
\end{align}
Similarly, from $T(s,t) \le 1/\lvert s-t \rvert$, we can also derive
\begin{align}
\left \lVert 1_{\hat I_0} T 1_{\hat J_0} \right \rVert_2^2 \le C \ln(K)\,.
\end{align}
We also get
\begin{align}
\left \lVert 1_{\hat I_0} T 1_{( \lambda,\infty)} \right \rVert_2^2 \le  \int_{-\lambda}^{-1/2} \mathrm ds \int_\lambda^\infty \mathrm d t \,\frac 1 {\lvert s-t \rvert^2 }<   \int_{-\lambda}^{0} \mathrm ds \int_\lambda^\infty \mathrm d t\, \frac 1 {\lvert s-t \rvert^2 } = \ln(2)\,.
\end{align}
Let us now employ \autoref{tele sum lem1} and observe that
\begin{align}
& \left \lVert \left( 1_{\hat I_0} (T+T_p) 1_{\hat J_0} (T+T_p) 1_{\hat I_0} \right)^m - \left( 1_{\hat I_0} T 1_{\hat J_0} T 1_{\hat I_0} \right)^m \right \rVert_1 \\
&\le  m\left \lVert 1_{\hat I_0} (T+T_p) 1_{\hat J_0}- 1_{\hat I_0} T 1_{\hat J_0}  \right \rVert_2 \left( \left \lVert 1_{\hat I_0} (T+T_p) 1_{\hat J_0} \right \rVert_2+ \left \lVert 1_{\hat I_0} T 1_{\hat J_0}\right \rVert_2 \right) \\
&\le   m \left(C/\sqrt{\ln(K)} \right) \left( C \sqrt{\ln(K)} \right) = Cm\,.
\end{align}
Using the intermediate estimate in \autoref{tele sum lem1}, we can conclude
\begin{align}
\left \lVert  \left( 1_{\hat I_0} T 1_{\hat J_0} T 1_{\hat I_0} \right)^m - \left( 1_{\hat I_0} T 1_{(0,\infty)} T 1_{\hat I_0} \right)^m \right \rVert_1 &\le m \left \lVert 1_{\hat I_0} T 1_{(\lambda, \infty)} T 1_{\hat I_0} \right\rVert_1\\
&= m\left \lVert 1_{\hat I_0} T 1_{(\lambda, \infty)} \right \rVert_2^2 \le Cm\,.
\end{align}
Thus, we have just seen that
\begin{align}
\operatorname{tr} \left( 1_{\hat I_0} (T+T_p) 1_{\hat J_0} (T+T_p) 1_{\hat I_0} \right)^m = \operatorname{tr}  \left( 1_{\hat I_0} T 1_{(0,\infty)} T 1_{\hat I_0} \right)^m  +\mathcal O(1) \,,
\end{align}
and we are left to calculate
\begin{align}
\operatorname{tr} \left( 1_{\hat I_0} T 1_{(0,\infty)} T 1_{\hat I_0} \right)^m\,,\quad \mbox{ with } \hat I_0 = (-\lambda,-1/2)\,,\quad \lambda = K\varepsilon^6/2\,.
\end{align}
This is a quite simple expression, as it only includes the sine-kernel and some intervals. Such expressions have been studied by Landau and Widom in \cite{LandauWidom}. We need to relate our notation to their notation before using their results to conclude our claim, which is
\begin{align} \label{LW corollary eq}
\operatorname{tr} \left( 1_{\hat I_0} T 1_{(0,\infty)} T 1_{\hat I_0} \right)^m = (1/2) \operatorname{tr} \left(1_{(0,\lambda)} T 1_{(0,\lambda)^\complement} T 1_{(0,\lambda)} \right)^m  +\mathcal O(1)\,.
\end{align}
On page 471 in \cite{LandauWidom}, one can see that in their notation, for any interval $I$, we have $P(I)=1_{I}$ and $Q(-1,1)=T$. Thus, in their notation, the left-hand side of the last equation equals
\begin{align}
\operatorname{tr}  \big[P(-\lambda,-1/2) Q(-1,1)P(0,\infty)Q(-1,1)P(-\lambda,-1/2)\big]^m\,.
\end{align}
Using the unitary transformations (rescaling and translation) \cite[(7)--(9)]{LandauWidom}, we can conclude that
\begin{align}
\operatorname{tr}&  \big[P(-\lambda,-1/2) Q(-1,1)P(0,\infty)Q(-1,1)P(-\lambda,-1/2)\big]^m
\\
&=\operatorname{tr} \big[P(1,2\lambda) Q(0,1) P(-\infty,0)Q(0,1)P(1,2\lambda)\big]^m \,.
\end{align}

In \cite{LandauWidom}, an operator is said to be $\mathcal O(1)$, if its trace norm is $\mathcal O(1)$ with respect to $\lambda \to \infty$. As we are dealing with products of projections, changing one factor by something $\mathcal O(1)$ changes the trace of the entire expression only by $\mathcal O(1)$. Thus, the second, unnumbered equation on page 475 in \cite{LandauWidom} tells us that
\begin{align}
\operatorname{tr} &\big[P(1,2\lambda) Q(0,1) P(-\infty,0)Q(0,1)P(1,2\lambda)\big]^m \\
&= \operatorname{tr} \big[P(0,2\lambda) Q(0,1) P(-\infty,0)Q(0,1)P(0,2\lambda)\big]^m+ \mathcal O(1)\,.
\end{align}
Again, using \cite[(7)--(9)]{LandauWidom}, we see that
\begin{align}
\operatorname{tr} &(P(0,2\lambda) Q(0,1) P(-\infty,0)Q(0,1)P(0,2\lambda))^m \\ 
&= \operatorname{tr} (P(0,\lambda) Q(-1,1) P(-\infty,0)Q(-1,1)P(0,\lambda))^m\\
&= \operatorname{tr} (P(0,\lambda) Q(-1,1) P(\lambda,\infty)Q(-1,1)P(0,\lambda))^m\,.
\end{align}
Thus, we can now conclude
\begin{align}
2 \operatorname{tr} & (P(1/2,\lambda) Q(-1,1)P(-\infty,0)Q(-1,1)P(1/2,\lambda))^m \\
&=  \operatorname{tr} (P(0,\lambda) Q(-1,1) P(-\infty,0)Q(-1,1)P(0,\lambda))^m  \\
&+ \operatorname{tr} (P(0,\lambda) Q(-1,1) P(\lambda,\infty)Q(-1,1)P(0,\lambda))^m + \mathcal O(1) \\
&= \operatorname{tr} (P(0,\lambda) Q(-1,1) (P(-\infty,0)+P(\lambda,\infty))Q(-1,1)P(0,\lambda))^m+ \mathcal O(1)\,,
\end{align}
where the last identity is derived from \cite[(13)]{LandauWidom} and the unitary equivalences \cite[(7)--(9)]{LandauWidom}. In conclusion, we have just proved \eqref{LW corollary eq}. 

Widom \cite{Widom1982} has also shown that
\begin{align}
\operatorname{tr} &\big[P(0,2\lambda) Q(0,1) (P(-\infty,0)+P(2\lambda,\infty))Q(0,1)P(0,2\lambda)\big]^m \\
&= \ln(\lambda/2) \frac{1}{\pi^2} \int_0^1 \mathrm dt\, \frac{(t(1-t))^m}{t(1-t)}  + \mathcal O(1) \\
& = 4\ln(K) \, \mathsf{I}(t \mapsto [t(1-t)]^m) + \mathcal O(\lvert \ln (\varepsilon^6) \rvert )\\
&= 4 \ln(K) \, \mathsf{I}(t \mapsto [t(1-t)]^m) + \mathcal O(\ln\ln(K))\, .
\end{align}
Thus, we can finally conclude that
\begin {align}
&\frac 1 {\pi \sqrt K}   \int_0^\infty \mathrm d p  \, \operatorname{tr} \left \lvert  1_{I_p} \mathcal K_K 1_{J_p} \right \rvert^{2m} \\
&=  \frac 1 {\pi \sqrt K} \int_0^{\sqrt{2K-1}(1-2 \varepsilon)} \mathrm d p \, \operatorname{tr}   \left \lvert  1_{\hat I_p}\hat{ \mathcal K}_K 1_{\hat J_p} \right \rvert^{2m} \\
&= \frac 1 {\pi \sqrt K} \int_0^{\sqrt{2K-1}(1-2 \varepsilon)} \mathrm d p \, \left[ (1/2)  \left(4\ln(K) \,\mathsf{I}(t \mapsto [t(1-t)]^m) + \mathcal O(\ln\ln(K) )+ \mathcal O(1) \right) \right] \\
&=  \frac {2\sqrt{2K-1}(1-2 /\ln(K))} { \pi \sqrt K} \ln(K) \,\mathsf{I}(t \mapsto [t(1-t)]^m) + \mathcal O(\ln\ln(K) ) \\
&= \frac { 2 \sqrt{2}}{\pi} \,\mathsf{I}(t \mapsto [t(1-t)]^m)  \ln(K) +\mathcal O(\ln\ln(K) )\,. 
\end{align}

\end{proof}

\section{Proof of the expansion \eqref{expansion F(s)}}\label{Appendix C}

In this section, we prove the expansion \eqref{expansion F(s)}, that is, we show that for any $\Lambda \subset \R^2$ with piecewise $\mathsf{C}^2$-smooth boundary $\partial\Lambda$
\begin{align}
F(s)=s \int_{L \Lambda} \mathrm d x \, \int_{0}^{2 \pi} \mathrm d \theta \, 1_{L \Lambda^\complement} ( x+ s (\cos(\theta), \sin(\theta))) = 2 s^2 L \lvert \partial \Lambda \rvert +\mathcal O(s^3)\, .
\end{align}

\begin{lemma}
For any piecewise $\mathsf{C}^1$-smooth domain $\Lambda \subset \R^2, f \in \Lp^1(\R^2) \cap \Lp^\infty(\R^2), v \in \R^2$, $\|v\|=1$, and any $s \in [0, \infty)$, we have the identity
\begin{align} \label{weird int trafo eq}
\int_{\R^2} \mathrm d x \, f(x) 1_{\Lambda^\complement} (x+sv)=  \int_{\Lambda^\complement}\mathrm d x\, f(x) - \int_0^s \mathrm d t \int_{\partial \Lambda} \mathrm d \mathcal H(y) \,f(y-tv) \, n(y) \cdot v \, ,
\end{align}
where $\mathcal H$ is the Hausdorff (surface) measure on $\partial\Lambda$ and where $n(y)$ is the outward unit normal vector of $\partial\Lambda$ at $y$, which is well-defined for $\mathcal H$ almost all $y$.
\end{lemma}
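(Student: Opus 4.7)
The natural strategy is to recast the left-hand side as a translate of $\int_{\Lambda^\complement}f$ and then compute the difference via the fundamental theorem of calculus combined with the divergence theorem. The change of variables $y=x+sv$ transforms the left-hand side into
\begin{align*}
g(s)\coloneqq\int_{\R^2}f(x)\,1_{\Lambda^\complement}(x+sv)\,\mathrm dx=\int_{\Lambda^\complement}f(y-sv)\,\mathrm dy,
\end{align*}
with $g(0)=\int_{\Lambda^\complement}f$, so the lemma reduces to identifying $g(s)-g(0)$ as a surface integral.

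First I would reduce to $f\in\mathsf{C}_c^1(\R^2)$ by density. The left-hand side is continuous in $f\in\Lp^1(\R^2)\cap\Lp^\infty(\R^2)$ by dominated convergence, and the right-hand side, via the area formula applied on each smooth piece of $\partial\Lambda$, can be rewritten as an integral of $f$ against a Radon measure supported on the strip $\{y-tv:y\in\partial\Lambda,\ 0\le t\le s\}$ of total mass bounded by $s\lvert\partial\Lambda\rvert$; both sides therefore extend continuously from smooth compactly supported $f$ to general bounded integrable $f$.

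For such $f\in\mathsf{C}_c^1$, differentiating $g$ under the integral sign and using that $v$ is constant yields
\begin{align*}
g'(s)=-\int_{\Lambda^\complement}v\cdot\nabla f(y-sv)\,\mathrm dy=-\int_{\Lambda^\complement}\nabla_y\!\cdot\!\bigl(v\,f(y-sv)\bigr)\,\mathrm dy.
\end{align*}
I would then apply the divergence theorem on $\Lambda^\complement\cap D_R(0)$ for $R$ so large that $\mathrm{supp}\,f(\blank-sv)\subset D_R(0)$; the contribution from $\partial D_R(0)$ is then automatically zero, so letting $R\to\infty$ is harmless. Since the outward unit normal of $\Lambda^\complement$ at $y\in\partial\Lambda$ equals $-n(y)$, the surface contribution collapses to $-\int_{\partial\Lambda}f(y-sv)\,n(y)\cdot v\,\mathrm d\mathcal H(y)$, giving a closed formula for $g'(s)$. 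Integrating $s\mapsto g'(s)$ from $0$ to $s$ via Fubini then yields the stated identity; a sanity check against the one-dimensional toy case $\Lambda=(0,1)$, $v=1$, $f\equiv 1_\Lambda$ (for which the LHS equals $+s$) is worthwhile to pin down the overall sign convention.

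The main technical obstacle is that $\partial\Lambda$ is only piecewise $\mathsf{C}^1$-smooth, so the classical divergence theorem does not apply verbatim, and moreover $\Lambda^\complement$ is unbounded. The unboundedness is circumvented by the truncation above exploiting compactness of $\mathrm{supp}\,f$; the finitely many singular points of $\partial\Lambda$ form an $\mathcal H^1$-null set and can be handled either by splitting $\partial\Lambda$ into finitely many $\mathsf{C}^1$-arcs and approximating $\Lambda$ by $\mathsf{C}^1$-smooth domains (whose outward normals converge $\mathcal H^1$-a.e.), or by directly invoking the divergence theorem for Lipschitz domains.
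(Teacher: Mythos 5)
Your route is essentially the paper's: reduce to $f\in\mathsf{C}^1_{\mathrm{c}}(\R^2)$, differentiate $g(s)=\int_{\Lambda^\complement}f(y-sv)\,\mathrm dy$ in $s$, apply the divergence theorem on the Lipschitz domain $\Lambda^\complement\cap D_R(0)$ with outward normal $-n(y)$ on $\partial\Lambda$, and integrate back. The genuine gap is the reduction step. For $f$ merely in $\Lp^1(\R^2)\cap\Lp^\infty(\R^2)$ the boundary term is an integral of an a.e.-defined function, and finiteness of the total mass of your strip measure (your bound $s\lvert\partial\Lambda\rvert$) neither makes that term well defined nor gives continuity under $\Lp^1$-convergence of mollifications. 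What one needs is absolute continuity of the push-forward of $n(y)\cdot v\,\mathrm d\mathcal H(y)\,\mathrm dt$ under $(y,t)\mapsto y-tv$; this does follow from the area formula, since the Jacobian is exactly $\lvert n(y)\cdot v\rvert$, but the resulting density is the crossing number of lines parallel to $v$ with $\partial\Lambda$, which for a piecewise $\mathsf{C}^1$ boundary is integrable yet in general \emph{unbounded} (a $\mathsf{C}^1$ graph oscillating like $x^3\sin(1/x)$ is met infinitely often by its tangent line). So a further truncation argument, combining the integrability of that density with the uniform $\Lp^\infty$ bound on the mollifications, is still required; your "therefore extend continuously" asserts precisely the point that has to be proved. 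The paper handles this by splitting $\partial\Lambda$ into the nearly tangential part $\{\lvert n(y)\cdot v\rvert\le\varepsilon_2\}$, controlled by $\lVert f\rVert_{\Lp^\infty}$, and its complement, on which Rolle's theorem plus uniform continuity of the tangent bounds the number of crossings, so that piece is controlled by $\lVert f-\hat f\rVert_{\Lp^1}$. Your plan can be completed along either line, but as written the crucial approximation step is missing.

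A second point concerns the sign. Your displayed formula for $g'(s)$ is correct, and inserting it into the divergence theorem with outward normal $-n(y)$ gives $g'(s)=+\int_{\partial\Lambda}f(y-sv)\,n(y)\cdot v\,\mathrm d\mathcal H(y)$, so integration produces the identity with the \emph{opposite} sign to the one displayed in the statement. Your own toy check confirms this (the left-hand side is $+s$ while the stated right-hand side yields $-s$), and the plus-sign version is the one actually used in Appendix C of the paper; the statement's minus sign is a typo, which the paper's proof reproduces by silently flipping a sign when passing from $-\int Df(x-sv)\cdot v$ to $\int\operatorname{div}\bigl(f(x-sv)v\bigr)$. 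So do not claim that the computation "yields the stated identity": carry the two minus signs through explicitly and state the final sign, rather than deferring it to a sanity check.
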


\begin{remark}
While this lemma is proved for piecewise $\mathsf{C}^1$-smooth domains $\Lambda$, the final result of this section requires a somewhat stronger condition and we are content with piecewise $\mathsf{C}^2$-smoothness.
\end{remark}

\begin{proof}
We note that the identity is trivial for $s=0$. 

Let us now assume that $f \in \mathsf{C}^1_{\text{c}} (\R^2)$. We differentiate the right-hand side by $s$ and arrive at
\begin{align}
\partial_s \left( \int_{\R^2} \mathrm d x \, f(x) 1_{\Lambda^\complement} (x+sv) \right) &= \partial_s \left( \int_{\R^2} \mathrm d x \, f(x-sv) 1_{\Lambda^\complement} (x) \right) \\
&= \int_{\Lambda^\complement} \mathrm d x \, \partial_s f(x-sv) \\
&= -\int_{\Lambda^\complement}  \mathrm d x  \, Df(x-sv) v \\
&=  \int_{\Lambda^\complement}  \mathrm d x  \,\operatorname{div} (f(x-sv) v) \\
&= -\int_{\partial \Lambda} \mathrm d \mathcal H(y) \,f(y-sv) \,v \cdot n(y) \, .
\end{align}
The final step is the divergence theorem applied to the Lipschitz domain $\Lambda^\complement \cap D_R(0)$, where $R$ is so large that $L \Lambda \subset D_{R/2}(0)$ and $\operatorname{supp} (f) \subset D_R(0)$. The minus sign appears as the outward normal vector of $ \Lambda^\complement$ at $y \in \partial \Lambda$ is $-n(y)$. This equals the differential in $s$ of the right-hand side of \eqref{weird int trafo eq} and thus, as both sides of \eqref{weird int trafo eq} agree for $s=0$ and their differentials in $s$ agree, we have shown the claim for $f \in \mathsf{C}^1_{\text{c}}(\R^2)$.

To conclude the statement for arbitrary $f \in \Lp^1(\R^2) \cap \Lp^\infty(\R^2)$, for any $\varepsilon_0>0$, we need to construct a function $\hat f \in \mathsf{C}^1_{\text{c}}$ such that both sides of \eqref{weird int trafo eq} are bounded by $\varepsilon_0$ for the function $f-\hat f$. To that end, let $\varepsilon_1 \in (0, \varepsilon_0)$ to be chosen later and let $\hat f$ be given as the convolution of $f$ with a mollifying kernel, such that
\begin{align}
\lVert f - \hat f \rVert_{\Lp^1(\R^2) } \le \varepsilon_1\,, \quad \lVert \hat f \rVert_{\Lp^\infty(\R^2)} \le \lVert f \rVert_{\Lp^\infty(\R^2)} \, .
\end{align}
The $\Lp^1$ estimate deals with the left-hand side and the first expression on the right-hand side of \eqref{weird int trafo eq}. The final expression requires a bit more attention. Let $\varepsilon_2>0$ satisfy
\begin{align}
\int_{\{ y \in \partial \Lambda \colon \lvert n(y) \cdot v \rvert \le \varepsilon_2\}} \mathrm d \mathcal H(y) \,\lvert n(y) \cdot v \rvert < \varepsilon_0/ \left(4 s\lVert f \rVert_{\Lp^\infty(\R^2)} \right) .
\end{align}
Such an $\varepsilon_2>0$ exists, as the expression on the left-hand side vanishes for $\varepsilon_2=0$ and is right-continuous. Let $\partial \Lambda_1\coloneqq \{ y \in \partial \Lambda \colon \lvert n(y) \cdot v \rvert \le \varepsilon_2\}$ and $\partial \Lambda_2 \coloneqq \partial \Lambda \setminus \partial \Lambda_1$. Thus, we can conclude that
\begin{align}
\int_0^s \mathrm dt \int_{\partial \Lambda_1} \mathrm d \mathcal H(y) \,\lvert (f-\hat f)(y-tv) \rvert  \,\lvert n(y) \cdot v \rvert  
\le \left( 2 s \lVert f \rVert_{\Lp^\infty(\R^2)} \right) \varepsilon_0/ \left(4 s\lVert f \rVert_{\Lp^\infty(\R^2)} \right) = \varepsilon_0/2 \, .
\end{align}

On the remaining set, we will use $\lVert f - \hat f \rVert_{\Lp^1(\R^2)} \le \varepsilon_1$. To that end, we estimate
\begin{align}
\int_0^s \mathrm dt \int_{\partial \Lambda_2} \mathrm d \mathcal H(y) \,\lvert (f-\hat f)(y-tv) \rvert \,\lvert n(y) \cdot v \rvert  &\le \int_{\partial \Lambda_2} \mathrm d \mathcal H(y) \int_{\R} \mathrm d t  \,\lvert (f-\hat f)(y-tv) \rvert \\
&\le \sup_{y \in \partial \Lambda_2} \left( \#\left\{(y- v \R) \cap \partial \Lambda_2 \right \} \right) \, \lVert f- \hat f \rVert_{\Lp^1(\R^2)} \\
&\le  \sup_{y \in \R^2} \left( \#\left\{(y- v \R) \cap \partial \Lambda_2 \right\} \right) \varepsilon_1 \, .
\end{align}
As we are still free to choose $\varepsilon_1 < \varepsilon_0$, we only need to show that the supremum is finite for any fixed $\Lambda$ and $\varepsilon_2$. Because $\partial\Lambda$ is piecewise $\mathsf{C}^1$-smooth, it is a finite union of $\mathsf{C}^1$-smooth paths. For $i=1, \dots , r$, let  $\Psi_i \colon [0,\lambda_i] \to \R^2$ be these paths with the normalization $\lVert \Psi_i'(t) \rVert=1$ for all $t \in [0,\lambda_i]$. For any $y \in \R^2$, we observe (recall, $\wedge$ is the wedge product in $\R^2$)
\begin{align}
\#\left\{(y- v \R) \cap \partial \Lambda_2 \right\}\le  \sum_{i=1}^r \# \big\{ t \in [0, \lambda_i] \colon \Psi_i(t) \wedge v = y \wedge v  \mbox{ and }  \lvert \Psi_i'(t) \wedge v \rvert > \varepsilon_2 \big\} \, .
\end{align}
Let $t_1,t_2$ be in this set for some $i\in \{1,2, \dots, r\}$. By Rolle's theorem applied to the function $t \mapsto \Psi_i(t) \wedge v$, there is a $t^* \in (t_1,t_2)$ with $\Psi_i'(t^*) \wedge v= 0$. Thus, as $t\mapsto \Psi_i'(t) \wedge v$ is uniformly continuous, we conclude that $\lvert t_1- t_2 \rvert > \lvert t_1-t^* \rvert > \delta_i=\delta_i(\varepsilon_2, \Psi_i)$. Consequently, there are at most $\lambda_i/\delta_i$ many points in the set, which implies
\begin{align}
 \sup_{y \in \R^2} \left( \#\left\{(y- v \R) \cap \partial \Lambda_2 \right \} \right)  < \infty
\end{align}
for any $\varepsilon_2>0$. Therefore, we can choose $\varepsilon_1>0$ such that 
\begin{align}
\int_0^s \mathrm dt \int_{\partial \Lambda_2} \mathrm d\mathcal H(y) \,\lvert (f-\hat f)(y-tv) \rvert \,\lvert n(y) \cdot v \rvert  \le \sup_{y \in \R^2} \left( \#\left\{(y- v \R) \cap \partial \Lambda_2 \right\} \right) \varepsilon_1  \le \varepsilon_0/2 \, ,
\end{align}
which completes the proof.
\end{proof}

 Having proved this lemma, we are ready to prove the expansion of the function $F$ defined in \eqref{def F}.

\begin{proof}[Proof of \eqref{expansion F(s)}]
Throughout the proof, we assume $L=1$, as the only geometrically relevant parameter is $s/L$, which we study as it approaches $0$.

Let $v \in \R^2, \|v\|=1$. Then, as $\partial\Lambda$ is piecewise $\mathsf{C^2}$-smooth and if we apply \eqref{weird int trafo eq} to $f=1_\Lambda$, we get the integral identity
\begin{align}
 \int_{ \Lambda} \mathrm d x \, 1_{ \Lambda^{\complement}} (x+sv) =\int_0^s \mathrm d t \int_{\partial \Lambda} \mathrm d \mathcal H(y)\, n(y) \cdot v \,1_{\Lambda}(y-tv)\, .
\end{align}
Thus, for $s>0$ and $F$ defined in \eqref{def F}  we can conclude by Fubini, 
\begin{align}
F(s)/s=  \int_0^s \mathrm d t \int_{\partial \Lambda} \mathrm d \mathcal H(y) \int_0^{2\pi} \mathrm d \theta \left( n(y) \cdot (\cos(\theta),\sin(\theta))\right) 1_{ \Lambda} \left( y- t (\cos(\theta),\sin(\theta))\right) . \label{F(s)/s eq}
\end{align}
This shows that $s \mapsto F(s)/s$ is differentiable with
\begin{align}
\left( F(s)/s \right)' = \int_{\partial \Lambda} \mathrm d y \int_0^{2\pi} \mathrm d \theta \,\left( n(y) \cdot (\cos(\theta),\sin(\theta))\right) 1_{ \Lambda} \left( y- s (\cos(\theta),\sin(\theta))\right) .
\end{align}
The integral over $\theta$ obviously takes values between $-2$ and $2$, as it is the integral of a sine function over some subset of its period. We intend to show that it is, in fact $+2+\mathcal O(s))$ for most $y\in \partial \Lambda$.

As $\partial\Lambda$ is piecewise $\mathsf{C}^2$-smooth, we proceed to split $\partial \Lambda$ into $A_1(s)$ and $A_2(s)$, where $A_1(s)$ consists of all points, where $D_s(y) \cap \partial \Lambda$ is a single $\mathsf{C}^2$-smooth curve and $A_2(s)\coloneqq \partial \Lambda \setminus A_1(s)$. We observe  $\mathcal H(A_2(s))\le C s$. 

Let $y \in A_1(s)$. After translation and rotation, we achieve $y=0, n(y)=(0,1)$. We are left to calculate
\begin{align}
\int_0^{2\pi} \mathrm d \theta \sin(\theta) 1_{ \Lambda}(- s (\cos(\theta),\sin(\theta)) \,.
\end{align}
Since $A_1(s) \cap D_s(0)$ is the graph of a $\mathsf{C}^2$-smooth function $\varphi \colon [-s,s] \to \R$ with $\varphi(0)=0$ and $\varphi'(0)=0$, we know that there is a $C>0$, independent of $y \in A_1(s)$ and $s \in (0,1)$ with $\lvert \varphi(t) \rvert \le C t^2$. Thus, for each $\theta \in (0,1)$ with $\lvert \theta \rvert \ge 2 Cs$, we see that
\begin{align}
s \lvert \sin ( \theta) \rvert  \ge (1/2) s\lvert \theta \rvert \ge C s^2 \cos(\theta)^2 \ge \lvert \varphi ( s \cos(\theta) \rvert \,.
\end{align}
This means that $-s (\cos(\theta), \sin(\theta)) \in \Lambda$ is equivalent to $\sin(\theta)>0$ for $\lvert \theta \rvert \ge Cs$. Thus, we can conclude
\begin{align}
&\int_0^{2\pi} \mathrm d \theta \,\sin(\theta) 1_{ \Lambda}(- s (\cos(\theta),\sin(\theta))  
\\
&= \int_{0} ^{\pi}  \mathrm d \theta \, \sin(\theta) + \mathcal O\left ( \int_{-Cs}^{Cs}\mathrm d\theta\, \lvert \sin(\theta)\rvert + \int_{\pi-Cs}^{\pi+Cs} \mathrm d\theta\, \lvert \sin(\theta)\rvert \right)  =  2+ \mathcal O(s^2) \, .
\end{align}
In combination with $\lvert A_2(s) \rvert \le Cs$, we arrive at
\begin{align}
\left( F(s)/s \right)'  &= \lvert A_1(s)\rvert  \left( 2 + \mathcal O(s^2) \right) +\mathcal O(\lvert A_2(s) \rvert )  \\
& =2  \lvert \partial \Lambda \rvert  + \mathcal O(\lvert A_2(s) \rvert + s^2 )= 2 \lvert \partial \Lambda \rvert + \mathcal O(s) \, .
\end{align}
In \eqref{F(s)/s eq}, we can clearly see that $\lim_{s \to 0} F(s)/s=0$. Therefore, by integrating $(F(s)/s)'$, we can finally conclude
\begin{align}
F(s)=2 s^2 \lvert \partial \Lambda\rvert + \mathcal O(s^3) \, .
\end{align}
\end{proof}

\section{Concluding remarks}\label{concluding remarks}

In this final section we make some concluding remarks. 

\begin{enumerate}
\item
We formulated our main results for polynomial test functions $f$ with $f(0) = 0$ and $f(1)=0$ and bounded domains $\Lambda$. While the vanishing of $f$ at 0 is necessary for the trace of the corresponding operators to exist, the vanishing of $f$ at 1 has the effect to cancel the ``volume" term of the order $L^2$. It is trivial to go back to functions that do not vanish at 1 by adding the linear function $-t f(1)$. However, since the number of Landau levels below some Fermi energy $\mu$ is a discontinuous (step) function, the leading volume term may contain lower order terms that would scramble with the surface term when taking the limits $K,L\to\infty$. We are not interested in these terms. Moreover, under the  additional condition $f(t) = f(1-t)$, we may replace $\Lambda$ by its complement without changing the result in the relevant traces, in particular in \autoref{main}. In other words, we may assume that $\Lambda$ or its complement is a bounded domain.

Secondly, the most important application is the entropy function $f$, which in the von Neumann case is the function $f(t) = -t\ln(t) -(1-t)\ln(1-t)$ for $t\in (0,1)$ and zero outside. While it is a standard exercise to extend our result from polynomials $f$ to continuous functions $f$ which are (one-sided) differentiable at the two end-points (see for instance \cite{Leschke2014} and references therein), it is a serious issue to extend our results to functions that are only H\"older continuous (with exponent $\alpha$ less than 1) at the two endpoints. The above entropy function is such an example. Given the length of the present paper, we did not dive into this question but we are confident that it  will be accomplished in a forthcoming paper. What is needed is a trace-norm estimate of the form $\|(1_{L\Lambda} P_K 1_{L\Lambda^\complement} P_K 1_{L\Lambda})^\alpha\|_1\le C L \min\{\ln(K),\ln(L)\}$ with a constant $C$ depending only on the domain $\Lambda$. Such estimates do not trivially follow from known estimates in \cite{Sobolev2014b} for the Laplacian and in \cite{Leschke2021} for the Landau Hamiltonian and require substantial work.

\item Except for the simple but important case of a quadratic function we are not able to cover the full range of parameters $K$ and $L$. The quadratic function is important since it is the first non-trivial example in the Widom conjecture (or Szeg\H{o} asymptotics) and $f(t) = 4\ln(2) t(1-t)$ serves as a lower bound to the entropy function. We can analyse the full range of parameters because the phase (caused by the magnetic field) cancels in the computation of the relevant trace. For higher order polynomials and $L\ll K$ this phase is a nuisance and we only have a rough bound to control it. 

On the other hand, if $K\ll L$ then the phase caused by the magnetic field is absolutely crucial. When the domain $\Lambda$ is a polygon we are able to cover essentially the full range of parameters $K\le CL$ and prove an area law of the order $L \ln(K)|\partial\Lambda|$. Going to general (piecewise $\mathsf{C}^2$) smooth domains $\Lambda$ we loose some range in order to control additional error terms. Still we believe that the above area law holds in the larger region (of parameters $K,L$) but we are not able to prove it.

For fixed $K \in \mathbb N$, \autoref{reduction lem polygons}, \autoref{Jm domain dependence lem} and \autoref{MK=Jm} prove that the leading order asymptotic expansion in \eqref{1.3} also holds for domains $\Lambda$, which are polygons or $\mathsf{C}^2$-domains, instead of  $\mathsf{C}^3$-domains as in \cite{Leschke2021}. This result can be extended to merely Hölder continuous functions $f$, as the estimate \cite[Theorem 13]{Leschke2021} only needs $\Lambda$ to be a Lipschitz domain.

\item Throughout the paper we worked with a constant (that is, translation invariant) magnetic field whose strength goes to zero. One can ask what happens with an arbitrary magnetic field $\lambda B(x)$ in the limit $\lambda\to0$, $L\to\infty$ and the function $B$ fixed. In \cite{pfeiffer2021stability}, one of the present authors has analysed the stability of the area law under a varying magnetic field. Roughly speaking, if the magnetic field is asymptotically (for large $\|x\|$) constant then we observe the same area law as for a constant magnetic field. It is reasonable to expect that the same stability holds in the joint limit $\lambda\to0,L\to\infty$ discussed here ``only" for a spatially constant magnetic field. 

The situation is unclear for example for a magnetic field $B(x)$ which tends to 0 as $\|x\|\to\infty$. Depending on the rate of convergence, it may create absolutely continuous spectrum or even change completely to an absolutely continuous spectrum for $H_B$ on $[0,\infty)$, see \cite{MillerSimon}. We do not know whether an area law or an enhanced area-law holds as the appearance of absolutely continuous spectrum may still yield an area law. As a warning, we have found ground states of Hamiltonians with purely absolutely continuous spectrum that display an area law. 


\end{enumerate}

\bibliography{mybib2020}{}

\begin{thebibliography}{10}

\bibitem{NIST:DLMF}
{\it NIST Digital Library of Mathematical Functions}.
\newblock Release 1.1.8 of 2022-12-15.
\newblock F.~W.~J. Olver, A.~B. {Olde Daalhuis}, D.~W. Lozier, B.~I. Schneider,
  R.~F. Boisvert, C.~W. Clark, B.~R. Miller, B.~V. Saunders, H.~S. Cohl, and
  M.~A. McClain, eds.
\newblock URL: \url{http://dlmf.nist.gov/}.

\bibitem{Elgart2017}
A.~Elgart, L.~A. Pastur, and M.~Shcherbina.
\newblock Large block properties of the entanglement entropy of free disordered
  fermions.
\newblock {\em Journal of Statistical Physics}, 166(3):1092--1127, Feb. 2017.
\newblock \href {https://doi.org/10.1007/s10955-016-1656-z}
  {\path{doi:10.1007/s10955-016-1656-z}}.

\bibitem{Gioev2006}
D.~Gioev and I.~Klich.
\newblock Entanglement {E}ntropy of {F}ermions in {A}ny {D}imension and the
  {W}idom {C}onjecture.
\newblock {\em Physical Review Letters}, 96(100503), Mar. 2006.
\newblock \href {https://doi.org/10.1103/physrevlett.96.100503}
  {\path{doi:10.1103/physrevlett.96.100503}}.

\bibitem{Kriecherbauer2014}
T.~Kriecherbauer, K.~Schubert, K.~Sch\"uler, and M.~Venker.
\newblock Global asymptotics for the {C}hristoffel-{D}arboux kernel of random
  matrix theory.
\newblock {\em Markov Processes and Related Fields}, 21(3):639--694, Jan. 2015.
\newblock \href {http://arxiv.org/abs/1401.6772} {\path{arXiv:1401.6772}}.

\bibitem{LandauWidom}
H.~Landau and H.~Widom.
\newblock Eigenvalue distribution of time and frequency limiting.
\newblock {\em Journal of Mathematical Analysis and Applications},
  77(2):469--481, 1980.
\newblock \href {https://doi.org/10.1016/0022-247X(80)90241-3}
  {\path{doi:10.1016/0022-247X(80)90241-3}}.

\bibitem{Leschke2014}
H.~Leschke, A.~V. Sobolev, and W.~Spitzer.
\newblock Scaling of {R}ényi {E}ntanglement {E}ntropies of the {F}ree
  {F}ermi-{G}as {G}round {S}tate: {A} {R}igorous {P}roof.
\newblock {\em Physical Review Letters}, 112(160403), Apr. 2014.
\newblock \href {https://doi.org/10.1103/physrevlett.112.160403}
  {\path{doi:10.1103/physrevlett.112.160403}}.

\bibitem{Leschke2017}
H.~Leschke, A.~V. Sobolev, and W.~Spitzer.
\newblock Trace formulas for {W}iener--{H}opf operators with applications to
  entropies of free fermionic equilibrium states.
\newblock {\em Journal of Functional Analysis}, 273(3):1049--1094, 2017.
\newblock \href {https://doi.org/10.1016/j.jfa.2017.04.005}
  {\path{doi:10.1016/j.jfa.2017.04.005}}.

\bibitem{Leschke2021}
H.~Leschke, A.~V. Sobolev, and W.~Spitzer.
\newblock Asymptotic growth of the local ground-state entropy of the ideal
  {F}ermi gas in a constant magnetic field.
\newblock {\em Communications in Mathematical Physics}, 381(2):673--705, Jan.
  2021.
\newblock \href {https://doi.org/10.1007/s00220-020-03907-w}
  {\path{doi:10.1007/s00220-020-03907-w}}.

\bibitem{MillerSimon}
K.~Miller and B.~Simon.
\newblock Quantum {M}agnetic {H}amiltonians with {R}emarkable {S}pectral
  {P}roperties.
\newblock {\em Physical Review Letters}, 44(25), 1980.

\bibitem{muller2021stability}
P.~Müller and R.~Schulte.
\newblock Stability of a {S}zeg{\H{o}}-type asymptotics.
\newblock {\em Journal of Mathematical Physics}, 64(022101), Apr. 2021.
\newblock \href {https://doi.org/10.1063/5.0135006}
  {\path{doi:10.1063/5.0135006}}.

\bibitem{Muller2020}
P.~M{\"u}ller, L.~A. Pastur, and R.~Schulte.
\newblock How much delocalisation is needed for an enhanced area law of the
  entanglement entropy?
\newblock {\em Communications in Mathematical Physics}, 376(1):649--679, May
  2020.
\newblock \href {https://doi.org/10.1007/s00220-019-03523-3}
  {\path{doi:10.1007/s00220-019-03523-3}}.

\bibitem{muller2020stability}
P.~M{\"u}ller and R.~Schulte.
\newblock Stability of the enhanced area law of the entanglement entropy.
\newblock {\em Annales Henri Poincaré}, 21(11):3639–3658, Oct. 2020.
\newblock \href {https://doi.org/10.1007/s00023-020-00961-x}
  {\path{doi:10.1007/s00023-020-00961-x}}.

\bibitem{PasturSlavin14}
L.~A. Pastur and V.~V. Slavin.
\newblock Area {L}aw {S}caling for the {E}ntropy of {D}isordered {Q}uasifree
  {F}ermions.
\newblock {\em Physical Review Letters}, 113(150404), Oct. 2014.
\newblock \href {https://doi.org/10.1103/PhysRevLett.113.150404}
  {\path{doi:10.1103/PhysRevLett.113.150404}}.

\bibitem{pfeiffer2021stability}
P.~Pfeiffer.
\newblock On the stability of the area law for the entanglement entropy of the
  {L}andau {H}amiltonian, 2021.
\newblock \href {http://arxiv.org/abs/2102.07287} {\path{arXiv:2102.07287}}.

\bibitem{PfeifferSpitzer2022}
P.~Pfeiffer and W.~Spitzer.
\newblock Entanglement entropy of ground states of the three-dimensional ideal
  {F}ermi gas in a magnetic field.
\newblock 2022.
\newblock \href {http://arxiv.org/abs/2209.09820} {\path{arXiv:2209.09820}}.

\bibitem{Pfirsch2018}
B.~Pfirsch and A.~V. Sobolev.
\newblock Formulas of {S}zegő type for the periodic {S}chrödinger operator.
\newblock {\em Communications in Mathematical Physics}, 358, Mar. 2018.
\newblock \href {https://doi.org/10.1007/s00220-018-3106-z}
  {\path{doi:10.1007/s00220-018-3106-z}}.

\bibitem{sobolev2013pseudo}
A.~V. Sobolev.
\newblock {\em Pseudo-Differential Operators with Discontinuous Symbols:
  {W}idom's Conjecture}.
\newblock Memoirs of the American Mathematical Society. American Mathematical
  Society, Feb. 2013.
\newblock \href {https://doi.org/10.1090/S0065-9266-2012-00670-8}
  {\path{doi:10.1090/S0065-9266-2012-00670-8}}.

\bibitem{Sobolev2014b}
A.~V. Sobolev.
\newblock On the {S}chatten--von {N}eumann properties of some
  pseudo-differential operators.
\newblock {\em Journal of Functional Analysis}, 266:5886--5911, 2014.
\newblock \href {https://doi.org/10.1016/j.jfa.2014.02.038}
  {\path{doi:10.1016/j.jfa.2014.02.038}}.

\bibitem{Sobolev2014}
A.~V. Sobolev.
\newblock {W}iener-{H}opf operators in higher dimensions: the {W}idom
  conjecture for piece-wise smooth domains.
\newblock {\em Integral Equations and Operator Theory}, 81:435--449, 2015.
\newblock \href {https://doi.org/10.1007/s00020-014-2185-2}
  {\path{doi:10.1007/s00020-014-2185-2}}.

\bibitem{Sobolev2017}
A.~V. Sobolev.
\newblock Quasi-classical asymptotics for functions of {W}iener--{H}opf
  operators: smooth versus non-smooth symbols.
\newblock {\em Geometric and Functional Analysis}, 27:676--725, 2017.
\newblock \href {https://doi.org/10.1007/s00039-017-0408-9}
  {\path{doi:10.1007/s00039-017-0408-9}}.

\bibitem{Szego}
G.~Szeg\H{o}.
\newblock {\em Orthogonal Polynomials}.
\newblock American Mathematical Society, Providence, Rhode Island, 1939.

\bibitem{Widom1982}
H.~Widom.
\newblock {\em On a Class of Integral Operators with Discontinuous Symbol},
  volume~4, pages 477--500.
\newblock Birkh{\"a}user Basel, 1982.
\newblock \href {https://doi.org/10.1007/978-3-0348-5183-1_28}
  {\path{doi:10.1007/978-3-0348-5183-1_28}}.

\bibitem{Widom1990}
H.~Widom.
\newblock On a class of integral operators on a half-space with discontinuous
  symbol.
\newblock {\em Journal of Functional Analysis}, 88(1):166--193, 1990.
\newblock \href {https://doi.org/10.1016/0022-1236(90)90124-4}
  {\path{doi:10.1016/0022-1236(90)90124-4}}.

\end{thebibliography}
\bibliographystyle{abbrvurl}

\end{document}